\renewcommand*\env@matrix[1][\arraystretch]{%
	\edef\arraystretch{#1}%
	\hskip -\arraycolsep
	\let\@ifnextchar\new@ifnextchar
	\array{*\c@MaxMatrixCols c}}
\renewcommand{\eqref}[1]{Eq.~(\ref{#1})}
\newcommand{\figref}[1]{Fig.~(\ref{#1})}
\newcommand{\appref}[1]{Appendix~\ref{#1}}
\newcommand{\lemref}[1]{Lemma~\ref{#1}}
\newcommand{\thmref}[1]{Theorem~\ref{#1}}
\def\>{\rangle}
\def\<{\langle}
\def\E{ {\cal E} }
\def\P{ {\cal P} }
\def\H{ {\cal H} }
\def\M{ {\cal M} }
\def\S{ {\cal S} }
\def\U {{\cal U}}
\def\V {{\cal V}}
\def\G {{\cal G}}
\def\N{ {\cal N} }
\def\F{ {\cal F} }
\def\B{ {\cal B} }
\def\H{ {\cal H} }
\def\W{ {\cal W} }
\def\X{ {\cal X} }
\def\P{ {\cal P} }
\def\D{ {\cal D} }
\def\T{ {\cal T} }
\def\id{ \mathbbm{1} }
\def\tr{ \mbox{tr} }
\def\I{ \mathcal{I} }
\DeclareMathOperator{\vspan}{span}
\DeclareMathOperator{\supp}{supp}
\def\tr{ \mathrm{tr} }
\def\>{\rangle}
\def\<{\langle}
\renewcommand{\emph}{\textit}
\newtheorem{theorem}{Theorem}
\newtheorem{lemma}[theorem]{Lemma}
\newtheorem{definition}[theorem]{Definition}
\newtheorem{corollary}[theorem]{Corollary}
\DeclareMathOperator{\Span}{Span}
\begin{document}
	

	\title{Infinitesimal reference frames suffice to determine the asymmetry properties of a quantum system}
	\author{Rhea Alexander}
	\email{ra4518@ic.ac.uk}
	\affiliation{Department of Physics, Imperial College London, London SW7 2AZ, UK}	
	\affiliation{School of Physics and Astronomy, University of Leeds, Leeds, LS2 9JT, UK}
	
	\author{Si Gvirtz-Chen}
	\affiliation{School of Physics and Astronomy, University of Leeds, Leeds, LS2 9JT, UK}
	
	\author{David Jennings}
	\affiliation{School of Physics and Astronomy, University of Leeds, Leeds, LS2 9JT, UK}
	\affiliation{Department of Physics, Imperial College London, London SW7 2AZ, UK}
	
	\begin{abstract}
Symmetry principles are fundamental in physics, and while they are well understood within Lagrangian mechanics, their impact on quantum channels has a range of open questions. The theory of asymmetry grew out of information-theoretic work on entanglement and quantum reference frames, and allows us to quantify the degree to which a quantum system encodes coordinates of a symmetry group. Recently, a complete set of entropic conditions was found for asymmetry in terms of correlations relative to infinitely many quantum reference frames. However, these conditions are difficult to use in practice and their physical implications unclear. In the present theoretical work, we show that this set of conditions has extensive redundancy, and one can restrict to reference frames forming any closed surface in the state space that has the maximally mixed state in its interior. This in turn implies that asymmetry can be reduced to just a single entropic condition evaluated at the maximally mixed state. Contrary to intuition, this shows that we do not need macroscopic, classical reference frames to determine the asymmetry properties of a quantum system, but instead infinitesimally small frames suffice. Building on this analysis, we provide simple, closed conditions to estimate the minimal depolarization needed to make a given quantum state accessible under channels covariant with any given symmetry group.
\end{abstract}
	
	\maketitle
	
	\section{Introduction}
	
Symmetry principles have been extensively studied both in classical and quantum theory, and in particular for Lagrangian dynamics of a quantum system~\cite{noether1918invarianten}. However, such evolution is a strict subset of the most general kind of physical transformation that quantum theory permits -- quantum channels~\cite{watrous2018theory}. This more general setting not only includes unitary dynamics, open system dynamics, and the ability to vary the system dimension as it transforms, but it also interpolates between deterministic unitary dynamics and measurements that sharply collapse a quantum system. How symmetry principles constrain quantum channels is therefore a crucial question. 

The study of quantum entanglement~\cite{plenio2007entanglement,horodecki2009entanglement} lead to a much broader conception of physical properties in terms of `resources' relative to a set of quantum channels. This gave a precise way to quantify other fundamental features such as quantum coherence~\cite{aberg2006quantifying,marvian2013theory,baumgratz2014coherence,streltsov2017coherence}, thermodynamics~\cite{brandao2013athermality, horodecki2013fundamental,gour2015resource}, non-Gaussianity \cite{albarelli2018resource,takagi2018non-gaussianity},  magic states for quantum computing~\cite{veitch2014resource,howard2017application}, and many more \cite{Chitambar2019}. In particular, the theory of asymmetry provides an information-theoretic means to quantify the degree to which a quantum system breaks a symmetry \cite{marvian2013theory,Marvian2014Extending,takagi2019skew}. 

Asymmetry sits at the crossroads between abstract quantum information theory and physical laws, and quantifies what has been called `unspeakable quantum information' \cite{bartlett2007reference,Marvian2016Coherence}. This information cannot be transcribed into a data string on paper or in an email, but instead requires the transfer of a system that carries a non-trivial action of the symmetry group, via a symmetric (covariant) quantum channel. Given this, such concepts find application in quantum metrology \cite{Hall2012Nonlinear}, symmetry-constrained dynamics~\cite{Marvian2014Extending, Cirstoiu2020_Noether,Chiribella2021Symmetries}, quantum reference frames \cite{AharonovSusskind1967,Chiribella2004Efficient,Jones2006Entanglement,bartlett2007reference,gourspekkens2008resource,Vaccaro2008Tradeoff}, thermodynamics~\cite{lostaglio2015description,lostaglio2015quantum, marvian2020coherence}, measurement theory~\cite{wigner1952WAY,araki1960measurement,yanase1961optimal,marvian2012information, ahmadi2013WAY}, macroscopic coherence~\cite{yadin2016general}, and quantum speed-limits \cite{Marvian2016Speed}. More recent work has seen a renewed interest in quantum reference frames in a relativistic setting and the problem of time in quantum physics~\cite{rovelli1991quantum,rovelli1996relational,Vedral2017Evolution,Nikolova2018Relational,giacomini2019covariance,loveridge2019relative,smith2019quantizing,Martinelli2019Quantifying,mendes2019time,vanrietvelde2020change,Carmo2021Quantifying,Chataignier2021Relational}, as well as applications in quantum computing and covariant quantum error-correcting codes~\cite{EastinKnill2009Restrictions, Faist2020_Approx_QEC,Woods2020continuousgroupsof,yang2020covariant,almheiri2015bulk,pastawski2015holographic, gschwendtner2021programmability} where the Eastin-Knill theorem provides an obstacle to transversal gate-sets forming a continuous unitary sub-group~\cite{EastinKnill2009Restrictions}.

The central question considered in this paper concerns transforming from a quantum state  $\rho$ to another quantum state $\sigma$ under a symmetry constraint. More precisely, we address the following fundamental question:
\begin{center}
   \textbf{Core Question:} \textit{When is it possible to transform $\rho_A$ to $\sigma_B$ under a quantum channel between systems $A$ and $B$ that is covariant with respect to a symmetry group $G$? }
\end{center}
This question addresses the fundamental way in which symmetries constrain quantum theory, and turns out to be surprisingly non-trivial. One might initially conjecture that if we consider the generators $\{J_k\}$ of the group representation $U(g)= \exp[ i \sum_k g_k J_k]$ and compute their moments $\<J^n_k\> := \tr[ J_k^n \rho]$ for a given state that this provides an answer to the above question. However, while this intuition is correct in the case of unitary dynamics on pure states, it is false more generally \cite{Marvian2014Extending}. In the case of the rotation group, for example, it is possible for $\<J_k\>$ to both increase and decrease under rotationally symmetric operations~\cite{Marvian2014Extending,Cirstoiu2020_Noether}.

Given that a disconnect occurs between the symmetry principle and the generators of the group as observables for mixed quantum states, one might therefore conjecture that the problem requires an additional entropic accounting, and we must supplement our analysis with the von Neumann entropy $S(\rho)$ of the quantum state (or any general entropies that are a function of the spectrum of the state, such as the R\'{e}nyi entropies) to determine the solution. Again, this turns out to still be insufficient, and it has been shown that even if we consider all moments of the generators and the entire spectrum of the quantum state $\rho$, this still does not answer the above question \cite{Marvian2014Extending}. The missing asymmetry ingredient is instead a non-trivial combination of quantum information aspects and physics specific to the symmetry group. 

Recent work~\cite{gour2018quantum} has provided a complete set of necessary and sufficient conditions for asymmetry which fully determine the state interconversion structure with respect to a symmetry group $G$. However, this set of conditions turns out to not be particularly intuitive and moreover forms an infinite set of conditions that must be checked.  The present work unpacks these conditions, determines the minimal set of conditions needed and obtains conditions that could be used in practical situations.
\subsection{Main results of the paper}

The complete set of asymmetry conditions in~\cite{gour2018quantum} are framed in terms of correlations between the given quantum system $A$ and quantum frame systems $R$ that is locally in a state $\eta_R$ that transforms non-trivially under the group action. These correlations are measured via the single-shot conditional entropy measure $H_{\rm min} (R|A)$, which is a central quantity in quantum encryption~\cite{renner2005security}. However, the problem is that the complete set of conditions requires this entropy to be computed for \emph{all} possible reference frame states $\eta_R$, and so the question is whether one can reduce to a much simpler set of conditions for asymmetry theory.

The main results of this work are as follows:
\begin{enumerate}
    \item We prove that it suffices to consider any closed surface $\partial \D$ of reference frame states $\eta_R$ that contains the maximally mixed state in its interior.
    \item We prove that under $\varepsilon$--smoothing a finite number of reference states suffice to determine those states accessible under $G$--covariant channels.
    \item We prove that infinitesimal reference frames suffice to specify asymmetry, and interconversion under $G$--covariant channels is equivalent to a single entropic minimality condition at the maximally mixed state.
    \item We derive $O(d^2)$ closed conditions to estimate the minimal depolarization noise needed to make any given output state $\sigma$ accessible from a state $\rho$ under $G$--covariant channels. These essentially take the form
\begin{align}
\log ||\sigma^{\lambda}_j||_1 -  D_2 ( \rho^\lambda_j || \G(\rho)) \le f(d,p),
\end{align}
where $\sigma^\lambda_j, \rho^\lambda_j$ are asymmetry modes with respect to the group~\cite{marvian2014modes}, and $f(d,p) $ is a function depending on the output system dimension, its irrep structure, and the level of depolarization. The function $D_2 (X || Y)$ is a generalization of the Sandwiched $\alpha=2$ R\'{e}nyi divergence~\cite{muller2013quantum,wilde2014strong}.
\end{enumerate}

Results (1--3) show that the structure of reference frame states that determine the asymmetry properties of a system has a range of freedoms. In particular, result (3) is surprising because it is contrary to what is expected from previous work on this topic. Previously, it was natural to conjecture that in order to specify the asymmetry of a state one should make use of reference frame states that encode a group element as distinguishably as possible. 
Finally, result (4) exploits the general structure analysed to provide conditions, and could find application in describing symmetry-constrained quantum information in concrete settings.

\section{Symmetry constraints and relational physics}
Quantum entanglement \cite{horodecki2009entanglement} is usually understood as associated with a pre-order $\succ_e$ on quantum states, defined by a class of quantum channels called Local Operations and Classical Communications (LOCC). The set of states that can be generated under LOCC are called separable states, and any other state is then said to have non-trivial entanglement. The pre-order is defined as $\rho \succ_e \sigma$ if and only if we can transform from $\rho$ into $\sigma$ via an LOCC channel. This provides the resource-theoretic formulation of entanglement.

This general perspective on properties of quantum systems can be used in the above problem on transforming between quantum states under a symmetry constraint. Specifically, we can identify a symmetry ordering $\succ$ on quantum states, defined now by $\rho \succ \sigma$ whenever it is possible to transform from $\rho$ into $\sigma$ via a quantum channel that respects a given symmetry group $G$. The symmetry pre-order then \emph{defines} what it means for one quantum state to be more asymmetric than another with respect to the group $G$.

We can make this precise in the following way. Given a quantum system $A$, with associated Hilbert space $\H_A$, we denote by $\B(\H_A$) the space of bounded linear operators on $\H_A$. A symmetry group $G$ acts on the system via a unitary representation $U(g)$ on $\H_A$. States of $A$ are positive, trace-one operators $\rho \in \B(\H_A)$, and at the level of the density operator $\rho$ the symmetry group acts as $\U_g(\rho) := U(g) \rho U(g)^\dagger$. A quantum channel $\E : \B(\H_A) \rightarrow \B(\H_B)$ is a completely-positive, trace-preserving map~\cite{watrous2018theory} that sends states of an input system $A$ to states of some output system $B$. A quantum channel $\E$ is then said to be symmetric, or $G$--covariant, with respect to a group action if $\E(U(g) \rho U(g)^\dagger) =U(g) \E(\rho )U(g)^\dagger $ for all $g \in G$ and all states $\rho$ of the input system. Expressed purely in terms of composition of channels this amounts to
\begin{equation}
    [\U_g, \E] = 0 \mbox{ for all } g \in G.
\end{equation}

We then have that $\rho \succ \sigma$ when there is a $G$--covariant channel $\E$ such that $\sigma = \E(\rho)$. Moreover, a \emph{measure} of the system's asymmetry is any real-valued function $\M$ on quantum states such that if $\rho \succ \sigma$ then it must be the case that $\M(\rho) \ge \M(\sigma)$.

A number of measures of asymmetry have been developed, such as relative entropy measures \cite{Vaccaro2008Tradeoff,Gour2009Measuring}, the skew-Fisher information \cite{marvian2012symmetry,Marvian2014Extending,takagi2019skew}, and the purity of coherence \cite{marvian2020coherence}. Any such monotone $\M(\rho)$ provides a \emph{necessary} condition for a transformation to be possible. However, what is a harder question is whether one can determine a \emph{sufficient} set of monotones. Any such set of measures would encode all the features of the quantum system that relate to the symmetry constraint.

Very recently~\cite{gour2018quantum} just such a complete set of measures has been found, in terms of single-shot entropies. The monotones appearing in these relations are the quantum conditional min-entropies~\cite{renner2005security}, which are defined, for some state $\Omega_{RA}$ on a bipartite system $RA$, as:
\begin{align}
 H_{\mathrm{min}}(R|A)_\Omega :=  -\log \inf_{X_A \ge 0}\{ \tr[X_A] : \id_R \otimes X_A \ge \Omega_{RA} \} ,\nonumber
\end{align}\label{eq:H_min_def}
where the infimum ranges over all positive semidefinite operators $X_A$ on Hilbert space $\H_A$. For any state $\rho_A$ a complete set of measures is then given by
\begin{align}
H_\eta (\rho) \coloneqq H_{\mathrm{min}}(R|A)_{\G(\eta \otimes \rho)}.
\label{eq:H_eta_def}
\end{align}
where $\eta_R$ is an arbitrary quantum state on an \emph{external} reference frame system $R$, and the single-shot entropy is evaluated on the bipartite state
\begin{equation}
\G(\eta_R \otimes \rho_A) := \int \!\! dg \,\, \U_g( \eta_R) \otimes \U_g (\rho_A).
\end{equation}
In terms of transformations between quantum states under a $G$--covariant channel, we now have the following result. 
\begin{theorem}~\cite{gour2018quantum}
Let $A$, $B$, and $R$ be three quantum systems with respective Hamiltonians $H_A$, $H_B$ and $H_R$ and dimensions $d_A$, $d_B$ and $d_R$. Furthermore, let the reference system $R$ be such that $d_R=d_B$ and $H_R = -(H_{B})^T$. The state transformation $\rho_{A} \rightarrow \sigma_{B}$ is possible under a $G$--covariant operation if and only if
\begin{equation}
  \Delta  H_\eta \ge 0,
\end{equation}
for all states $\eta_R$ on $\H_R$, where we have defined
\begin{align}
\Delta H_\eta =  H_\eta(\sigma) - H_\eta(\rho),
\end{align}
as the difference in entropy between input and output systems.
\label{thrm:gour}
\end{theorem}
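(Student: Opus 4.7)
The plan is to establish the two directions separately: \emph{necessity} follows quickly from data processing of the conditional min-entropy, while \emph{sufficiency} requires a semidefinite programming (SDP) duality argument on the Choi operator of a $G$-covariant channel.

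For the \emph{only if} direction, suppose that there exists a $G$-covariant channel $\E$ with $\E(\rho_A)=\sigma_B$. Exchanging $\E$ with the group twirl via $[\U_g,\E]=0$ gives
\begin{equation}
(\id_R\otimes \E)\bigl[\G(\eta_R\otimes \rho_A)\bigr] = \G(\eta_R\otimes \sigma_B)
\end{equation}
for every reference state $\eta_R$. Data processing of the conditional min-entropy under a channel acting on the conditioning system~\cite{renner2005security} then yields $H_{\min}(R|B)_{\G(\eta\otimes\sigma)}\ge H_{\min}(R|A)_{\G(\eta\otimes\rho)}$, i.e.\ $\Delta H_\eta \ge 0$ for all $\eta_R$.

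For the \emph{if} direction, I would recast the existence of a $G$-covariant channel as a feasibility SDP via the Choi-Jamiolkowski isomorphism. A $G$-covariant $\E:A\to B$ is in bijection with a positive operator $J_{AB}$ satisfying $\tr_B J_{AB}= \id_A$ and the covariance constraint $[U_A(g)\otimes U_B(g)^*, J_{AB}]=0$. The hypothesis $d_R=d_B$, $H_R=-H_B^T$ endows $R$ with the representation conjugate to that of $B$, which matches precisely the bipartite Hilbert space on which $\G(\eta_R\otimes\sigma_B)$ lives as a $G$-invariant operator. Demanding $\tr_A[J_{AB}(\rho_A^T\otimes \id_B)]=\sigma_B$ is a linear equality in $J$, making the whole problem a linear SDP.

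The main work is in applying strong duality. Slater's condition is satisfied by the completely depolarizing covariant map $\E(\cdot)=\tr[\cdot]\,\id_B/d_B$, so the primal is infeasible if and only if a dual witness exists. After group-averaging the dual variables (permitted because every primal constraint is $G$-invariant), the witness reorganizes into a positive operator on $R$---to be identified with an unnormalized reference state $\eta_R$---together with an operator on $B$ that plays the role of the infimizer in the min-entropy SDP for $\G(\eta\otimes\sigma)$. The remaining identification, which I expect to be the most delicate step, is to recognize the dual objective as exactly $H_\eta(\rho)-H_\eta(\sigma)$ by rewriting $\tr[J(\rho^T\otimes X)]$ through the cyclicity of trace and the structure of the twirl. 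Once this identification goes through, strong duality converts the absence of any infeasibility witness into $\Delta H_\eta\ge 0$ for all $\eta_R$, closing the converse.
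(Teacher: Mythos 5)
First, note that the paper does not prove this statement itself: it is imported verbatim from Ref.~\cite{gour2018quantum}, and the only piece of its proof machinery reproduced in the appendices is the dual formulation $\Phi_\eta(\tau)=\max_{\E\in\mathcal{O}_{\rm cov}}\tr(\eta^T\E(\tau))$ (Lemma~\ref{lemma:Phi_and_dual}), itself cited from Lemma~3 of the supplemental material of \cite{gour2018quantum}. Your \emph{only if} direction is complete and correct: covariance lets you commute $\id_R\otimes\E$ past the twirl, and data processing on the conditioned system (property \ref{property:F_data_processing}) gives $\Delta H_\eta\ge 0$. This matches how the necessity half is argued in the source.

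The \emph{if} direction, however, is an outline rather than a proof, and the two places you defer are exactly where the content of the theorem lives. (i) Your appeal to Slater's condition is misapplied as stated: the completely depolarizing channel is strictly feasible for the covariance and trace-preservation constraints, but it does not satisfy the equality constraint $\tr_A[J_{AB}(\rho_A^T\otimes\id_B)]=\sigma_B$, so it does not witness strict feasibility of the feasibility SDP you wrote down. The standard repair is to replace the feasibility problem by the optimization $\max_{\E\in\mathcal{O}_{\rm cov}}\tr[\eta^T\E(\rho)]$, for which Slater does hold on the covariant-channel polytope, prove the identity $\Phi_\eta(\rho)=\max_{\E\in\mathcal{O}_{\rm cov}}\tr[\eta^T\E(\rho)]$ by SDP duality, and then run a separating-hyperplane argument on the convex compact set $\{\E(\rho):\E\in\mathcal{O}_{\rm cov}\}$, reducing general Hermitian witnesses to states $\eta$ by shifting with a multiple of the identity and rescaling (legitimate because all candidate outputs have unit trace). (ii) The ``remaining identification'' of the dual objective with $H_\eta(\rho)-H_\eta(\sigma)$ is asserted, not derived; moreover you still need the elementary but essential bridge $\Phi_\eta(\sigma)\ge\tr[\eta^T\sigma]$ (take $\E$ to be the identity channel on the output system in the dual form) in order to pass from your hypothesis $\Phi_\eta(\rho)\ge\Phi_\eta(\sigma)$ to the witness condition $\Phi_\eta(\rho)\ge\tr[\eta^T\sigma]$ that the duality actually certifies. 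Until (i) and (ii) are carried out the converse is a plan rather than a proof --- albeit the correct plan, since it is the route followed in \cite{gour2018quantum}.
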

As shown in the original paper, the infinite set of entropic conditions outlined in \thmref{thrm:gour} can be reformulated as a semi-definite program that can be solved efficiently for sufficiently low-dimensional quantum systems. However, for larger system sizes it quickly becomes computationally intensive. Moreover, without simplification, working with these expressions analytically is not an option, since we have an infinite set of conditions and the physics involved remains hidden.

\subsection{Appearance of relativistic features in the quantum-information framework}
In quantum gravity, one has the Wheeler-de Witt equation \cite{dewitt1967} that provides a Hamiltonian constraint ${H |\Psi\> = 0}$ for a global wavefunction $|\Psi\>$. This in particular implies global time-translation covariance, and raised the question of how observed dynamics are consistent with this condition. One answer to this question was presented by Page and Wootters \cite{PageWooters1983}, who argued that time-evolution of subsystems should be properly viewed in terms of relational correlations between subsystems.

The above complete set of entropic conditions for general covariant transformations has links with this formalism. In particular, the following features appear from the quantum information-theoretic treatment when specialised to $G$ being the time-translation group~\cite{gour2018quantum}:
\begin{itemize}
\item External reference frame systems automatically appear in the information-theoretic analysis.
    \item The Hamiltonians on $R$ and $B$ obey $H_R + H_B^T = 0$ as matrices.
    \item The properties of any system $A$ that transform non-trivially under the symmetry group are fully described by correlations between $R$ and $A$.
    \item A \emph{single-shot} Page--Wootters condition emerges in the classical reference frame limit in terms of optimal guessing-probabilities of the time parameter.
    \item Local gauge symmetries can be formulated with a causal structure on asymmetry resources~\cite{Cirstoiu2017Global}.
\end{itemize}
The appearance of a Page--Wootters condition is surprising. The classical limit here is when reference frame $\eta$ acts as a good clock, in the sense that one can encode the classical information $t$ into it in such a way that one can discriminate between two different values $t_1$ and $t_2$ with high probability.

For this regime, the state $\Omega_{RA}=\G (\eta_R \otimes \rho_A)$ tends to a classical-quantum state, with $R$ behaving as a classical `register' for $t$. However, it can be shown that for a classical quantum state $\Omega_{RA}$, where $R$ is classical, the single-shot entropy corresponds exactly to an optimal guessing probability~\cite{konig2009operational}. More precisely, it can be proved that
\begin{equation}
    -\log H_{\rm min} (R|A)_\Omega = p_{\rm opt} (t),
\end{equation}
where $p_{\rm opt}(t)$ is the optimal guessing probability for the value of $t$, over all generalised POVM measurements, given the state $U(t) \rho U(t)^\dagger$. This provides a refinement of the Page--Wootters formalism.

This interpretation of the $H_{\rm min} (R|A)_\Omega$ terms extends to arbitrary states on $R$. In the fully general case it quantifies the optimal singlet fraction~\cite{konig2009operational}, the degree to which the state $\Omega_{RA}$ can be transformed to a maximally entangled (perfectly correlated) state through action on $A$ alone. It is also possible to include thermodynamics into this setting without much complication. For this extension, if we consider varying the state $\eta_R$ we also smoothly interpolate between free energy--like conditions and clock conditions~\cite{gour2018quantum}.

The above features come solely from the single-shot quantum-information formalism of the problem, and show that these aspects are fundamental. It therefore motivates a deeper analysis of the complete set of entropic conditions with the aim of unpacking the physical content and determining the minimal information-theoretic conditions that describe fully general symmetric transformations of quantum systems.

\subsection{Warm-up example: a curious dependence on reference frame states}
It is useful to first illustrate special cases of the $H_{\rm min}(R|A)$ conditions for the elementary case of channels sending a single qubit to a single qubit under a $U(1)$ covariant symmetry constraint. For concreteness we take this to be time-translation $U(t) = e^{itH_A}$ under a qubit Hamiltonian $H_A = \sigma_z$, where $(\sigma_x,\sigma_y,\sigma_z)$ are the Pauli matrices for the qubit system.

We must therefore consider an auxiliary qubit reference frame $R$ in a state $\eta_R$ with $H_R = - \sigma_z$, and compute the conditional min-entropy on the joint state
\begin{equation}
\G (\eta_R \otimes \rho_A) = \int_0^{2\pi} \!\! \frac{dt}{2\pi} \, \U_t( \eta_R) \otimes \U_t( \rho_A).
\end{equation}
We also choose $\rho_A = \frac{1}{2}(\id + \frac{1}{2} \sigma_x + \frac{1}{2}\sigma_z)$ and look at how each choice of reference frame state $\eta_R$ constrains the region of quantum states accessible under time-covariant quantum channels. For fixed input state $\rho$, we define the set
\begin{equation}
\T_\eta := \{ \sigma_A : H_\eta(\sigma) \ge H_\eta(\rho) \}.
\end{equation}
In other words, $\T_\eta$ is the region of quantum states the reference frame state $\eta_R$ classes as admissible for time-covariant transformations. As such, $\sigma \in \T_\eta$ constitutes a necessary, but not sufficient, condition on the pre-order $\rho \succ \sigma$. 

The natural first choice for a reference frame state is $\eta_R = |+\>\<+|$; with a uniform superposition over energy eigenstates, this is in a sense the `best' clock state one can find for the qubit in that it can encode a single bit of data about the parameter $t$, which is the maximum allowed by the Holevo bound~\cite{watrous2018theory}. The region $S_\eta$ in this case is plotted in \figref{fig:single_entropy_qubit}(c).

\begin{figure}[t]
	\centering
	\includegraphics[scale=0.9]{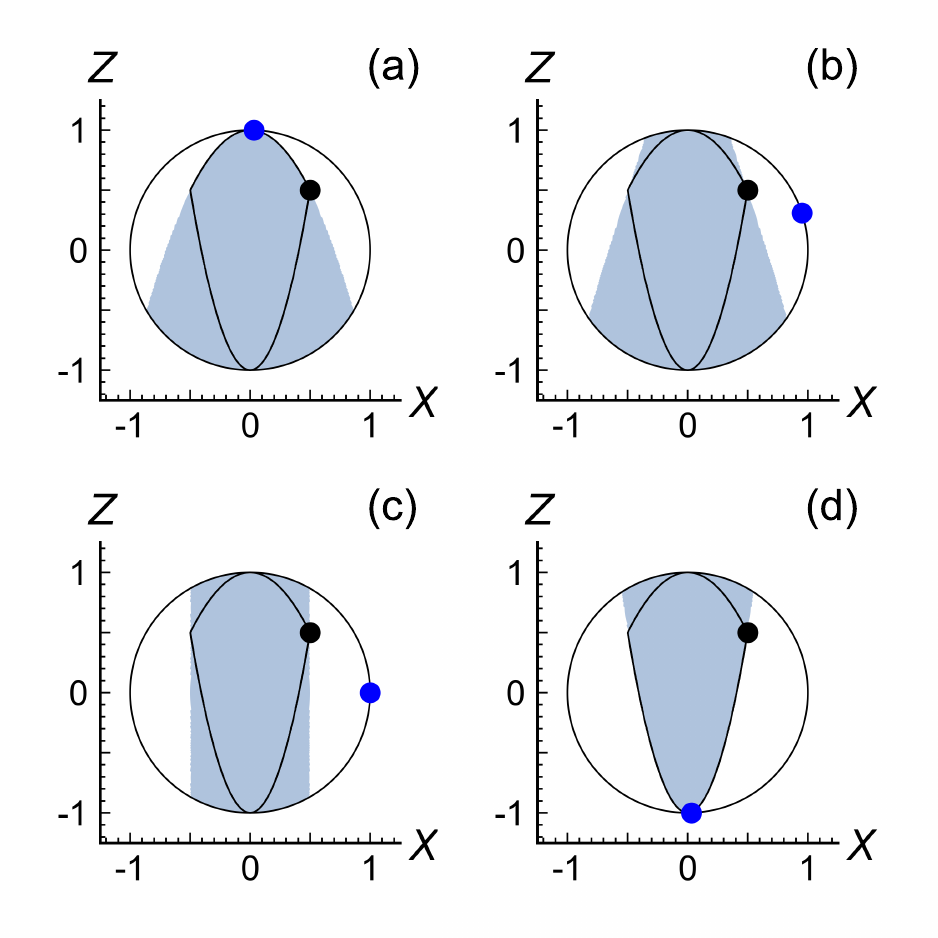}
	\caption{\textbf{``Less is more":~near--symmetric reference frame states are optimal}. We consider a qubit system under $G=U(1)$ symmetry. Given a reference frame state $\eta_R$ (blue dot), the shaded region $\T_\eta$ corresponds to potential states in the Bloch sphere which $\eta_R$ classes as accessible from the state $\rho$ (the black dot) under a covariant transformation. The black curve marks the boundary of all states that are covariantly accessible from $\rho$, and is obtained from the intersection of all regions $\T_\eta$.  Suprisingly, the high coherence state $\eta_R = |+\>\<+|$ gives a weak bound (figure (c)), while in contrast taking $\eta_R$ very close to $|0\>\<0|$ or $|1\>\<1|$ provides complete constraints (the combination of figures (a) and (d)). }
\label{fig:single_entropy_qubit}
\end{figure}

We now consider other choices of reference frame states, and find a surprising result. If we take $\eta_R = |\varphi \>\<\varphi| $ with $|\varphi\>$ approaching either $|0\>$ or $|1\>$, the region $\T_\eta$ provides a \emph{better} approximation to the actual region of quantum states accessible under time-covariant channels. This is shown in \figref{fig:single_entropy_qubit}.

We also find the following striking result: the accessible region is exactly recovered if we consider just \emph{two} reference frame states $\eta_0 = |\varphi_0\>\<\varphi_0|$ and $\eta_1 = |\varphi_1\>\<\varphi_1|$, where $\eta_0$ is infinitesimally close to $|0\>\<0|$ and $\eta_1$ infinitesimally close to $|1\>\<1|$ (a proof may be found in Appendix \ref{appx:qubit_u1_phi_pure}). However, if we took the reference frame states to be exactly equal to these pure symmetric states, then $\T_\eta$ is the entire Bloch sphere, and the constraints determined by the reference frame state disappear completely! 

This suggests that the constraints coming from the reference frame, via the correlations in the state $\Omega(\rho_A)$, have a non-trivial and counter-intuitive dependence on the state $\eta_R$. However, this simple example also illustrates that there are significant redundancies in the entropic set of conditions -- we have reduced from having to compute infinitely many conditions to just two conditions. The question then becomes whether such features carry over to more general situations, and to what degree can we reduce the set of reference frames so as to determine the minimal relational data needed to specify the asymmetry of a quantum system with respect to a general group $G$.

\section{Sufficient surfaces of reference frame states for a general group $G$}
We shall begin our analysis by reducing the set of reference frames needed substantially and establishing high-level results. These shed light on the structure of the problem and lead to our tractable set of conditions in Section~\ref{depolarize}.

\subsection{Basic reference frame redundancies}
\label{sec:initial_redundancies}

The entropic conditions in Theorem~\ref{thrm:gour} are over-complete, and contain a large number of redundancies.
Firstly, given two reference states $\eta_0$ and $\eta_1$ on $R$ such that
\begin{align}
    \eta_1 = \V (\eta_0), 
\end{align}
for some unitary channel $\V_R$ such that 
\begin{equation}
[\V_R \otimes \text{id}_A,\G]=0, 
\label{eqn:commutant}
\end{equation}
then it can be shown (see \appref{appx:invariance_unitaries_com_acom}) that ${\Delta H_{\eta_0} = \Delta H_{\eta_1}}$ for all possible input and output states. This invariance is a special case of the following lemma, which we prove in  \appref{appx:invariance_unitaries_com_acom}:
\begin{restatable}[]{lemma}{covisometryinvariance}
\label{lemma:cov_isometry_invariance}
   Let $\V_R: \B(\H_R) \rightarrow \B(\H_{R'})$ and $\W_A: \B(\H_A) \rightarrow \B(H_{A'})$ be local isometries that jointly commute with the $G$-twirl, i.e.
    \begin{align}
        [\V_R \otimes \W_A, \G] = 0.
    \end{align}
    We then have:
    \begin{align}
        H_{\V(\eta)}(\W(\rho)) = H_\eta(\rho)
    \end{align}
    for any pair of quantum states $\eta$ and $\rho$ on $\H_R$ and $\H_A$ respectively.
\end{restatable}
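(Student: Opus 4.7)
The plan is to decompose the proof into two independent facts and compose them. First, the commutation hypothesis pushes the local isometries through the $G$--twirl when acting on the factorized input. Second, the conditional min-entropy is invariant under local isometries applied simultaneously to the conditioning and conditioned systems. Composing these two facts yields the stated equality immediately.

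For the first step, I would apply $[\V_R \otimes \W_A, \G] = 0$ to the product state $\eta_R \otimes \rho_A$ to obtain
\begin{equation}
    \G\bigl(\V(\eta) \otimes \W(\rho)\bigr) = \G\bigl((\V \otimes \W)(\eta \otimes \rho)\bigr) = (\V \otimes \W)\bigl(\G(\eta \otimes \rho)\bigr).
\end{equation}
Setting $\Omega_{RA} := \G(\eta_R \otimes \rho_A)$ and $\Omega'_{R'A'} := (\V \otimes \W)(\Omega_{RA})$, the lemma reduces to the entropy equality $H_{\min}(R'|A')_{\Omega'} = H_{\min}(R|A)_\Omega$.

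For the second step, let $V$ and $W$ denote the underlying Stinespring isometries with $V^\dagger V = \id_R$, $W^\dagger W = \id_A$, and range projectors $P_R := VV^\dagger$, $P_A := WW^\dagger$. I would establish the entropy equality by two one-sided comparisons of the SDPs defining $H_{\min}$. For $H_{\min}(R'|A')_{\Omega'} \ge H_{\min}(R|A)_\Omega$, given any feasible $X_A \ge 0$ with $\id_R \otimes X_A \ge \Omega_{RA}$, set $Y_{A'} := W X_A W^\dagger$ and conjugate the inequality by $V \otimes W$ to obtain $P_R \otimes Y_{A'} \ge \Omega'_{R'A'}$, which upgrades to $\id_{R'} \otimes Y_{A'} \ge \Omega'_{R'A'}$ since $\id_{R'} \ge P_R$; the trace is preserved because $W^\dagger W = \id_A$. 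For the reverse direction, given feasible $Y_{A'} \ge 0$ at the primed level, set $X_A := W^\dagger Y_{A'} W$ and conjugate $\id_{R'} \otimes Y_{A'} \ge \Omega'_{R'A'}$ by $V^\dagger \otimes W^\dagger$; the identity $(V^\dagger \otimes W^\dagger)\Omega'_{R'A'}(V \otimes W) = \Omega_{RA}$ together with $V^\dagger V = \id_R$ and $W^\dagger W = \id_A$ delivers feasibility, while $\tr X_A = \tr(P_A Y_{A'}) \le \tr Y_{A'}$ compares the correct way for the infima.

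The main subtlety to keep straight is the asymmetric role of the range projectors in the two directions: the forward direction enlarges the feasible set using $\id_{R'} \ge P_R$, whereas the reverse direction relies on the fact that conjugation by the adjoint isometries exactly undoes the embedding of $\Omega_{RA}$ inside the $P_R \otimes P_A$ block of $\Omega'_{R'A'}$. Once both one-sided SDP inequalities are in place, combining them with the factorization from the first step yields $H_{\V(\eta)}(\W(\rho)) = H_\eta(\rho)$, as required. The special case stated immediately before the lemma, where two reference states are related by a covariant unitary on $R$ alone, is then recovered by setting $\W_A$ equal to the identity channel on $A$.
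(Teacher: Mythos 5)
Your proof is correct and follows essentially the same route as the paper: first use the commutation hypothesis to pull $\V_R\otimes\W_A$ outside the $G$--twirl, then invoke invariance of the conditional min-entropy under local isometries applied to both subsystems. The only difference is that the paper simply cites this invariance as a known property of $\Phi$ (Property (P3), from the Tomamichel framework), whereas you supply a self-contained two-sided SDP feasibility argument for it; your argument, including the asymmetric handling of the range projectors $P_R$ and $P_A$ in the two directions, is sound.
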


A second set of redundancies comes from considering the asymmetric modes of the input state. Let the asymmetric modes of a state $\rho$ be denoted modes($\rho$). Then it is known~\cite{marvian2014modes} that \begin{align}
   \text{ modes}(\sigma)\subseteq \text{modes}( \rho), 
   \label{eq:modes_sig_necessary_condition}
\end{align}is a necessary condition for a $G$--covariant transition from $\rho$ to $\sigma$. Moreover, if we assume that the condition~\eqref{eq:modes_sig_necessary_condition} holds, then it suffices to range only over reference frame states $\eta$ such that modes($\eta$) = modes($\rho$) (a proof is given in \appref{appx:modes_rho_modes_eta_redundency}).

\subsection{Necessary and sufficient surfaces of reference frame states}

 It turns out that the entropic set of conditions have a more non-obvious kind of redundancy. Any reference state $\eta$ can be written in an orthogonal basis of Hermitian operators $\{\frac{\id}{d} , X_1, \dots ,X_{d^2-1}\}$ as 
 \begin{align}
     \eta(\bm{x}) \coloneqq \frac{\id}{d} + \sum_{k=1}^{d^2-1} x_k X_k,
\label{eq:eta_bloch_x}
 \end{align}
 where $x_k \in \mathbb{R}$ and $\norm{X_k}_\infty = \frac{1}{d}$ for all $k \in \{ 1, \dots, d^2-1 \}$ provide coordinates for the state. With this in mind, we now have the following result:
\begin{restatable}[]{theorem}{sufficientsurface} (Sufficient surfaces of states).
Let all states and systems be defined as in \thmref{thrm:gour}. Let $\partial \D$ be any closed $d^2-2$ dimensional surface in the state space of $R$ that has $\frac{\id}{d}$ in its interior. The state transformation $\rho \rightarrow \sigma$ is possible under a $G$--covariant channel if and
only if
\begin{equation}
  \Delta  H_\eta \ge 0, 
\end{equation}
for all reference frame states $\eta \in \partial \D$.
\label{thm:ball_sufficient_refs}
\end{restatable}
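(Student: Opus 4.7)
The forward ($\Rightarrow$) direction is immediate from \thmref{thrm:gour}: a $G$-covariant transformation $\rho \to \sigma$ forces $\Delta H_\eta \ge 0$ for every state $\eta_R$ of $\H_R$, and in particular for $\eta \in \partial \D$.

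For the converse ($\Leftarrow$), my plan is to upgrade the hypothesis on $\partial \D$ to the hypothesis on \emph{every} state of $\H_R$, and then invoke \thmref{thrm:gour}. I would first record two structural facts about $H_\eta(\rho) = -\log f_\rho(\eta)$, with $f_\rho(\eta) := \inf\{\tr X_A : \id_R \otimes X_A \ge \G(\eta \otimes \rho),\, X_A \ge 0\}$. \textbf{(F1) Positive homogeneity:} rescaling the primal SDP gives $f_\rho(\lambda\tilde\eta) = \lambda f_\rho(\tilde\eta)$ for all $\lambda > 0$ and $\tilde\eta \ge 0$, so $\Delta H_{\lambda\eta} = \Delta H_\eta$; the sign of $\Delta H_\eta$ depends only on the ray from the origin through $\eta$. \textbf{(F2) Anchor at $\id/d$:} because $\G(\id_R/d \otimes \rho) = \id_R/d \otimes \G(\rho)$ factorises, and taking the trace of the SDP constraint gives $\tr[\id_R \otimes X_A] \ge \tr[\G(\eta\otimes\rho)] = 1$, one has $f_\rho(\id_R/d) = 1/d \le f_\rho(\eta)$ for every state $\eta$; in particular $\Delta H_{\id_R/d} = 0$ automatically.

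Fix now an arbitrary state $\eta_*$ of $\H_R$ and consider the affine line $\eta(\beta) := \id_R/d + \beta(\eta_* - \id_R/d)$, which lies in the trace-one hyperplane and passes through $\id_R/d$ at $\beta = 0$ and through $\eta_*$ at $\beta = 1$. Because $\partial \D$ is a closed codimension-one surface with $\id_R/d$ in its interior, this line must cross $\partial \D$ in at least two points $\eta_\pm := \eta(\beta_\pm)$ with $\beta_- < 0 < \beta_+$, at which the hypothesis yields $\Delta H_{\eta_\pm} \ge 0$; together with (F2) this furnishes three anchor values of $\Delta H \ge 0$ on the line. To extend the inequality to $\beta = 1$ I would lift to the two-dimensional plane $P$ through the origin spanned by $\id_R/d$ and $\eta_*$: on $P \cap (\text{positive cone of } R)$ the functions $f_\rho$ and $f_\sigma$ are sublinear (convex and positively homogeneous), and by strong SDP duality they coincide with the support functions of compact convex planar sets $\Pi_\rho, \Pi_\sigma \subset \mathbb{R}^2$ obtained from the \emph{same} dual-feasible pre-image $\{Y \ge 0 : \tr_R Y \le \id_A\}$ under linear projections that differ only by the replacement $\rho \to \sigma$. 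Fact (F2) pins down the right-most support value of both sets to $1/d$ in the direction of $\id_R/d$, while the hypothesis at $\eta_\pm$ gives support-function domination $h_{\Pi_\sigma} \le h_{\Pi_\rho}$ at the two further planar directions $(1,\beta_\pm)$.

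\textbf{The main obstacle} is the final implication: extracting the pointwise inequality $f_\sigma(\eta_*) \le f_\rho(\eta_*)$ from these few planar conditions. Bare convexity of $f_\rho$ and $f_\sigma$ is insufficient, since their difference is only a difference-of-convex function with no a priori sign structure. The essential ingredient I would exploit is that $\Pi_\rho$ and $\Pi_\sigma$ arise from a \emph{common} dual-feasible pre-image and differ only by the substitution $\rho \to \sigma$; combined with positive homogeneity (F1) and the $\id_R/d$-anchoring (F2), this should extend the planar domination from the anchor directions to the entire wedge of admissible directions in $P$, in particular to the direction of $\eta_*$. The conclusion $\Delta H_{\eta_*} \ge 0$ then follows, and since $\eta_*$ was arbitrary, \thmref{thrm:gour} delivers the claimed $G$-covariant channel.
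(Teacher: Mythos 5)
Your forward direction and your overall strategy (propagate the hypothesis from $\partial\D$ to an arbitrary $\eta_*$ along the line through $\id/d$) match the paper, but the converse has a genuine gap exactly where you flag it: the final implication $f_\sigma(\eta_*)\le f_\rho(\eta_*)$ is never actually derived. The facts you assemble --- positive homogeneity (F1), convexity, and the anchoring $f_\tau(\id/d)=1/d\le f_\tau(\eta)$ (F2) --- only give \emph{sublinearity} of $f_\rho$ and $f_\sigma$ along the segment from $\id/d$ to $\eta_+\in\partial\D$, i.e.\ $f_\tau(\Lambda_p[\eta_+])\le p\,f_\tau(\eta_+)+\frac{1-p}{d}$. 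To transfer the sign of $\Delta\Phi=f_\rho-f_\sigma$ from $\eta_+$ to an interior point $\eta_*=\Lambda_q[\eta_+]$ you need this bound as an \emph{equality} for $f_\rho$ (a lower bound in the concave direction), and as you correctly observe, a difference of two merely convex functions has no usable sign structure. The appeal to a ``common dual-feasible pre-image'' is the right instinct but is not developed into an argument, so the proof does not close.

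The missing ingredient is Lemma~\ref{lemma:partially-depol1}: the SDP defining $\Phi_\eta(\tau)$ is exactly \emph{affine} under partial depolarization of the reference,
\begin{equation}
\Phi_{\Lambda_p[\eta]}(\tau)=p\,\Phi_\eta(\tau)+\frac{1-p}{d},
\end{equation}
which follows by shifting the primal variable $X_A\mapsto Y_A:=X_A-\frac{1-p}{d}\G(\tau)$ and noting that the twirled constraint $\id_R\otimes X_A\ge \frac{1-p}{d}\,\id_R\otimes\G(\tau)+p\,\G(\eta\otimes\tau)$ already forces $X_A\ge\frac{1-p}{d}\G(\tau)$, so the feasible set is unchanged and the objective shifts by the constant $\frac{1-p}{d}$. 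This gives $\Delta\Phi_{\Lambda_p[\eta]}=p\,\Delta\Phi_\eta$, so the sign of $\Delta H_\eta$ is constant along each ray out of $\id/d$ (not out of the origin, as in your (F1)). With that, the entire planar support-function construction is unnecessary: the ray from $\id/d$ through $\eta_*$ meets $\partial\D$ at a single point $\eta_+$ with $\beta_+>0$, and whether $\beta_+\le 1$ or $\beta_+>1$, one of $\eta_*,\eta_+$ is a partial depolarization of the other, so $\Delta H_{\eta_+}\ge 0$ is equivalent to $\Delta H_{\eta_*}\ge 0$, and Theorem~\ref{thrm:gour} finishes the argument.
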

A proof is given in \appref{appx:proof_sufficient_surface}, and follows from fact that the conditional min-entropies behave particularly nicely under the application of a partial depolarizing channel on the reference system (see Lemma \ref{lemma:partially-depol1}). 

Combined with the redundancies of Sec.~\ref{sec:initial_redundancies}, we have that only a subset of $\partial\D$ will produce non-trivial constraints -- namely the intersection of $\partial \D$ with states having asymmetric modes, quotiented by the action of the unitary sub-group of channels $\V_R$ obeying Equation (\ref{eqn:commutant}).

\begin{figure}[t]
	\centering
	\includegraphics[width=.9\linewidth]{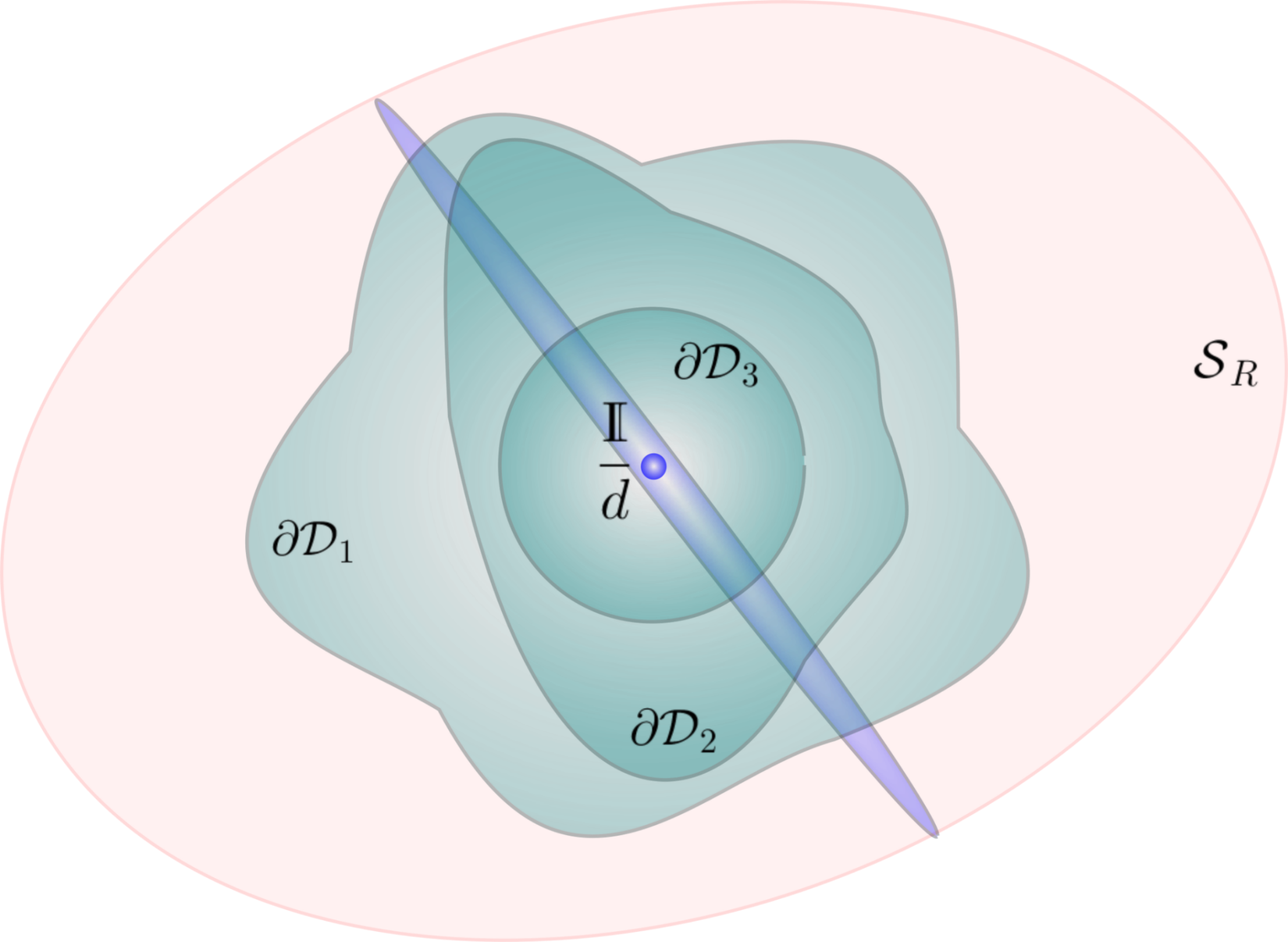}
	\caption{\textbf{(Sufficient surfaces of reference frames)}. There is extensive freedom in the choice of sufficient reference states. According to Theorem \ref{thm:ball_sufficient_refs}, any surface inside the set of all reference states $\S_R$ that encloses the maximally mixed state is a sufficient set of reference frames -- the three surfaces shown all provide the same information. The blue region is the set of symmetric states.}
\end{figure}

\subsection{A finite set of reference states under $\varepsilon$--smoothing}

It is natural to consider a `smoothed' version of the asymmetry conditions in which we are limited to some $\varepsilon$-ball resolution around states~\cite{renner2005security, tomamichel2012framework}, where an \textit{$\varepsilon$-ball} around a state $\tau$ is defined by 
\begin{equation}
\B_\varepsilon (\tau) \coloneqq \{ \tilde{\tau}\in \B(\H) :  D(\tau , \tilde{\tau}) \le  \varepsilon \}, 
\end{equation}
and $D$ is the generalized trace distance~\cite{tomamichel2010duality}. This is physically motivated by the fact that in any practical experimental scenario, two states that are in a sense close cannot be distinguished up to some finite precision in the measurement apparatus. Here we prove that, if we allow for some $\varepsilon$-probability of error in the transformation, then we can restrict to a finite set of reference frame states. 

To perform the $\varepsilon$-smoothing over the reference system, we need the following lemma, which states that the entropic relations are continuous functions of the reference frame state.
\begin{restatable}[]{lemma}{coarsegrain}
\label{lemma:coarse_grain_less_than_r}
For any $\tilde{\eta} \in \B_\varepsilon (\eta) $, we have 
\begin{align}
   | \Delta H_{\tilde{\eta}} - \Delta H_{\eta}| \le \frac{2 d_R^2 \varepsilon}{\ln 2(1-2\varepsilon)}.
\label{eq:DeltaS_delta_epsilon}
\end{align}
Note that if we further restrict $\tilde{\eta}$ to normalised states and define $r(\varepsilon) \coloneqq \frac{2 d_R^2 \varepsilon}{\ln 2}$ this simplifies to
\begin{align}
 | \Delta H_{\tilde{\eta}} - \Delta H_{\eta}| \le r(\varepsilon) .
\label{eq:DeltaS_delta_epsilon_normalised}
\end{align}
\end{restatable}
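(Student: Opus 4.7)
The plan is to bound $|\Delta H_{\tilde\eta}-\Delta H_\eta|$ by proving a continuity estimate for $H_{\min}(R|A)_{\G(\eta\otimes\tau)}$ in its reference-frame argument, and then applying it with $\tau=\rho$ and $\tau=\sigma$ separately. The three main ingredients are contractivity of the generalised trace distance under CPTP maps, the primal SDP for the conditional min-entropy, and a Taylor bound for the logarithm.

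\emph{Step 1 (reduction to a single state).} From the definition $\Delta H_\eta = H_\eta(\sigma)-H_\eta(\rho)$ and the triangle inequality,
\[
|\Delta H_{\tilde\eta}-\Delta H_\eta| \le |H_{\tilde\eta}(\sigma)-H_\eta(\sigma)| + |H_{\tilde\eta}(\rho)-H_\eta(\rho)|,
\]
so it suffices to establish a bound of the form $|H_{\tilde\eta}(\tau)-H_\eta(\tau)|\le \frac{d_R^2\varepsilon}{\ln 2\,(1-2\varepsilon)}$ for an arbitrary fixed $\tau$ on $A$.

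\emph{Step 2 (contractivity of the twirl).} Since $\G$ is CPTP and $D$ is monotone under CPTP maps and under tensoring with a fixed state,
\[
D(\G(\tilde\eta\otimes\tau),\G(\eta\otimes\tau)) \le D(\tilde\eta\otimes\tau,\eta\otimes\tau) = D(\tilde\eta,\eta)\le\varepsilon.
\]
Set $\Omega\coloneqq\G(\eta\otimes\tau)$ and $\tilde\Omega\coloneqq\G(\tilde\eta\otimes\tau)$. Combining $(\tilde\eta-\eta)_+\le\varepsilon\,\id_R$ (from $\|(\tilde\eta-\eta)_+\|_\infty\le\|(\tilde\eta-\eta)_+\|_1\le\varepsilon$) with the covariance identity $\G(\id_R\otimes\tau)=\id_R\otimes\G(\tau)$ gives the operator inequality
\[
\tilde\Omega \le \Omega + \varepsilon\,\id_R\otimes\G(\tau).
\]

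\emph{Step 3 (SDP perturbation).} Using the primal representation $p\coloneqq 2^{-H_{\min}(R|A)_\Omega}=\min\{\tr X_A:\id_R\otimes X_A\ge\Omega\}$, if $X_A^\star$ is an optimizer for $\Omega$, then $\tilde X_A\coloneqq X_A^\star+\varepsilon\G(\tau)$ is primal-feasible for $\tilde\Omega$. Taking the trace yields $\tilde p\le p+\varepsilon$, and a symmetric argument yields $|\tilde p-p|\le\varepsilon$.

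\emph{Step 4 (log conversion).} Because $\Omega$ is a normalised bipartite state, $H_{\min}(R|A)_\Omega\le\log d_R$, so $p\ge 1/d_R$ (and analogously for $\tilde p$). The elementary bound $|\log(1+x)|\le|x|/[\ln 2\,(1-|x|)]$ for $|x|<1$ converts the additive relation into a logarithmic one with the denominator as claimed.

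\emph{Expected obstacle.} The careful bookkeeping of the dimension factors is the main technical point: a naive execution of the steps above yields a continuity constant of order $d_R\varepsilon$, whereas the stated bound carries $d_R^2\varepsilon$ with $(1-2\varepsilon)$ in the denominator. The extra factor of $d_R$ arises when the triangle-inequality overhead from Step 1 is combined with the looser operator bound $(\tilde\eta-\eta)_+\le 2\varepsilon\,\id_R$ appropriate to possibly subnormalised $\tilde\eta$ under the generalised trace distance; the $(1-2\varepsilon)$ denominator then emerges from Taylor bounding the logarithm against this larger perturbation. When $\tilde\eta$ is restricted to normalised states the trace-mismatch contribution in $D(\tilde\eta,\eta)$ vanishes, the $(1-2\varepsilon)$ factor drops out, and one recovers the clean $r(\varepsilon)=2d_R^2\varepsilon/\ln 2$ stated in the lemma.
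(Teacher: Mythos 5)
Your route is genuinely different from the paper's. The paper proves this lemma in three short steps: contractivity of the generalized trace distance under the CPTP map $\G$ gives $D(\G(\eta\otimes\tau),\G(\tilde\eta\otimes\tau))\le\varepsilon$; it then invokes the general continuity theorem for the conditional min-entropy from the smooth-entropy framework (stated as Theorem~\ref{thm:continuity}), whose prefactor $d_R\min\{d_R,d_A\}/(\ln 2\,\min\{\tr\Omega,\tr\tilde\Omega\})$ is loosened to $d_R^2/(\ln 2\,(1-2\varepsilon))$; the factor of $2$ then comes from the same triangle inequality as your Step~1. You instead perturb the primal SDP directly, using the operator inequality $\tilde\Omega\le\Omega+\|(\tilde\eta-\eta)_+\|_\infty\,\id_R\otimes\G(\tau)$ together with $\G(\id_R\otimes\tau)=\id_R\otimes\G(\tau)$ to shift a feasible $X_A$ by $\varepsilon\,\G(\tau)$. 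This is a legitimate and in fact sharper strategy: it exploits the product structure of the input, which the black-box continuity theorem cannot, and correctly executed it yields $|\tilde p-p|\le\varepsilon$ (note that $\|(\tilde\eta-\eta)_+\|_1=\tfrac12\|\tilde\eta-\eta\|_1+\tfrac12\tr(\tilde\eta-\eta)\le D(\tilde\eta,\eta)$ even for subnormalised $\tilde\eta$, so your worry about needing $2\varepsilon\,\id_R$ is unfounded) and hence a per-term bound of $d_R\varepsilon/(\ln 2\,(1-2\varepsilon))$ --- stronger than the paper's by a factor of $d_R$.

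The genuine weak point is Step~4 and the ``expected obstacle'' paragraph. As written, converting $|\tilde p-p|\le\varepsilon$ via $|\log(1+x)|\le|x|/[\ln 2\,(1-|x|)]$ with $|x|\le d_R\varepsilon$ produces a denominator of the form $(1-d_R\varepsilon)$, not $(1-2\varepsilon)$, and your proposed reconciliation --- that an extra factor of $d_R$ emerges from combining the triangle inequality with a looser positive-part bound --- does not hold up: those effects contribute constant factors, not a factor of $d_R$. The correct bookkeeping is simpler: use $p\ge 1/d_R$ and $\tilde p\ge\tr(\tilde\Omega)/d_R\ge(1-2\varepsilon)/d_R$, then bound $|\log\tilde p-\log p|=\log(\max\{p,\tilde p\}/\min\{p,\tilde p\})\le|\tilde p-p|/(\ln 2\,\min\{p,\tilde p\})\le d_R\varepsilon/(\ln 2\,(1-2\varepsilon))$. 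This is where the $(1-2\varepsilon)$ actually originates (it is the possible trace deficit of $\tilde\Omega$, exactly as in the paper's Eq.~(\ref{subeq:22})), and since $d_R\le d_R^2$ your resulting bound $2d_R\varepsilon/(\ln 2\,(1-2\varepsilon))$ implies the stated one. With that repair your argument is complete; do not try to degrade it to match the paper's constant.
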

\begin{proof}
A proof is given in \appref{appx:Hmin_relations_continuity}.
\end{proof}

Therefore, an $\varepsilon$-variation in the choice of the reference state $\eta$ corresponds to an $\varepsilon$-small variation in the entropy difference $\Delta H_\eta$. 

A corollary of \thmref{thm:ball_sufficient_refs} is that it is sufficient to consider only reference frame states of the form \eqref{eq:eta_bloch_x} that live on the surface of a sphere about the maximally mixed state. It can be shown \cite{tkocz2019introduction,Ledoux1991} that there exists a finite $\varepsilon$-net $\{ \eta_1, \eta_2, \dots, \eta_N \}$ covering this set of reference states. This gives rise to the following result:
\begin{restatable}[]{theorem}{epsilonfinitechecks} \label{thm:epsilon_finite_checks} Given any $\varepsilon >0$ resolving scale, there is a finite set of reference frame states $\N \coloneqq \{\eta_k \}_{k=1}^N$ with $N = O \left ( (1+\frac{1}{\varepsilon})^{d^2-1} \right)$ that leads to the following cases: 
\begin{itemize}
\item if $\Delta H_{\eta_k} <0$ for any $\eta_k\in \N$ then $\rho \rightarrow \sigma$ is forbidden under all $G$--covariant quantum channels.
\item if $\Delta H_{\eta_k} \ge r(\varepsilon)$ for all $\eta_k\in \N$ then $\rho \rightarrow \sigma$ under a $G$--covariant quantum channel.
\item For each $\eta_k \in \N$ that has $0\le \Delta H_{\eta_k} < r(\varepsilon)$ we can obtain $O(\varepsilon)$ upper bound estimates of the minimal asymmetry resources needed to realise the transformation.
\end{itemize} 
\end{restatable}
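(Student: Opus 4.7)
The plan is to combine the sufficient-surface characterisation in Theorem \ref{thm:ball_sufficient_refs} with the Lipschitz-type continuity of $\Delta H_\eta$ in the reference state supplied by Lemma \ref{lemma:coarse_grain_less_than_r}, and then invoke a standard volumetric covering-number estimate. First I would fix a convenient closed surface $\partial \D$ of reference states that contains $\id/d$ in its interior — concretely, a sphere $\{\eta(\bm{x}) : \norm{\bm{x}}_2 = R\}$ in the affine Bloch-like coordinates of Eq.~\eqref{eq:eta_bloch_x}, with $R$ chosen small enough that every $\eta(\bm{x})$ on the sphere is a bona fide density operator. By Theorem \ref{thm:ball_sufficient_refs}, $\rho\to\sigma$ is realisable by a $G$--covariant channel if and only if $\Delta H_\eta \ge 0$ for every $\eta \in \partial \D$.

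Next I would build the $\varepsilon$-net. Since $\partial\D$ sits inside the compact $(d^2-1)$-dimensional real affine space of Hermitian trace-one operators, the standard volumetric argument (cf.~\cite{tkocz2019introduction,Ledoux1991}) produces a finite set $\N = \{\eta_1,\dots,\eta_N\}\subset \partial\D$ of normalised reference states with $N = O((1+1/\varepsilon)^{d^2-1})$, whose generalised trace-distance $\varepsilon$-balls cover $\partial \D$. For any $\eta \in \partial\D$ with nearest net point $\eta_k$, Lemma \ref{lemma:coarse_grain_less_than_r} then gives the key estimate
\begin{equation}
| \Delta H_\eta - \Delta H_{\eta_k} | \le r(\varepsilon).
\end{equation}

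With this in hand, the three cases of the theorem follow quickly. If some $\eta_k \in \N$ has $\Delta H_{\eta_k} < 0$, then by Theorem \ref{thrm:gour} a single violating reference state already rules out the transition. If instead every $\eta_k \in \N$ satisfies $\Delta H_{\eta_k} \ge r(\varepsilon)$, then for every $\eta \in \partial \D$ we obtain $\Delta H_\eta \ge \Delta H_{\eta_k} - r(\varepsilon) \ge 0$, and Theorem \ref{thm:ball_sufficient_refs} certifies the existence of a $G$--covariant channel realising $\rho\to\sigma$. For the intermediate case $0 \le \Delta H_{\eta_k} < r(\varepsilon)$, one perturbs $\sigma$ by mixing with an $O(\varepsilon)$ fraction of either $\id/d_B$ or $\G(\sigma)$ so that the boosted entropic differences exceed $r(\varepsilon)$ on the whole net; the size of the perturbation then directly quantifies the extra depolarisation resources required.

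The main obstacle will be making this last step quantitatively sharp. Lemma \ref{lemma:coarse_grain_less_than_r} controls variation on the reference side, not on the output side, so I would need a complementary continuity bound describing how $\Delta H_\eta$ responds to a partial depolarisation of $\sigma$. The natural tool is the partial-depolarising behaviour of $H_{\mathrm{min}}$ already exploited in the proof of Theorem \ref{thm:ball_sufficient_refs} (Lemma \ref{lemma:partially-depol1}): applied on $B$ rather than on $R$, it translates an $O(\varepsilon)$ perturbation of $\sigma$ into a controlled uniform shift of every $\Delta H_{\eta_k}$, yielding the promised $O(\varepsilon)$ estimate for the minimal additional asymmetry resources. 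The remaining bookkeeping — verifying the exact constants in the covering-number exponent $d^2-1$ and checking that the $\varepsilon$-net can be restricted to lie inside $\partial \D \cap \S_R$ after quotienting by the redundancies of Section~\ref{sec:initial_redundancies} — is routine.
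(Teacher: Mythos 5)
Your construction of the net and your handling of the first two bullets coincide with the paper's proof: the same sufficient surface from \thmref{thm:ball_sufficient_refs}, the same covering-number bound $N = O((1+1/\varepsilon)^{d^2-1})$ obtained by working in the $(d^2-1)$-dimensional space of traceless Hermitian operators, and the same use of \lemref{lemma:coarse_grain_less_than_r} to promote $\Delta H_{\eta_k}\ge r(\varepsilon)$ on the net to $\Delta H_\eta \ge 0$ on all of $\partial\D$. (The only cosmetic difference is that the paper takes the $\norm{\cdot}_\infty$-unit sphere and converts to trace distance via a rank bound, while you take a Euclidean Bloch sphere; either works.)

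The third bullet is where your argument has a genuine gap. The theorem asks for an estimate of the minimal \emph{asymmetry resources} needed to realise $\rho\rightarrow\sigma$; the paper supplies this by adjoining an ancillary reference state $\chi_{A'}$ to the \emph{input}, which lowers $H_{\rm min}(R|AA')$ and hence raises every $\Delta H_{\eta_k}$ past $r(\varepsilon)$, the single-shot asymmetry of $\chi_{A'}$ being the advertised $O(\varepsilon)$ estimate. You instead propose to depolarise the \emph{output} $\sigma$, and to control the resulting shift of $\Delta H_{\eta}$ by applying \lemref{lemma:partially-depol1} ``on $B$ rather than on $R$''. That transfer fails: the exact linearity in \lemref{lemma:partially-depol1} comes from the fact that depolarising the reference adds a term $\tfrac{1-p}{d}\,\id_R\otimes\G(\tau_A)$ to $\G(\eta\otimes\tau)$, which has the form $\id_R\otimes(\cdot)$ and can be absorbed into the SDP variable $X_A$. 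Depolarising the output adds $\G(\eta_R)\otimes\tfrac{\id_A}{d}$ instead, which is not of that form, so one only retains the convexity bound $\Phi_\eta(\sigma_p)\le p\,\Phi_\eta(\sigma)+\tfrac{1-p}{d}$. Since $\Phi_\eta(\sigma)\ge \tfrac{1}{d}$ always (take $\E$ to be the channel outputting $\id/d$ in the dual form of \lemref{lemma:Phi_and_dual}), this bound gives no guaranteed strict increase of $H_\eta(\sigma_p)$ when $\Phi_\eta(\sigma)$ is near $1/d$; indeed $H_\eta(\sigma)$ is capped at $\log d$, so if $H_\eta(\rho)$ is also close to $\log d$ no amount of output depolarisation can push $\Delta H_{\eta_k}$ above $r(\varepsilon)$. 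Beyond this technical obstruction, output depolarisation quantifies a different thing (how much the target must be degraded, the subject of \thmref{thrm:suff_cond_depolarized}) rather than the additional input asymmetry the statement refers to. To close the gap you should replace this step with the catalytic construction: append a near-perfect reference frame $\chi_{A'}$ of $O(\varepsilon)$ single-shot asymmetry to the input and argue it boosts each borderline $\Delta H_{\eta_k}$ above $r(\varepsilon)$.
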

A proof is provided in \appref{appx:smoothing_proof}.

This implies that the entropic relation can be checked on a finite number of reference frames states, and furthermore each individual reference frame state can give us some information. 

The above result is potentially of interest in numerical studies of low-dimensional systems. However, it does not shed much additional light on the structure of covariant state transformations. Therefore, instead of developing this line further here, we look at a limiting regime of reference frame states that make things clearer. This in turn lead to a more user-friendly set of conditions for `smoothed' interconversions in Section~\ref{depolarize}.

\section{Infinitesimal reference frames and a single minimality condition for asymmetry}

While we have a finite number of conditions for smoothed asymmetry, these are, by construction, of only approximate validity, and are not very physically informative. The surface condition of Theorem \ref{thm:ball_sufficient_refs} reduces the problem significantly, but still leaves us with an infinite set of reference frames to check.

However we can always take the region $\D$ in \thmref{thm:ball_sufficient_refs} to be an arbitrarily small region around the maximally mixed state, and so restrict to reference frame states that are arbitrarily close to being trivial, and this does not affect the completeness of the set of reference frames. This then shows that the counter-intuitive features we highlighted in the qubit case are in fact generic and appear for any dimension and any group action.

A statement of this is as follows. 
\begin{theorem}
\label{thm:local_min_max_mixed_condition}
Given systems $A$ and $B$, it is possible to transform a quantum state $\rho_A$ of $A$ into a state $\sigma_B$ of $B$ under a $G$--covariant quantum channel if and only if $\Delta H_\eta$ has a \emph{local minimum} at $\eta_R = \frac{\id}{d}$.
\end{theorem}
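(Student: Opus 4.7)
The strategy is to identify $\eta_R = \id/d$ as a canonical ``origin'' for the functional $\Delta H_\eta$ and then to combine this observation with Theorems~\ref{thrm:gour} and~\ref{thm:ball_sufficient_refs} to obtain both directions. The enabling calculation is that $\Delta H_{\id/d}=0$ independently of the choice of $\rho$ and $\sigma$. Since the maximally mixed state is $G$-invariant, $\U_g(\id/d) = \id/d$ for all $g \in G$, and so the twirl factorises,
\begin{equation}
\G\!\left(\frac{\id}{d}\otimes \rho_A\right) = \frac{\id}{d} \otimes \G(\rho_A),
\end{equation}
for every $\rho_A$. Applied to a product state of this form, the SDP defining $H_{\mathrm{min}}(R|A)$ is minimised at $X_A = \G(\rho_A)/d$ with $\tr X_A = 1/d$, giving $H_{\id/d}(\rho) = \log d$. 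This value is independent of $\rho$, so $\Delta H_{\id/d} = H_{\id/d}(\sigma) - H_{\id/d}(\rho) = 0$ for any input--output pair.

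The forward direction is then immediate: if a $G$-covariant channel realising $\rho\to\sigma$ exists, Theorem~\ref{thrm:gour} gives $\Delta H_\eta \ge 0 = \Delta H_{\id/d}$ for every $\eta$. Thus $\id/d$ is in fact a \emph{global} minimum of $\Delta H_\eta$, and in particular a local one.

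For the converse, Theorem~\ref{thm:ball_sufficient_refs} does the real work of promoting a local statement to a global one. Assume $\id/d$ is a local minimum; by definition there is an open neighbourhood $U$ of $\id/d$ in the state space of $R$ on which $\Delta H_\eta \ge \Delta H_{\id/d} = 0$. Because $\id/d$ is strictly positive it lies in the interior of the $(d^2-1)$-dimensional manifold of density matrices, so for any sufficiently small radius a Hilbert--Schmidt sphere around $\id/d$ yields a closed $(d^2-2)$-dimensional surface $\partial \D \subset U$ enclosing $\id/d$. Since $\Delta H_\eta \ge 0$ on $\partial \D$, Theorem~\ref{thm:ball_sufficient_refs} immediately certifies the existence of the desired $G$-covariant channel. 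There is no deep obstacle: the only point to verify carefully is that the neighbourhood supplied by the local-minimum hypothesis contains an enclosing surface of the correct topological type, which is guaranteed by $\id/d$ being an interior point of the state space, together with the continuity of $\Delta H_\eta$ from Lemma~\ref{lemma:coarse_grain_less_than_r}.
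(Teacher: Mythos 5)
Your proposal is correct and follows essentially the same route as the paper: establish $\Delta H_{\id/d}=0$ (the paper cites Lemma~\ref{lemma:phi_dephased_states} for symmetric reference states, you compute the value $\log d$ directly, which agrees), deduce the forward direction from Theorem~\ref{thrm:gour} since a global minimum is a local one, and obtain the converse by placing a small enclosing sphere inside the local-minimum neighbourhood and invoking Theorem~\ref{thm:ball_sufficient_refs}. The only cosmetic difference is that you choose the surface strictly inside the open neighbourhood, whereas the paper appeals to continuity to extend the inequality to the boundary of the neighbourhood itself; both are valid.
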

\begin{proof}	
	We first note that $\Delta H_\eta = 0$ whenever $\eta_R$ is symmetric (see Lemma~\ref{lemma:phi_dephased_states} in the appendices), and therefore $\Delta H_\eta = 0$ when $\eta_R = \frac{\id}{d}$. If we assume $\rho_A \xrightarrow{G} \sigma_B$, namely $\rho_A$ can be transformed into $\sigma_B$ via a $G$--covariant channel, then Theorem \ref{thrm:gour} implies that $\Delta H_\eta$ has a global minimum at $\eta_R = \frac{\id}{d}$, which must therefore be a local minimum as well. Conversely, if we assume $\Delta H_\eta$ has a local minimum at $\eta_R = \frac{\id}{d}$, then there exists a neighbourhood $\D$ around $\eta_R = \frac{\id}{d}$ in which $\Delta H_\eta \ge \Delta H_{\id/d}= 0$. The conditional entropies are continuous in $\eta_R$, so we have $\Delta H_\eta \ge 0$ on $\partial \D$ as well. We conclude by Theorem \ref{thm:ball_sufficient_refs} that $\rho_A \xrightarrow{G} \sigma_B$. Therefore, $\rho_A \xrightarrow{G} \sigma_B$ if and only if $\eta_R = \frac{\id}{d}$ is a local minimum of $\Delta H_\eta$. 
\end{proof}
This result is surprising, since we would expect that `optimal' information would be obtained by evaluating the entropic relations on reference states that are closest to being a ``classical" reference frame~\cite{bartlett2007reference}, namely a state $|\phi\>$ whose orbit $ |\phi(g)\> := U(g) |\psi\>$ under $G$ encodes all group elements completely distinguishably in the sense that
\begin{align}
    \braket{\phi(g)}{\phi(g')} = \delta(g^{-1}g'),
\end{align}
for all $g, g'\in G$. 

The use of such a reference frame $\ket{\phi}$ allows us to `relativise' all symmetries and construct \emph{covariant} versions of every aspect of quantum theory~\cite{bartlett2007reference,bartlett2009quantum,marvian2012symmetry,Loveridge2017Relativity,loveridge2018symmetry,Loveridge2020WAY} This is done via a relativising map 
\begin{align}
A \rightarrow \tilde{A} \coloneqq \int dg\ \U_g(A) \otimes \ketbra{\phi(g)},
\end{align}
which can be viewed as a quantum to classical-quantum channel. In the limit of a classical reference frame with $\bra{\phi(g)}\ket{\phi(g')}=\delta(g^{-1}g')$, the mapping becomes reversible via a readout from the classical register. However, for reference frame states that are not classical, the encoding is fundamentally noisy and so it is expected that the asymmetry features of a quantum state are not properly described within the encoding. Theorem~\ref{thm:local_min_max_mixed_condition}, however, tells us this is not the case. 

This ability to restrict relational data to the case of infinitesimally small reference frames suggests that asymmetry theory admits a differential geometry description in terms of tangent space of operators at the maximally mixed state. Given that the interconversion of states under $G$--covariant channels corresponds to a local minimum condition we might also expect that asymmetry is described by information geometry \cite{hayashi2006quantum,bengtsson2017geometry}, and a single curvature computed from the $H_{\rm min}(R|A)$ entropy. This would imply that the asymmetry properties of a system are fully described by a form of quantum Fisher information \cite{marvian2012symmetry}. 

We find that, for our warm-up example of $G=U(1)$ on a qubit, something like this does indeed occur. We show in Appendix \ref{appx:qubit_u1_phi_pure} that
\begin{equation}
\rho \xrightarrow{U(1)} \sigma \mbox{ if and only if } \partial_\theta^2 (\Delta H_\eta) \ge 0, \mbox{ at } \theta = 0,\pi,
\end{equation}
where $\theta$ is the angle the Bloch vector of $\eta$ makes with the $Z$--axis. Therefore we reduce the problem down to checking just two conditions, framed as a curvature term in the angular direction. However, in the radial direction one does not have a smooth variation. Instead, we conjecture that under $\varepsilon$--smoothing a complete curvature condition exists in all directions with the angular directions providing the non-trivial constraint. In the next section we give explicit details on this troublesome radial behaviour.

\subsection{Conical behaviour at the maximally mixed state}
We now consider the behaviour of the $H_\eta$ entropies in the neighbourhood of the maximally mixed state. Once again, we characterise reference frame states using the co-ordinate system in Equation \ref{eq:eta_bloch_x} as $\eta(\bm{x})$, where the maximally mixed state is located at $\bm{x}  = \mathbf{0}$. We further define
\begin{align}
	\Phi_\tau(\bm{x}) \coloneqq 2^{-H_{\eta(\bm{x})}(\tau)}.
	\label{eq:phi_H_relationship}
\end{align}
and $\tilde{\Phi}_\tau(\bm{x}) \coloneqq \Phi_\tau(\bm{x}) - \Phi_\tau(\mathbf{0})$, which gives the difference in $\Phi_\tau$ between the maximally mixed state to the reference state at $\bm{x}$. 

As a result of Lemma \ref{lemma:partially-depol}, $\tilde{\Phi}_\tau(\bm{x})$ has the following properties:
\begin{restatable}[]{lemma}{cusp}
	\label{lemma:cusp}
	Let $\lambda \geq 0$. Then for all $\bm{x}, \lambda \bm{x} \in \S$, where $\S$ is the set of all co-ordinates corresponding to reference states: 
	\begin{align}
	\tilde{\Phi}_\tau(\lambda\bm{x}) = \lambda \tilde{\Phi}_\tau(\bm{x})
	\end{align}
	Furthermore,
    \begin{align}
        \tilde{\Phi}_\tau(\bm{x}) \geq 0.
    \end{align}
\end{restatable}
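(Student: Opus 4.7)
The crux of both claims is that radial rescaling $\bm{x} \mapsto \lambda \bm{x}$ in the coordinate system \eqref{eq:eta_bloch_x} is exactly a partial depolarization of the reference state. Explicitly, for $\lambda \in [0,1]$,
\begin{equation}
\eta(\lambda \bm{x}) = (1-\lambda)\, \frac{\id}{d_R} + \lambda\, \eta(\bm{x}),
\end{equation}
i.e.\ $\eta(\lambda\bm{x})$ is the output of a depolarizing channel of strength $1-\lambda$ acting on $\eta(\bm{x})$, with the origin $\eta(\mathbf{0}) = \id/d_R$ as fixed point. My plan is to feed this into Lemma~\ref{lemma:partially-depol}, which characterises exactly how $\Phi_\tau$ transforms under partial depolarization on $R$, and then to extract both claims as immediate consequences.

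For the homogeneity property, Lemma~\ref{lemma:partially-depol} gives an affine identity along the depolarization path, namely $\Phi_\tau(\lambda \bm{x}) = \lambda\, \Phi_\tau(\bm{x}) + (1-\lambda)\, \Phi_\tau(\mathbf{0})$ for $\lambda \in [0,1]$, and subtracting $\Phi_\tau(\mathbf{0})$ from both sides immediately yields $\tilde{\Phi}_\tau(\lambda \bm{x}) = \lambda\, \tilde{\Phi}_\tau(\bm{x})$. To extend to $\lambda > 1$ with $\lambda \bm{x} \in \S$, I would set $\bm{y} = \lambda \bm{x}$ and reapply the same identity with shrinkage factor $1/\lambda \in (0,1)$, obtaining $\tilde{\Phi}_\tau(\bm{x}) = (1/\lambda)\, \tilde{\Phi}_\tau(\lambda \bm{x})$, which rearranges to the desired identity. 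The case $\lambda = 0$ holds trivially since $\tilde{\Phi}_\tau(\mathbf{0}) = 0$ by definition.

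For the non-negativity claim, I first compute the value at the origin: since $\G(\id/d_R \otimes \tau) = \id/d_R \otimes \G(\tau)$ is a product state, the SDP in \eqref{eq:H_min_def} is solved by $X_A = \G(\tau)/d_R$, giving $\tr X_A = 1/d_R$, so $\Phi_\tau(\mathbf{0}) = 1/d_R$. For arbitrary $\bm{x} \in \S$, the standard dimension bound $H_{\min}(R|A)_\Omega \le \log d_R$ valid for any normalised bipartite $\Omega$ then yields $\Phi_\tau(\bm{x}) \ge 1/d_R = \Phi_\tau(\mathbf{0})$, i.e.\ $\tilde{\Phi}_\tau(\bm{x}) \ge 0$. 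The dimension bound itself is a one-liner: partial-tracing the feasibility constraint $\id_R \otimes X_A \ge \Omega_{RA}$ over $R$ gives $d_R X_A \ge \rho_A^\Omega$, hence $\tr X_A \ge 1/d_R$. The only non-trivial ingredient throughout is Lemma~\ref{lemma:partially-depol}, which I am treating as given; the rest is elementary algebra plus the standard $H_{\min}$ dimension bound. The only subtlety I would watch for is the extension of homogeneity past $\lambda = 1$, which hinges on the innocuous observation that the partial-depolarization picture is symmetric in the endpoints of any radial segment through the maximally mixed state.
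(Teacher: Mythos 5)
Your proposal is correct and follows essentially the same route as the paper: both hinge on the observation that radial rescaling $\bm{x}\mapsto\lambda\bm{x}$ is a partial depolarization of the reference state, feed this into the affine identity $\Phi_{\Lambda_p(\eta)}(\tau)=p\,\Phi_\eta(\tau)+\tfrac{1-p}{d}$ (this is Lemma~\ref{lemma:partially-depol1}, not Lemma~\ref{lemma:partially-depol}, which is only the sign-equivalence corollary), and extend past $\lambda=1$ by the same change of variables $\bm{y}=\lambda\bm{x}$. The only divergence is cosmetic: for non-negativity the paper applies the local data-processing property \ref{property:F_data_processing} with the completely depolarizing channel on $R$ to get $\Phi_\tau(\bm{x})\ge\Phi_\tau(\mathbf{0})$, whereas you prove the same inequality directly by partial-tracing the SDP feasibility constraint to obtain $\tr X_A\ge 1/d_R=\Phi_\tau(\mathbf{0})$ --- an equivalent, slightly more self-contained argument for the same fact.
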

\begin{proof}
A proof can be found at Appendix \ref{appx:cusp}.
\end{proof}
We conclude from the above lemma that $\tilde{\Phi}_\tau(\bm{x})$, and consequently $\Phi_\tau(\bm{x})$, is linearly non-decreasing in every direction out of the maximally mixed state. This means $\Phi_\tau(\bm{x})$ will, in general, have a conical form at the maximally mixed state; as a result, unless $\Delta \Phi(\bm{x})$ is completely linear, it too will have a conical form at the maximally mixed state. Using the defining relationship between $\Phi_\tau(\bm{x})$ and the min--entropy, we further derive from Lemma \ref{lemma:cusp} that, for sufficiently small $\varepsilon \geq 0$, 
\begin{align}
\Delta H(\varepsilon\bm{x}) = d\varepsilon \Delta \Phi(\bm{x}) + O(\varepsilon^2), 
\end{align}
where $\Delta \Phi(\bm{x}) \coloneqq \Phi_\rho(\bm{x}) - \Phi_\sigma(\bm{x})$.
In the neighbourhood of the maximally mixed state, the behaviour of $\Delta H_\eta$ is thus given by that of $\Delta \Phi$, and so in this single-shot regime we do not in general have smooth behaviour.

\subsection{Structure of $\Phi_\tau(x)$ for simple cases}
To illustrate this in practice, we now provide two examples on a qubit system for $G=U(1)$ and $G=SU(2)$.
\subsubsection{The case of time--covariant $U(1)$}
\label{sec:qubit_u1}
We first present $\Phi_\tau$ for time-covariant transformations in a non-degenerate qubit. Consider a qubit with the Hamiltonian $\sigma_z$. The states of this qubit are restricted to transforming among each other exclusively via channels that commute with all time translations $\{\U_t:t\in[0,2\pi]  \}$, where $\U_t(\cdot) =e^{-i\sigma_zt}(\cdot)e^{i\sigma_zt}$. This set of time-translations form a unitary representation of the group $U(1)$. 

We parameterise an arbitrary state $\tau$ of this qubit in its energy eigenbasis as:
\begin{align}
\tau \coloneqq \begin{pmatrix}
p_\tau& c_\tau\\
c^*_\tau& 1-p_\tau
\end{pmatrix}
\label{eq:qubit_tau_parameterisation}
\end{align}
We further use the (scaled) Pauli operators $\{\frac{\id}{2}, \frac{\sigma_x}{2},\frac{\sigma_y}{2},\frac{\sigma_z}{2}\}$ as our basis for characterising reference frame states according to Equation \ref{eq:eta_bloch_x}. The Bloch vector of a state, $(x,y,z)$, gives its co-ordinates in this basis. A direct computation (see Appendix \ref{appx:qubit_u1_phi_derivation}) gives
\begin{subequations}
\label{eq:qubit_u1_phi}
\begin{equation}
\Phi_\tau(x,y,z)= \frac{\abs{c_\tau}^2}{1-p_\tau} \frac{x^2+y^2}{4z} + \frac{z}{2} + \frac{1}{2}
\end{equation}
for the region $0 \le \frac{\sqrt{x^2+y^2}}{2z} \le \frac{1-p_\tau}{\abs{c_\tau}}$, and
\begin{equation}
\Phi_\tau(x,y,z)= \left(p_\tau - \frac{1}{2}\right)z+\abs{c_\tau}\sqrt{x^2+y^2} + \frac{1}{2}
\end{equation}
for the region $\frac{\sqrt{x^2+y^2}}{2z} \ge \frac{1-p_\tau}{\abs{c_\tau}}$ and $ \frac{\sqrt{x^2+y^2}}{2z} \le -\frac{p_\tau}{\abs{c_\tau}}$, and finally
\begin{equation}
\Phi_\tau(x,y,z)=-\frac{\abs{c_\tau}^2}{p_\tau} \frac{x^2+y^2}{4z} - \frac{z}{2} + \frac{1}{2}
\end{equation}
\end{subequations}
for the region $0 \ge \frac{\sqrt{x^2+y^2}}{2z} \ge -\frac{p_\tau}{\abs{c_\tau}}$. When neither $\rho$ nor $\sigma$ is symmetric, one can find neighbourhoods around the poles of the Bloch sphere in which $\Delta \Phi(\bm{x})$ is not completely linear if and only if $\frac{\abs{c_\rho}^2}{1-p_\rho} = \frac{\abs{c_\sigma}^2}{1-p_\sigma}$ and $\frac{\abs{c_\rho}^2}{p_\rho} = \frac{\abs{c_\sigma}^2}{p_\sigma}$. Since these conditions are equivalent to $\rho = \U_t(\sigma)$ for some $t$, for arbitrary choices of $\rho$ and $\sigma$ we almost always expect a conical singularity in $\Delta H_\eta$ at $\eta = \frac{\id}{d}$. 

\begin{figure}[t]
\centering
\includegraphics[scale=0.9]{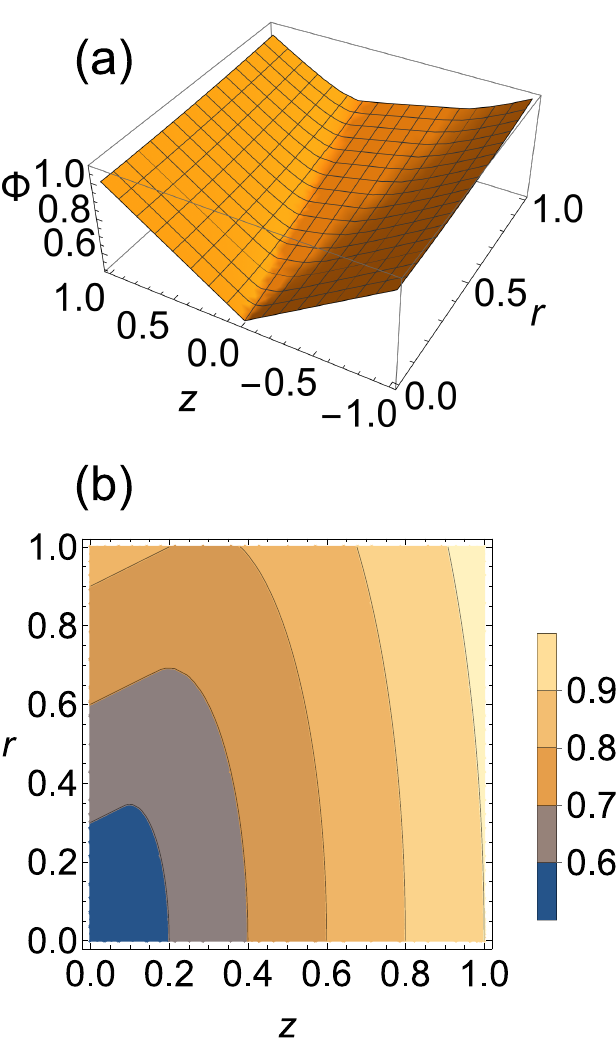}
\caption{\textbf{Conical structure time-covariance.} Shown here is $\Phi_\tau(\bm{x})$ for time-covariant transformations in a non-degenerate qubit from the state $\tau$ with $c_\tau = \frac{1}{3}$ and $p_\tau = \frac{1}{3}$, plotted for $0 \le r \le 1$, where $r \coloneqq \sqrt{x^2+y^2}$, and $-1 \le z \le 1$. The non-negative linear behaviour of $\Phi_\tau(\bm{x})$ in every direction out of the maximally mixed state (i.e. along any straight line out of $r=0,z=0$) is readily seen in (a) and generates the conical behaviour discussed in the text. Figure (b) gives a contour plot of $\Phi_\tau(\bm{x})$ as a function of the reference frame state Bloch vector.} 
\end{figure}

This analysis illustrates how the original complete set of entropic conditions has many redundancies. We explicitly see the conical behaviour as $(x,y,z) \rightarrow (0,0,0)$ in that
\begin{align}
	\Phi_\tau( \lambda(x,y,z)) = \lambda \left(\Phi_\tau(x,y,z) - \frac{1}{2}\right) + \frac{1}{2},
\end{align} 
where $\lambda$ is a positive scaling factor. Furthermore, because $U(1)$ is Abelian, we have that $[\U_t \otimes \id, \G] = 0$ for all $t$, so Lemma \ref{lemma:cov_isometry_invariance} implies, for any $r \ge 0$, that
\begin{align}
	\Phi_\tau(r\cos(t),r\sin(t),z) = \Phi_\tau(r,0, z),
	\label{eq:phi_qubit_u1_cylindrical_asymmetry}
\end{align}
and so $\Phi_\tau$ has cylindrical symmetry around the $z$-axis. 
More non-trivially, Lemma \ref{lemma:cov_isometry_invariance} may also be applied to ${\X \otimes \X}$, where $\X(\cdot) \coloneqq \sigma_x (\cdot) \sigma_x$, since $[\X \otimes \X, \G] = 0$. This means
\begin{align}
	\Phi_{\X(\tau)}(x,y,z) = \Phi_\tau(x,-y,-z)
	\label{eq:phi_qubit_u1_top_bottom_bloch_sphere}	
\end{align}
According to the parameterisation of $\tau$ we have chosen, $\X(\tau)$ means $c_\tau \rightarrow c^*_\tau$ and $p_\tau \rightarrow 1-p_\tau$. In this way, $\Phi_\tau$ for reference states in the bottom half of the Bloch sphere (i.e. $z \leq 0$) can be calculated from $\Phi_{\X(\tau)}$ for reference states in the top half ($z \ge 0$).

Given any $\rho$ and $\sigma$, we can look at the minimality condition at $(0,0,0)$ and obtain reference frame independent conditions that recover known results~\cite{matteo_kamil_bound} on necessary and sufficient conditions for time-covariant transitions in a non-degenerate qubit:
\begin{align}
	\frac{\abs{c_\rho}^2}{1-p_\rho} \ge \frac{\abs{c_\sigma}^2}{1-p_\sigma}& \text{ for } p_\sigma \ge p_\rho,
\end{align}
and
\begin{align}
	\frac{\abs{c_\rho}^2}{p_\rho} \ge \frac{\abs{c_\sigma}^2}{p_\sigma}& \text{ for } p_\sigma \le p_\rho.
\end{align}
Comparing with \eqref{eq:qubit_u1_phi}, this means when $p_\sigma \ge p_\rho$, checking $\Delta \Phi_\eta \ge 0$ for a \emph{single} reference state with co-ordinates in the range
\begin{equation}
0 \le \frac{\sqrt{x^2+y^2}}{2z} \le \min \left\{\frac{1-p_\rho}{\abs{c_\rho}}, \frac{1-p_\tau}{\abs{c_\tau}} \right\}
\end{equation}
  is sufficient to determine whether a covariant transition can occur. Similarly, when $p_\sigma \le p_\rho$, checking $\Delta \Phi_\eta$ for a \emph{single} reference state with co-ordinates in the range 
  \begin{equation}
  0 \ge \frac{\sqrt{x^2+y^2}}{2z} \ge \max\left\{-\frac{p_\rho}{\abs{c_\rho}}, -\frac{p_\sigma}{\abs{c_\sigma}} \right\}
  \end{equation}
is sufficient. 

\subsubsection{The case of $SU(2)$--covariant transformations.}
We now consider the case of $G=SU(2)$ on a qubit. In this case, $G$--covariant channels partially depolarise and may additionally invert the input state about the maximally mixed state \cite{Cirstoiu2020_Noether}. We will continue to write $\eta$ in its Bloch basis as in the $U(1)$ example, and will  parameterise $\tau$ as before. Using the simplifying abbreviations $\bm{\sigma} \coloneqq (\sigma_x,\sigma_y,\sigma_z)^T$ and $\bm{x} = (x,y,z)^T$, we have
\begin{align}
\eta(\bm{x}) &\coloneqq \frac{1}{2}(\id + \bm{x} \cdot \bm{\sigma}), \\
\rho(\bm{r}) &\coloneqq \frac{1}{2}(\id + \bm{r} \cdot \bm{\sigma}),
\end{align}
Given $\bm{x} = (x,y,z)$ we define $\overline{\bm{x}}= (x,-y,z)$. 

The form of $\Phi_\tau(\bm{x})$ is then given by (see Appendix \ref{appx:qubit_su2_phi_derivation}):
\begin{align}
\label{eq:qubit_phi_su2}
\Phi_\tau(\bm{x}) &= \begin{cases}
\frac{1}{2} (1 + \overline{\bm{x}} \cdot \bm{r})  & \text{if } \overline{\bm{x}} \cdot \bm{r} \ge 0,\\
\frac{1}{2}  (1- \frac{1}{3} \overline{\bm{x}}) \cdot \bm{r}  & \mbox{ otherwise}.
\end{cases}
\end{align}
We see that $\Phi_\tau(\bm{x})$ is piecewise linear with the plane ${\overline{\bm{x}} \cdot \bm{r} = 0}$ distinguishing the two regions.

Consider an input state $\rho$ and an output state $\sigma$ that are not maximally mixed. Letting $\bm{r}$ and $\bm{s}$ be the Bloch vectors of $\rho$ and $\sigma$ respectively, this means $\bm{r}, \bm{s} \neq \bm{0}$. Let us further restrict ourselves to the case where $\rho$ and $\sigma$ are not located along the same diameter in the Bloch sphere. This implies both $\bm{r} - \bm{s} \neq \bm{0}$ and that $\bm{r} - \bm{s}$ cannot be anti-parallel to $\bm{s}$. Therefore, it is always possible to find $\overline{\bm{x}'} \neq \bm{0}$ such that $\overline{\bm{x}'}$ lies strictly above both the plane $\bm{x} \cdot \bm{s} = 0$ and the plane $\bm{x} \cdot (\bm{r} - \bm{s}) = 0$. This means 
\begin{align}
	\overline{\bm{x}'} \cdot \bm{r} > \overline{\bm{x}'} \cdot \bm{s} > 0,
\end{align} 
so $\Delta \Phi(\overline{\bm{x}'})$ must be calculated from the top solution in \eqref{eq:qubit_phi_su2} as
\begin{align}
	\Delta \Phi(\overline{\bm{x}'}) = \frac{1}{2}\bm{x}' \cdot (\bm{r} - \bm{s})
\end{align}

Conversely, $-\overline{\bm{x}'}$ must lie strictly below both the plane $\bm{x} \cdot \bm{s} = 0$ and the plane $\bm{x} \cdot (\bm{r} - \bm{s}) = 0$, so $\Delta \Phi(-\overline{\bm{x}'})$ must be calculated from the bottom solution in \eqref{eq:qubit_phi_su2}, which means
\begin{align}
	\Delta \Phi(-\overline{\bm{x}'}) = \frac{1}{6} \bm{x}' \cdot (\bm{r} - \bm{s}).
\end{align} 

We must therefore conclude that if $\rho$ and $\sigma$ are neither maximally mixed nor located along the same diameter of the Bloch sphere, then there exists $\overline{\bm{x}'}$ such that:
\begin{align}
	\Delta \Phi(\bm{0}) = 0 &\neq \frac{2}{3} \bm{x}' \cdot (\bm{r} - \bm{s})\\ 
	&= \Delta \Phi(\overline{\bm{x}'}) + \Delta \Phi(-\overline{\bm{x}'})
\end{align} 
Since most choices of $\rho$ and $\sigma$ satisfy these requirements, we see that $\Delta \Phi(\bm{x})$ is almost never completely linear. In this example, $\Delta H_\eta$ also almost always has a conical singularity at $\eta = \frac{\id}{d}$.

\section{Robust symmetric transformations of general states with minimal depolarization}\label{depolarize}

\begin{figure}[b]
\includegraphics[width=9cm]{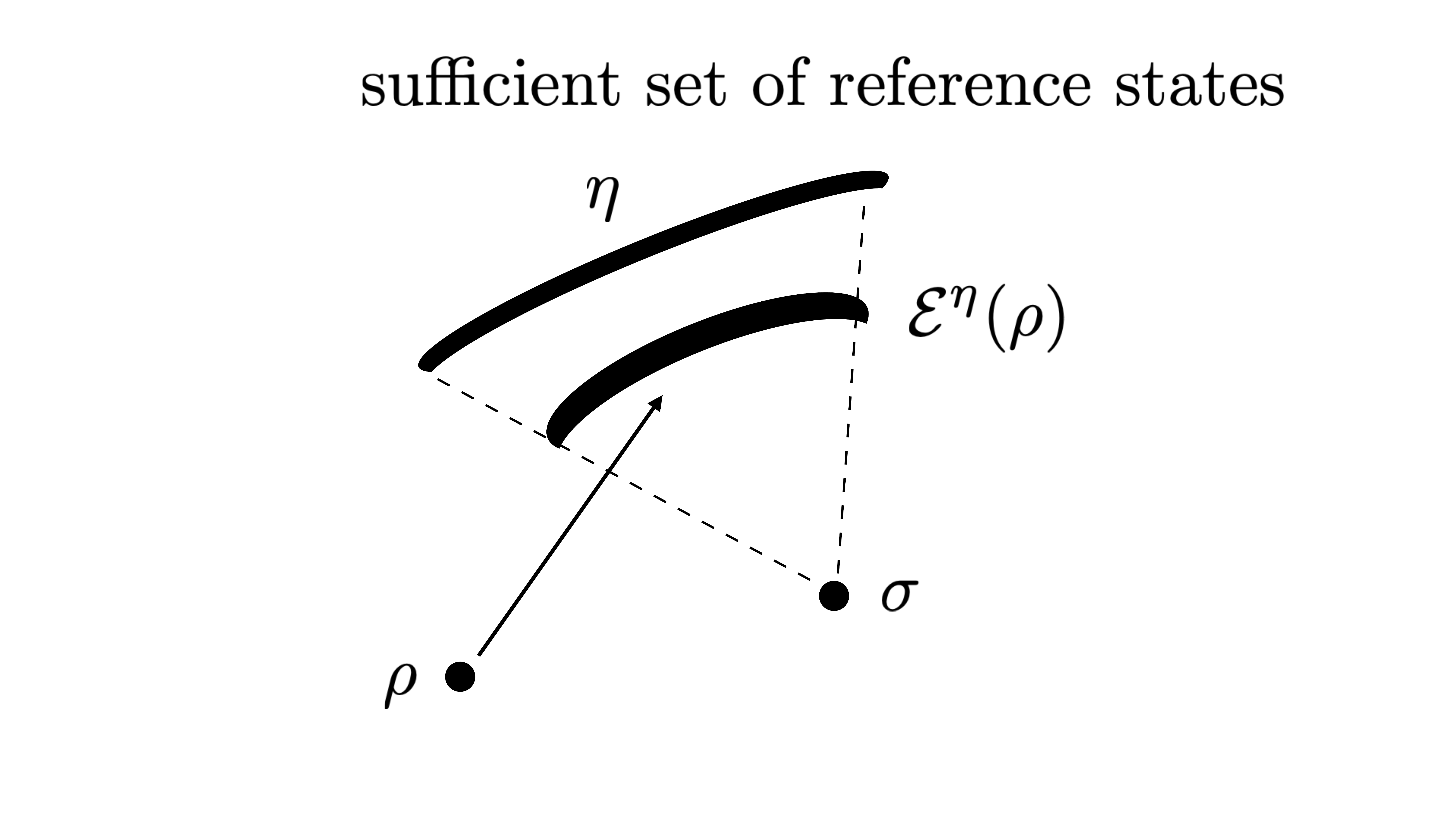}
\caption{\textbf{Constructing state interconversion conditions.} If, for a complete set of reference frame states $\{\eta\}$, we construct some family of covariant protocols that transform from a state $\rho$ to $\E^\eta(\rho)$ that has a higher overlap with $\eta$ than $\sigma$ has with $\eta$ then it is possible to transform from $\rho$ to $\sigma$ under a $G$--covariant channel.}
\end{figure}
In principle the condition given in Theorem \ref{thm:local_min_max_mixed_condition} gives a complete description of the asymmetry properties of quantum states. However, as the preceding examples have shown, standard tests for local minima are typically not applicable for the functional $\Delta H_\eta$, and thus computing this necessary and sufficient condition presents a technical challenge which we must leave for future study. Instead, we can adopt a more physical perspective on the problem and look for a complete set of conditions where we weaken the assumptions for the interconversion.
For example instead of $\rho \xrightarrow{G} \sigma$ we could ask the question:
\begin{center}
\textit{What is the minimal amount of depolarization noise we need to add to $\sigma$ so as to make it accessible from the initial state $\rho$ via a $G$--covariant channel? }
\end{center} 
Since the maximally mixed state is invariant for any group action this form applies to all symmetry groups $G$. It also incorporates robustness. Suppose, for example, that $\sigma$ was essentially identical to $\rho$ except it has a very small, but non-zero, $O(\varepsilon)$ mode that does not appear in $\rho$. The strict conditions would say that it is impossible to transform from $\rho$ to $\sigma$, yet it is clear that we only require $O(\varepsilon)$ amount of depolarising noise in order to make the transformation possible. Therefore the above question is more physically relevant than the simple `yes/no' question of exact interconversion.

The formulation of the problem therefore involves smoothing our output state with the maximally mixed state:
\begin{align}
\sigma \rightarrow \sigma_p:=(1-p) \sigma + p \frac{\id}{d}, 
\end{align}
where  $p$ is an error probability, and we wish to estimate how small $p$ can be so as to make $\rho \xrightarrow{G} \sigma_p$ possible via a covariant quantum channel. As we will see, this set of sufficient conditions has the benefit of being straightforward to compute.

We make use of two core ingredients for our results. First, note that any state $\rho$ can be decomposed into independent modes of asymmetry~\cite{marvian2014modes} labelled by $(\lambda,j)$:
\begin{align}
  \rho = \sum_{\lambda,j} \rho^\lambda_j: \, \rho^{\lambda}_j = \sum_\alpha \tr\left[X^{(\lambda,\alpha)\dagger}_j \rho\right] X^{(\lambda,\alpha)}_j,
\end{align}
where $\lambda$ labels an irreducible representation (irrep) of $G$,  $j$ labels the basis vector of the given irrep $\lambda$, $\alpha$ labels any multiplicity degrees of freedom, and the set $\left\{X^{(\lambda,\alpha)}_j\right\}$ form an orthonormal irreducible tensor operator (ITO) basis for $\B(\H)$ (see \appref{appx:background_details} for details). We denote the trivial irrep of the group by $\lambda=0$. It was shown Ref.~\cite{marvian2014modes} that every $G$--covariant operation $\E: \E(\rho) = \sigma$ acts independently on the different modes of the input state such that
\begin{align}
    \E(\rho^\lambda_j) = \sigma^\lambda_j,
\label{eq:modes_cov_map}
\end{align}
for any $(\lambda,j)$. In other words, a $G$--covariant quantum channel always maps any given mode of the input state to the very same mode of the output state, with no ``mixing'' between the different modes.

Secondly, we have the Sandwiched $\alpha$--R\'{e}nyi divergence $D_\alpha(\rho ||\sigma)$ for two states $\rho, \sigma$ of a quantum system, which is defined as~\cite{muller2013quantum,wilde2014strong}
\begin{equation}
    D_\alpha(\rho||\sigma):= \frac{1}{\alpha-1} \log \tr \left [\sigma^{\frac{1-\alpha}{2\alpha}}\rho \sigma^{\frac{1-\alpha}{2\alpha}} \right ]^\alpha,
\end{equation}
whenever the support of $\rho$ lies in the support of $\sigma$, and is infinite otherwise. Our results turn out to be most compactly expressed in terms of the following generalization of the $\alpha=2$ Sandwiched R\'{e}nyi divergence, which extends the domain of the first argument to all linear operators in the support of $\sigma$, and reproduces the standard definition when that first argument is Hermitian:
\begin{equation}
D_2(X||\sigma):= \log \tr \left( \left[\sigma^{-\frac{1}{4}}X \sigma^{-\frac{1}{4}} \right]^\dagger \left[\sigma^{-\frac{1}{4}}X \sigma^{-\frac{1}{4}} \right] \right).
\end{equation}

We now have the following theorem, which gives an estimate of the minimal amount of depolarization needed in order to make a transformation possible under covariant channels. 
\begin{restatable}{theorem}{SuffCond}
\label{thrm:suff_cond_depolarized}
Let $0  \le p \le 1$ be a probability. There exists a $G$--covariant channel transforming $\rho$ into $\sigma_p \coloneqq (1-p)\sigma + p \frac{\id}{d}$ if
\begin{align}
 D_2 ( \rho^\lambda_j || \G(\rho))   \ge  \log g^{\lambda}_j(\sigma) - \log n^{-1}\left (\lambda_{\rm min} + \frac{p}{d(1-p)} \right),
\end{align}\label{eqn:depolarization}
for all $\lambda \neq 0, j$, where $\lambda_{\mathrm{min}}$ is the smallest non-zero eigenvalue of $\G(\sigma)$, and $g^{\lambda}_j(\sigma) \coloneqq \sum_\alpha \abs{\tr[X^{(\lambda,\alpha)\dagger}_j \sigma] }$. The operators $X^{(\lambda,\alpha)}_j$ form an ITO basis for $\B(\H_B)$, where $\H_B$ is the Hilbert space of the output system, and $n$ is the sum of the dimensions of all distinct non-trivial irreps appearing in the representation of $G$ on $\B(\H_B)$.
\end{restatable}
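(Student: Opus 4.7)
The plan is to apply Theorem~\ref{thrm:gour} and show directly that $\Delta H_\eta = H_\eta(\sigma_p) - H_\eta(\rho) \ge 0$ for every reference frame state $\eta_R$. This requires an upper bound on $H_\eta(\rho)$ expressible through $D_2(\rho^\lambda_j \| \G(\rho))$, together with a lower bound on $H_\eta(\sigma_p)$ that is controlled by the depolarization parameter $p$ via the spectrum of $\G(\sigma_p)$; the theorem's hypothesis will then be exactly what is needed to flip the first bound into the second.

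For the upper bound on $H_\eta(\rho)$, I would use the standard inequality between the conditional min-entropy and the collision-type quantity $-D_2(\Omega_{RA}\,\|\,\id_R \otimes \Omega_A)$, evaluated on $\Omega = \G(\eta \otimes \rho)$ with the natural conditioning operator $\id_R \otimes \G(\rho)_A$ on the $A$-marginal. Expanding $\eta$ and $\rho$ in ITO bases and invoking Schur orthogonality of the $G$-twirl, the cross terms between distinct asymmetric modes vanish, leaving a mode-wise decomposition in which each surviving summand is naturally expressed through $D_2(\rho^\lambda_j \| \G(\rho))$ weighted by the dual mode coefficients of $\eta$ on $R$ (where the dual pairing is forced by the constraint $H_R = -H_B^T$, so that irreps on $R$ are conjugates of those on $B$).

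For the lower bound on $H_\eta(\sigma_p)$, I would exhibit an explicit feasible point in the SDP defining $H_{\min}$, namely an operator on $A$ proportional to $\G(\sigma_p)$. Verifying the constraint $\id_R \otimes X_A \ge \G(\eta \otimes \sigma_p)$ proceeds block-wise through the ITO decomposition of $\sigma_p$, and here the depolarization is essential: the smallest non-zero eigenvalue of $\G(\sigma_p) = (1-p)\G(\sigma) + p\id/d$ is at least $(1-p)\lambda_{\min} + p/d$, so that $\G(\sigma_p)^{-1/2}$ is well-defined on the relevant support with controlled operator norm. A triangle inequality over the multiplicity index $\alpha$ produces $g^\lambda_j(\sigma) = \sum_\alpha |\tr[X^{(\lambda,\alpha)\dagger}_j \sigma]|$ on one side, while distributing the spectral bound across the $n$ distinct non-trivial irrep dimensions appearing in $\B(\H_B)$ produces the factor $n^{-1}$. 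Combining these mode-wise estimates, the hypothesized inequality forces $\Delta H_\eta \ge 0$ uniformly in $\eta$, and Theorem~\ref{thrm:gour} then delivers the covariant channel realizing $\rho \to \sigma_p$.

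The main technical obstacle will be keeping every intermediate bound uniform in the arbitrary reference frame state $\eta$, so that $\eta$-dependent weights appearing in the upper bound on $H_\eta(\rho)$ cancel symmetrically against those appearing in the lower bound on $H_\eta(\sigma_p)$. A closely related subtlety is passing from the $\ell_2$-type quantities that arise naturally from the collision-entropy upper bound and the SDP dual estimate to the $\ell_1$-type majorant $g^\lambda_j(\sigma)$; this forces careful triangle-inequality bookkeeping across the multiplicity label $\alpha$, which is in turn what brings the factor $n$ into the denominator. Getting the irrep-dimension counting right — so that the final bound involves $n^{-1}(\lambda_{\min} + p/(d(1-p)))$ rather than a weaker block-by-block estimate — is where I expect the sharpness of the argument to hinge.
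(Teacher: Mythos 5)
There is a genuine gap, and it sits at the heart of your strategy: the upper bound on $H_\eta(\rho)$ via the relaxation $H_{\rm min}(R|A)\le H_2(R|A)$ is too weak to close the argument uniformly in $\eta$, and it fails precisely in the regime the paper shows is the essential one, namely $\eta$ near the maximally mixed state. Write $\Phi_\eta(\tau)=2^{-H_\eta(\tau)}$ and $\eta=\frac{\id}{d}+\epsilon A$. By the conical structure of $\Phi_\eta$ (Lemma \ref{lemma:cusp} / Lemma \ref{lemma:partially-depol1}), the true quantity satisfies $\Phi_\eta(\rho)=\frac{1}{d}+\epsilon\,\tilde\Phi_\rho(A)$ with a generically nonzero \emph{linear} term, and the quantity you must dominate, $\tr[\eta^T\sigma_p]$ (equivalently any upper bound on $\Phi_\eta(\sigma_p)$), likewise deviates from $\frac{1}{d}$ linearly in $\epsilon$ whenever $\sigma$ has a nonzero asymmetric mode and $p<1$. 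But the collision-type lower bound $\Phi_\eta(\rho)\ge 2^{-H_2(R|A)}$ deviates from $\frac{1}{d}$ only \emph{quadratically} in the asymmetric-mode amplitudes of $\eta$: expanding $\G(\eta\otimes\rho)=\sum_{\mu,i}\eta^{\mu^*}_i\otimes\rho^\mu_i$ inside the $H_2$ expression, every surviving term carries a product of two $\eta$-mode coefficients. Hence for every $\eta$ in a small punctured neighbourhood of $\frac{\id}{d}$ your lower bound on $\Phi_\eta(\rho)$ is $\frac{1}{d}+O(\epsilon^2)$ while the target is $\frac{1}{d}+c\,\epsilon$ with $c>0$, and the comparison fails no matter how strong the hypothesis on $D_2(\rho^\lambda_j\|\G(\rho))$ is. No amount of bookkeeping over $\alpha$ and irrep dimensions repairs a mismatch in the order of the leading term.

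The paper's proof keeps the linear dependence on $\eta$ by lower-bounding $\Phi_\eta(\rho)$ through its SDP \emph{dual}, $\Phi_\eta(\rho)=\max_{\E\,\mathrm{cov}}\tr[\eta^T\E(\rho)]$ (Lemma \ref{lemma:Phi_and_dual}), evaluated on an explicit Pretty Good Measurement measure-and-prepare channel $\E^\eta_{\rm pgm}$. The sandwiched divergence then enters not as a conditional collision entropy of the bipartite state but as the PGM overlap $f^\lambda_j(\rho)=\tr[\rho^\lambda_j\G(\rho)^{-1/2}\rho^{\lambda^*}_j\G(\rho)^{-1/2}]=2^{D_2(\rho^\lambda_j\|\G(\rho))}$, which multiplies the $\eta$-mode coefficients \emph{linearly} via $\langle\eta^\lambda_j,\E_{\rm pgm}(\rho^\lambda_j)\rangle=f^\lambda_j(\rho)\langle\eta^\lambda_j,\tau^\lambda_j\rangle$. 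Also, no bound on $H_\eta(\sigma_p)$ is needed at all: the criterion used is $\Phi_\eta(\rho)\ge\tr[\eta^T\sigma_p]$, and the constants $\lambda_{\rm min}$, $n^{-1}$ and $g^\lambda_j(\sigma)$ arise from verifying positivity of the prepared state $\tau_\Theta$ (whose asymmetric modes are scaled copies of those of $\sigma_p$), not from a feasibility argument involving the spectrum of $\G(\sigma_p)$. Your identification of where $p$ enters, $\lambda_{\rm min}(\G(\sigma_p))=(1-p)\lambda_{\rm min}+\frac{p}{d}$, is correct, but to make the proof work you would need to replace the $H_{\rm min}\le H_2$ step with a dual-SDP lower bound built from an explicit covariant protocol.
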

A proof is given in \appref{appx:sufficient_conditions}, and exploits the SDP duality structure for covariant interconversion to determine an admissible range of values for $p$. This analysis is done by using a family of Pretty Good Measurement schemes~\cite{hausladen1994pretty} that attempt to generate as large a fidelity with the set of all reference frame states $\eta_R$ as possible. By modifying this general strategy, we anticipate that the results presented here can almost certainly be improved upon, and it would be of interest to study how well similar families perform relative to the exact SDP solution to the covariant interconversion problem. 
\begin{figure}
\centering
\includegraphics[scale=0.9]{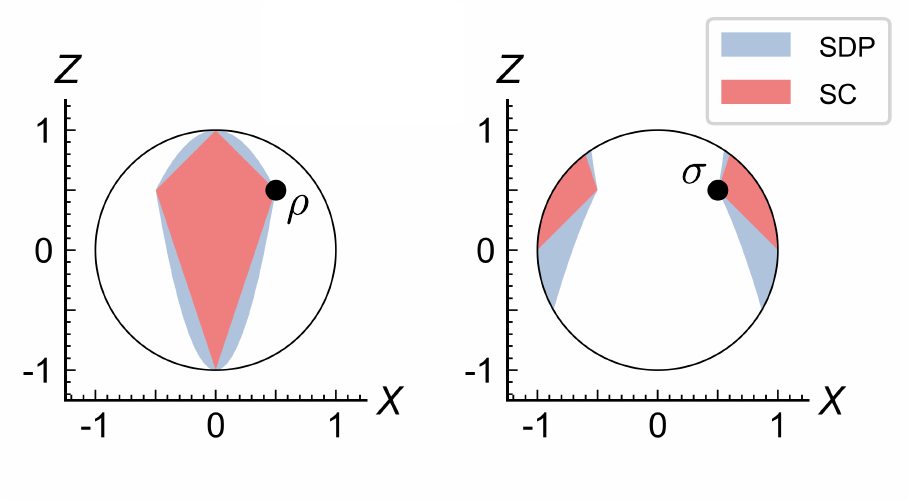}
\caption{\textbf{Depolarization conditions for a qubit system.} An exact treatment of how states transform under $G$--covariant channels requires non-trivial computations, however the closed, robust conditions given in Theorem~\ref{thrm:suff_cond_q} provide a simple means to estimate the interconversion structure. Here we demonstrate this for the case of $d=2$ and $G=U(1)$.  \textbf{Left:} The black dot shows an initial qubit state $\rho$ with Bloch vector $\bm{r}\coloneqq (\frac{1}{2},0,\frac{1}{2})$. The large blue shaded region (SDP) defines the full set of output qubit states that can be reached under covariant maps for the group $G=U(1)$ of time-translations generated by $H = \sigma_z$, computed via the exact interconversion conditions. The smaller pink shaded region (SC) overlapping this shows the region given by the conditions given in \thmref{thrm:suff_cond_q}. \textbf{Right:} Here the black dot now represents the \emph{output} qubit state $\sigma$, and the shaded regions correspond to the set of input states $\{\rho\}$ that can be transformed into $\sigma$ via a $G$--covariant channel for the full set of conditions (SDP) and the robust approximation conditions (SC). (Note that despite appearances the boundary of the SDP region is not linear.)}
\label{fig:qubit_sufficient_cond}
\end{figure}

When the input and output systems are the same, we can provide a strengthening of the above conditions to the following form:
\begin{restatable}{theorem}{SuffCondq}
\label{thrm:suff_cond_q}
Consider transformations from a quantum system $A$ to itself. Assume for simplicity that $\G(\sigma)$ is full-rank. There exists a $G$--covariant channel transforming $\rho$ into $\sigma_p \coloneqq (1-p)\sigma + p \frac{\id}{d}$ if $\rho = \sigma_p$ or if for any $q \in (q^*,1]$ we have
\begin{align}
 D_2 ( \rho^\lambda_j || \G(\rho))   \ge  \log g^{\lambda}_j(\sigma_p(q)) - \log n^{-1} \lambda_{\mathrm{ min}}(\G[\sigma_p(q)]),
\end{align}
for all $\lambda\ne 0 , j$, where we have $\sigma_p(q) \coloneqq \sigma_p -(1-q)\rho$, $q^* \coloneqq \min \{ q \in \mathbb{R}^+ : \G(\sigma_p(q))\ge 0\}$, $\lambda_{\mathrm{ min}}(\G[\sigma_p(q)])$ is the smallest eigenvalue of $\G[\sigma_p(q)]$, and all other terms are as in Theorem~\ref{thrm:suff_cond_depolarized}. 
\end{restatable}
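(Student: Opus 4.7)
The plan is to exploit the fact that when the input and output systems coincide, the identity channel is itself $G$-covariant, which opens up a convex-decomposition strategy not available in the setting of Theorem~\ref{thrm:suff_cond_depolarized}. First I would dispose of the trivial subcase $\rho = \sigma_p$ by invoking the identity channel, and then restrict attention to $\rho \neq \sigma_p$. For any $q \in (q^*,1]$, I would consider the ansatz
\begin{equation}
\E = (1-q)\,\mathrm{id} + q\,\E',
\end{equation}
where $\E'$ is a yet-to-be-constructed $G$-covariant channel. Since convex combinations of $G$-covariant channels are $G$-covariant, $\E$ is covariant whenever $\E'$ is, and the constraint $\E(\rho) = \sigma_p$ collapses to the reduced problem $\E'(\rho) = \sigma_p(q)/q$, i.e.\ covariantly mapping $\rho$ to the normalised ``residual'' target. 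Note that $q=1$ always lies in $(q^*,1]$ since $\G(\sigma_p) \ge 0$, so the interval is non-empty.

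Next I would apply the Pretty Good Measurement--style construction underlying Theorem~\ref{thrm:suff_cond_depolarized} to this reduced problem with no additional depolarization (effectively $p=0$). The crucial point is that the PGM-type Choi operator of the covariant channel interacts with the target only through its $G$-twirled form, via factors of $\G(\cdot)^{-1/2}$, so the only positivity condition genuinely required of the target is $\G(\sigma_p(q)) \ge 0$. This is exactly what the definition of $q^*$ together with $q > q^*$ provides; importantly $\sigma_p(q)$ itself need not be positive, which is what allows this sufficient condition to strictly improve on the generic bound. Once applicability is established, the rescaling relations $g^{\lambda}_j(\sigma_p(q)/q) = q^{-1} g^{\lambda}_j(\sigma_p(q))$ and $\lambda_{\mathrm{min}}(\G[\sigma_p(q)/q]) = q^{-1}\lambda_{\mathrm{min}}(\G[\sigma_p(q)])$ make the two $\log q$ contributions cancel on the right-hand side, yielding precisely the inequality in the theorem statement. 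The quantifier ``any $q \in (q^*,1]$'' then encodes the freedom to optimise over the convex-decomposition parameter.

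The main obstacle I anticipate is the careful verification that the SDP dual certificate from the proof of Theorem~\ref{thrm:suff_cond_depolarized} remains positive semidefinite when the target is only required to satisfy $\G(\sigma_p(q)) \ge 0$ rather than full operator positivity. This requires re-tracing the original construction mode by mode and checking that every intermediate positivity step reduces to a condition on $\G(\sigma_p(q))$ alone, never on $\sigma_p(q)$ directly; the full-rank assumption on $\G(\sigma)$ in the theorem is presumably what guarantees invertibility of $\G(\sigma_p)$ and hence admissibility of a neighbourhood of $q=1$, giving a meaningful range over which to apply the improved bound.
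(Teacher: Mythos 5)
Your proposal is correct and follows essentially the same route as the paper: mix the identity channel with a covariant channel as $(1-q)\,\mathcal{I} + q\,\E'$, take $\E'$ to be the PGM measure-and-prepare construction aimed at the residual target $\sigma_p(q)$, observe that positivity of the prepared state $\tau_\Theta$ (whose symmetric part is set to $q^{-1}\G(\sigma_p(q))$) only requires $\G(\sigma_p(q))\ge 0$ rather than $\sigma_p(q)\ge 0$, and let the $q$-factors cancel between $g^\lambda_j$ and $\lambda_{\min}$. The one cosmetic difference is that the paper phrases the key bound as a primal feasibility argument (an explicit covariant channel lower-bounding the dual expression for $\Phi_\eta$) rather than as a "dual certificate," but the substance and all the critical checkpoints you identify match the paper's proof.
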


Given that the analysis is built on Pretty Good Measurement schemes for resolving group elements, it is expected that a measure-and-prepare strategy (such as above, or a slightly modified version) will behave well when $\rho$ has many large modes of asymmetry for the group. While this can be achieved for systems with a large dimension, we find that even for low dimensional systems the conditions perform well. For example, in \figref{fig:qubit_sufficient_cond}, we plot the performance of the sufficient condition for the group $G=U(1)$ of time translations generated by the Hamiltonian $H = \sigma_z$ for a qubit system in initial state $\rho = \frac{1}{2}(\id + \frac{1}{2} \sigma_x + \frac{1}{2} \sigma_z) $. On the left, we plot the set of output states $\{ \sigma\}$ for which our sufficient condition tells us are accessible from $\rho$ (pink shaded region) relative to the full set of accessible output states granted by the complete set of conditions stated in \thmref{thrm:gour} (blue shaded region). 

We also note that we can recast our sufficient conditions in terms of familiar norms. We first note that we always have $g^\lambda_j(\sigma) \le \norm{\sigma_j^\lambda}_1$, where $\norm{X}_1 \coloneqq \tr\left[\sqrt{X^\dagger X}\right]$ is the trace norm. Therefore it follows from \thmref{thrm:suff_cond_depolarized} that exists a $G$--covariant operation transforming $\rho$ into $\sigma$ if
\begin{align}
n^{-1} \norm{\overline{\rho}^\lambda_j}_2^2     \ge  \norm{\tilde{\sigma}_j^\lambda}_1   , \quad \forall \lambda \neq 0, j,
\end{align}
where the notation $\overline{\rho}^\lambda_j \coloneqq {\G(\rho)}^{-\frac{1}{4}} \rho^\lambda_j {\G(\rho)}^{-\frac{1}{4}}$ and $\tilde{\sigma}= \sigma \lambda_{\mathrm{ min}}^{-1}$, and $\norm{X}_2 \coloneqq \sqrt{\tr[X^\dagger X]}$ is the Frobenius norm. We note that $\norm{\sigma_j^\lambda}_1$ for $\lambda \neq 0$ is a known asymmetry monotone that measures the the degree of asymmetry in the $(\lambda,j)$--mode of $\sigma$ \cite{marvian2014modes}.

\section{Outlook}
In this work we have shown that the recent complete set of entropic conditions for asymmetry can be greatly simplified, and more importantly, can be converted into useful forms. The fact that the reference frames that are needed to describe asymmetry can be taken to have arbitrarily small modes of asymmetry suggest that a deeper analysis should be possible in terms of differential geometry, as opposed to quantities such as the degree to which a quantum state encodes group data. We expect that this should take the form of a Fisher-like information~\cite{marvian2012symmetry,Marvian2014Extending,takagi2019skew}, and in particular it is of interest to see if it is possible to replace the $H_{\rm min}(R|A)$ entropy with the conditional von-Neumann entropy $H(R|A)$, which would allow explicit analytic computations. 

Beyond this, a range of other interesting questions exist. For example we have not exploited the duality relations~\cite{konig2009operational, tomamichel2010duality} between $H_{\rm min}(R|A)$ and $H_{\rm max}(R|S)$ where $S$ is a purifying system for the state $\Omega_{RA}$. For example, for the case of time-translation symmetry the joint purified state admits two notable forms. The first is an energetic form:
\begin{equation}
\Omega_{RAS} = \sum_E \sqrt{p(E)} |\varphi(E)\>_{RA} \otimes | E\>_S
\end{equation}
obtained from considering $\Omega_{RA} = \G(\eta_R \otimes \rho_A) = \sum_{E} \Pi(E)( \eta_R \otimes \rho_A) \Pi(E)$ as an ensemble of states over energy sectors, $\Pi(E)$ being the projector onto the energy $E$ subspace of $RA$. While the second is a temporal form, given by
\begin{equation}
\Omega_{RAS} =\int \!\!dt \,\, |\phi(t)\>_{RAS},
\end{equation}
with $|\phi(t)\>_{RAS}$ being a purification of $\U_t(\eta_R) \otimes \U_t(\rho_A)$. It would be of interest to explore these two forms and also their connection to entropic uncertainty relations.

Finally, it would also be valuable to see how the explicit conditions given by Theorem~\ref{thrm:suff_cond_depolarized} and Theorem~\ref{thrm:suff_cond_q} could be used in concrete settings, such as for covariant quantum error-correcting codes~\cite{Faist2020_Approx_QEC,Woods2020continuousgroupsof,yang2020covariant}, thermodynamics~\cite{gour2018quantum} or metrology~\cite{Hall2012Nonlinear}. Moreover, the method of constructing these conditions can certainly be improved upon by using more detailed covariant protocols. 

\section{Acknowledgements}

We would like to thank Iman Marvian for helpful and insightful discussions, and in particular for pointing out that our depolarization result is compactly expressed in terms of a Sandwiched R\'{e}nyi divergence. RA is supported by the EPSRC Centre for Doctoral Training in Controlled Quantum Dynamics. SGC is supported by the Bell Burnell Graduate Scholarship Fund and the University of Leeds. DJ is supported by the Royal Society and also a University Academic Fellowship.

\bibliographystyle{apsrev4-2}
\bibliography{paper_referencesU}

\begin{thebibliography}{81}%
\makeatletter
\providecommand \@ifxundefined [1]{%
 \@ifx{#1\undefined}
}%
\providecommand \@ifnum [1]{%
 \ifnum #1\expandafter \@firstoftwo
 \else \expandafter \@secondoftwo
 \fi
}%
\providecommand \@ifx [1]{%
 \ifx #1\expandafter \@firstoftwo
 \else \expandafter \@secondoftwo
 \fi
}%
\providecommand \natexlab [1]{#1}%
\providecommand \enquote  [1]{``#1''}%
\providecommand \bibnamefont  [1]{#1}%
\providecommand \bibfnamefont [1]{#1}%
\providecommand \citenamefont [1]{#1}%
\providecommand \href@noop [0]{\@secondoftwo}%
\providecommand \href [0]{\begingroup \@sanitize@url \@href}%
\providecommand \@href[1]{\@@startlink{#1}\@@href}%
\providecommand \@@href[1]{\endgroup#1\@@endlink}%
\providecommand \@sanitize@url [0]{\catcode `\\12\catcode `\$12\catcode
  `\&12\catcode `\#12\catcode `\^12\catcode `\_12\catcode `\%12\relax}%
\providecommand \@@startlink[1]{}%
\providecommand \@@endlink[0]{}%
\providecommand \url  [0]{\begingroup\@sanitize@url \@url }%
\providecommand \@url [1]{\endgroup\@href {#1}{\urlprefix }}%
\providecommand \urlprefix  [0]{URL }%
\providecommand \Eprint [0]{\href }%
\providecommand \doibase [0]{https://doi.org/}%
\providecommand \selectlanguage [0]{\@gobble}%
\providecommand \bibinfo  [0]{\@secondoftwo}%
\providecommand \bibfield  [0]{\@secondoftwo}%
\providecommand \translation [1]{[#1]}%
\providecommand \BibitemOpen [0]{}%
\providecommand \bibitemStop [0]{}%
\providecommand \bibitemNoStop [0]{.\EOS\space}%
\providecommand \EOS [0]{\spacefactor3000\relax}%
\providecommand \BibitemShut  [1]{\csname bibitem#1\endcsname}%
\let\auto@bib@innerbib\@empty
\bibitem [{\citenamefont {Noether}(1918)}]{noether1918invarianten}%
  \BibitemOpen
  \bibfield  {author} {\bibinfo {author} {\bibfnamefont {E.}~\bibnamefont
  {Noether}},\ }\bibfield  {title} {\emph {\bibinfo {title} {Invarianten
  beliebiger {D}ifferentialausdr{\"u}cke}},\ }\href@noop {} {\bibfield
  {journal} {\bibinfo  {journal} {Nachrichten von der Gesellschaft der
  Wissenschaften zu G{\"o}ttingen, mathematisch-physikalische Klasse}\ }\textbf
  {\bibinfo {volume} {1918}},\ \bibinfo {pages} {37} (\bibinfo {year}
  {1918})}\BibitemShut {NoStop}%
\bibitem [{\citenamefont {Watrous}(2018)}]{watrous2018theory}%
  \BibitemOpen
  \bibfield  {author} {\bibinfo {author} {\bibfnamefont {J.}~\bibnamefont
  {Watrous}},\ }\href@noop {} {\emph {\bibinfo {title} {The Theory of Quantum
  Information}}}\ (\bibinfo  {publisher} {Cambridge University Press},\
  \bibinfo {year} {2018})\BibitemShut {NoStop}%
\bibitem [{\citenamefont {Plenio}\ and\ \citenamefont
  {Virmani}(2007)}]{plenio2007entanglement}%
  \BibitemOpen
  \bibfield  {author} {\bibinfo {author} {\bibfnamefont {M.}~\bibnamefont
  {Plenio}}\ and\ \bibinfo {author} {\bibfnamefont {S.}~\bibnamefont
  {Virmani}},\ }\bibfield  {title} {\emph {{\selectlanguage {english}\bibinfo
  {title} {An Introduction to Entanglement Measures}}},\ }\href@noop {}
  {\bibfield  {journal} {\bibinfo  {journal} {Quantum Information \&
  Computation}\ }\textbf {\bibinfo {volume} {7}},\ \bibinfo {pages} {1}
  (\bibinfo {year} {2007})}\BibitemShut {NoStop}%
\bibitem [{\citenamefont {Horodecki}\ \emph {et~al.}(2009)\citenamefont
  {Horodecki}, \citenamefont {Horodecki}, \citenamefont {Horodecki},\ and\
  \citenamefont {Horodecki}}]{horodecki2009entanglement}%
  \BibitemOpen
  \bibfield  {author} {\bibinfo {author} {\bibfnamefont {R.}~\bibnamefont
  {Horodecki}}, \bibinfo {author} {\bibfnamefont {P.}~\bibnamefont
  {Horodecki}}, \bibinfo {author} {\bibfnamefont {M.}~\bibnamefont
  {Horodecki}},\ and\ \bibinfo {author} {\bibfnamefont {K.}~\bibnamefont
  {Horodecki}},\ }\bibfield  {title} {\emph {\bibinfo {title} {Quantum
  Entanglement}},\ }\href {https://doi.org/10.1103/RevModPhys.81.865}
  {\bibfield  {journal} {\bibinfo  {journal} {Rev. Mod. Phys.}\ }\textbf
  {\bibinfo {volume} {81}},\ \bibinfo {pages} {865} (\bibinfo {year}
  {2009})}\BibitemShut {NoStop}%
\bibitem [{\citenamefont {\AA{}berg}(2006)}]{aberg2006quantifying}%
  \BibitemOpen
  \bibfield  {author} {\bibinfo {author} {\bibfnamefont {J.}~\bibnamefont
  {\AA{}berg}},\ }\bibfield  {title} {\emph {\bibinfo {title} {Quantifying
  Superposition}},\ }\href@noop {} {\  (\bibinfo {year} {2006})},\ \Eprint
  {https://arxiv.org/abs/quant-ph/0612146} {arXiv:quant-ph/0612146}
  \BibitemShut {NoStop}%
\bibitem [{\citenamefont {Marvian}\ and\ \citenamefont
  {Spekkens}(2013)}]{marvian2013theory}%
  \BibitemOpen
  \bibfield  {author} {\bibinfo {author} {\bibfnamefont {I.}~\bibnamefont
  {Marvian}}\ and\ \bibinfo {author} {\bibfnamefont {R.~W.}\ \bibnamefont
  {Spekkens}},\ }\bibfield  {title} {\emph {\bibinfo {title} {The Theory of
  Manipulations of Pure State Asymmetry: I. {B}asic Tools, Equivalence Classes
  and Single Copy Transformations}},\ }\href
  {https://doi.org/10.1088/1367-2630/15/3/033001} {\bibfield  {journal}
  {\bibinfo  {journal} {New J. Phys.}\ }\textbf {\bibinfo {volume} {15}},\
  \bibinfo {pages} {033001} (\bibinfo {year} {2013})}\BibitemShut {NoStop}%
\bibitem [{\citenamefont {Baumgratz}\ \emph {et~al.}(2014)\citenamefont
  {Baumgratz}, \citenamefont {Cramer},\ and\ \citenamefont
  {Plenio}}]{baumgratz2014coherence}%
  \BibitemOpen
  \bibfield  {author} {\bibinfo {author} {\bibfnamefont {T.}~\bibnamefont
  {Baumgratz}}, \bibinfo {author} {\bibfnamefont {M.}~\bibnamefont {Cramer}},\
  and\ \bibinfo {author} {\bibfnamefont {M.~B.}\ \bibnamefont {Plenio}},\
  }\bibfield  {title} {\emph {\bibinfo {title} {Quantifying Coherence}},\
  }\href {https://doi.org/10.1103/PhysRevLett.113.140401} {\bibfield  {journal}
  {\bibinfo  {journal} {Phys. Rev. Lett.}\ }\textbf {\bibinfo {volume} {113}},\
  \bibinfo {pages} {140401} (\bibinfo {year} {2014})}\BibitemShut {NoStop}%
\bibitem [{\citenamefont {Streltsov}\ \emph {et~al.}(2017)\citenamefont
  {Streltsov}, \citenamefont {Adesso},\ and\ \citenamefont
  {Plenio}}]{streltsov2017coherence}%
  \BibitemOpen
  \bibfield  {author} {\bibinfo {author} {\bibfnamefont {A.}~\bibnamefont
  {Streltsov}}, \bibinfo {author} {\bibfnamefont {G.}~\bibnamefont {Adesso}},\
  and\ \bibinfo {author} {\bibfnamefont {M.~B.}\ \bibnamefont {Plenio}},\
  }\bibfield  {title} {\emph {\bibinfo {title} {Colloquium: Quantum Coherence
  As a Resource}},\ }\href {https://doi.org/10.1103/RevModPhys.89.041003}
  {\bibfield  {journal} {\bibinfo  {journal} {Rev. Mod. Phys.}\ }\textbf
  {\bibinfo {volume} {89}},\ \bibinfo {pages} {041003} (\bibinfo {year}
  {2017})}\BibitemShut {NoStop}%
\bibitem [{\citenamefont {Brand\~ao}\ \emph {et~al.}(2013)\citenamefont
  {Brand\~ao}, \citenamefont {Horodecki}, \citenamefont {Oppenheim},
  \citenamefont {Renes},\ and\ \citenamefont
  {Spekkens}}]{brandao2013athermality}%
  \BibitemOpen
  \bibfield  {author} {\bibinfo {author} {\bibfnamefont {F.~G. S.~L.}\
  \bibnamefont {Brand\~ao}}, \bibinfo {author} {\bibfnamefont {M.}~\bibnamefont
  {Horodecki}}, \bibinfo {author} {\bibfnamefont {J.}~\bibnamefont
  {Oppenheim}}, \bibinfo {author} {\bibfnamefont {J.~M.}\ \bibnamefont
  {Renes}},\ and\ \bibinfo {author} {\bibfnamefont {R.~W.}\ \bibnamefont
  {Spekkens}},\ }\bibfield  {title} {\emph {\bibinfo {title} {Resource Theory
  of Quantum States out of Thermal Equilibrium}},\ }\href
  {https://doi.org/10.1103/PhysRevLett.111.250404} {\bibfield  {journal}
  {\bibinfo  {journal} {Phys. Rev. Lett.}\ }\textbf {\bibinfo {volume} {111}},\
  \bibinfo {pages} {250404} (\bibinfo {year} {2013})}\BibitemShut {NoStop}%
\bibitem [{\citenamefont {Horodecki}\ and\ \citenamefont
  {Oppenheim}(2013)}]{horodecki2013fundamental}%
  \BibitemOpen
  \bibfield  {author} {\bibinfo {author} {\bibfnamefont {M.}~\bibnamefont
  {Horodecki}}\ and\ \bibinfo {author} {\bibfnamefont {J.}~\bibnamefont
  {Oppenheim}},\ }\bibfield  {title} {\emph {\bibinfo {title} {Fundamental
  Limitations for Quantum and Nanoscale Thermodynamics}},\ }\href
  {https://doi.org/10.1038/ncomms3059} {\bibfield  {journal} {\bibinfo
  {journal} {Nat. Commun.}\ }\textbf {\bibinfo {volume} {4}},\ \bibinfo {pages}
  {1} (\bibinfo {year} {2013})}\BibitemShut {NoStop}%
\bibitem [{\citenamefont {Gour}\ \emph {et~al.}(2015)\citenamefont {Gour},
  \citenamefont {M{\"u}ller}, \citenamefont {Narasimhachar}, \citenamefont
  {Spekkens},\ and\ \citenamefont {Halpern}}]{gour2015resource}%
  \BibitemOpen
  \bibfield  {author} {\bibinfo {author} {\bibfnamefont {G.}~\bibnamefont
  {Gour}}, \bibinfo {author} {\bibfnamefont {M.~P.}\ \bibnamefont
  {M{\"u}ller}}, \bibinfo {author} {\bibfnamefont {V.}~\bibnamefont
  {Narasimhachar}}, \bibinfo {author} {\bibfnamefont {R.~W.}\ \bibnamefont
  {Spekkens}},\ and\ \bibinfo {author} {\bibfnamefont {N.~Y.}\ \bibnamefont
  {Halpern}},\ }\bibfield  {title} {\emph {\bibinfo {title} {The Resource
  Theory of Informational Nonequilibrium in Thermodynamics}},\ }\href
  {https://doi.org/https://doi.org/10.1016/j.physrep.2015.04.003} {\bibfield
  {journal} {\bibinfo  {journal} {Phys. Rep.}\ }\textbf {\bibinfo {volume}
  {583}},\ \bibinfo {pages} {1} (\bibinfo {year} {2015})}\BibitemShut {NoStop}%
\bibitem [{\citenamefont {Albarelli}\ \emph {et~al.}(2018)\citenamefont
  {Albarelli}, \citenamefont {Genoni}, \citenamefont {Paris},\ and\
  \citenamefont {Ferraro}}]{albarelli2018resource}%
  \BibitemOpen
  \bibfield  {author} {\bibinfo {author} {\bibfnamefont {F.}~\bibnamefont
  {Albarelli}}, \bibinfo {author} {\bibfnamefont {M.~G.}\ \bibnamefont
  {Genoni}}, \bibinfo {author} {\bibfnamefont {M.~G.~A.}\ \bibnamefont
  {Paris}},\ and\ \bibinfo {author} {\bibfnamefont {A.}~\bibnamefont
  {Ferraro}},\ }\bibfield  {title} {\emph {\bibinfo {title} {Resource Theory of
  Quantum Non-{G}aussianity and {W}igner Negativity}},\ }\href
  {https://doi.org/10.1103/PhysRevA.98.052350} {\bibfield  {journal} {\bibinfo
  {journal} {Phys. Rev. A}\ }\textbf {\bibinfo {volume} {98}},\ \bibinfo
  {pages} {052350} (\bibinfo {year} {2018})}\BibitemShut {NoStop}%
\bibitem [{\citenamefont {Takagi}\ and\ \citenamefont
  {Zhuang}(2018)}]{takagi2018non-gaussianity}%
  \BibitemOpen
  \bibfield  {author} {\bibinfo {author} {\bibfnamefont {R.}~\bibnamefont
  {Takagi}}\ and\ \bibinfo {author} {\bibfnamefont {Q.}~\bibnamefont
  {Zhuang}},\ }\bibfield  {title} {\emph {\bibinfo {title} {Convex Resource
  Theory of Non-{G}aussianity}},\ }\href
  {https://doi.org/10.1103/PhysRevA.97.062337} {\bibfield  {journal} {\bibinfo
  {journal} {Phys. Rev. A}\ }\textbf {\bibinfo {volume} {97}},\ \bibinfo
  {pages} {062337} (\bibinfo {year} {2018})}\BibitemShut {NoStop}%
\bibitem [{\citenamefont {Veitch}\ \emph {et~al.}(2014)\citenamefont {Veitch},
  \citenamefont {Mousavian}, \citenamefont {Gottesman},\ and\ \citenamefont
  {Emerson}}]{veitch2014resource}%
  \BibitemOpen
  \bibfield  {author} {\bibinfo {author} {\bibfnamefont {V.}~\bibnamefont
  {Veitch}}, \bibinfo {author} {\bibfnamefont {S.~A.~H.}\ \bibnamefont
  {Mousavian}}, \bibinfo {author} {\bibfnamefont {D.}~\bibnamefont
  {Gottesman}},\ and\ \bibinfo {author} {\bibfnamefont {J.}~\bibnamefont
  {Emerson}},\ }\bibfield  {title} {\emph {\bibinfo {title} {The Resource
  Theory of Stabilizer Quantum Computation}},\ }\href
  {https://doi.org/10.1088/1367-2630/16/1/013009} {\bibfield  {journal}
  {\bibinfo  {journal} {New J. Phys.}\ }\textbf {\bibinfo {volume} {16}},\
  \bibinfo {pages} {013009} (\bibinfo {year} {2014})}\BibitemShut {NoStop}%
\bibitem [{\citenamefont {Howard}\ and\ \citenamefont
  {Campbell}(2017)}]{howard2017application}%
  \BibitemOpen
  \bibfield  {author} {\bibinfo {author} {\bibfnamefont {M.}~\bibnamefont
  {Howard}}\ and\ \bibinfo {author} {\bibfnamefont {E.}~\bibnamefont
  {Campbell}},\ }\bibfield  {title} {\emph {\bibinfo {title} {Application of a
  Resource Theory for Magic States to Fault-Tolerant Quantum Computing}},\
  }\href {https://doi.org/10.1103/PhysRevLett.118.090501} {\bibfield  {journal}
  {\bibinfo  {journal} {Phys. Rev. Lett.}\ }\textbf {\bibinfo {volume} {118}},\
  \bibinfo {pages} {090501} (\bibinfo {year} {2017})}\BibitemShut {NoStop}%
\bibitem [{\citenamefont {Chitambar}\ and\ \citenamefont
  {Gour}(2019)}]{Chitambar2019}%
  \BibitemOpen
  \bibfield  {author} {\bibinfo {author} {\bibfnamefont {E.}~\bibnamefont
  {Chitambar}}\ and\ \bibinfo {author} {\bibfnamefont {G.}~\bibnamefont
  {Gour}},\ }\bibfield  {title} {\emph {\bibinfo {title} {Quantum Resource
  Theories}},\ }\href {https://doi.org/10.1103/RevModPhys.91.025001} {\bibfield
   {journal} {\bibinfo  {journal} {Rev. Mod. Phys.}\ }\textbf {\bibinfo
  {volume} {91}},\ \bibinfo {pages} {025001} (\bibinfo {year}
  {2019})}\BibitemShut {NoStop}%
\bibitem [{\citenamefont {Marvian}\ and\ \citenamefont
  {Spekkens}(2014{\natexlab{a}})}]{Marvian2014Extending}%
  \BibitemOpen
  \bibfield  {author} {\bibinfo {author} {\bibfnamefont {I.}~\bibnamefont
  {Marvian}}\ and\ \bibinfo {author} {\bibfnamefont {R.~W.}\ \bibnamefont
  {Spekkens}},\ }\bibfield  {title} {\emph {\bibinfo {title} {Extending
  {N}oether’s Theorem by Quantifying the Asymmetry of Quantum States}},\
  }\href {https://doi.org/10.1038/ncomms4821} {\bibfield  {journal} {\bibinfo
  {journal} {Nat. Commun.}\ }\textbf {\bibinfo {volume} {5}},\ \bibinfo {pages}
  {1} (\bibinfo {year} {2014}{\natexlab{a}})}\BibitemShut {NoStop}%
\bibitem [{\citenamefont {Takagi}(2019)}]{takagi2019skew}%
  \BibitemOpen
  \bibfield  {author} {\bibinfo {author} {\bibfnamefont {R.}~\bibnamefont
  {Takagi}},\ }\bibfield  {title} {\emph {\bibinfo {title} {Skew Informations
  from an Operational View Via Resource Theory of Asymmetry}},\ }\href
  {https://doi.org/10.1038/s41598-019-50279-w} {\bibfield  {journal} {\bibinfo
  {journal} {Sci. Rep.}\ }\textbf {\bibinfo {volume} {9}},\ \bibinfo {pages}
  {1} (\bibinfo {year} {2019})}\BibitemShut {NoStop}%
\bibitem [{\citenamefont {Bartlett}\ \emph {et~al.}(2007)\citenamefont
  {Bartlett}, \citenamefont {Rudolph},\ and\ \citenamefont
  {Spekkens}}]{bartlett2007reference}%
  \BibitemOpen
  \bibfield  {author} {\bibinfo {author} {\bibfnamefont {S.~D.}\ \bibnamefont
  {Bartlett}}, \bibinfo {author} {\bibfnamefont {T.}~\bibnamefont {Rudolph}},\
  and\ \bibinfo {author} {\bibfnamefont {R.~W.}\ \bibnamefont {Spekkens}},\
  }\bibfield  {title} {\emph {\bibinfo {title} {Reference Frames,
  Superselection Rules, and Quantum Information}},\ }\href
  {https://doi.org/10.1103/RevModPhys.79.555} {\bibfield  {journal} {\bibinfo
  {journal} {Rev. Mod. Phys.}\ }\textbf {\bibinfo {volume} {79}},\ \bibinfo
  {pages} {555} (\bibinfo {year} {2007})}\BibitemShut {NoStop}%
\bibitem [{\citenamefont {Marvian}\ and\ \citenamefont
  {Spekkens}(2016)}]{Marvian2016Coherence}%
  \BibitemOpen
  \bibfield  {author} {\bibinfo {author} {\bibfnamefont {I.}~\bibnamefont
  {Marvian}}\ and\ \bibinfo {author} {\bibfnamefont {R.~W.}\ \bibnamefont
  {Spekkens}},\ }\bibfield  {title} {\emph {\bibinfo {title} {How to Quantify
  Coherence: Distinguishing Speakable and Unspeakable Notions}},\ }\href
  {https://doi.org/10.1103/PhysRevA.94.052324} {\bibfield  {journal} {\bibinfo
  {journal} {Phys. Rev. A}\ }\textbf {\bibinfo {volume} {94}},\ \bibinfo
  {pages} {052324} (\bibinfo {year} {2016})}\BibitemShut {NoStop}%
\bibitem [{\citenamefont {Hall}\ and\ \citenamefont
  {Wiseman}(2012)}]{Hall2012Nonlinear}%
  \BibitemOpen
  \bibfield  {author} {\bibinfo {author} {\bibfnamefont {M.~J.~W.}\
  \bibnamefont {Hall}}\ and\ \bibinfo {author} {\bibfnamefont {H.~M.}\
  \bibnamefont {Wiseman}},\ }\bibfield  {title} {\emph {\bibinfo {title} {Does
  Nonlinear Metrology Offer Improved Resolution? {A}nswers from Quantum
  Information Theory}},\ }\href {https://doi.org/10.1103/PhysRevX.2.041006}
  {\bibfield  {journal} {\bibinfo  {journal} {Phys. Rev. X}\ }\textbf {\bibinfo
  {volume} {2}},\ \bibinfo {pages} {041006} (\bibinfo {year}
  {2012})}\BibitemShut {NoStop}%
\bibitem [{\citenamefont {C\^{\i}rstoiu}\ \emph {et~al.}(2020)\citenamefont
  {C\^{\i}rstoiu}, \citenamefont {Korzekwa},\ and\ \citenamefont
  {Jennings}}]{Cirstoiu2020_Noether}%
  \BibitemOpen
  \bibfield  {author} {\bibinfo {author} {\bibfnamefont {C.}~\bibnamefont
  {C\^{\i}rstoiu}}, \bibinfo {author} {\bibfnamefont {K.}~\bibnamefont
  {Korzekwa}},\ and\ \bibinfo {author} {\bibfnamefont {D.}~\bibnamefont
  {Jennings}},\ }\bibfield  {title} {\emph {\bibinfo {title} {Robustness of
  {N}oether's Principle: Maximal Disconnects Between Conservation Laws and
  Symmetries in Quantum Theory}},\ }\href
  {https://doi.org/10.1103/PhysRevX.10.041035} {\bibfield  {journal} {\bibinfo
  {journal} {Phys. Rev. X}\ }\textbf {\bibinfo {volume} {10}},\ \bibinfo
  {pages} {041035} (\bibinfo {year} {2020})}\BibitemShut {NoStop}%
\bibitem [{\citenamefont {Chiribella}\ \emph {et~al.}(2021)\citenamefont
  {Chiribella}, \citenamefont {Aurell},\ and\ \citenamefont
  {\ifmmode~\dot{Z}\else \.{Z}\fi{}yczkowski}}]{Chiribella2021Symmetries}%
  \BibitemOpen
  \bibfield  {author} {\bibinfo {author} {\bibfnamefont {G.}~\bibnamefont
  {Chiribella}}, \bibinfo {author} {\bibfnamefont {E.}~\bibnamefont {Aurell}},\
  and\ \bibinfo {author} {\bibfnamefont {K.}~\bibnamefont
  {\ifmmode~\dot{Z}\else \.{Z}\fi{}yczkowski}},\ }\bibfield  {title} {\emph
  {\bibinfo {title} {Symmetries of Quantum Evolutions}},\ }\href
  {https://doi.org/10.1103/PhysRevResearch.3.033028} {\bibfield  {journal}
  {\bibinfo  {journal} {Phys. Rev. Research}\ }\textbf {\bibinfo {volume}
  {3}},\ \bibinfo {pages} {033028} (\bibinfo {year} {2021})}\BibitemShut
  {NoStop}%
\bibitem [{\citenamefont {Aharonov}\ and\ \citenamefont
  {Susskind}(1967)}]{AharonovSusskind1967}%
  \BibitemOpen
  \bibfield  {author} {\bibinfo {author} {\bibfnamefont {Y.}~\bibnamefont
  {Aharonov}}\ and\ \bibinfo {author} {\bibfnamefont {L.}~\bibnamefont
  {Susskind}},\ }\bibfield  {title} {\emph {\bibinfo {title} {Observability of
  the Sign Change of Spinors Under $2\ensuremath{\pi}$ Rotations}},\ }\href
  {https://doi.org/10.1103/PhysRev.158.1237} {\bibfield  {journal} {\bibinfo
  {journal} {Phys. Rev.}\ }\textbf {\bibinfo {volume} {158}},\ \bibinfo {pages}
  {1237} (\bibinfo {year} {1967})}\BibitemShut {NoStop}%
\bibitem [{\citenamefont {Chiribella}\ \emph {et~al.}(2004)\citenamefont
  {Chiribella}, \citenamefont {D'Ariano}, \citenamefont {Perinotti},\ and\
  \citenamefont {Sacchi}}]{Chiribella2004Efficient}%
  \BibitemOpen
  \bibfield  {author} {\bibinfo {author} {\bibfnamefont {G.}~\bibnamefont
  {Chiribella}}, \bibinfo {author} {\bibfnamefont {G.~M.}\ \bibnamefont
  {D'Ariano}}, \bibinfo {author} {\bibfnamefont {P.}~\bibnamefont
  {Perinotti}},\ and\ \bibinfo {author} {\bibfnamefont {M.~F.}\ \bibnamefont
  {Sacchi}},\ }\bibfield  {title} {\emph {\bibinfo {title} {Efficient Use of
  Quantum Resources for the Transmission of a Reference Frame}},\ }\href
  {https://doi.org/10.1103/PhysRevLett.93.180503} {\bibfield  {journal}
  {\bibinfo  {journal} {Phys. Rev. Lett.}\ }\textbf {\bibinfo {volume} {93}},\
  \bibinfo {pages} {180503} (\bibinfo {year} {2004})}\BibitemShut {NoStop}%
\bibitem [{\citenamefont {Jones}\ \emph {et~al.}(2006)\citenamefont {Jones},
  \citenamefont {Wiseman}, \citenamefont {Bartlett}, \citenamefont {Vaccaro},\
  and\ \citenamefont {Pope}}]{Jones2006Entanglement}%
  \BibitemOpen
  \bibfield  {author} {\bibinfo {author} {\bibfnamefont {S.~J.}\ \bibnamefont
  {Jones}}, \bibinfo {author} {\bibfnamefont {H.~M.}\ \bibnamefont {Wiseman}},
  \bibinfo {author} {\bibfnamefont {S.~D.}\ \bibnamefont {Bartlett}}, \bibinfo
  {author} {\bibfnamefont {J.~A.}\ \bibnamefont {Vaccaro}},\ and\ \bibinfo
  {author} {\bibfnamefont {D.~T.}\ \bibnamefont {Pope}},\ }\bibfield  {title}
  {\emph {\bibinfo {title} {Entanglement and Symmetry: A Case Study in
  Superselection Rules, Reference Frames, and Beyond}},\ }\href
  {https://doi.org/10.1103/PhysRevA.74.062313} {\bibfield  {journal} {\bibinfo
  {journal} {Phys. Rev. A}\ }\textbf {\bibinfo {volume} {74}},\ \bibinfo
  {pages} {062313} (\bibinfo {year} {2006})}\BibitemShut {NoStop}%
\bibitem [{\citenamefont {Gour}\ and\ \citenamefont
  {Spekkens}(2008)}]{gourspekkens2008resource}%
  \BibitemOpen
  \bibfield  {author} {\bibinfo {author} {\bibfnamefont {G.}~\bibnamefont
  {Gour}}\ and\ \bibinfo {author} {\bibfnamefont {R.~W.}\ \bibnamefont
  {Spekkens}},\ }\bibfield  {title} {\emph {\bibinfo {title} {The Resource
  Theory of Quantum Reference Frames: Manipulations and Monotones}},\ }\href
  {https://doi.org/10.1088/1367-2630/10/3/033023} {\bibfield  {journal}
  {\bibinfo  {journal} {New J. Phys.}\ }\textbf {\bibinfo {volume} {10}},\
  \bibinfo {pages} {033023} (\bibinfo {year} {2008})}\BibitemShut {NoStop}%
\bibitem [{\citenamefont {Vaccaro}\ \emph {et~al.}(2008)\citenamefont
  {Vaccaro}, \citenamefont {Anselmi}, \citenamefont {Wiseman},\ and\
  \citenamefont {Jacobs}}]{Vaccaro2008Tradeoff}%
  \BibitemOpen
  \bibfield  {author} {\bibinfo {author} {\bibfnamefont {J.~A.}\ \bibnamefont
  {Vaccaro}}, \bibinfo {author} {\bibfnamefont {F.}~\bibnamefont {Anselmi}},
  \bibinfo {author} {\bibfnamefont {H.~M.}\ \bibnamefont {Wiseman}},\ and\
  \bibinfo {author} {\bibfnamefont {K.}~\bibnamefont {Jacobs}},\ }\bibfield
  {title} {\emph {\bibinfo {title} {Tradeoff Between Extractable Mechanical
  Work, Accessible Entanglement, and Ability to Act As a Reference System,
  Under Arbitrary Superselection Rules}},\ }\href
  {https://doi.org/10.1103/PhysRevA.77.032114} {\bibfield  {journal} {\bibinfo
  {journal} {Phys. Rev. A}\ }\textbf {\bibinfo {volume} {77}},\ \bibinfo
  {pages} {032114} (\bibinfo {year} {2008})}\BibitemShut {NoStop}%
\bibitem [{\citenamefont {Lostaglio}\ \emph
  {et~al.}(2015{\natexlab{a}})\citenamefont {Lostaglio}, \citenamefont
  {Jennings},\ and\ \citenamefont {Rudolph}}]{lostaglio2015description}%
  \BibitemOpen
  \bibfield  {author} {\bibinfo {author} {\bibfnamefont {M.}~\bibnamefont
  {Lostaglio}}, \bibinfo {author} {\bibfnamefont {D.}~\bibnamefont
  {Jennings}},\ and\ \bibinfo {author} {\bibfnamefont {T.}~\bibnamefont
  {Rudolph}},\ }\bibfield  {title} {\emph {\bibinfo {title} {Description of
  Quantum Coherence in Thermodynamic Processes Requires Constraints Beyond Free
  Energy}},\ }\href {https://doi.org/10.1038/ncomms7383} {\bibfield  {journal}
  {\bibinfo  {journal} {Nat. Commun.}\ }\textbf {\bibinfo {volume} {6}},\
  \bibinfo {pages} {1} (\bibinfo {year} {2015}{\natexlab{a}})}\BibitemShut
  {NoStop}%
\bibitem [{\citenamefont {Lostaglio}\ \emph
  {et~al.}(2015{\natexlab{b}})\citenamefont {Lostaglio}, \citenamefont
  {Korzekwa}, \citenamefont {Jennings},\ and\ \citenamefont
  {Rudolph}}]{lostaglio2015quantum}%
  \BibitemOpen
  \bibfield  {author} {\bibinfo {author} {\bibfnamefont {M.}~\bibnamefont
  {Lostaglio}}, \bibinfo {author} {\bibfnamefont {K.}~\bibnamefont {Korzekwa}},
  \bibinfo {author} {\bibfnamefont {D.}~\bibnamefont {Jennings}},\ and\
  \bibinfo {author} {\bibfnamefont {T.}~\bibnamefont {Rudolph}},\ }\bibfield
  {title} {\emph {\bibinfo {title} {Quantum Coherence, Time-Translation
  Symmetry, and Thermodynamics}},\ }\href
  {https://doi.org/10.1103/PhysRevX.5.021001} {\bibfield  {journal} {\bibinfo
  {journal} {Phys. Rev. X}\ }\textbf {\bibinfo {volume} {5}},\ \bibinfo {pages}
  {021001} (\bibinfo {year} {2015}{\natexlab{b}})}\BibitemShut {NoStop}%
\bibitem [{\citenamefont {Marvian}(2020)}]{marvian2020coherence}%
  \BibitemOpen
  \bibfield  {author} {\bibinfo {author} {\bibfnamefont {I.}~\bibnamefont
  {Marvian}},\ }\bibfield  {title} {\emph {\bibinfo {title} {Coherence
  Distillation Machines Are Impossible in Quantum Thermodynamics}},\ }\href
  {https://doi.org/10.1038/s41467-019-13846-3} {\bibfield  {journal} {\bibinfo
  {journal} {Nat. Commun.}\ }\textbf {\bibinfo {volume} {11}},\ \bibinfo
  {pages} {1} (\bibinfo {year} {2020})}\BibitemShut {NoStop}%
\bibitem [{\citenamefont {Wigner}(1952)}]{wigner1952WAY}%
  \BibitemOpen
  \bibfield  {author} {\bibinfo {author} {\bibfnamefont {E.~P.}\ \bibnamefont
  {Wigner}},\ }\bibfield  {title} {\emph {\bibinfo {title} {Die {M}essung
  Quantenmechanischer {O}peratoren; Z}},\ }\href@noop {} {\bibfield  {journal}
  {\bibinfo  {journal} {Z. Phys.}\ }\textbf {\bibinfo {volume} {133}},\
  \bibinfo {pages} {101} (\bibinfo {year} {1952})}\BibitemShut {NoStop}%
\bibitem [{\citenamefont {Araki}\ and\ \citenamefont
  {Yanase}(1960)}]{araki1960measurement}%
  \BibitemOpen
  \bibfield  {author} {\bibinfo {author} {\bibfnamefont {H.}~\bibnamefont
  {Araki}}\ and\ \bibinfo {author} {\bibfnamefont {M.~M.}\ \bibnamefont
  {Yanase}},\ }\bibfield  {title} {\emph {\bibinfo {title} {Measurement of
  Quantum Mechanical Operators}},\ }\href
  {https://doi.org/10.1103/PhysRev.120.622} {\bibfield  {journal} {\bibinfo
  {journal} {Phys. Rev.}\ }\textbf {\bibinfo {volume} {120}},\ \bibinfo {pages}
  {622} (\bibinfo {year} {1960})}\BibitemShut {NoStop}%
\bibitem [{\citenamefont {Yanase}(1961)}]{yanase1961optimal}%
  \BibitemOpen
  \bibfield  {author} {\bibinfo {author} {\bibfnamefont {M.~M.}\ \bibnamefont
  {Yanase}},\ }\bibfield  {title} {\emph {\bibinfo {title} {Optimal Measuring
  Apparatus}},\ }\href {https://doi.org/10.1103/PhysRev.123.666} {\bibfield
  {journal} {\bibinfo  {journal} {Phys. Rev.}\ }\textbf {\bibinfo {volume}
  {123}},\ \bibinfo {pages} {666} (\bibinfo {year} {1961})}\BibitemShut
  {NoStop}%
\bibitem [{\citenamefont {Marvian}\ and\ \citenamefont
  {Spekkens}(2012)}]{marvian2012information}%
  \BibitemOpen
  \bibfield  {author} {\bibinfo {author} {\bibfnamefont {I.}~\bibnamefont
  {Marvian}}\ and\ \bibinfo {author} {\bibfnamefont {R.~W.}\ \bibnamefont
  {Spekkens}},\ }\bibfield  {title} {\emph {\bibinfo {title} {An
  Information-Theoretic Account of the {W}igner-{A}raki-{Y}anase Theorem}},\
  }\href@noop {} {\  (\bibinfo {year} {2012})},\ \Eprint
  {https://arxiv.org/abs/1212.3378} {arXiv:1212.3378} \BibitemShut {NoStop}%
\bibitem [{\citenamefont {Ahmadi}\ \emph {et~al.}(2013)\citenamefont {Ahmadi},
  \citenamefont {Jennings},\ and\ \citenamefont {Rudolph}}]{ahmadi2013WAY}%
  \BibitemOpen
  \bibfield  {author} {\bibinfo {author} {\bibfnamefont {M.}~\bibnamefont
  {Ahmadi}}, \bibinfo {author} {\bibfnamefont {D.}~\bibnamefont {Jennings}},\
  and\ \bibinfo {author} {\bibfnamefont {T.}~\bibnamefont {Rudolph}},\
  }\bibfield  {title} {\emph {\bibinfo {title} {The {W}igner-{A}raki-{Y}anase
  Theorem and the Quantum Resource Theory of Asymmetry}},\ }\href
  {https://doi.org/10.1088/1367-2630/15/1/013057} {\bibfield  {journal}
  {\bibinfo  {journal} {New J. Phys.}\ }\textbf {\bibinfo {volume} {15}},\
  \bibinfo {pages} {013057} (\bibinfo {year} {2013})}\BibitemShut {NoStop}%
\bibitem [{\citenamefont {Yadin}\ and\ \citenamefont
  {Vedral}(2016)}]{yadin2016general}%
  \BibitemOpen
  \bibfield  {author} {\bibinfo {author} {\bibfnamefont {B.}~\bibnamefont
  {Yadin}}\ and\ \bibinfo {author} {\bibfnamefont {V.}~\bibnamefont {Vedral}},\
  }\bibfield  {title} {\emph {\bibinfo {title} {General Framework for Quantum
  Macroscopicity in Terms of Coherence}},\ }\href
  {https://doi.org/10.1103/PhysRevA.93.022122} {\bibfield  {journal} {\bibinfo
  {journal} {Phys. Rev. A}\ }\textbf {\bibinfo {volume} {93}},\ \bibinfo
  {pages} {022122} (\bibinfo {year} {2016})}\BibitemShut {NoStop}%
\bibitem [{\citenamefont {Marvian}\ \emph {et~al.}(2016)\citenamefont
  {Marvian}, \citenamefont {Spekkens},\ and\ \citenamefont
  {Zanardi}}]{Marvian2016Speed}%
  \BibitemOpen
  \bibfield  {author} {\bibinfo {author} {\bibfnamefont {I.}~\bibnamefont
  {Marvian}}, \bibinfo {author} {\bibfnamefont {R.~W.}\ \bibnamefont
  {Spekkens}},\ and\ \bibinfo {author} {\bibfnamefont {P.}~\bibnamefont
  {Zanardi}},\ }\bibfield  {title} {\emph {\bibinfo {title} {Quantum Speed
  Limits, Coherence, and Asymmetry}},\ }\href
  {https://doi.org/10.1103/PhysRevA.93.052331} {\bibfield  {journal} {\bibinfo
  {journal} {Phys. Rev. A}\ }\textbf {\bibinfo {volume} {93}},\ \bibinfo
  {pages} {052331} (\bibinfo {year} {2016})}\BibitemShut {NoStop}%
\bibitem [{\citenamefont {Rovelli}(1991)}]{rovelli1991quantum}%
  \BibitemOpen
  \bibfield  {author} {\bibinfo {author} {\bibfnamefont {C.}~\bibnamefont
  {Rovelli}},\ }\bibfield  {title} {\emph {\bibinfo {title} {Quantum Reference
  Systems}},\ }\href {https://doi.org/10.1088/0264-9381/8/2/012} {\bibfield
  {journal} {\bibinfo  {journal} {Classical and Quantum Gravity}\ }\textbf
  {\bibinfo {volume} {8}},\ \bibinfo {pages} {317} (\bibinfo {year}
  {1991})}\BibitemShut {NoStop}%
\bibitem [{\citenamefont {Rovelli}(1996)}]{rovelli1996relational}%
  \BibitemOpen
  \bibfield  {author} {\bibinfo {author} {\bibfnamefont {C.}~\bibnamefont
  {Rovelli}},\ }\bibfield  {title} {\emph {\bibinfo {title} {Relational Quantum
  Mechanics}},\ }\href {https://doi.org/10.1007/BF02302261} {\bibfield
  {journal} {\bibinfo  {journal} {Int. J. Theor. Phys.}\ }\textbf {\bibinfo
  {volume} {35}},\ \bibinfo {pages} {1637} (\bibinfo {year}
  {1996})}\BibitemShut {NoStop}%
\bibitem [{\citenamefont {Marletto}\ and\ \citenamefont
  {Vedral}(2017)}]{Vedral2017Evolution}%
  \BibitemOpen
  \bibfield  {author} {\bibinfo {author} {\bibfnamefont {C.}~\bibnamefont
  {Marletto}}\ and\ \bibinfo {author} {\bibfnamefont {V.}~\bibnamefont
  {Vedral}},\ }\bibfield  {title} {\emph {\bibinfo {title} {Evolution Without
  Evolution and Without Ambiguities}},\ }\href
  {https://doi.org/10.1103/PhysRevD.95.043510} {\bibfield  {journal} {\bibinfo
  {journal} {Phys. Rev. D}\ }\textbf {\bibinfo {volume} {95}},\ \bibinfo
  {pages} {043510} (\bibinfo {year} {2017})}\BibitemShut {NoStop}%
\bibitem [{\citenamefont {Nikolova}\ \emph {et~al.}(2018)\citenamefont
  {Nikolova}, \citenamefont {Brennen}, \citenamefont {Osborne}, \citenamefont
  {Milburn},\ and\ \citenamefont {Stace}}]{Nikolova2018Relational}%
  \BibitemOpen
  \bibfield  {author} {\bibinfo {author} {\bibfnamefont {A.}~\bibnamefont
  {Nikolova}}, \bibinfo {author} {\bibfnamefont {G.~K.}\ \bibnamefont
  {Brennen}}, \bibinfo {author} {\bibfnamefont {T.~J.}\ \bibnamefont
  {Osborne}}, \bibinfo {author} {\bibfnamefont {G.~J.}\ \bibnamefont
  {Milburn}},\ and\ \bibinfo {author} {\bibfnamefont {T.~M.}\ \bibnamefont
  {Stace}},\ }\bibfield  {title} {\emph {\bibinfo {title} {Relational Time in
  Anyonic Systems}},\ }\href {https://doi.org/10.1103/PhysRevA.97.030101}
  {\bibfield  {journal} {\bibinfo  {journal} {Phys. Rev. A}\ }\textbf {\bibinfo
  {volume} {97}},\ \bibinfo {pages} {030101(R)} (\bibinfo {year}
  {2018})}\BibitemShut {NoStop}%
\bibitem [{\citenamefont {Giacomini}\ \emph {et~al.}(2019)\citenamefont
  {Giacomini}, \citenamefont {Castro-Ruiz},\ and\ \citenamefont
  {Brukner}}]{giacomini2019covariance}%
  \BibitemOpen
  \bibfield  {author} {\bibinfo {author} {\bibfnamefont {F.}~\bibnamefont
  {Giacomini}}, \bibinfo {author} {\bibfnamefont {E.}~\bibnamefont
  {Castro-Ruiz}},\ and\ \bibinfo {author} {\bibfnamefont
  {{\v{C}}.}~\bibnamefont {Brukner}},\ }\bibfield  {title} {\emph {\bibinfo
  {title} {Quantum Mechanics and the Covariance of Physical Laws in Quantum
  Reference Frames}},\ }\href {https://doi.org/10.1038/s41467-018-08155-0}
  {\bibfield  {journal} {\bibinfo  {journal} {Nat. Commun.}\ }\textbf {\bibinfo
  {volume} {10}},\ \bibinfo {pages} {1} (\bibinfo {year} {2019})}\BibitemShut
  {NoStop}%
\bibitem [{\citenamefont {Loveridge}\ and\ \citenamefont
  {Miyadera}(2019)}]{loveridge2019relative}%
  \BibitemOpen
  \bibfield  {author} {\bibinfo {author} {\bibfnamefont {L.}~\bibnamefont
  {Loveridge}}\ and\ \bibinfo {author} {\bibfnamefont {T.}~\bibnamefont
  {Miyadera}},\ }\bibfield  {title} {\emph {\bibinfo {title} {Relative Quantum
  Time}},\ }\href@noop {} {\bibfield  {journal} {\bibinfo  {journal} {Found.
  Phys.}\ }\textbf {\bibinfo {volume} {49}},\ \bibinfo {pages} {549} (\bibinfo
  {year} {2019})}\BibitemShut {NoStop}%
\bibitem [{\citenamefont {Smith}\ and\ \citenamefont
  {Ahmadi}(2019)}]{smith2019quantizing}%
  \BibitemOpen
  \bibfield  {author} {\bibinfo {author} {\bibfnamefont {A.~R.~H.}\
  \bibnamefont {Smith}}\ and\ \bibinfo {author} {\bibfnamefont
  {M.}~\bibnamefont {Ahmadi}},\ }\bibfield  {title} {\emph {\bibinfo {title}
  {Quantizing Time: Interacting Clocks and Systems}},\ }\href
  {https://doi.org/10.22331/q-2019-07-08-160} {\bibfield  {journal} {\bibinfo
  {journal} {{Quantum}}\ }\textbf {\bibinfo {volume} {3}},\ \bibinfo {pages}
  {160} (\bibinfo {year} {2019})}\BibitemShut {NoStop}%
\bibitem [{\citenamefont {Martinelli}\ and\ \citenamefont
  {Soares-Pinto}(2019)}]{Martinelli2019Quantifying}%
  \BibitemOpen
  \bibfield  {author} {\bibinfo {author} {\bibfnamefont {T.}~\bibnamefont
  {Martinelli}}\ and\ \bibinfo {author} {\bibfnamefont {D.~O.}\ \bibnamefont
  {Soares-Pinto}},\ }\bibfield  {title} {\emph {\bibinfo {title} {Quantifying
  Quantum Reference Frames in Composed Systems: Local, Global, and Mutual
  Asymmetries}},\ }\href {https://doi.org/10.1103/PhysRevA.99.042124}
  {\bibfield  {journal} {\bibinfo  {journal} {Phys. Rev. A}\ }\textbf {\bibinfo
  {volume} {99}},\ \bibinfo {pages} {042124} (\bibinfo {year}
  {2019})}\BibitemShut {NoStop}%
\bibitem [{\citenamefont {Mendes}\ and\ \citenamefont
  {Soares-Pinto}(2019)}]{mendes2019time}%
  \BibitemOpen
  \bibfield  {author} {\bibinfo {author} {\bibfnamefont {L.~R.}\ \bibnamefont
  {Mendes}}\ and\ \bibinfo {author} {\bibfnamefont {D.~O.}\ \bibnamefont
  {Soares-Pinto}},\ }\bibfield  {title} {\emph {\bibinfo {title} {Time As a
  Consequence of Internal Coherence}},\ }\href
  {https://doi.org/10.1098/rspa.2019.0470} {\bibfield  {journal} {\bibinfo
  {journal} {Proc. R. Soc. A}\ }\textbf {\bibinfo {volume} {475}},\ \bibinfo
  {pages} {20190470} (\bibinfo {year} {2019})}\BibitemShut {NoStop}%
\bibitem [{\citenamefont {Vanrietvelde}\ \emph {et~al.}(2020)\citenamefont
  {Vanrietvelde}, \citenamefont {Hoehn}, \citenamefont {Giacomini},\ and\
  \citenamefont {Castro-Ruiz}}]{vanrietvelde2020change}%
  \BibitemOpen
  \bibfield  {author} {\bibinfo {author} {\bibfnamefont {A.}~\bibnamefont
  {Vanrietvelde}}, \bibinfo {author} {\bibfnamefont {P.~A.}\ \bibnamefont
  {Hoehn}}, \bibinfo {author} {\bibfnamefont {F.}~\bibnamefont {Giacomini}},\
  and\ \bibinfo {author} {\bibfnamefont {E.}~\bibnamefont {Castro-Ruiz}},\
  }\bibfield  {title} {\emph {\bibinfo {title} {A Change of Perspective:
  Switching Quantum Reference Frames Via a Perspective-Neutral Framework}},\
  }\href {https://doi.org/10.22331/q-2020-01-27-225} {\bibfield  {journal}
  {\bibinfo  {journal} {{Quantum}}\ }\textbf {\bibinfo {volume} {4}},\ \bibinfo
  {pages} {225} (\bibinfo {year} {2020})}\BibitemShut {NoStop}%
\bibitem [{\citenamefont {Carmo}\ and\ \citenamefont
  {Soares-Pinto}(2021)}]{Carmo2021Quantifying}%
  \BibitemOpen
  \bibfield  {author} {\bibinfo {author} {\bibfnamefont {R.~S.}\ \bibnamefont
  {Carmo}}\ and\ \bibinfo {author} {\bibfnamefont {D.~O.}\ \bibnamefont
  {Soares-Pinto}},\ }\bibfield  {title} {\emph {\bibinfo {title} {Quantifying
  Resources for the {P}age-{W}ootters Mechanism: Shared Asymmetry As Relative
  Entropy of Entanglement}},\ }\href
  {https://doi.org/10.1103/PhysRevA.103.052420} {\bibfield  {journal} {\bibinfo
   {journal} {Phys. Rev. A}\ }\textbf {\bibinfo {volume} {103}},\ \bibinfo
  {pages} {052420} (\bibinfo {year} {2021})}\BibitemShut {NoStop}%
\bibitem [{\citenamefont {Chataignier}(2021)}]{Chataignier2021Relational}%
  \BibitemOpen
  \bibfield  {author} {\bibinfo {author} {\bibfnamefont {L.}~\bibnamefont
  {Chataignier}},\ }\bibfield  {title} {\emph {\bibinfo {title} {Relational
  Observables, Reference Frames, and Conditional Probabilities}},\ }\href
  {https://doi.org/10.1103/PhysRevD.103.026013} {\bibfield  {journal} {\bibinfo
   {journal} {Phys. Rev. D}\ }\textbf {\bibinfo {volume} {103}},\ \bibinfo
  {pages} {026013} (\bibinfo {year} {2021})}\BibitemShut {NoStop}%
\bibitem [{\citenamefont {Eastin}\ and\ \citenamefont
  {Knill}(2009)}]{EastinKnill2009Restrictions}%
  \BibitemOpen
  \bibfield  {author} {\bibinfo {author} {\bibfnamefont {B.}~\bibnamefont
  {Eastin}}\ and\ \bibinfo {author} {\bibfnamefont {E.}~\bibnamefont {Knill}},\
  }\bibfield  {title} {\emph {\bibinfo {title} {Restrictions on Transversal
  Encoded Quantum Gate Sets}},\ }\href
  {https://doi.org/10.1103/PhysRevLett.102.110502} {\bibfield  {journal}
  {\bibinfo  {journal} {Phys. Rev. Lett.}\ }\textbf {\bibinfo {volume} {102}},\
  \bibinfo {pages} {110502} (\bibinfo {year} {2009})}\BibitemShut {NoStop}%
\bibitem [{\citenamefont {Faist}\ \emph {et~al.}(2020)\citenamefont {Faist},
  \citenamefont {Nezami}, \citenamefont {Albert}, \citenamefont {Salton},
  \citenamefont {Pastawski}, \citenamefont {Hayden},\ and\ \citenamefont
  {Preskill}}]{Faist2020_Approx_QEC}%
  \BibitemOpen
  \bibfield  {author} {\bibinfo {author} {\bibfnamefont {P.}~\bibnamefont
  {Faist}}, \bibinfo {author} {\bibfnamefont {S.}~\bibnamefont {Nezami}},
  \bibinfo {author} {\bibfnamefont {V.~V.}\ \bibnamefont {Albert}}, \bibinfo
  {author} {\bibfnamefont {G.}~\bibnamefont {Salton}}, \bibinfo {author}
  {\bibfnamefont {F.}~\bibnamefont {Pastawski}}, \bibinfo {author}
  {\bibfnamefont {P.}~\bibnamefont {Hayden}},\ and\ \bibinfo {author}
  {\bibfnamefont {J.}~\bibnamefont {Preskill}},\ }\bibfield  {title} {\emph
  {\bibinfo {title} {Continuous Symmetries and Approximate Quantum Error
  Correction}},\ }\href {https://doi.org/10.1103/PhysRevX.10.041018} {\bibfield
   {journal} {\bibinfo  {journal} {Phys. Rev. X}\ }\textbf {\bibinfo {volume}
  {10}},\ \bibinfo {pages} {041018} (\bibinfo {year} {2020})}\BibitemShut
  {NoStop}%
\bibitem [{\citenamefont {Woods}\ and\ \citenamefont
  {Alhambra}(2020)}]{Woods2020continuousgroupsof}%
  \BibitemOpen
  \bibfield  {author} {\bibinfo {author} {\bibfnamefont {M.~P.}\ \bibnamefont
  {Woods}}\ and\ \bibinfo {author} {\bibfnamefont {{\'{A}}.~M.}\ \bibnamefont
  {Alhambra}},\ }\bibfield  {title} {\emph {\bibinfo {title} {Continuous Groups
  of Transversal Gates for Quantum Error Correcting Codes from Finite Clock
  Reference Frames}},\ }\href {https://doi.org/10.22331/q-2020-03-23-245}
  {\bibfield  {journal} {\bibinfo  {journal} {{Quantum}}\ }\textbf {\bibinfo
  {volume} {4}},\ \bibinfo {pages} {245} (\bibinfo {year} {2020})}\BibitemShut
  {NoStop}%
\bibitem [{\citenamefont {Yang}\ \emph {et~al.}(2020)\citenamefont {Yang},
  \citenamefont {Mo}, \citenamefont {Renes}, \citenamefont {Chiribella},\ and\
  \citenamefont {Woods}}]{yang2020covariant}%
  \BibitemOpen
  \bibfield  {author} {\bibinfo {author} {\bibfnamefont {Y.}~\bibnamefont
  {Yang}}, \bibinfo {author} {\bibfnamefont {Y.}~\bibnamefont {Mo}}, \bibinfo
  {author} {\bibfnamefont {J.~M.}\ \bibnamefont {Renes}}, \bibinfo {author}
  {\bibfnamefont {G.}~\bibnamefont {Chiribella}},\ and\ \bibinfo {author}
  {\bibfnamefont {M.~P.}\ \bibnamefont {Woods}},\ }\bibfield  {title} {\emph
  {\bibinfo {title} {Covariant Quantum Error Correcting Codes Via Reference
  Frames}},\ }\href@noop {} {\  (\bibinfo {year} {2020})},\ \Eprint
  {https://arxiv.org/abs/arXiv:2007.09154} {arXiv:2007.09154} \BibitemShut
  {NoStop}%
\bibitem [{\citenamefont {Almheiri}\ \emph {et~al.}(2015)\citenamefont
  {Almheiri}, \citenamefont {Dong},\ and\ \citenamefont
  {Harlow}}]{almheiri2015bulk}%
  \BibitemOpen
  \bibfield  {author} {\bibinfo {author} {\bibfnamefont {A.}~\bibnamefont
  {Almheiri}}, \bibinfo {author} {\bibfnamefont {X.}~\bibnamefont {Dong}},\
  and\ \bibinfo {author} {\bibfnamefont {D.}~\bibnamefont {Harlow}},\
  }\bibfield  {title} {\emph {\bibinfo {title} {Bulk Locality and Quantum Error
  Correction in {A}d{S}/{CFT}}},\ }\href
  {https://doi.org/10.1007/JHEP04(2015)163} {\bibfield  {journal} {\bibinfo
  {journal} {J. High Energy Phys.}\ }\textbf {\bibinfo {volume} {2015}}\bibinfo
   {number} { (4)},\ \bibinfo {pages} {163}}\BibitemShut {NoStop}%
\bibitem [{\citenamefont {Pastawski}\ \emph {et~al.}(2015)\citenamefont
  {Pastawski}, \citenamefont {Yoshida}, \citenamefont {Harlow},\ and\
  \citenamefont {Preskill}}]{pastawski2015holographic}%
  \BibitemOpen
\bibfield  {number} {  }\bibfield  {author} {\bibinfo {author} {\bibfnamefont
  {F.}~\bibnamefont {Pastawski}}, \bibinfo {author} {\bibfnamefont
  {B.}~\bibnamefont {Yoshida}}, \bibinfo {author} {\bibfnamefont
  {D.}~\bibnamefont {Harlow}},\ and\ \bibinfo {author} {\bibfnamefont
  {J.}~\bibnamefont {Preskill}},\ }\bibfield  {title} {\emph {\bibinfo {title}
  {Holographic Quantum Error-Correcting Codes: Toy Models for the Bulk/boundary
  Correspondence}},\ }\href {https://doi.org/10.1007/JHEP06(2015)149}
  {\bibfield  {journal} {\bibinfo  {journal} {J. High Energy Phys.}\ }\textbf
  {\bibinfo {volume} {2015}}\bibinfo  {number} { (6)},\ \bibinfo {pages}
  {1}}\BibitemShut {NoStop}%
\bibitem [{\citenamefont {Gschwendtner}\ \emph {et~al.}(2021)\citenamefont
  {Gschwendtner}, \citenamefont {Bluhm},\ and\ \citenamefont
  {Winter}}]{gschwendtner2021programmability}%
  \BibitemOpen
\bibfield  {number} {  }\bibfield  {author} {\bibinfo {author} {\bibfnamefont
  {M.}~\bibnamefont {Gschwendtner}}, \bibinfo {author} {\bibfnamefont
  {A.}~\bibnamefont {Bluhm}},\ and\ \bibinfo {author} {\bibfnamefont
  {A.}~\bibnamefont {Winter}},\ }\bibfield  {title} {\emph {\bibinfo {title}
  {Programmability of Covariant Quantum Channels}},\ }\href
  {https://doi.org/10.22331/q-2021-06-29-488} {\bibfield  {journal} {\bibinfo
  {journal} {{Quantum}}\ }\textbf {\bibinfo {volume} {5}},\ \bibinfo {pages}
  {488} (\bibinfo {year} {2021})}\BibitemShut {NoStop}%
\bibitem [{\citenamefont {Gour}\ \emph {et~al.}(2018)\citenamefont {Gour},
  \citenamefont {Jennings}, \citenamefont {Buscemi}, \citenamefont {Duan},\
  and\ \citenamefont {Marvian}}]{gour2018quantum}%
  \BibitemOpen
  \bibfield  {author} {\bibinfo {author} {\bibfnamefont {G.}~\bibnamefont
  {Gour}}, \bibinfo {author} {\bibfnamefont {D.}~\bibnamefont {Jennings}},
  \bibinfo {author} {\bibfnamefont {F.}~\bibnamefont {Buscemi}}, \bibinfo
  {author} {\bibfnamefont {R.}~\bibnamefont {Duan}},\ and\ \bibinfo {author}
  {\bibfnamefont {I.}~\bibnamefont {Marvian}},\ }\bibfield  {title} {\emph
  {\bibinfo {title} {Quantum Majorization and a Complete Set of Entropic
  Conditions for Quantum Thermodynamics}},\ }\href
  {https://doi.org/10.1038/s41467-018-06261-7} {\bibfield  {journal} {\bibinfo
  {journal} {Nat. Commun.}\ }\textbf {\bibinfo {volume} {9}},\ \bibinfo {pages}
  {1} (\bibinfo {year} {2018})}\BibitemShut {NoStop}%
\bibitem [{\citenamefont {Renner}(2005)}]{renner2005security}%
  \BibitemOpen
  \bibfield  {author} {\bibinfo {author} {\bibfnamefont {R.}~\bibnamefont
  {Renner}},\ }\emph {\bibinfo {title} {Security of {QKD}}},\ \href@noop {}
  {Ph.D. thesis},\ \bibinfo  {school} {ETH, 2005} (\bibinfo {year} {2005}),\
  \Eprint {https://arxiv.org/abs/arXiv:quant-ph/0512258}
  {arXiv:quant-ph/0512258} \BibitemShut {NoStop}%
\bibitem [{\citenamefont {Marvian}\ and\ \citenamefont
  {Spekkens}(2014{\natexlab{b}})}]{marvian2014modes}%
  \BibitemOpen
  \bibfield  {author} {\bibinfo {author} {\bibfnamefont {I.}~\bibnamefont
  {Marvian}}\ and\ \bibinfo {author} {\bibfnamefont {R.~W.}\ \bibnamefont
  {Spekkens}},\ }\bibfield  {title} {\emph {\bibinfo {title} {Modes of
  Asymmetry: The Application of Harmonic Analysis to Symmetric Quantum Dynamics
  and Quantum Reference Frames}},\ }\href
  {https://doi.org/10.1103/PhysRevA.90.062110} {\bibfield  {journal} {\bibinfo
  {journal} {Phys. Rev. A}\ }\textbf {\bibinfo {volume} {90}},\ \bibinfo
  {pages} {062110} (\bibinfo {year} {2014}{\natexlab{b}})}\BibitemShut
  {NoStop}%
\bibitem [{\citenamefont {M{\"u}ller-Lennert}\ \emph
  {et~al.}(2013)\citenamefont {M{\"u}ller-Lennert}, \citenamefont {Dupuis},
  \citenamefont {Szehr}, \citenamefont {Fehr},\ and\ \citenamefont
  {Tomamichel}}]{muller2013quantum}%
  \BibitemOpen
  \bibfield  {author} {\bibinfo {author} {\bibfnamefont {M.}~\bibnamefont
  {M{\"u}ller-Lennert}}, \bibinfo {author} {\bibfnamefont {F.}~\bibnamefont
  {Dupuis}}, \bibinfo {author} {\bibfnamefont {O.}~\bibnamefont {Szehr}},
  \bibinfo {author} {\bibfnamefont {S.}~\bibnamefont {Fehr}},\ and\ \bibinfo
  {author} {\bibfnamefont {M.}~\bibnamefont {Tomamichel}},\ }\bibfield  {title}
  {\emph {\bibinfo {title} {On Quantum {R}{\'e}nyi Entropies: A New
  Generalization and Some Properties}},\ }\href
  {https://doi.org/10.1063/1.4838856} {\bibfield  {journal} {\bibinfo
  {journal} {J. Math. Phys.}\ }\textbf {\bibinfo {volume} {54}},\ \bibinfo
  {pages} {122203} (\bibinfo {year} {2013})}\BibitemShut {NoStop}%
\bibitem [{\citenamefont {Wilde}\ \emph {et~al.}(2014)\citenamefont {Wilde},
  \citenamefont {Winter},\ and\ \citenamefont {Yang}}]{wilde2014strong}%
  \BibitemOpen
  \bibfield  {author} {\bibinfo {author} {\bibfnamefont {M.~M.}\ \bibnamefont
  {Wilde}}, \bibinfo {author} {\bibfnamefont {A.}~\bibnamefont {Winter}},\ and\
  \bibinfo {author} {\bibfnamefont {D.}~\bibnamefont {Yang}},\ }\bibfield
  {title} {\emph {\bibinfo {title} {Strong Converse for the Classical Capacity
  of Entanglement-Breaking and {H}adamard Channels Via a Sandwiched {R}{\'e}nyi
  Relative Entropy}},\ }\href {https://doi.org/10.1007/s00220-014-2122-x}
  {\bibfield  {journal} {\bibinfo  {journal} {Commun. Math. Phys.}\ }\textbf
  {\bibinfo {volume} {331}},\ \bibinfo {pages} {593} (\bibinfo {year}
  {2014})}\BibitemShut {NoStop}%
\bibitem [{\citenamefont {Gour}\ \emph {et~al.}(2009)\citenamefont {Gour},
  \citenamefont {Marvian},\ and\ \citenamefont {Spekkens}}]{Gour2009Measuring}%
  \BibitemOpen
  \bibfield  {author} {\bibinfo {author} {\bibfnamefont {G.}~\bibnamefont
  {Gour}}, \bibinfo {author} {\bibfnamefont {I.}~\bibnamefont {Marvian}},\ and\
  \bibinfo {author} {\bibfnamefont {R.~W.}\ \bibnamefont {Spekkens}},\
  }\bibfield  {title} {\emph {\bibinfo {title} {Measuring the Quality of a
  Quantum Reference Frame: The Relative Entropy of Frameness}},\ }\href
  {https://doi.org/10.1103/PhysRevA.80.012307} {\bibfield  {journal} {\bibinfo
  {journal} {Phys. Rev. A}\ }\textbf {\bibinfo {volume} {80}},\ \bibinfo
  {pages} {012307} (\bibinfo {year} {2009})}\BibitemShut {NoStop}%
\bibitem [{\citenamefont {Marvian~Mashhad}(2012)}]{marvian2012symmetry}%
  \BibitemOpen
  \bibfield  {author} {\bibinfo {author} {\bibfnamefont {I.}~\bibnamefont
  {Marvian~Mashhad}},\ }\emph {\bibinfo {title} {Symmetry, Asymmetry and
  Quantum Information}},\ \href@noop {} {Ph.D. thesis},\ \bibinfo  {school}
  {University of Waterloo} (\bibinfo {year} {2012})\BibitemShut {NoStop}%
\bibitem [{\citenamefont {DeWitt}(1967)}]{dewitt1967}%
  \BibitemOpen
  \bibfield  {author} {\bibinfo {author} {\bibfnamefont {B.~S.}\ \bibnamefont
  {DeWitt}},\ }\bibfield  {title} {\emph {\bibinfo {title} {Quantum Theory of
  Gravity: {I}. {T}he Canonical Theory}},\ }\href
  {https://doi.org/10.1103/PhysRev.160.1113} {\bibfield  {journal} {\bibinfo
  {journal} {Phys. Rev.}\ }\textbf {\bibinfo {volume} {160}},\ \bibinfo {pages}
  {1113} (\bibinfo {year} {1967})}\BibitemShut {NoStop}%
\bibitem [{\citenamefont {Page}\ and\ \citenamefont
  {Wootters}(1983)}]{PageWooters1983}%
  \BibitemOpen
  \bibfield  {author} {\bibinfo {author} {\bibfnamefont {D.~N.}\ \bibnamefont
  {Page}}\ and\ \bibinfo {author} {\bibfnamefont {W.~K.}\ \bibnamefont
  {Wootters}},\ }\bibfield  {title} {\emph {\bibinfo {title} {Evolution Without
  Evolution: Dynamics Described by Stationary Observables}},\ }\href
  {https://doi.org/10.1103/PhysRevD.27.2885} {\bibfield  {journal} {\bibinfo
  {journal} {Phys. Rev. D}\ }\textbf {\bibinfo {volume} {27}},\ \bibinfo
  {pages} {2885} (\bibinfo {year} {1983})}\BibitemShut {NoStop}%
\bibitem [{\citenamefont {Cirstoiu}\ and\ \citenamefont
  {Jennings}(2017)}]{Cirstoiu2017Global}%
  \BibitemOpen
  \bibfield  {author} {\bibinfo {author} {\bibfnamefont {C.}~\bibnamefont
  {Cirstoiu}}\ and\ \bibinfo {author} {\bibfnamefont {D.}~\bibnamefont
  {Jennings}},\ }\bibfield  {title} {\emph {\bibinfo {title} {Global and Local
  Gauge Symmetries Beyond {L}agrangian Formulations}},\ }\href@noop {} {\
  (\bibinfo {year} {2017})},\ \Eprint {https://arxiv.org/abs/arXiv:1707.09826}
  {arXiv:1707.09826} \BibitemShut {NoStop}%
\bibitem [{\citenamefont {Konig}\ \emph {et~al.}(2009)\citenamefont {Konig},
  \citenamefont {Renner},\ and\ \citenamefont
  {Schaffner}}]{konig2009operational}%
  \BibitemOpen
  \bibfield  {author} {\bibinfo {author} {\bibfnamefont {R.}~\bibnamefont
  {Konig}}, \bibinfo {author} {\bibfnamefont {R.}~\bibnamefont {Renner}},\ and\
  \bibinfo {author} {\bibfnamefont {C.}~\bibnamefont {Schaffner}},\ }\bibfield
  {title} {\emph {\bibinfo {title} {The Operational Meaning of Min-and
  Max-Entropy}},\ }\href@noop {} {\bibfield  {journal} {\bibinfo  {journal}
  {IEEE Trans. Inf. Theory}\ }\textbf {\bibinfo {volume} {55}},\ \bibinfo
  {pages} {4337} (\bibinfo {year} {2009})}\BibitemShut {NoStop}%
\bibitem [{\citenamefont {Tomamichel}(2012)}]{tomamichel2012framework}%
  \BibitemOpen
  \bibfield  {author} {\bibinfo {author} {\bibfnamefont {M.}~\bibnamefont
  {Tomamichel}},\ }\bibfield  {title} {\emph {\bibinfo {title} {A Framework for
  Non-Asymptotic Quantum Information Theory}},\ }\href@noop {} {\  (\bibinfo
  {year} {2012})},\ \Eprint {https://arxiv.org/abs/arXiv:1203.2142}
  {arXiv:1203.2142} \BibitemShut {NoStop}%
\bibitem [{\citenamefont {Tomamichel}\ \emph {et~al.}(2010)\citenamefont
  {Tomamichel}, \citenamefont {Colbeck},\ and\ \citenamefont
  {Renner}}]{tomamichel2010duality}%
  \BibitemOpen
  \bibfield  {author} {\bibinfo {author} {\bibfnamefont {M.}~\bibnamefont
  {Tomamichel}}, \bibinfo {author} {\bibfnamefont {R.}~\bibnamefont
  {Colbeck}},\ and\ \bibinfo {author} {\bibfnamefont {R.}~\bibnamefont
  {Renner}},\ }\bibfield  {title} {\emph {\bibinfo {title} {Duality Between
  Smooth Min- and Max-Entropies}},\ }\href
  {https://doi.org/10.1109/TIT.2010.2054130} {\bibfield  {journal} {\bibinfo
  {journal} {IEEE Trans. Inf. Theory}\ }\textbf {\bibinfo {volume} {56}},\
  \bibinfo {pages} {4674} (\bibinfo {year} {2010})}\BibitemShut {NoStop}%
\bibitem [{\citenamefont {Tkocz}(2019)}]{tkocz2019introduction}%
  \BibitemOpen
  \bibfield  {author} {\bibinfo {author} {\bibfnamefont {T.}~\bibnamefont
  {Tkocz}},\ }\href@noop {} {\bibinfo {title} {An Introduction to Convex and
  Discrete Geometry (Lecture Notes)}} (\bibinfo {year} {2019})\BibitemShut
  {NoStop}%
\bibitem [{\citenamefont {Ledoux}\ and\ \citenamefont
  {Talagrand}(1991)}]{Ledoux1991}%
  \BibitemOpen
  \bibfield  {author} {\bibinfo {author} {\bibfnamefont {M.}~\bibnamefont
  {Ledoux}}\ and\ \bibinfo {author} {\bibfnamefont {M.}~\bibnamefont
  {Talagrand}},\ }\href
  {https://doi.org/https://doi.org/10.1007/978-3-642-20212-4} {\emph {\bibinfo
  {title} {Probability in {B}anach Spaces : Isoperimetry and Processes}}},\
  Ergebnisse der Mathematik und ihrer Grenzgebiete ; 3. Folge, Band 23\
  (\bibinfo  {publisher} {Springer-Verlag},\ \bibinfo {year}
  {1991})\BibitemShut {NoStop}%
\bibitem [{\citenamefont {Bartlett}\ \emph {et~al.}(2009)\citenamefont
  {Bartlett}, \citenamefont {Rudolph}, \citenamefont {Spekkens},\ and\
  \citenamefont {Turner}}]{bartlett2009quantum}%
  \BibitemOpen
  \bibfield  {author} {\bibinfo {author} {\bibfnamefont {S.~D.}\ \bibnamefont
  {Bartlett}}, \bibinfo {author} {\bibfnamefont {T.}~\bibnamefont {Rudolph}},
  \bibinfo {author} {\bibfnamefont {R.~W.}\ \bibnamefont {Spekkens}},\ and\
  \bibinfo {author} {\bibfnamefont {P.~S.}\ \bibnamefont {Turner}},\ }\bibfield
   {title} {\emph {\bibinfo {title} {Quantum Communication Using a Bounded-Size
  Quantum Reference Frame}},\ }\href
  {https://doi.org/10.1088/1367-2630/11/6/063013} {\bibfield  {journal}
  {\bibinfo  {journal} {New J. Phys.}\ }\textbf {\bibinfo {volume} {11}},\
  \bibinfo {pages} {063013} (\bibinfo {year} {2009})}\BibitemShut {NoStop}%
\bibitem [{\citenamefont {Loveridge}\ \emph {et~al.}(2017)\citenamefont
  {Loveridge}, \citenamefont {Busch},\ and\ \citenamefont
  {Miyadera}}]{Loveridge2017Relativity}%
  \BibitemOpen
  \bibfield  {author} {\bibinfo {author} {\bibfnamefont {L.}~\bibnamefont
  {Loveridge}}, \bibinfo {author} {\bibfnamefont {P.}~\bibnamefont {Busch}},\
  and\ \bibinfo {author} {\bibfnamefont {T.}~\bibnamefont {Miyadera}},\
  }\bibfield  {title} {\emph {\bibinfo {title} {Relativity of Quantum States
  and Observables}},\ }\href {https://doi.org/10.1209/0295-5075/117/40004}
  {\bibfield  {journal} {\bibinfo  {journal} {Europhys. Lett.}\ }\textbf
  {\bibinfo {volume} {117}},\ \bibinfo {pages} {40004} (\bibinfo {year}
  {2017})}\BibitemShut {NoStop}%
\bibitem [{\citenamefont {Loveridge}\ \emph {et~al.}(2018)\citenamefont
  {Loveridge}, \citenamefont {Miyadera},\ and\ \citenamefont
  {Busch}}]{loveridge2018symmetry}%
  \BibitemOpen
  \bibfield  {author} {\bibinfo {author} {\bibfnamefont {L.}~\bibnamefont
  {Loveridge}}, \bibinfo {author} {\bibfnamefont {T.}~\bibnamefont
  {Miyadera}},\ and\ \bibinfo {author} {\bibfnamefont {P.}~\bibnamefont
  {Busch}},\ }\bibfield  {title} {\emph {\bibinfo {title} {Symmetry, Reference
  Frames, and Relational Quantities in Quantum Mechanics}},\ }\href@noop {}
  {\bibfield  {journal} {\bibinfo  {journal} {Found. Phys.}\ }\textbf {\bibinfo
  {volume} {48}},\ \bibinfo {pages} {135} (\bibinfo {year} {2018})}\BibitemShut
  {NoStop}%
\bibitem [{\citenamefont {Loveridge}(2020)}]{Loveridge2020WAY}%
  \BibitemOpen
  \bibfield  {author} {\bibinfo {author} {\bibfnamefont {L.}~\bibnamefont
  {Loveridge}},\ }\bibfield  {title} {\emph {\bibinfo {title} {A Relational
  Perspective on the {W}igner-{A}raki-{Y}anase Theorem}},\ }\href
  {https://doi.org/10.1088/1742-6596/1638/1/012009} {\bibfield  {journal}
  {\bibinfo  {journal} {J. Phys. Conf. Ser.}\ }\textbf {\bibinfo {volume}
  {1638}},\ \bibinfo {pages} {012009} (\bibinfo {year} {2020})}\BibitemShut
  {NoStop}%
\bibitem [{\citenamefont {Hayashi}(2006)}]{hayashi2006quantum}%
  \BibitemOpen
  \bibfield  {author} {\bibinfo {author} {\bibfnamefont {M.}~\bibnamefont
  {Hayashi}},\ }\href@noop {} {\emph {\bibinfo {title} {Quantum Information}}}\
  (\bibinfo  {publisher} {Springer},\ \bibinfo {year} {2006})\BibitemShut
  {NoStop}%
\bibitem [{\citenamefont {Bengtsson}\ and\ \citenamefont
  {{\.Z}yczkowski}(2017)}]{bengtsson2017geometry}%
  \BibitemOpen
  \bibfield  {author} {\bibinfo {author} {\bibfnamefont {I.}~\bibnamefont
  {Bengtsson}}\ and\ \bibinfo {author} {\bibfnamefont {K.}~\bibnamefont
  {{\.Z}yczkowski}},\ }\href@noop {} {\emph {\bibinfo {title} {Geometry of
  Quantum States: An Introduction to Quantum Entanglement}}}\ (\bibinfo
  {publisher} {Cambridge university press},\ \bibinfo {year}
  {2017})\BibitemShut {NoStop}%
\bibitem [{\citenamefont {{Korzekwa}}\ \emph {et~al.}(2016)\citenamefont
  {{Korzekwa}}, \citenamefont {{Lostaglio}}, \citenamefont {{Oppenheim}},\ and\
  \citenamefont {{Jennings}}}]{matteo_kamil_bound}%
  \BibitemOpen
  \bibfield  {author} {\bibinfo {author} {\bibfnamefont {K.}~\bibnamefont
  {{Korzekwa}}}, \bibinfo {author} {\bibfnamefont {M.}~\bibnamefont
  {{Lostaglio}}}, \bibinfo {author} {\bibfnamefont {J.}~\bibnamefont
  {{Oppenheim}}},\ and\ \bibinfo {author} {\bibfnamefont {D.}~\bibnamefont
  {{Jennings}}},\ }\bibfield  {title} {\emph {\bibinfo {title} {The Extraction
  of Work from Quantum Coherence}},\ }\href
  {https://doi.org/10.1088/1367-2630/18/2/023045} {\bibfield  {journal}
  {\bibinfo  {journal} {New J. Phys.}\ }\textbf {\bibinfo {volume} {18}},\
  \bibinfo {eid} {023045} (\bibinfo {year} {2016})}\BibitemShut {NoStop}%
\bibitem [{\citenamefont {Hausladen}\ and\ \citenamefont
  {Wootters}(1994)}]{hausladen1994pretty}%
  \BibitemOpen
  \bibfield  {author} {\bibinfo {author} {\bibfnamefont {P.}~\bibnamefont
  {Hausladen}}\ and\ \bibinfo {author} {\bibfnamefont {W.~K.}\ \bibnamefont
  {Wootters}},\ }\bibfield  {title} {\emph {\bibinfo {title} {A ‘pretty
  Good’ Measurement for Distinguishing Quantum States}},\ }\href
  {https://doi.org/10.1080/09500349414552221} {\bibfield  {journal} {\bibinfo
  {journal} {J. Mod. Opt.}\ }\textbf {\bibinfo {volume} {41}},\ \bibinfo
  {pages} {2385} (\bibinfo {year} {1994})}\BibitemShut {NoStop}%
\bibitem [{\citenamefont {Horn}\ and\ \citenamefont
  {Johnson}(2012)}]{horn2012matrix}%
  \BibitemOpen
  \bibfield  {author} {\bibinfo {author} {\bibfnamefont {R.~A.}\ \bibnamefont
  {Horn}}\ and\ \bibinfo {author} {\bibfnamefont {C.~R.}\ \bibnamefont
  {Johnson}},\ }\href@noop {} {\emph {\bibinfo {title} {Matrix Analysis}}}\
  (\bibinfo  {publisher} {Cambridge University Press},\ \bibinfo {year}
  {2012})\BibitemShut {NoStop}%
\end{thebibliography}%

\newpage   
\onecolumngrid
\begin{appendices}

\section{Notation and background details}
\label{appx:background_details}
To any quantum system we have an associated Hilbert space $\H$, and the set of bounded linear operators on this space denoted by $\B(\H)$. Given a group $G$ we denote its representation on $\H$ by $U(g)$ and on $\B(\H)$ as $\U_g$ where $\U_g(X) = U(g) X U(g)^\dagger$ for any $X \in \B(\H)$ and any $g\in G$. 

We define $D(\rho,\sigma)\coloneqq \frac{1}{2} \norm{\rho - \sigma}_1 + \frac{1}{2}\abs{\tr \rho - \tr \sigma}$ to be the generalized trace distance between any two $\rho, \sigma \in \B(\H)$.
We also define $\S_\le(\H)$ to be the set of all normalized and sub-normalized quantum states on $\H$.

A quantum channel $\E: \B(\H_A) \rightarrow \B(\H_B)$, from a quantum system $A$ to a quantum system $B$, is a superoperator that is both trace-preserving and completely positive~\cite{watrous2018theory}. A quantum channel is \emph{covariant} with respect to the group action if we have that
\begin{equation}
    \E(\U_g(\rho)) = \U_g (\E(\rho)),
\end{equation}
for all quantum states $\rho \in \B(\H_A)$ and all $g\in G$. Note here that the group representation of the input and output systems are generally different, and we should strictly write $\U_g^A$ and $\U_g^B$ for each action. However, to simplify notation, we shall use $\U_g$ throughout, as it does not cause ambiguity in practice. The above condition can also be compactly written as $[\U_g, \E] = 0$ for all $g\in G$. 

The reference system $R$ for $G$--covariant transformations from $A$ to $B$ is chosen such that the representation of $G$ on $\H_R$, the Hilbert space of system $R$, is \emph{dual} to its representation on $\H_B$.

We also make use of an irreducible tensor operator (ITO) basis. An ITO consists of a basis of operators ${X^{(\lambda,\alpha)}_j \in \B(\H)}$ that have the property that
\begin{equation}
\U_g \left(X^{(\lambda,\alpha)}_j\right) = \sum_i v^\lambda_{ij}(g) X^{(\lambda,\alpha)}_i,
\end{equation}
where $(v^\lambda_{ji}(g))$ are the matrix components of the $\lambda$--irrep of the group $G$ on $\H$. The irrep $\lambda$ may occur with multiplicities, and so we also denote this as $(\lambda,\alpha)$ where $\alpha$ is a multiplicity label for the irrep. Since $\{X^{(\lambda,\alpha)}_j\}$ form a basis for $\B(\H)$ we may decompose any operator, and in particular any quantum state as
\begin{equation}
\rho = \sum_{\lambda,j} \left(\sum_{\alpha}\tr\left(X^{(\lambda,\alpha)\dagger}_j \rho\right) X^{(\lambda,\alpha)}_j\right) \coloneqq \sum_{\lambda,j} \rho^\lambda_j
\end{equation}
Since each $\rho^\lambda_j$ transforms irreducibly under the group action $\U_g(\cdot)$, this defines a decomposition of $\rho$ into modes of asymmetry. If $\E$ is a covariant channel, then it does not mix modes, and so
\begin{equation}
\E(\rho^\lambda_j) = \E(\rho)^\lambda_j,
\end{equation}
for all $\lambda,j$. See~\cite{marvian2012symmetry,marvian2014modes} for more details. 

\section{Properties of min-entropies}

In this section we review some useful properties of the conditional min-entropies $H_{\rm min} (R|A)_\Omega$ and properties that relate to the case of $\Omega_{RA}$ being a $G$-twirled bipartite quantum state. 

It proves useful to define the functional $\Phi(M_{RA}): M_{RA} \mapsto 2^{-H_{\mathrm{min}}(R|A)_M} $ defined on bipartite Hermitian operators $M_{RA}$:

\begin{definition} Let $M_{RA}$ be a Hermitian operator on $\H_{RA}$. Then we define the argument of the conditional min-entropy of $M_{RA}$, $\Phi(M_{RA})$, via
\begin{align}
 \Phi(M_{RA})   \coloneqq \inf_{X_A \ge 0} \{ \tr [X_A] \ : \ \mathbbm{1}_R \otimes X_A - M_{RA} \ge 0 \}.
\end{align}
\end{definition}
We note the following known properties of the functions $\Phi(\cdot)$, proofs of which can be found in \cite{tomamichel2012framework} or are obvious from the definition.
\begin{enumerate}[label=\normalfont \textbf{(P\arabic*)}]
\item \label{property:scalar_multiplication} \textit{(Scalar multiplication).} $\Phi(\lambda M_{RA})= \lambda \Phi(M_{RA})$ for any $\lambda \ge 0$.
\item \label{property:convexity} \textit{(Convexity).} $\Phi(pM_{RA} +(1-p)N_{RA}) \le p \Phi(M_{RA}) + (1-p) \Phi(N_{RA})$ for any $p \in [0,1]$.
\item \label{property:unitary_invariance} \textit{(Invariance under local isometries).} Let $\mathcal{U}_R \coloneqq U_R \otimes \id_A (\cdot) U_R^\dagger \otimes \id_A$ and $ \mathcal{V}_A \coloneqq \id_R \otimes V_A (\cdot)\id_R \otimes V_A^\dagger$ be isometries on subsystems $R$ and $A$ respectively. Then $\Phi(\U_R \circ \V_A (M_{RA})) = \Phi(M_{RA})$.
\item \label{property:F_data_processing} \textit{(Local data processing inequality).} Let $\E_{R}: \B(\H_R) \rightarrow\B( \H_{R'})$ be a unital CPTP map, $\N_A: \B(\H_A) \rightarrow \B(\H_{B})$ be a CPTP map, and $\I_X(\cdot) \coloneqq \id_X (\cdot) \id_X$ denote the identity channel on system $X$. Then $\Phi\left( (\E_R\otimes \I_A) (M_{RA})\right) \le \Phi(M_{RA})$ and $\Phi((\I_R\otimes \N_A) (M_{RA})) \le \Phi(M_{RA})$.
\end{enumerate}
We also introduce the following simplifying notation for the bipartite $\G$--twirled states: 
\begin{align}
\Phi_\eta(\tau) \coloneqq \Phi(\G(\eta_R \otimes \tau_A)) ,
\label{eq:Phi(tau)}
\end{align}
and also $H_\eta(\tau):= -\log \Phi_\eta(\tau)$.

\subsection{Invariance under local isometries that commute with $\G$}
\label{appx:invariance_unitaries_com_acom}

Here we prove the following lemma from the main text, which specializes property \ref{property:F_data_processing} to the particular form of the conditional min-entropies appearing in \thmref{thrm:gour}, which are instead equivalent up to local isometries that jointly commute with the $G$-twirl on the global system.

\covisometryinvariance*

\begin{proof}
    By a straightforward appeal to Property \ref{property:unitary_invariance}, we have
    \begin{align}
        \Phi(\G[\U_R(\eta) \otimes \V_A(\tau)]) = \Phi(\U_R \otimes \V_A \circ \G[\eta\otimes\tau]) = \Phi(\G[\eta\otimes\tau]).
    \end{align}
    Since $H_\eta(\rho) \coloneqq - \log \Phi(\G[\eta \otimes \rho])$, this implies
    \begin{align}
        H_{\U(\eta)}(\V(\rho)) = H_\eta(\rho),
    \end{align}
    as claimed.
\end{proof}

\subsection{Symmetric input states}
\label{appx:Hmin_symmetric_states}

\begin{restatable}[]{lemma}{phidephasedstates}
\label{lemma:phi_dephased_states}
For any input state $\rho$ on system $A$ and reference state $\eta$ on system $R$, the following identity holds
\begin{align}
\label{eq:phi_dephased_states}
    H_{\G(\eta)}(\rho) &= H_{\eta}(\G(\rho)) = H_{\G(\eta)}(\G(\rho)) \\  &= - \log \norm{\G(\eta)}_\infty.
\end{align}
\end{restatable}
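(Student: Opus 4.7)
The plan is to prove all three equalities simultaneously by showing that the three bipartite operators appearing inside the min-entropy definitions coincide, and then computing the min-entropy of this common operator directly.

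First I would establish that each of $\G(\G(\eta)\otimes\rho)$, $\G(\eta \otimes \G(\rho))$, and $\G(\G(\eta)\otimes\G(\rho))$ equals the product $\G(\eta)\otimes\G(\rho)$. The key fact is that $\G(\eta)$ is $G$-invariant, so $\U_g(\G(\eta))=\G(\eta)$ for every $g\in G$. Pulling this constant factor out of the group integral gives
\begin{equation}
\G(\G(\eta)\otimes\rho) = \int\!dg\,\G(\eta)\otimes\U_g(\rho) = \G(\eta)\otimes\G(\rho),
\end{equation}
and the other two identities follow by the same reasoning together with idempotency of $\G$. Hence all three quantities on the left-hand side of \eqref{eq:phi_dephased_states} equal $H_{\mathrm{min}}(R|A)$ evaluated on $\G(\eta)\otimes\G(\rho)$.

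Second, I would compute $\Phi(\tau_R\otimes\omega_A)$ for a product of normalized states, specializing at the end to $\tau=\G(\eta)$ and $\omega=\G(\rho)$. The upper bound is obtained by the ansatz $X_A = \|\tau_R\|_\infty\,\omega_A$: the operator inequality $\tau_R\le \|\tau_R\|_\infty\id_R$ ensures $\id_R\otimes X_A\ge \tau_R\otimes\omega_A$, and $\tr X_A = \|\tau_R\|_\infty$. For the matching lower bound, sandwich any feasible $X_A$ between $\langle\phi|\cdot|\phi\rangle$ on the $R$-factor, where $|\phi\rangle$ is a top eigenvector of $\tau_R$; this forces $X_A\ge\|\tau_R\|_\infty\omega_A$, and hence $\tr X_A\ge\|\tau_R\|_\infty$. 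Taking logarithms yields $H_{\mathrm{min}}(R|A)_{\tau\otimes\omega} = -\log\|\tau_R\|_\infty$, and setting $\tau=\G(\eta)$ completes the chain of equalities.

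I do not anticipate a serious obstacle: step one is a direct manipulation of the twirl, and step two is a textbook computation of conditional min-entropy for a product state that does not even require $\omega_A$ to be symmetric. The only subtlety is to check that $\G(\rho)$ is normalized (which follows since $\G$ is trace-preserving), so that the $\omega_A$ factor in the ansatz contributes trace one and the final expression depends only on the reference-system operator $\G(\eta)$.
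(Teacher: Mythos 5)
Your proposal is correct and follows essentially the same route as the paper's proof: the same twirl identities reduce all three quantities to $H_{\mathrm{min}}$ of the product $\G(\eta)\otimes\G(\rho)$, and your explicit-ansatz-plus-top-eigenvector-compression computation of $\Phi$ is just a repackaging of the paper's diagonalization of $\G(\eta)$. The only (minor) added value is your observation that the second step holds for an arbitrary normalized product state $\tau_R\otimes\omega_A$, with no symmetry needed on the $A$ factor.
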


\begin{proof}
The first two equalities in Eq.~(\ref{eq:phi_dephased_states}) straightforwardly follow from the fact that $\G(\G(\eta) \otimes \rho)  = \G(\eta) \otimes \G(\rho)=\G(\eta \otimes \G(\rho))=\G(\G(\eta) \otimes \G(\rho))$. To show the final equality, we first examine
\begin{align}
	 \Phi(\G(\eta) \otimes \G(\rho))  &= \inf_{X \ge 0} \{ \tr [X] \ : \ \mathbbm{1}  \otimes X - \G(\eta) \otimes \G(\rho) \ge 0 \}
\end{align}
Since $\G(\eta)$ is Hermitian, it can be diagonalised as $\G(\eta) \coloneqq \sum_i \lambda_i \ketbra{i}$ for some basis $\{\ket{i}\}$ of the reference system. Working in this basis, we obtain
\begin{align}
	\id \otimes X - \G(\eta) \otimes \G(\rho) = \left(\sum_i \ketbra{i} \otimes X \right) - \left(\sum_i \lambda_i\ketbra{i} \otimes \G(\rho)\right) = \sum_i \ketbra{i} \otimes (X - \lambda_i \G(\rho))	 
\end{align}
Therefore, $\id \otimes X - \G(\eta) \otimes \G(\rho) \ge 0$ if and only if $X - \lambda_i\G(\rho) \ge 0$ for all $i$, which in turn is true if and only if $X - \lambda_{\mathrm{ max}}\G(\rho) \ge 0$, where $\lambda_{\mathrm{ max}}$ denotes the largest eigenvalue of $\G(\rho)$. We can lower-bound the $\tr[X]$ needed to achieve this by
\begin{align}
	\tr[X] \ge \lambda_{\mathrm{ max}} \tr[\G(\rho)] = \lambda_{\mathrm{ max}},
\end{align} 
and this minimum can be attained simply by choosing $X \coloneqq \lambda_{\mathrm{ max}}\G(\rho)$. Therefore,
\begin{align}
	\Phi(\G(\eta) \otimes \G(\rho)) = \lambda_{\mathrm{ max}} = \norm{\G(\rho_A)}_\infty \Rightarrow H_{\G(\eta)}(\G(\rho)) = - \log \norm{\G(\rho)}_\infty,
\end{align} 
which completes the proof.\end{proof}

\subsection{Dual Formulation}

From \eqref{eq:Phi(tau)}, we see that $\Phi_\eta(\tau)$ is defined via a semidefinite programme (SDP). In this subsection, we prove a lemma stating what the dual form of this SDP is, which is convenient for proving several results of this paper, including the depolarization conditions for $G$--covariant channels. 

\begin{lemma}
	Let $\mathcal{O}_{\rm cov}$ be the set of $G$--covariant channels from input system $A$ to output system $B$. Let $\eta$ be state of the reference system $R$ for this transformation, and $\tau$ be a state of input system $A$. Then the \emph{dual} formulation of $\Phi_\eta(\tau)$ is 
	\begin{align}
	\Phi_\eta(\tau) = \max_{\mathcal{E} \in \mathcal{O}_{\text{\rm cov}}} \tr(\eta^T \E(\tau)), 
	\end{align}
	\label{lemma:Phi_and_dual}
\end{lemma}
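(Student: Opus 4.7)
The plan is to derive the identity by standard semidefinite programming duality, then identify the dual feasible set with Choi operators of $G$-covariant channels.

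First I would form the Lagrangian of the primal
\[
\Phi_\eta(\tau) = \min_{X_A \ge 0}\bigl\{\tr X_A : \id_R \otimes X_A - \G(\eta_R \otimes \tau_A) \ge 0\bigr\},
\]
introducing a positive dual variable $Y_{RA} \ge 0$ for the operator inequality. Minimizing over $X_A$ eliminates it at the cost of the constraint $\tr_R Y_{RA} \le \id_A$, giving the dual
\[
\Phi_\eta(\tau) = \max_{Y_{RA} \ge 0}\bigl\{\tr[Y_{RA}\,\G(\eta_R \otimes \tau_A)] : \tr_R Y_{RA} \le \id_A\bigr\}.
\]
Strong duality follows from Slater's condition, with $X_A = 2\lVert\G(\eta_R\otimes\tau_A)\rVert_\infty\,\id_A$ strictly feasible on the primal side.

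Next I would exploit the $G$-invariance of $\G(\eta_R\otimes\tau_A)$ together with Hilbert--Schmidt self-adjointness of the twirl to restrict, without loss, to $G$-invariant $Y_{RA}$: replacing $Y$ by $\G(Y)$ preserves positivity, preserves the partial-trace constraint (since $\id_A$ is invariant), and leaves the objective unchanged. A further extension by the $G$-invariant positive slack $\tfrac{1}{d_R}\id_R \otimes (\id_A - \tr_R Y)$ tightens $\tr_R Y \le \id_A$ to equality without decreasing the objective, because $\G(\eta_R \otimes \tau_A) \ge 0$. The Choi--Jamio\l kowski correspondence then identifies such $G$-invariant positive operators $Y_{RA}$ with $\tr_R Y = \id_A$ as Choi matrices of $G$-covariant channels $\E: A \to B$, using the dual-representation identification $R \cong B^{\ast}$ (equivalently, $H_R = -H_B^T$). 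Translating the objective through this correspondence produces
\[
\tr[Y\,(\eta_R \otimes \tau_A)] = \tr[\eta^T\,\E(\tau)],
\]
the transpose on $\eta$ arising precisely from the $R \leftrightarrow B$ duality.

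The main obstacle is the final identification: one must carefully track conventions so that $G$-invariance of $Y_{RA}$ under the joint twirl matches exactly with $G$-covariance of the associated channel $\E$, and so that the chosen Choi convention reproduces the specific transpose $\eta^T$ in the statement rather than $\eta$ or $\bar{\eta}$. Establishing that the map from invariant Choi operators to covariant channels is onto \emph{all} covariant CPTP maps (not merely some subset) follows by applying Schur's lemma to the commutant of the joint representation on $\H_R \otimes \H_A$, which makes the correspondence an isomorphism of cones. These are routine but convention-heavy verifications, and are the only place where the specific relation $H_R = -H_B^T$ between the reference and output systems is invoked.
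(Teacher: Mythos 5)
Your derivation is correct, but it takes a genuinely different (and more self-contained) route than the paper. The paper does not redo the SDP duality at all: it imports Lemma~3 of the supplemental material of Ref.~\cite{gour2018quantum}, which already packages strong duality together with the Choi correspondence into the statement $\Phi_\eta(\tau) = \max_{\E \in \mathcal{O}_{\rm cov}} \bra{\phi_+} (\I \otimes \E)[\G(\eta\otimes\tau)]\ket{\phi_+}$, and then the entire proof is a short computation: the invariance $U^*(g)\otimes U(g)\ket{\phi_+} = \ket{\phi_+}$ lets the twirl be moved off the state and absorbed into $\ketbra{\phi_+}$, after which the identity $\bra{\phi_+}(\eta\otimes\Sigma)\ket{\phi_+} = \tr(\eta^T\Sigma)$ produces the transpose. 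You instead rebuild the chain from scratch — Lagrangian dual, Slater's condition, symmetrization of the dual variable $Y\mapsto\G(Y)$, saturation of $\tr_R Y\le\id_A$ by adding the invariant slack, and the identification of invariant Choi operators with covariant channels. Each step checks out (in particular $\tr_R\G(Y)=\int dg\,\U_g^A(\tr_R Y)\le\id_A$, and the added slack is jointly invariant and cannot decrease the objective since $\G(\eta\otimes\tau)\ge 0$). Two small remarks: the appeal to Schur's lemma for surjectivity is unnecessary — that $\E$ is covariant iff its Choi operator is invariant under $\overline{U_B(g)}\otimes U_A(g)$, i.e.\ under $U_R(g)\otimes U_A(g)$ with $U_R=\overline{U_B}$ (equivalently $H_R=-H_B^T$), is a direct computation, and the correspondence is then automatically a bijection of the relevant cones; and the ``convention-heavy'' transpose bookkeeping you defer is precisely the one-line identity $\bra{\phi_+}(\eta\otimes\Sigma)\ket{\phi_+}=\sum_{ij}\bra{j}\eta\ket{i}\bra{j}\Sigma\ket{i}=\tr(\eta^T\Sigma)$, so it is worth writing out rather than leaving implicit. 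Your version buys independence from the external lemma and makes explicit where $H_R=-H_B^T$ enters; the paper's buys brevity at the cost of opacity about the underlying duality.
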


\begin{proof}
	From Lemma 3 in \cite{gour2018quantum}, we see that
	\begin{align}
		\Phi_\eta(\tau) = \max_{\E \in \mathcal{O}_{\rm cov}} \bra{\phi_+} \id \otimes \E[\G(\eta \otimes \tau)] \ket{\phi_+},  
	\end{align} 
	where $\ket{\phi_+} \coloneqq \sum_i \ket{ii}$ for local computational bases $\{\ket{i}\}$ for the reference $R$ and output system $B$.	
	We first note that $\ket{\phi_+} = \ket{vec(\id)}$, which means~\cite{watrous2018theory} that
	\begin{align}
		U^*(g) \otimes U(g)\ket{\phi_+} = U^*(g) \otimes U(g) \ket{vec(\id)} = \ket{vec\left(U^*(g)\id U^T(g)\right)} = \ket{vec(\id)} = \ket{\phi_+},
	\end{align}
	where $U(g)$ is the representation of the group element $g$ on $\H_B$, the Hilbert space of $B$. We therefore see that ${\G(\ketbra{\phi_+}) = \ketbra{\phi_+}}$, which allows us to derive:	
	\begin{align}
		\Phi_\eta(\tau) &= \max_{\E \in \mathcal{O}_{\rm cov}} \tr(\ketbra{\phi_+} \id \otimes \E[\G(\eta\otimes\tau)])\\
		&= \max_{\E \in \mathcal{O}_{\rm cov}} \tr(\ketbra{\phi_+} \G(\eta\otimes \E[\tau]))\\
		&= \max_{\E \in \mathcal{O}_{\rm cov}} \tr\left(\ketbra{\phi_+} \left[\int dg U^*(g) \otimes U(g) (\eta \otimes \E[\tau])U^T(g) \otimes U^\dagger(g) \right] \right)\\
		&= \max_{\E \in \mathcal{O}_{\rm cov}} \tr\left(\left[\int dg U^T(g) \otimes U^\dagger(g) \ketbra{\phi_+} U^*(g) \otimes U(g)\right] \eta \otimes \E[\tau]\right)\\
		&= \max_{\E \in \mathcal{O}_{\rm cov}} \tr\left(\left[\int dg\ \U^*_{g^{-1}} \otimes \U_{g^{-1}} (\ketbra{\phi_+})\right] \eta \otimes \E[\tau]\right)\\
		&= \max_{\E \in \mathcal{O}_{\rm cov}} \tr(\G(\ketbra{\phi_+}) \eta \otimes \E[\tau])\\
		&= \max_{\E \in \mathcal{O}_{\rm cov}} \tr(\ketbra{\phi_+} \eta \otimes \E[\tau])\\
		&= \max_{\E \in \mathcal{O}_{\rm cov}} \sum_{ij} \bra{j} \eta \ket{i} \bra{j}\E[\tau]\ket{i}\\
		&= \max_{\E \in \mathcal{O}_{\rm cov}} \sum_{ij} \bra{i} \eta^T \ket{j}\bra{j}\E[\tau]\ket{i}\\
		&= \max_{\E \in \mathcal{O}_{\rm cov}} \tr(\eta^T\E[\tau]),
	\end{align}
as claimed.
\end{proof}
\subsection{Truncation of Output System}
The following two Lemmas detail when one can truncate the Hilbert space of the output system without affecting the possibility of interconversion to a particular output state $\sigma$. This is of use in our analysis of state interconversion with partial depolarization.

\begin{lemma}
	\label{lemma:where_to_truncate}
	Let $\rho$ be a state of the input system $A$, associated to the Hilbert space $\H_A$. Let $\sigma$ be a state of the output system $B$, associated to the Hilbert space $\H_B$.
	
	\item Let $\H_S$ be any subspace of $\H_B$ with the following two properties: 
	\begin{enumerate}
		\item $\H_S$ carries its own representation of $G$, i.e. $\H_B$ can be decomposed into $\H_B = \H_S \oplus \H_A$ such that $U_B(g) = U_S(g) \oplus U_A(g)$.
		\item The support of $\sigma$ is contained entirely within $\H_S$, i.e. letting $\Pi_S$ be the projector onto $\H_S$, $\Pi_S \sigma \Pi_S = \sigma$.
	\end{enumerate}
	
	Let $\H_S$ be the Hilbert space appropriate to a new output system $S$ truncated from $B$. Then there exists a $G$--covariant operation from $A$ to $B $ that takes $\rho$ to $\sigma$ if and only if there exists a $G$--covariant operation from $A$ to $S$ that takes $\rho$ to $\sigma$. 
\end{lemma}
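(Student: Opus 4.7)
The plan is to prove the two implications separately. The reverse direction is immediate: if $\E': A \to S$ is a $G$--covariant channel with $\E'(\rho) = \sigma$, then hypothesis (1) says that the isometric embedding $\iota: \H_S \hookrightarrow \H_B$ is $G$--equivariant, since $U_B(g)|_{\H_S} = U_S(g)$. Consequently, the channel $\E(\tau) := \iota\, \E'(\tau)\, \iota^\dagger$ from $A$ to $B$ is $G$--covariant and still sends $\rho$ to $\sigma$.

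For the forward direction, the key structural observation is that hypothesis (1)---the direct-sum decomposition $U_B(g) = U_S(g) \oplus U_A(g)$---is equivalent to the projector $\Pi_S$ onto $\H_S$ being invariant under the group action, i.e.\ $[\Pi_S, U_B(g)] = 0$ for all $g \in G$. Given a $G$--covariant channel $\E: A \to B$ with $\E(\rho) = \sigma$, I would define a candidate channel $\E': A \to S$ by
\begin{align}
\E'(\tau) \;:=\; \Pi_S\, \E(\tau)\, \Pi_S \;+\; \tr\!\bigl[(\id - \Pi_S)\E(\tau)\bigr]\,\omega_S,
\end{align}
where $\omega_S$ is any $G$--invariant state on $\H_S$; for concreteness I would take $\omega_S = \id_S/\dim\H_S$, which is manifestly invariant under any unitary representation.

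The next step is to verify the four required properties in turn. Complete positivity follows since $\E'$ is a sum of the composition of $\E$ with the CP pinching $X \mapsto \Pi_S X \Pi_S$, plus a positive linear functional multiplied by a fixed state. Trace preservation holds because $\tr \E'(\tau) = \tr[\Pi_S \E(\tau)] + \tr[(\id - \Pi_S)\E(\tau)] = \tr \E(\tau) = \tr \tau$. Covariance follows from $[\Pi_S, U_B(g)] = 0$ together with $U_B(g)|_{\H_S} = U_S(g)$ for the first summand, while the second summand combines a $G$--invariant scalar functional $\tau \mapsto \tr[(\id - \Pi_S)\E(\tau)]$ (invariant because both $\E$ and $\Pi_S$ are covariant) with the $G$--invariant state $\omega_S$. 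Finally, hypothesis (2) gives $\Pi_S \sigma \Pi_S = \sigma$ and $(\id - \Pi_S)\sigma = 0$, so $\E'(\rho) = \Pi_S \sigma \Pi_S + 0 \cdot \omega_S = \sigma$, as required.

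I do not anticipate any serious obstacles. The only point deserving care is the equivalence between hypothesis (1), phrased as a direct sum of representations, and the commutation relation $[\Pi_S, U_B(g)] = 0$; this is immediate upon writing $U_B(g)$ in block-diagonal form in the basis adapted to the decomposition $\H_B = \H_S \oplus \H_S^\perp$. The construction above is essentially forced: any $\sigma$ supported in $\H_S$ must arise from pinching $\E(\rho)$ onto $\H_S$, and the free choice of $\omega_S$ is simply what is needed to patch the compressed map into a trace-preserving channel.
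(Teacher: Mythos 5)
Your proof is correct and follows essentially the same route as the paper's: embed $\H_S$ into $\H_B$ via the $G$-equivariant inclusion for the reverse direction, and compress onto $\H_S$ using $\Pi_S U_B(g) = U_S(g)\Pi_S$ for the forward direction. The one genuine improvement is your correction term $\tr\bigl[(\id-\Pi_S)\E(\tau)\bigr]\,\omega_S$: the paper simply uses the pinching $\Pi_S\,\E(\cdot)\,\Pi_S$, which is completely positive and covariant but only trace-non-increasing on general inputs, so your patch (with $\omega_S$ the maximally mixed, hence $G$-invariant, state on $\H_S$) is what actually makes the compressed map a bona fide covariant \emph{channel} while still sending $\rho$ to $\sigma$.
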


\begin{proof}	
	Let us first assume that there exists a $G$--covariant operation $\E_{\rm cov}: \B(\H_A) \rightarrow \B(\H_B)$ such that ${\E_{\rm cov}(\rho) = \sigma}$. We then observe that 
	\begin{align}
	\Pi_S U_B(g) = \left(\id_S \oplus 0_A\right) \left(U_S(g) \oplus U_A(g)\right) = U_S(g) \oplus 0_A =  (U_S(g) \id_S) \oplus 0_A =U_S(g) \Pi_S
	\end{align}
	
	Therefore, $\Pi_S(\cdot)\Pi_S$ is a covariant map from $\B(\H_B)$ to $\B(\H_S)$. As a result, $\Pi_S [\E_{\rm cov}(\cdot)] \Pi_S $ is a covariant operation from $\B(\H_A)$ to $\B(\H_S)$ such that
	\begin{align}
	\Pi_S \E_{\rm cov}(\rho) \Pi_S = \Pi_S\sigma\Pi_S = \sigma.
	\end{align}
	
	Conversely, let us now assume that that there exists a covariant transformation ${\F_{\rm cov}: \B(\H_A) \rightarrow \B(\H_S)}$ such that $\F_{\rm cov}(\rho) = \sigma$. We then extend $\H_S$ into the bigger Hilbert space $\H_B = \H_S \oplus \H_A$ such that $\H_S$ still forms its own representation of $G$, i.e. $U_B(g) = U_S(g) \oplus U_A(g)$. Then $\F_{\rm cov}$ can be reinterpreted as a covariant channel from $\B(\H_A)$ to $\B(\H_B)$. 
	
	We therefore conclude that $G$--covariant interconversion from $\rho$ to $\sigma$ is unaffected by treating $\sigma$ as a state of $B$ or as a state of $S$.
\end{proof}

\begin{lemma}
	Given any particular output state $\sigma$, it is always possible to truncate the Hilbert space of the output system, $\H_B$, to the support of $\G(\sigma)$ without affecting the possibility of $G$--covariant interconversion. 
	
	\label{lemma:H_S_construction}
\end{lemma}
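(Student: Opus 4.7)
The plan is to apply the previous Lemma~\ref{lemma:where_to_truncate} with $\H_S$ chosen to be the support of $\G(\sigma)$. All that needs checking is that this $\H_S$ satisfies the two hypotheses of that lemma: (i) $\H_S$ is invariant under the representation $U_B(g)$, so that $\H_B$ decomposes as $\H_S \oplus \H_S^\perp$ with a block-diagonal representation; and (ii) $\sigma$ is supported inside $\H_S$.

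For (i), I would use the fact that $\G(\sigma) = \int dg\, U_B(g)\sigma U_B(g)^\dagger$ is $G$--invariant by construction, so $[U_B(g),\G(\sigma)]=0$ for all $g\in G$. Hence $U_B(g)$ preserves every spectral subspace of $\G(\sigma)$, in particular its support $\H_S$ and the orthogonal complement $\H_S^\perp$. This immediately gives the block decomposition $U_B(g) = U_S(g) \oplus U_{S^\perp}(g)$ required by Lemma~\ref{lemma:where_to_truncate}.

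For (ii), I would argue as follows. Suppose $\ket\psi \in \ker \G(\sigma)$. Then
\begin{align}
0 = \bra\psi \G(\sigma) \ket\psi = \int dg\, \bra\psi U_B(g) \sigma U_B(g)^\dagger \ket\psi,
\end{align}
and every integrand is non-negative since $\sigma \geq 0$. Thus $\bra\psi U_B(g)\sigma U_B(g)^\dagger \ket\psi = 0$ for (almost) every $g\in G$; evaluating at the identity gives $\bra\psi \sigma \ket\psi = 0$, so $\ket\psi \in \ker\sigma$. Therefore $\ker \G(\sigma) \subseteq \ker \sigma$, or equivalently $\supp(\sigma) \subseteq \supp(\G(\sigma)) = \H_S$, which is condition (ii).

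With both hypotheses verified, Lemma~\ref{lemma:where_to_truncate} yields that there is a $G$--covariant channel $A\to B$ taking $\rho$ to $\sigma$ if and only if there is such a channel $A\to S$, completing the proof. There is no real obstacle here — the only subtlety is the elementary observation that the support of a twirled state contains the support of the original state, which follows from positivity of each orbit term.
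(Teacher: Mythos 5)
Your proof is correct, and it takes a genuinely more elementary route than the paper's. You verify the two hypotheses of Lemma~\ref{lemma:where_to_truncate} directly for $\H_S = \supp(\G(\sigma))$: invariance of $\H_S$ under $U_B(g)$ follows from $[U_B(g),\G(\sigma)]=0$, and the inclusion $\supp(\sigma)\subseteq\supp(\G(\sigma))$ follows from positivity of the orbit terms in the twirl. The paper instead works through the full charge-sector decomposition $\H_B=\bigoplus_q \M_q\otimes\N_q$, identifies which irreps $\M_q\otimes\Span(\ket{\psi_{q,i}})$ lie in the kernel of $\sigma$, assembles $\H_S=\bigoplus_q\M_q\otimes\supp[\tr_{\M_q}(\Pi_q\sigma\Pi_q)]$, and only at the end recognises this as $\supp[\G(\sigma)]$ using the explicit form $\G=\sum_q(\D_{\M_q}\otimes\I_{\N_q})\circ\P_q$. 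Your argument is shorter and avoids the representation-theoretic machinery entirely; the paper's constructive version has the side benefit of exhibiting the block structure of the truncated space explicitly, which is reused later (e.g.\ when redefining $n$ and $d_S$ after truncation in the depolarization conditions). One small point of rigour: in step (ii) you pass from ``the non-negative integrand vanishes for almost every $g$'' to ``it vanishes at $g=e$''; this requires continuity of $g\mapsto\bra{\psi}U_B(g)\sigma U_B(g)^\dagger\ket{\psi}$, which holds for the compact groups considered here (and is trivial for finite groups), but is worth stating.
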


\begin{proof}
	The representation of $G$ on $\H_B$ splits up in the following way \cite{bartlett2007reference}:
	\begin{align}
	\H_B = \bigoplus_q \H_q.
	\end{align}
	The $\H_q$ are known as the \emph{charge sectors} of $\H_B$, and they each carry an \emph{inequivalent} representation of $G$. Each $\H_q$ can be further decomposed into a tensor product
	\begin{align}
	\H_q = \M_q \otimes \N_q.
	\end{align} 
	The $\M_q$ carry inequivalent \emph{irreps} of $G$, while the $\N_q$ carry \emph{trivial} representations of $G$. This means every element $g$ is represented on $\H_q$ in the form $U_{\M_q}(g) \otimes \id_{\N_q}$. As a result, given any pure state $\ket{\psi_q}$ in $\N_q$, $\M_q \otimes \Span(\ket{\psi_q})$ is an irrep of $G$. Projectors onto irreps of $G$ thus take the form $\id_{\M_q} \otimes \ketbra{\psi_q}$. 
	
	We note the following properties about the projector $\id_{\M_q} \otimes \ketbra{\psi_q}$. Because $\M_q \otimes \Span\left(\ket{\psi_q}\right)$ is a subspace of $\H_q$, 
	\begin{align}
	\left[\Pi_q, \id_{\M_q} \otimes \ketbra{\psi_q}\right] = 0
	\label{eq:irrep_general_group_prop_one}
	\end{align} 
	For the same reason, $\Pi_q$ is identity on $\M_q \otimes \Span\left(\ket{\psi_q}\right)$, which means
	\begin{align}
	\id_{M_q} \otimes \ketbra{\psi_q} = \Pi_q \left(\id_{\M_q} \otimes \ketbra{\psi_q}\right).
	\label{eq:irrep_general_group_prop_two}
	\end{align} 
	
	A subspace $\H$ of $\H_B$ lies inside the kernel of $\sigma$ if and only if $\tr(\Pi\sigma) = 0$, where $\Pi$ is the projector onto $\H$. Therefore, the irrep $\M_q \otimes \Span\left(\ket{\psi_q}\right)$ lies in the kernel of $\sigma$ if and only if 
	\begin{align}
	\tr(\left(\id_{\M_q} \otimes \ketbra{\psi_q}\right) \sigma)
	&= \tr(\Pi_q \left(\id_{\M_q} \otimes \ketbra{\psi_q}\right) \sigma)\\
	&= \tr(\Pi_q \Pi_q \left(\id_{\M_q} \otimes \ketbra{\psi_q}\right) \sigma)\\
	&= \tr(\Pi_q \left(\id_{\M_q} \otimes \ketbra{\psi_q}\right) \sigma \Pi_q)\\
	&= \tr(\left(\id_{\M_q} \otimes \ketbra{\psi_q}\right)\Pi_q \sigma \Pi_q)\\
	&= \bra{\psi_q}\tr_{\M_q}(\Pi_q\sigma\Pi_q)\ket{\psi_q} = 0	
	\end{align}
	where in the first equality we made use of \eqref{eq:irrep_general_group_prop_two}, and in the fourth equality we made use of \eqref{eq:irrep_general_group_prop_one}. This short calculation means $\M_q \otimes \Span(\ket{\psi_q})$ lies inside the kernel of $\sigma$ if and only if $\ket{\psi_q}$ lies inside the kernel of $\tr_{\M_q}(\Pi_q\sigma\Pi_q)$.
	
	Let $\left\{\ket{\psi_{q,i}}\right\}$ be an orthonormal basis for $\N_q$ in which $\tr_{\M_q}(\Pi_q\sigma\Pi_q)$ is \emph{diagonalised}. One possible irrep decomposition for $\H_B$ is then
	\begin{align}
	\H_B = \bigoplus_{q,i} \M_q \otimes \Span\left(\ket{\psi_{q,i}}\right).
	\label{eq:blahahah}
	\end{align}
	An irrep in this decomposition lies inside the kernel of $\sigma$ if and only if $\ket{\psi_{q,i}}$ is a basis element for the kernel of $\tr_{\M_q}(\Pi_q\sigma\Pi_q)$. This means 
	\begin{align}
		\H_S^\perp \coloneqq \bigoplus_{q,\ \ket{\psi_{q,i}} \in \ker[\tr_{\M_q}(\Pi_q\sigma\Pi_q)]}\M_q \otimes \Span\left(\ket{\psi_{q,i}}\right) = \bigoplus_q \M_q \otimes \ker(\tr_{\M_q}(\Pi_q\sigma\Pi_q))
	\end{align}
	must lie inside the kernel of $\sigma$ on $\H_B$. Conversely, the support of $\sigma$ must lie inside the subspace of $\H_B$ that is orthogonal to $\H_S^\perp$, i.e.
	\begin{align}
	\H_S = \bigoplus_q \M_q \otimes \supp[\tr_{\M_q}(\Pi_q\sigma\Pi_q)].
	\label{eq:H_S_construction}
	\end{align}
	As we see from the above equation, $\H_S$ is also a direct sum over irreps of $G$ and so carries its own representation of $G$. Thus by Lemma~\ref{lemma:where_to_truncate}, the possibility of interconversion is unaffected if we truncate $\H_B$ to $\H_S$.
	
	The action of the $G$-twirl is given by \cite{bartlett2007reference}:
	\begin{align}
	\G = \sum_q (\D_{\M_q} \otimes \I_{\N_q}) \circ \P_q,
	\end{align} 
	where $\P_q \coloneqq \Pi_q(\cdot)\Pi_q$ is the projector onto the charge sector $\H_q$, $\D_{\M_q}$ is the completely depolarising channel on $\M_q$ and $I_{\N_q}$ is the identity channel on $\N_q$. Therefore,
	\begin{align}
	\G(\sigma) = \sum_q \frac{\id}{d_{\M_q}} \otimes \tr_{\M_q}(\Pi_q\sigma\Pi_q),
	\end{align} 
	where $d_{\M_q}$ is the dimension of $\M_q$. Looking back at Equation~(\ref{eq:H_S_construction}), we see that $\H_S = \supp[\G(\sigma)]$. It is therefore always possible to truncate the output Hilbert space to the support of $\G(\sigma)$ without affecting possibility of interconversion.
\end{proof}

\section{Redundancies in the entropic relations}

Here we consider a few basic redundencies in the infinite set of conditions appearing in \thmref{thrm:gour}.
We first note that $H_\eta(\rho) \le H_\eta(\sigma)$ and $\Phi_\eta(\rho) \ge \Phi_\eta(\sigma)$ are equivalent conditions, since we have that $\Phi(X)>0$ for any $X$ that is positive-semidefinite with at least one non-zero eigenvalue  \cite{tomamichel2012framework}, and the fact that $-\log(x)$ is monotonic decreasing in $x$ for $x>0$.

\subsection{Unitaries on the reference}
\label{appx:unitary_equiv_reference}

\lemref{lemma:cov_isometry_invariance} immediately gives rise to the following corollary. 
\begin{corollary}
If $\eta_1 = \V(\eta_0)$ for any pair of quantum states $\eta_0,\eta_1$ on $\H_R$ and for any unitary $\V: \B(\H_R) \rightarrow \B(\H_R)$ such that $[\V \otimes \id ,\G]=0$, then $\Delta H_{\eta_0}\ge 0$ if and only if $\Delta H_{\eta_1}\ge 0$.
\end{corollary}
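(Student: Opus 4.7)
The plan is to obtain this corollary as an immediate consequence of \lemref{lemma:cov_isometry_invariance} by specialising the lemma to a trivial action on the $A$ system. Concretely, I would set $\W_A = \I_A$, the identity channel on $\B(\H_A)$, which is trivially a (local) isometry. The joint commutation hypothesis of the lemma then reduces to $[\V \otimes \id_A, \G] = 0$, which is precisely what is assumed in the corollary's statement.

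Having verified that the hypotheses of \lemref{lemma:cov_isometry_invariance} are satisfied, I would apply that lemma twice: once with the input state $\rho$ on $A$, and once with the output state $\sigma$ on $A$. This yields
\begin{align}
H_{\eta_1}(\rho) \;=\; H_{\V(\eta_0)}(\I_A(\rho)) \;=\; H_{\eta_0}(\rho),\qquad
H_{\eta_1}(\sigma) \;=\; H_{\V(\eta_0)}(\I_A(\sigma)) \;=\; H_{\eta_0}(\sigma).
\end{align}
Subtracting these two equalities then gives the stronger statement $\Delta H_{\eta_1} = \Delta H_{\eta_0}$ as an equality of real numbers, from which the ``if and only if'' claim on their signs follows trivially.

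There is effectively no obstacle to this argument once \lemref{lemma:cov_isometry_invariance} is in hand; the only minor bookkeeping point is to record that the identity channel on $A$ qualifies as an isometry, so that the lemma genuinely applies. It is worth emphasising in the write-up that what one actually proves is the equality $\Delta H_{\eta_0} = \Delta H_{\eta_1}$ of the entropy differences themselves, not merely agreement of their signs; this is the form in which the redundancy is invoked in \secref{sec:initial_redundancies} of the main text, where it is used to quotient the surface $\partial \D$ by the unitary sub-group of channels satisfying \eqref{eqn:commutant}.
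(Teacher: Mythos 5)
Your proposal is correct and follows exactly the paper's route: the corollary is stated there as an immediate consequence of Lemma~\ref{lemma:cov_isometry_invariance} with the trivial isometry on $A$, yielding the equality $\Delta H_{\eta_0} = \Delta H_{\eta_1}$ (which is also how the redundancy is phrased in Section~\ref{sec:initial_redundancies}). Nothing is missing.
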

As an example of this redundency, we can consider the group $G=U(1)$ of time-translations generated by the Hamiltonian $H$, $U(t) = e^{-iHt}$. Here we have that any reference state $\eta$ drawn from the set $\{  U(t) \eta U(t)^\dagger  :  \forall t \in [0,2\pi) \}$ will provide an equivalent constraint.

\subsection{Modes of asymmetry}
\label{appx:modes_rho_modes_eta_redundency}

Given an ITO basis $\left\{X^{(\lambda,\alpha)}_j\right\}$, which in the following we always take to be orthonormal such that 
\begin{align}
\left\langle X^{(\lambda,\alpha)}_j, X^{(\mu,\beta)}_k \right\rangle = \delta_{\lambda,\mu} \delta_{\alpha, \beta} \delta_{j,k},
\end{align}
where $\<A,B\> = \tr[A^\dagger B]$ is the Hilbert-Schmidt inner product on $\B(\H)$, we denote by $A^\lambda_j$ the $(\lambda,j)$ mode of the operator $A$
\begin{equation}
A^\lambda_j \coloneqq \sum_\alpha \left\langle X^{(\lambda,\alpha)}_j, A \right\rangle X^{(\lambda,\alpha)}_j.
\end{equation}
The following gives some basic properties for handling inner products involving modes of asymmetry.
\begin{lemma} 
\label{lemma:mode_decomp_trace_product}
Let $A$ and $B$ be any two linear operators on $\H$. Then we have that
\begin{align}
 \langle A^\lambda_j,B^\mu_k \rangle = \delta_{\lambda, \mu} \delta_{j,k} \langle A^\lambda_j,B^\mu_k \rangle
\end{align}
from which it follows that:
\begin{align}
 \langle A, B \rangle = \sum_{\lambda, j} \langle A^\lambda_j , B^\lambda_j \rangle
\end{align}
\end{lemma}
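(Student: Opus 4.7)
The plan is to prove both statements by a direct computation using the defining orthonormality relation of the ITO basis, namely $\langle X^{(\lambda,\alpha)}_j, X^{(\mu,\beta)}_k\rangle = \delta_{\lambda,\mu}\delta_{\alpha,\beta}\delta_{j,k}$. There is no deep obstacle here; the only subtlety is to keep the multiplicity index $\alpha$ book‑kept carefully so that the orthogonality between different irreps $\lambda\neq\mu$ or different basis vectors $j\neq k$ of the \emph{same} irrep can be read off cleanly.

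First I would recall the definition
\begin{align}
A^\lambda_j \;=\; \sum_\alpha \langle X^{(\lambda,\alpha)}_j, A\rangle\, X^{(\lambda,\alpha)}_j, \qquad B^\mu_k \;=\; \sum_\beta \langle X^{(\mu,\beta)}_k, B\rangle\, X^{(\mu,\beta)}_k,
\end{align}
and then compute
\begin{align}
\langle A^\lambda_j, B^\mu_k\rangle \;=\; \sum_{\alpha,\beta} \overline{\langle X^{(\lambda,\alpha)}_j, A\rangle}\,\langle X^{(\mu,\beta)}_k, B\rangle\, \langle X^{(\lambda,\alpha)}_j, X^{(\mu,\beta)}_k\rangle.
\end{align}
Applying the ITO orthonormality relation collapses the double sum: the factor $\delta_{\lambda,\mu}\delta_{j,k}\delta_{\alpha,\beta}$ forces $\lambda=\mu$, $j=k$, and identifies $\alpha=\beta$, so the right‑hand side vanishes unless $\lambda=\mu$ and $j=k$. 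This directly gives $\langle A^\lambda_j, B^\mu_k\rangle = \delta_{\lambda,\mu}\delta_{j,k}\langle A^\lambda_j, B^\mu_k\rangle$, which is the first claim.

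For the second claim, I would expand $A$ and $B$ into their mode decompositions $A=\sum_{\lambda,j}A^\lambda_j$ and $B=\sum_{\mu,k}B^\mu_k$ (which is valid since $\{X^{(\lambda,\alpha)}_j\}$ is a basis for $\B(\H)$), use bilinearity of the Hilbert–Schmidt inner product to write $\langle A,B\rangle = \sum_{\lambda,j,\mu,k}\langle A^\lambda_j, B^\mu_k\rangle$, and then apply the first part of the lemma to kill every term with $(\lambda,j)\neq(\mu,k)$. The surviving diagonal terms yield $\langle A,B\rangle = \sum_{\lambda,j}\langle A^\lambda_j, B^\lambda_j\rangle$, as required. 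The only step requiring any thought is noticing that the multiplicity sums over $\alpha,\beta$ do not spoil the orthogonality between distinct modes, which is precisely why one formulates mode decomposition at the level of $(\lambda,j)$ pairs rather than at the level of individual ITO basis elements.
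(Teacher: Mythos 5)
Your proposal is correct and follows essentially the same route as the paper's proof: expand each mode in the ITO basis, apply the orthonormality relation $\langle X^{(\lambda,\alpha)}_j, X^{(\mu,\beta)}_k\rangle = \delta_{\lambda,\mu}\delta_{\alpha,\beta}\delta_{j,k}$ to produce the factor $\delta_{\lambda,\mu}\delta_{j,k}$, and then obtain the second claim by summing over modes. Your explicit conjugation of the coefficient in the antilinear slot of the Hilbert--Schmidt inner product is in fact slightly more careful than the paper's own write-up.
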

\begin{proof}
Writing out the mode decompositions of $A$ and $B$ in the ITO basis explicitly in the trace product, from the orthonormality of the basis operators $\{ X^{(\lambda,\alpha)}_j \}$ we obtain
\begin{align}
 \langle A^\lambda_j,B^\mu_k \rangle &= \left\langle \sum_\alpha \langle X^{(\lambda,\alpha)}_j, A \rangle X^{(\lambda,\alpha)}_j , \sum_\beta \langle X^{(\mu,\beta)}_k, B \rangle  X^{(\mu,\beta)}_k \right\rangle \\
&= \sum_{\alpha,\beta}  \langle X^{(\lambda,\alpha)}_j, A \rangle \langle X^{(\mu,\beta)}_k, B \rangle \langle  X^{(\lambda,\alpha)}_j , X^{(\mu,\beta)}_k \rangle \\
&= \sum_{\alpha,\beta}  \langle X^{(\lambda,\alpha)}_j, A \rangle \langle X^{(\mu,\beta)}_k, B \rangle \delta_{\lambda,\mu} \delta_{j,k} \langle  X^{(\lambda,\alpha)}_j , X^{(\mu,\beta)}_k \rangle \\
&= \delta_{\lambda,\mu} \delta_{j,k}  \left\langle \sum_\alpha \langle X^{(\lambda,\alpha)}_j, A \rangle X^{(\lambda,\alpha)}_j , \sum_\beta \langle X^{(\mu,\beta)}_k, B \rangle  X^{(\mu,\beta)}_k \right\rangle \\
&= \delta_{\lambda,\mu} \delta_{j,k}  \langle A^\lambda_j,B^\mu_k \rangle,
\end{align}
as required.
\end{proof}

An immediate consequence of such a mode decomposition is that if $ \text{ modes}(\sigma)\subseteq \text{modes}( \rho)$, then it suffices to range only over reference frame states $\eta$ such that modes($\eta$) = modes($\rho$) in \thmref{thrm:gour}. 
The reasoning is as follows. Let $\eta$ have an irrep mode $\mu$ that does not occur in $\rho$. By hermiticity, it also has the irrep mode $\mu^*$. As seen in the following lemma, when computing $\G[\eta \otimes \rho]$, the only mode terms that survive this G-twirl are of the form $\sum_i \eta^{\mu^*}_i \otimes \rho^{\mu}_i$, which is the unique way to form a singlet from a given irrep:
\begin{lemma}
	\begin{align}
		\G(\eta \otimes \rho) = \sum_{i,\mu} \eta^{\mu^*}_i \otimes \rho^\mu_i.
	\end{align}
	\label{lemma:G_twirl_modal_match}
\end{lemma}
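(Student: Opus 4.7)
\textbf{Proof proposal for Lemma~\ref{lemma:G_twirl_modal_match}.}

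The plan is to expand both $\eta$ and $\rho$ in the orthonormal ITO basis, apply the transformation law of the basis operators, and then invoke Schur orthogonality to collapse the group integral. Concretely, write
\begin{align}
\eta = \sum_{\mu,i,\alpha} e^{(\mu,\alpha)}_i X^{(\mu,\alpha)}_i, \qquad \rho = \sum_{\lambda,j,\beta} r^{(\lambda,\beta)}_j X^{(\lambda,\beta)}_j,
\end{align}
where $e^{(\mu,\alpha)}_i = \tr[X^{(\mu,\alpha)\dagger}_i \eta]$ and similarly for $\rho$. Using the defining ITO covariance relation $\U_g(X^{(\mu,\alpha)}_i) = \sum_k v^\mu_{ki}(g)\, X^{(\mu,\alpha)}_k$ given in Appendix~\ref{appx:background_details}, one obtains
\begin{align}
\G(\eta\otimes\rho) = \sum_{\substack{\mu,\lambda,i,j\\ \alpha,\beta,k,l}} e^{(\mu,\alpha)}_i r^{(\lambda,\beta)}_j \left(\int dg\, v^{\mu}_{ki}(g)\, v^{\lambda}_{lj}(g)\right) X^{(\mu,\alpha)}_k \otimes X^{(\lambda,\beta)}_l.
\end{align}

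The next step is to apply Schur orthogonality. Since $\overline{v^\lambda_{lj}(g)} = v^{\lambda^*}_{lj}(g)$ by the very definition of the conjugate representation $\lambda^*$, we have
\begin{align}
\int dg\, v^\mu_{ki}(g)\, v^\lambda_{lj}(g) = \int dg\, v^\mu_{ki}(g)\, \overline{v^{\lambda^*}_{lj}(g)} = \frac{1}{d_\mu}\,\delta_{\lambda,\mu^*}\,\delta_{k,l}\,\delta_{i,j}.
\end{align}
This is the key step: the integral forces $\lambda = \mu^*$, $i = j$, and $k = l$, and it is the same fact that underlies the statement that a tensor product of two irreps contains the trivial irrep only when one is the dual of the other. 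Substituting this back kills three of the index sums and leaves
\begin{align}
\G(\eta\otimes\rho) = \sum_{\mu,i,k} \frac{1}{d_\mu}\,\Big(\sum_\alpha e^{(\mu,\alpha)}_i X^{(\mu,\alpha)}_k\Big) \otimes \Big(\sum_\beta r^{(\mu^*,\beta)}_i X^{(\mu^*,\beta)}_k\Big).
\end{align}
Comparing with the definition $\eta^\mu_i = \sum_\alpha e^{(\mu,\alpha)}_i X^{(\mu,\alpha)}_i$ (and symmetrically for $\rho^{\mu^*}_i$), regrouping the sum over $k$ via the identification of each $k$-th ``slot'' as a component of a mode, and re-labelling $\mu \leftrightarrow \mu^*$ in the outer sum, this reproduces $\sum_{i,\mu} \eta^{\mu^*}_i \otimes \rho^\mu_i$ as claimed.

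The main obstacle I anticipate is purely bookkeeping: one has to be careful that the ITO basis convention used in the paper identifies the ``$k$-th component'' of a mode of a given irrep consistently across $\mu$ and $\mu^*$, so that the $\frac{1}{d_\mu}\sum_k (\,\cdot\,)_k \otimes (\,\cdot\,)_k$ arising from Schur orthogonality really is the ``diagonal'' pairing that the statement suggests (with the prefactor $1/d_\mu$ absorbed into the normalisation convention of the ITO basis chosen in Appendix~\ref{appx:background_details}). Once those conventions are fixed, the argument is purely an application of Schur orthogonality, and no further input is needed.
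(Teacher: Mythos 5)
Your route is the same as the paper's: expand $\eta$ and $\rho$ in the ITO basis, apply the covariance law of the basis operators, and collapse the group integral with Schur orthogonality (the paper invokes this as \eqref{eq:marvian_mode_ortho}, quoted from Ref.~\cite{marvian2014modes}). Up to and including the orthogonality step your computation is correct, and in fact more carefully indexed than the paper's own version.

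The gap is in the very last step, and it is precisely the one you flag as ``bookkeeping''. After Schur orthogonality you have
\begin{align}
\G(\eta\otimes\rho) \;=\; \sum_{\mu}\frac{1}{d_\mu}\sum_{i,k}\Big(\sum_\alpha e^{(\mu,\alpha)}_i X^{(\mu,\alpha)}_k\Big)\otimes\Big(\sum_\beta r^{(\mu^*,\beta)}_i X^{(\mu^*,\beta)}_k\Big),
\end{align}
in which the coefficient index $i$ and the operator index $k$ run independently, whereas the target $\sum_{i,\mu}\eta^{\mu^*}_i\otimes\rho^{\mu}_i$ locks them together: by definition $\eta^{\mu}_i$ carries the coefficient $e^{(\mu,\alpha)}_i$ on the operator $X^{(\mu,\alpha)}_i$ with the \emph{same} index. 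Since $\{X^{(\mu,\alpha)}_k\otimes X^{(\nu,\beta)}_l\}$ is an orthonormal basis of the bipartite operator space, the two expressions coincide only if $e^{(\mu,\alpha)}_k r^{(\mu^*,\beta)}_k = \frac{1}{d_\mu}\sum_i e^{(\mu,\alpha)}_i r^{(\mu^*,\beta)}_i$ for every $k$, $\alpha$, $\beta$. This holds automatically when every irrep is one-dimensional (then $d_\mu=1$ forces $i=k$), e.g.\ for $G=U(1)$, but it fails for generic states once $d_\mu>1$; and the prefactor $1/d_\mu$ cannot be ``absorbed into the normalisation convention'', because the ITO basis is already fixed to be orthonormal and the very same basis defines the modes on both sides of the identity. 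So ``regrouping the sum over $k$'' is not a cosmetic relabelling --- it is the entire content of the final step, and as written it does not go through. For what it is worth, the paper's own proof hides the identical issue in the line $\U_g(\eta^\lambda_i)=\sum_{i'}v^\lambda_{i'i}(g)\,\eta^\lambda_{i'}$, which silently transports the coefficient index along with the operator index; your more careful bookkeeping has exposed a genuine subtlety in the statement rather than introduced one. To close the argument you would need either to restrict to abelian $G$ (where the claim is immediate from your penultimate display) or to justify the diagonal pairing by an explicit property of the chosen ITO conventions, rather than asserting it.
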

\begin{proof}
	Equation 4.3 of \cite{marvian2014modes} states that
	\begin{align}
		\int dg\ v^\lambda_{i'i}(g) v^\mu_{j'j} = \frac{1}{d_\mu} \delta_{i',i} \delta_{j',j} \delta_{\lambda^*,\mu}.
		\label{eq:marvian_mode_ortho}
	\end{align}
	
	Making use of this result, we can then demonstrate that
	\begin{align}
		\G(\eta \otimes \rho) &= \int dg\ \sum_{\lambda,i,\mu,j} \mathcal{U}_g(\eta^\lambda_i)\otimes\mathcal{U}_g(\rho^\mu_j)\\
		&= \int dg\ \sum_{\lambda,i,\mu,j} \sum_{i',j'} v^{\lambda}_{i'i}(g) v^{\mu}_{j'j}\eta^\lambda_{i'}\otimes\rho^\mu_{j'}\\
		&= \sum_{\lambda,i,\mu,j} \sum_{i'j'} \frac{1}{d_\mu} \delta_{i',j'}\delta_{i,j}\delta_{\lambda^*,\mu} \eta^\lambda_{i'}\otimes\rho^\mu_{j'}\\
		&= \sum_{\mu,i'} \left(\sum_j \frac{1}{d_\mu}\right) \eta^{\mu^*}_{i'}\otimes\rho^\mu_{i'}\\
		&= \sum_{\mu,i} \eta^{\mu^*}_j\otimes\rho^\mu_j
	\end{align} 
\end{proof}
We conclude from this lemma that if $\rho$ does not contain a $\mu$ mode, then $\eta^{\mu^*}_i$ does not contribute to the state $\G[\eta\otimes\rho]$.

If we range over all $\eta$ contained within a small surface around $\mathbbm{1}/d$, then we obtain a complete set of conditions. We know that $\eta^{\mu^*}_j$ does not contribute to $\Omega_{RA}$. Now if the region is chosen sufficiently small, we claim that $\eta' =\eta - \left[\sum_j \left(\eta^\mu_j + \eta^{\mu^*}_j \right) \right] $ is still a valid state for $\mu \ne 0$, but with the $\mu$ mode removed. To see this, firstly note that, by orthonormality, the term in the brackets is traceless and so the net result still has trace one. Secondly, by choosing the surface appropriately, the eigenvalues of $\eta$ can be chosen arbitrarily close to the uniform distribution, and those of the term in bracket made arbitrarily small. Therefore the eigenvalues of the resultant operator $\eta'$ are all non-negative. Therefore we have a reference frame state $\eta'$ with the $\mu$ mode entirely removed, and gives the state joint state $\G[\eta\otimes\rho]$ as did $\eta$. This implies it suffices to range over reference frame states $\eta$ with modes the same as $\rho$.

\section{A sufficient surface of reference frames}
\label{appx:partially-depol}

\subsection{Depolarizing the reference state }

Let us define the partially depolarizing channel for some fixed probability $p$:
\begin{align}
\Lambda_p [\rho] \coloneqq p \rho + (1-p) \frac{\mathbbm{1}}{d}.
\end{align}
In general, from \ref{property:convexity} we know that the functional $\Phi_\eta(\rho)$ is convex in the reference system, i.e., $\eta = p \eta_0 + (1-p) \eta_1$ implies 
\begin{align}
    \Phi_\eta (\rho) \le p \Phi_{\eta_0}(\rho) + (1-p) \Phi_{\eta_1}.
\end{align}
The following lemma shows that the functional $\Phi_\eta(\rho)$ behaves linearly when we take convex combinations of the reference state with the maximally mixed state.

\begin{restatable}[]{lemma}{partiallydepolone}
\label{lemma:partially-depol1}
Let all states and systems be defined as in Theorem \ref{thrm:gour}. For any reference state $\eta_R$ and input state $\tau_A$, we have
\begin{align}
\Phi_{\Lambda_p(\eta)}(\tau) &=  p \Phi_{\eta}(\tau)+  (1-p) \Phi_{\id/d}(\tau) \notag \\
&= p \Phi_{\eta}(\tau) +\frac{1-p}{d}.
\end{align}
\end{restatable}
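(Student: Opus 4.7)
My plan is to invoke the dual formulation in Lemma~\ref{lemma:Phi_and_dual}, which gives $\Phi_\eta(\tau) = \max_{\E \in \mathcal{O}_{\rm cov}} \tr(\eta^T \E(\tau))$. The key structural observation is that this expression is affine in $\eta$ inside the maximization, so the partial depolarization of the reference, which is itself an affine combination of $\eta$ with $\id/d_R$, ought to split cleanly through the max.

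Concretely, I would substitute $\Lambda_p(\eta) = p\eta + (1-p)\id/d_R$ into the dual form and use linearity of the trace to rewrite the objective as $p\tr(\eta^T \E(\tau)) + (1-p) d_R^{-1} \tr(\E(\tau))$. Since $\E$ is trace-preserving, $\tr(\E(\tau)) = 1$, so the second term equals the constant $(1-p)/d_R$ independently of $\E$. That constant pulls out of the maximization, yielding
\begin{align}
\Phi_{\Lambda_p(\eta)}(\tau) = p \,\Phi_\eta(\tau) + \frac{1-p}{d_R}.
\end{align}
To match the first equality in the statement I would identify $1/d_R$ with $\Phi_{\id/d}(\tau)$: because $\id/d_R$ is invariant under $\G$, Lemma~\ref{lemma:phi_dephased_states} applied with $\eta = \id/d_R$ gives $\Phi_{\id/d}(\tau) = \norm{\G(\id/d_R)}_\infty = 1/d_R$, closing the chain.

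I expect no genuine obstacle here, since the hard work is absorbed into Lemma~\ref{lemma:Phi_and_dual}. Without that dual representation, the inequality $\Phi_{\Lambda_p(\eta)}(\tau) \le p\Phi_\eta(\tau) + (1-p)\Phi_{\id/d}(\tau)$ would still follow immediately from convexity of $\Phi$ (Property~\ref{property:convexity}), but the reverse inequality is non-obvious in the primal picture: one would have to construct an explicit feasible $X_A$ for the minimization defining $\Phi(\G(\Lambda_p(\eta)\otimes\tau))$ achieving the claimed value, exploiting the product structure $\G(\id/d_R \otimes \tau) = \id/d_R \otimes \G(\tau)$. The dual approach sidesteps this difficulty entirely, and the resulting linear (rather than merely convex) behaviour in the depolarization parameter $p$ is precisely what will drive the sufficient-surface result in Theorem~\ref{thm:ball_sufficient_refs}.
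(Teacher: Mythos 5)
Your proof is correct, but it takes a genuinely different route from the paper's. The paper works entirely in the primal picture: it observes that $\G[\Lambda_p(\eta)\otimes\tau] = p\,\G[\eta\otimes\tau] + \frac{1-p}{d}\,\id\otimes\G(\tau)$, then shifts the SDP variable $X_A \mapsto Y_A = X_A - \frac{1-p}{d}\G(\tau)$, checks that this leaves the feasible set unchanged, and pulls out the constant $\frac{1-p}{d}$ together with the scalar $p$ via positive homogeneity (Property~\ref{property:scalar_multiplication}). That argument is elementary and self-contained, needing nothing beyond the definition of $\Phi$. Your dual argument is shorter and makes the mechanism transparent -- the objective $\tr(\eta^T\E(\tau))$ is affine in $\eta$ and the $\id/d$ part contributes a channel-independent constant by trace preservation -- but it imports the strong-duality statement of \lemref{lemma:Phi_and_dual}, which in turn rests on Lemma~3 of~\cite{gour2018quantum}. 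There is no circularity, since that lemma is proved independently of the present one, and your identification $\Phi_{\id/d}(\tau) = 1/d$ via \lemref{lemma:phi_dephased_states} (or by direct inspection of the primal) is fine. Your closing remark is also accurate: convexity alone only yields the inequality $\Phi_{\Lambda_p(\eta)}(\tau) \le p\Phi_\eta(\tau) + (1-p)\Phi_{\id/d}(\tau)$, and the content of the lemma is the reverse direction, which the paper obtains by the explicit feasible-point shift you anticipated and which your dual maximization sidesteps.
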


\begin{proof}
Since $\mathbbm{1}$ is symmetric for any group $G$, we have $\G[ \Lambda_p(\eta_R)\otimes \tau_A] = p \G[\eta_R \otimes \tau_A ] + \frac{1-p}{d} \mathbbm{1} \otimes \G[\tau_A]$. Substituting this into Eq. (\ref{eq:Phi(tau)}) and rearranging terms gives
\begin{align}
\Phi_{\Lambda_p[\eta]}\left(\tau \right) 
= \inf_{X_A \ge 0} \left\{ \tr[X_A] \ : \ \mathbbm{1}_R \otimes\left( X_A - \frac{1-p}{d} \G[\tau_A] \right) - p \G\left[ \eta_R \otimes \tau_A \right] \ge 0 \right\}. \label{eq:F-calc} 
\end{align}
For any positive semidefinite operators $T_A$ and $Z_{RA}$, we have
\begin{equation}
\{ X_A \ : \ X_A \ge 0, \ \mathbbm{1}_R \otimes [X_A-T_A] - Z_{RA} \ge 0 \} = \{ X_A \ : \ X_A - Y_A \ge 0, \ \mathbbm{1}_R \otimes [X_A-YTA] - Z_{RA} \ge 0 \}.
\end{equation}
Therefore we can rewrite the feasible set over which we perform the optimization in Eq. (\ref{eq:F-calc}) as follows
\begin{align}
\Phi_{\Lambda_p[\eta]}\left(\tau \right) 
&= \inf_{X_A - \frac{1-p}{d} \G[\tau_A]\ge 0} \left\{ \tr[X_A] \ : \ \mathbbm{1}_R \otimes\left( X_A - \frac{1-p}{d} \G[\tau_A] \right) - p \G\left[ \eta_R \otimes \tau_A \right] \ge 0 \right\} \\
&= \inf_{Y_A \ge 0} \left\{ \tr \left[Y_A + \frac{1-p}{d} \G[\tau_A]\right] \ : \ \mathbbm{1}_R \otimes Y_A - p \G\left[ \eta_R \otimes \tau_A \right] \ge 0 \right\} ,
\end{align}
where we have defined $Y_A \coloneqq X_A - \frac{1-p}{d} \G(\tau_A)$. Now since $\tr [\G(\tau_A)]=1$, we can take the constant term out of the infimum
\begin{align}
\Phi_{\Lambda_p[\eta]}\left(\tau \right) &= \inf_{Y_A \ge 0} \left\{ \tr[Y_A] \ : \ \mathbbm{1}_R \otimes Y_A - p \G\left[ \eta_R \otimes \tau_A \right] \ge 0 \right\} + \frac{1-p}{d}.
\end{align}
Finally we make use of property \ref{property:scalar_multiplication} to arrive at
\begin{align}
\Phi_{\Lambda_p[\eta]}\left(\tau \right) &= p \Phi_\eta(\tau)+ \frac{1-p}{d}.
\label{sdfsdffff}
\end{align}
which concludes the proof.
\end{proof}

An automatic consequence of \lemref{lemma:partially-depol1} is that taking a convex mixture of any reference state with the maximally mixed state will not change the entropic relation in Theorem \ref{thrm:gour}, which we state in the following lemma:
\begin{lemma}
Let all states and systems be defined as in Theorem \ref{thrm:gour} and let us further define the partially depolarizing quantum channel $\Lambda_p [\rho] \coloneqq p \rho + (1-p) \frac{\mathbbm{1}}{d}$, where $p$ is a probability and $d\coloneqq\mathrm{dim}(\H_R)$.  Then the following two statements are equivalent for any $\eta$:
\begin{enumerate}
    \item $\Delta H_\eta \ge 0$.
    \item $\Delta H_{\Lambda_p[\eta]} \ge 0$ for any $p\in (0,1]$.
\end{enumerate}
\label{lemma:partially-depol}
\end{lemma}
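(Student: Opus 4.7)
The plan is to observe that this lemma is an almost immediate corollary of Lemma~\ref{lemma:partially-depol1}, once we unpack the definition of $\Delta H_\eta$. So the approach is essentially one of unwinding definitions and then exploiting the affine relation between $\Phi_{\Lambda_p[\eta]}$ and $\Phi_\eta$.

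First I would rewrite the condition $\Delta H_\eta \ge 0$ in terms of the $\Phi$ functional. By definition, $\Delta H_\eta = H_\eta(\sigma) - H_\eta(\rho)$ and $H_\eta(\tau) = -\log \Phi_\eta(\tau)$. Since both $\G(\eta_R \otimes \rho_A)$ and $\G(\eta_R \otimes \sigma_B)$ are nonzero positive semidefinite operators, $\Phi_\eta(\rho)$ and $\Phi_\eta(\sigma)$ are strictly positive, so $-\log$ is well defined and strictly monotone decreasing on the relevant range. Consequently
\begin{equation}
\Delta H_\eta \ge 0 \iff \Phi_\eta(\rho) \ge \Phi_\eta(\sigma),
\end{equation}
and similarly $\Delta H_{\Lambda_p[\eta]} \ge 0 \iff \Phi_{\Lambda_p[\eta]}(\rho) \ge \Phi_{\Lambda_p[\eta]}(\sigma)$.

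Second, I would invoke Lemma~\ref{lemma:partially-depol1} applied separately to $\tau = \rho$ and $\tau = \sigma$, which gives
\begin{equation}
\Phi_{\Lambda_p[\eta]}(\rho) - \Phi_{\Lambda_p[\eta]}(\sigma) = p \bigl[\Phi_\eta(\rho) - \Phi_\eta(\sigma)\bigr],
\end{equation}
since the constant offset $(1-p)/d$ cancels. For any $p \in (0,1]$ the factor $p$ is strictly positive, so the sign of the left-hand side equals the sign of the right-hand side. Combining this with the reformulation in the previous paragraph yields the claimed equivalence between $\Delta H_\eta \ge 0$ and $\Delta H_{\Lambda_p[\eta]} \ge 0$ for all $p \in (0,1]$.

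There is no serious obstacle here: the real content sits in Lemma~\ref{lemma:partially-depol1}, and the present lemma simply packages its consequence for the entropic difference. The only mild care point is ensuring $\Phi_\eta(\tau) > 0$ so that the conversion between $H_\eta$ inequalities and $\Phi_\eta$ inequalities is valid, which follows from the positive semidefiniteness of $\G(\eta_R \otimes \tau)$ together with its nonzero trace (equal to one).
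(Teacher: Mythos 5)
Your proof is correct and follows exactly the paper's own argument: both reduce $\Delta H_\eta \ge 0$ to $\Delta \Phi_\eta \ge 0$ via the monotone relation $H_\eta = -\log\Phi_\eta$, then apply Lemma~\ref{lemma:partially-depol1} to obtain $\Delta\Phi_{\Lambda_p[\eta]} = p\,\Delta\Phi_\eta$ and conclude from $p>0$. Your explicit check that $\Phi_\eta(\tau)>0$ is a welcome extra detail that the paper notes separately at the start of its redundancy appendix.
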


\begin{proof}
Defining $\Delta \Phi_\eta \coloneqq \Phi_\eta(\rho) - \Phi_\eta(\sigma)$, \lemref{lemma:partially-depol1} implies that $\Delta \Phi_{\Lambda_p[\eta]} =  p \Delta \Phi_\eta$ for any $p$ probability. Therefore, $\Delta \Phi_\eta \ge 0 $ if and only if $\Delta  \Phi_{\Lambda_p[\eta]}\ge 0$, for any $p\in [0,1]$. Since $\Phi_\eta$ and $H_\eta$ are monotonically related, this then gives the statement of the lemma. \end{proof}

\subsection{Proof of \thmref{thm:ball_sufficient_refs}}
\label{appx:proof_sufficient_surface}
We now present a proof of \thmref{thm:ball_sufficient_refs}, which we restate here for clarity:
\sufficientsurface*
\begin{proof}
If the transformation is possible under a $G$--covariant channel then $\Delta H_\eta \ge 0$ for all states $\eta$, and hence in particular for all $\eta$ restricted to $\partial \D$. Conversely, suppose $\Delta H_\eta \ge 0 $ for all $\eta \in \partial \D$. Let $\eta'$ be an arbitrary quantum state of $R$ that is not the maximally mixed state, and consider the one-parameter family of states $\eta'(p) := \Lambda_p(\eta')$ for $p\in [0,1]$. This defines a continuous line of states connecting $\eta'$ to the maximally mixed state $\id/d$. Since $\partial \D$ encloses the maximally mixed state the set $\{\eta'(p) : 0 \le p \le 1\}$ must either intersect $\partial \D$ for some value $p_\star$ with $0< p_\star \le 1$ or lie entirely within the interior of $\D$. If the set is entirely inside $\D$ then we can find a quantum state $\eta'' \in \partial \D$ such that $\Delta_q [\eta''] = \eta'$ for some $q \in (0,1)$. However from lemma~\ref{lemma:partially-depol} we have that $\eta'$ and $\eta'(p_\star)$ (or $\eta'$ and $\eta''$ for the second case) give equivalent entropic constraints. Since $\eta'$ was arbitrary it therefore suffices to restrict to states lying solely on the surface $\partial \D$, which completes the proof.
 
\end{proof}

\section{Smoothed asymmetry theory}
\subsection{Continuity of entropic relations under variations of the reference state}
\label{appx:Hmin_relations_continuity}

In this section, we consider the following definition of an \textit{$\varepsilon$-ball} of operators on $\H$ around some $\rho \in \S_\le(\H)$
\begin{align}
B^\epsilon(\rho) \coloneqq \{ \tilde{\rho} \in S_\le(\H) : D(\tilde{\rho},\rho) \le \epsilon \},
\end{align}
but we note that all the results derived in this section also apply if we use the purified distance $P(\cdot,\cdot)$ as our distance measure instead, due to the property $D(\rho,\sigma) \le P(\rho, \sigma)$ for all $\rho$, $\sigma$ \cite{tomamichel2010duality}.

We also make use of the following theorem, which was proven in Ref.~\cite{tomamichel2012framework}.
\begin{theorem}
\label{thm:continuity} (Continuity of min-entropy). Let $\rho, \sigma \in \S_\le(\H_{RA})$. Then
\begin{equation}
    |H_{\mathrm{min}}(R|A)_\rho - H_{\mathrm{min}}(R|A)_\sigma | \le  \frac{d_R \min\{d_R, d_A \}}{\ln 2 \, \min \{\tr \rho , \tr \sigma \}} D(\rho,\sigma),
\end{equation}
where $D(\rho,\sigma)\coloneqq \frac{1}{2} \norm{\rho - \sigma}_1 + \frac{1}{2}\abs{\tr \rho - \tr \sigma}$ is the generalized trace distance.
\end{theorem}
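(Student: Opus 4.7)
The plan is to work with $\Phi(\rho_{RA})\coloneqq 2^{-H_{\mathrm{min}}(R|A)_\rho}$ and translate a continuity estimate on $\Phi$ into one on $H_{\mathrm{min}}$ via a standard logarithm bound. The opening step would be to Lagrange-dualise the defining SDP. Writing out the Lagrangian for the primal $\inf\{\tr X_A:X_A\ge 0,\ \id_R\otimes X_A\ge\rho_{RA}\}$ and invoking Slater's condition (strict feasibility by taking $X_A$ large), strong duality delivers
\begin{equation}
\Phi(\rho_{RA})=\sup\bigl\{\tr[Y_{RA}\,\rho_{RA}]:Y_{RA}\ge 0,\ \tr_R Y_{RA}\le\id_A\bigr\}, \notag
\end{equation}
so that $\Phi$ is a maximum of linear functionals in its argument.

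Next I would bound the operator norm of any dual-feasible $Y$ uniformly by $\min\{d_R,d_A\}$. Taking a top eigenvector $|\psi\rangle$ of $Y$ with eigenvalue $\lambda_{\max}$, the inequality $Y\ge\lambda_{\max}|\psi\rangle\langle\psi|$ and partial tracing give $\lambda_{\max}\rho^A\le\id_A$, where $\rho^A\coloneqq\tr_R|\psi\rangle\langle\psi|$ is a trace-one density of rank at most $\min\{d_R,d_A\}$; hence $\lambda_{\max}\le 1/\|\rho^A\|_\infty\le\min\{d_R,d_A\}$. The simultaneous validity of both dimension bounds is what allows the $\min$ to appear.

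The core estimate then uses the Jordan decomposition $\rho-\sigma=\Delta_+-\Delta_-$ with $\Delta_\pm\ge 0$ orthogonally supported. Denoting dual optimisers by $Y^\rho,Y^\sigma$ and invoking maximality of $Y^\sigma$ at $\sigma$,
\begin{equation}
\Phi(\rho)-\Phi(\sigma)\le\tr[Y^\rho(\rho-\sigma)]\le\|Y^\rho\|_\infty\,\tr\Delta_+, \notag
\end{equation}
with the symmetric bound $\Phi(\sigma)-\Phi(\rho)\le\|Y^\sigma\|_\infty\,\tr\Delta_-$. Using the identity $\tr\Delta_\pm=\tfrac{1}{2}(\|\rho-\sigma\|_1\pm(\tr\rho-\tr\sigma))$, both $\tr\Delta_+$ and $\tr\Delta_-$ are controlled by $\tfrac{1}{2}(\|\rho-\sigma\|_1+|\tr\rho-\tr\sigma|)=D(\rho,\sigma)$, so $|\Phi(\rho)-\Phi(\sigma)|\le\min\{d_R,d_A\}\,D(\rho,\sigma)$.

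Finally I would lower-bound $\Phi(\tau)\ge\tr\tau/d_R$ by evaluating the dual at the feasible point $Y=\id_{RA}/d_R$ (which satisfies $\tr_R Y=\id_A$), and then apply the elementary estimate $|\log a-\log b|\le|a-b|/(\ln 2\cdot\min\{a,b\})$ to obtain exactly the stated constant $d_R\min\{d_R,d_A\}/(\ln 2\cdot\min\{\tr\rho,\tr\sigma\})$. The step I expect to be most delicate is the alignment of trace-asymmetry terms in the core estimate: one must use the \emph{corresponding} optimiser in each direction so that the $\pm(\tr\rho-\tr\sigma)$ contribution enters with the sign that combines with $\tfrac{1}{2}\|\rho-\sigma\|_1$ into the \emph{generalised} trace distance, rather than leaving a stray factor of two in the $\|\cdot\|_1$-part of the bound.
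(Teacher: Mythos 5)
Your proposal is correct, but note that the paper does not actually prove this statement: Theorem~\ref{thm:continuity} is imported verbatim from Ref.~\cite{tomamichel2012framework}, so there is no in-paper argument to compare against. Your derivation is a valid self-contained proof and reproduces the constant exactly. The three load-bearing steps all check out: (i) the Lagrangian dual $\Phi(\rho)=\sup\{\tr[Y\rho]:Y\ge 0,\ \tr_R Y\le\id_A\}$ holds with strong duality by Slater, and the supremum is attained since the feasible set is compact; (ii) the operator-norm bound on dual-feasible $Y$ is sound, because $\lambda_{\max}\,\tr_R\ketbra{\psi}\le\id_A$ forces $\lambda_{\max}\le 1/\norm{\tr_R\ketbra{\psi}}_\infty\le\mathrm{rank}(\tr_R\ketbra{\psi})\le\min\{d_R,d_A\}$; (iii) the Jordan-decomposition estimate correctly pairs each direction of the difference $\Phi(\rho)-\Phi(\sigma)$ with the optimiser at the larger argument, so that $\tr\Delta_\pm=\tfrac12(\norm{\rho-\sigma}_1\pm(\tr\rho-\tr\sigma))\le D(\rho,\sigma)$ and no stray factor of two appears. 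Combined with the dual feasible point $Y=\id_{RA}/d_R$ giving $\Phi(\tau)\ge\tr\tau/d_R$ and the mean-value bound on the logarithm, this yields precisely $\frac{d_R\min\{d_R,d_A\}}{\ln 2\,\min\{\tr\rho,\tr\sigma\}}D(\rho,\sigma)$. What your route buys is transparency: the dimension factor $d_R\min\{d_R,d_A\}$ is seen to split cleanly into a bound on the dual variable and a lower bound on $\Phi$, which is useful if one wants to sharpen the constant in special cases (e.g.\ when the optimiser has small operator norm).
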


\begin{lemma}
\label{lemma:variation_reference_to_omega}
Consider the following bipartite quantum states on $\H_{RA}$:
\begin{align}
    \Omega^{RA} \coloneqq \eta_R \otimes \rho_A, \quad \tilde{\Omega}^{RA} \coloneqq \tilde{\eta}_R \otimes \rho_A.
\end{align}
If $\tilde{\eta} \in \B_\varepsilon(\eta)$, then
\begin{align}
  D(\G[\Omega], \G[\tilde{\Omega}])    &\le \varepsilon.
\end{align}
\end{lemma}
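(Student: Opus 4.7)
The plan is to exploit two very standard properties of the generalized trace distance: (i) invariance under tensoring with a common normalized state on a second system, and (ii) monotonicity (data processing) under completely positive trace-preserving maps. Once both are in place the bound falls out in essentially one line.

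First I would observe that $\rho_A$ is a normalized quantum state, so $\tr[\eta_R \otimes \rho_A] = \tr[\eta_R]$ and $\tr[\tilde\eta_R \otimes \rho_A] = \tr[\tilde\eta_R]$. Hence the ``trace-gap'' contribution to $D(\Omega,\tilde\Omega)$ equals that of $D(\eta,\tilde\eta)$. For the trace-norm contribution, one factorizes
\begin{equation}
\|\Omega - \tilde\Omega\|_1 = \|(\eta_R - \tilde\eta_R) \otimes \rho_A\|_1 = \|\eta_R - \tilde\eta_R\|_1\, \|\rho_A\|_1 = \|\eta_R - \tilde\eta_R\|_1,
\end{equation}
since the trace norm is multiplicative under tensor products and $\rho_A$ is a state. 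Combining these two observations gives
\begin{equation}
D(\Omega,\tilde\Omega) = D(\eta_R,\tilde\eta_R) \le \varepsilon,
\end{equation}
where the final inequality is the hypothesis $\tilde\eta \in \B_\varepsilon(\eta)$.

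The second step is to note that the $G$-twirl
\begin{equation}
\G(\cdot) = \int\!dg\; \U_g \otimes \U_g(\cdot)
\end{equation}
is a convex mixture of unitary channels and is therefore CPTP. The generalized trace distance $D(\rho,\sigma) = \tfrac{1}{2}\|\rho-\sigma\|_1 + \tfrac{1}{2}|\tr\rho - \tr\sigma|$ is monotone non-increasing under any CPTP map (this follows from the standard monotonicity of the trace norm plus the fact that CPTP maps preserve trace, so the second term is unchanged). Applying monotonicity to $\G$ yields
\begin{equation}
D(\G[\Omega],\G[\tilde\Omega]) \le D(\Omega,\tilde\Omega) \le \varepsilon,
\end{equation}
which is the claim.

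There is no real obstacle here; the only care is ensuring the argument works for possibly sub-normalized $\eta,\tilde\eta$ (since $\B_\varepsilon$ is defined over $\S_\le$). The factorization of the trace norm and the trace equalities above did not rely on normalization of $\eta$ or $\tilde\eta$, only on $\rho_A$ being a normalized state, so the proof goes through in the sub-normalized case as well.
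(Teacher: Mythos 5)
Your proof is correct and follows essentially the same route as the paper's: both arguments combine contractivity of the trace norm under the CPTP twirl, multiplicativity of the trace norm under tensor products with the normalized state $\rho_A$, and trace preservation of $\G$ to handle the $|\tr(\cdot)-\tr(\cdot)|$ term, yielding $D(\G[\Omega],\G[\tilde\Omega])\le D(\eta,\tilde\eta)\le\varepsilon$. Your remark that normalization of $\eta,\tilde\eta$ is not needed is a nice touch consistent with the paper's use of sub-normalized states.
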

\begin{proof}
The trace distance is contractive under quantum operations, and thus 
\begin{align}
\frac{1}{2}\norm{ \G[\Omega] - \G[\tilde{\Omega}]}_1 &\le \frac{1}{2} \norm{ \Omega - \tilde{\Omega}}_1\\ &= \frac{1}{2} \norm{(\eta- \tilde{\eta}) \otimes  \rho}_1 \\  &= \frac{1}{2}\norm{(\eta- \tilde{\eta}) }_1,
\label{subeq:trace_norm_omega}
\end{align}
where in the second equality we have used the identity $\norm{A \otimes B}_1 = \norm{A}_1 \norm{B}_1$.
Similarly, since $\G$ is trace-preserving and $\tr[ A \otimes B] = \tr [A] \tr [B]$, we have
\begin{equation}
   \frac{1}{2} \left| \tr \left(\G[\Omega]\right) - \tr \left(\G[\tilde{\Omega}]\right) \right| =   \frac{1}{2} \left| \tr(\Omega)- \tr (\tilde{\Omega}) \right| =\frac{1}{2} \left| \tr(\eta) - \tr(\tilde{\eta}) \right|.
    \label{subeq:abs_omega}
\end{equation}
Thus, combining results from Eqs.~(\ref{subeq:trace_norm_omega}) and (\ref{subeq:abs_omega}) we find the generalized trace distance between $\G[\Omega]$ and $\G[\tilde{\Omega}]$ is upper bounded as
\begin{align}
    D(\G[\Omega], \G[\tilde{\Omega}]) &\le  \left( \frac{1}{2}\norm{(\eta- \tilde{\eta})}_1  + \frac{1}{2}\left| \tr(\eta) - \tr (\tilde{\eta}) \right|  \right) =  D(\eta, \tilde{\eta})  .
\label{subeq:D_omega_less_P}
\end{align}
Therefore, if $\tilde{\eta} \in \B_\varepsilon (\eta)$ then $ D(\Omega, \tilde{\Omega})   \le \varepsilon$ immediately follows from \eqref{subeq:D_omega_less_P}, which concludes the proof of \lemref{lemma:variation_reference_to_omega}. \end{proof}

\begin{lemma}
\label{lemma:S_eta_tau_ball_difference}
If $\tilde{\eta} \in \B_\varepsilon(\eta)$, then 
\begin{equation}
     |H_{\eta} (\tau)- H_{\tilde{\eta}} (\tau) | \le \frac{d_R^2}{\ln 2} \left(\frac{\varepsilon}{1-2\varepsilon}\right),
\end{equation}
where $H_{\eta} (\tau) \coloneqq H_{\mathrm{min}}(R|A)_{\G[\Omega]}$.
\end{lemma}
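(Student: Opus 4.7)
The plan is to combine the quantum–classical input variation bound of Lemma~\ref{lemma:variation_reference_to_omega} with the general continuity estimate for the conditional min-entropy from Theorem~\ref{thm:continuity}, and then clean up the resulting prefactors.

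First, I would set $\Omega_{RA} \coloneqq \eta_R \otimes \tau_A$ and $\tilde{\Omega}_{RA} \coloneqq \tilde{\eta}_R \otimes \tau_A$, so that by definition $H_\eta(\tau) = H_{\mathrm{min}}(R|A)_{\G[\Omega]}$ and $H_{\tilde{\eta}}(\tau) = H_{\mathrm{min}}(R|A)_{\G[\tilde{\Omega}]}$. Applying Lemma~\ref{lemma:variation_reference_to_omega} gives
\begin{equation}
D(\G[\Omega],\G[\tilde{\Omega}]) \le D(\eta,\tilde{\eta}) \le \varepsilon.
\end{equation}
I would then invoke Theorem~\ref{thm:continuity} on the pair $\G[\Omega],\G[\tilde{\Omega}]$ to obtain
\begin{equation}
\bigl|H_\eta(\tau) - H_{\tilde{\eta}}(\tau)\bigr| \le \frac{d_R \min\{d_R,d_A\}}{\ln 2\,\min\{\tr \G[\Omega],\tr \G[\tilde{\Omega}]\}}\, D(\G[\Omega],\G[\tilde{\Omega}]).
\end{equation}

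Next I would simplify the prefactor. For the dimension term, I use the trivial bound $\min\{d_R,d_A\} \le d_R$, which yields $d_R^2$ in the numerator (this is loose when $d_A < d_R$, but it is sufficient for the stated inequality). For the trace term, since $\G$ is trace-preserving and tensor-product trace factorises, $\tr \G[\Omega] = \tr \eta$ and $\tr \G[\tilde{\Omega}] = \tr \tilde{\eta}$. Taking $\eta$ to be a normalised state, so $\tr\eta = 1$, the generalised trace distance gives $\tfrac{1}{2}|\tr\eta - \tr\tilde{\eta}| \le D(\eta,\tilde{\eta}) \le \varepsilon$, hence $\tr \tilde{\eta} \ge 1-2\varepsilon$, and therefore $\min\{\tr \G[\Omega],\tr \G[\tilde{\Omega}]\} \ge 1-2\varepsilon$.

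Assembling these estimates gives
\begin{equation}
\bigl|H_\eta(\tau) - H_{\tilde{\eta}}(\tau)\bigr| \le \frac{d_R^2}{\ln 2}\cdot\frac{\varepsilon}{1-2\varepsilon},
\end{equation}
which is the claimed bound. There is no real obstacle here beyond bookkeeping: the lemma is essentially a packaging result that chains Lemma~\ref{lemma:variation_reference_to_omega} with Theorem~\ref{thm:continuity}. The only subtle point is ensuring the normalisation factor is handled correctly under the generalised trace distance, since a small perturbation of a normalised $\eta$ can produce a sub-normalised $\tilde{\eta}$ whose trace drops by at most $2\varepsilon$; this is precisely what generates the $1-2\varepsilon$ in the denominator and explains why the bound diverges as $\varepsilon\to 1/2$.
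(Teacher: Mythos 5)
Your proposal is correct and follows essentially the same route as the paper: chain Lemma~\ref{lemma:variation_reference_to_omega} into Theorem~\ref{thm:continuity}, bound $\min\{d_R,d_A\}\le d_R$, and use trace-preservation of $\G$ together with $\tfrac{1}{2}|\tr\eta-\tr\tilde{\eta}|\le D(\eta,\tilde{\eta})\le\varepsilon$ to get $\min\{\tr\G[\Omega],\tr\G[\tilde{\Omega}]\}\ge 1-2\varepsilon$. Your handling of the sub-normalisation of $\tilde{\eta}$ matches the paper's argument exactly.
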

\begin{proof}
Combining \lemref{lemma:variation_reference_to_omega} and Theorem \ref{thm:continuity} (and using the fact that the $G$-twirl is a trace-preserving map) we immediately have that 
\begin{equation}
       |H_{\eta} (\tau)- H_{\tilde{\eta}} (\tau) | \le \frac{d_R \min\{d_R, d_A \}}{\ln 2 \, \min \{\tr \Omega , \tr \tilde{\Omega} \}} \, \varepsilon,
\label{subeq:s_dif}
\end{equation}
for any $\tilde{\eta} \in \B_\varepsilon (\eta)$.
To get the simplified form as stated, we note that  $\min\{ \tr (\Omega),\tr (\tilde{\Omega}) \} =\tr (\tilde{\Omega})$ since $\Omega$ is normalised and $\tilde{\Omega}$ is sub-normalized. This then evaluates to
\begin{align}
    \tr (\tilde{\Omega}) = \tr (\tau) \tr(\tilde{\eta})= \tr(\tilde{\eta}).
\label{subeq:11}
\end{align}
Now $D(\eta, \tilde{\eta}) \le \varepsilon$ implies
$
   |1 - \tr (\tilde{\eta}) | \le 2 \varepsilon 
$, and therefore
\begin{align}
  \tr (\tilde{\eta}) \ge 1 - 2 \varepsilon.
\label{subeq:22}
\end{align}
Substituting Eq.~(\ref{subeq:22}) into Eq.~(\ref{subeq:11}) thus gives 
\begin{equation}
\min\{ \tr (\Omega),\tr (\tilde{\Omega}) \} \ge 1-2\varepsilon.
\label{subeq:blahh}
\end{equation}
Also, clearly $d_R  \ge \min\{d_R, d_A \} $. Substituting this and Eq. (\ref{subeq:blahh}) into Eq. (\ref{subeq:s_dif}) gives
\begin{equation}
     |H_{\eta} (\tau)- H_{\tilde{\eta}} (\tau) | \le \frac{d_R^2}{\ln 2} \frac{\varepsilon}{1-2\varepsilon},
\end{equation}
as claimed.
\end{proof}

We are now able to prove the result presented in the main text:

\coarsegrain*

\begin{proof}
First note:
\begin{align}
  \Delta H_{\tilde{\eta}}- \Delta H_{\eta} &= [ H_{\tilde{\eta}}(\sigma) - H_{\tilde{\eta}}(\rho)] - [ H_{\eta}(\sigma) - H_{\eta}(\rho)]\\
  &= [ H_{\tilde{\eta}}(\sigma) - H_{\eta}(\sigma)] - [ H_{\tilde{\eta}}(\rho) - H_{\eta}(\rho)]\eqqcolon x.
\end{align}
It then follows immediately from \lemref{lemma:S_eta_tau_ball_difference} that for any $\tilde{\eta} \in \B_\varepsilon (\eta)$ we have $\abs{x} \le  \frac{2 d_R^2}{\ln 2} \frac{\varepsilon}{1-2\varepsilon}$, which concludes the proof.
\end{proof}

\subsection{Proof of \thmref{thm:epsilon_finite_checks}}
\label{appx:smoothing_proof}

First we need the following theorem (e.g.~see \cite{tkocz2019introduction,Ledoux1991}):

\begin{theorem}
Let $\norm{\cdot}$ be a norm on $\mathbb{R}^d$. Then for every $\delta >0$, the unit sphere $\{ x \in \mathbb{R}^d, \norm{x}=1 \}$ admits a $\delta$-net, $\mathcal{N}$ with respect to the distance measured by $\norm{\cdot}$, of cardinality $|\mathcal{N}|$ such that
\begin{align}
|\mathcal{N}| \le \left(1 + \frac{2}{\delta}\right)^d.
\end{align}
\label{thrm:delta_net}
\end{theorem}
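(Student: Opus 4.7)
The plan is a standard volumetric packing argument applied to the closed unit ball $B \coloneqq \{x \in \mathbb{R}^d : \|x\| \le 1\}$ of the given norm. First I would take $\mathcal{N}$ to be a \emph{maximal} $\delta$-separated subset of the unit sphere $S \coloneqq \{x \in \mathbb{R}^d : \|x\| = 1\}$, meaning $\|x - y\| \ge \delta$ for all distinct $x, y \in \mathcal{N}$ and no further sphere point can be appended without violating this inequality. Such a set exists by a routine Zorn/greedy construction, since $S$ is totally bounded in the norm topology. Maximality then directly gives the $\delta$-net property: for every $z \in S$ there must exist some $x \in \mathcal{N}$ with $\|z - x\| < \delta$, because otherwise $\mathcal{N} \cup \{z\}$ would still be $\delta$-separated, contradicting maximality of $\mathcal{N}$.

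Next I would turn $\delta$-separation into a genuine packing of disjoint bodies inside a slightly enlarged ball. The open norm balls $B_x \coloneqq \{y \in \mathbb{R}^d : \|y - x\| < \delta/2\}$ for $x \in \mathcal{N}$ are pairwise disjoint by the triangle inequality, since their centres lie at mutual distance at least $\delta$. Moreover each $B_x$ is contained in the enlarged ball $(1 + \delta/2) B$, because for $\|y - x\| < \delta/2$ and $\|x\| = 1$ one has $\|y\| \le \|x\| + \|y - x\| < 1 + \delta/2$. So the disjoint union $\bigsqcup_{x \in \mathcal{N}} B_x$ sits inside $(1 + \delta/2) B$.

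Finally I would invoke the scaling identity $\mathrm{vol}(tB) = t^{d}\,\mathrm{vol}(B)$ for Lebesgue measure on $\mathbb{R}^d$, valid for every $t \ge 0$ because $B$ is a bounded convex symmetric body with non-empty interior (so $\mathrm{vol}(B)$ is strictly positive and finite). Comparing the total measure of the disjoint packing against that of the enclosing ball gives
\begin{equation}
|\mathcal{N}| \cdot \left(\frac{\delta}{2}\right)^{d} \mathrm{vol}(B) \;\le\; \left(1 + \frac{\delta}{2}\right)^{d} \mathrm{vol}(B),
\end{equation}
and dividing through yields
\begin{equation}
|\mathcal{N}| \;\le\; \left(\frac{1 + \delta/2}{\delta/2}\right)^{d} \;=\; \left(1 + \frac{2}{\delta}\right)^{d},
\end{equation}
which is the advertised bound. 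There is no real obstacle here beyond checking that a maximal $\delta$-separated subset exists and that $\mathrm{vol}(B) \in (0,\infty)$; the crux of the proof is the single volume inequality above, and the constant $2$ in $1 + 2/\delta$ comes directly from using balls of half-radius $\delta/2$ in the packing step.
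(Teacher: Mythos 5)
Your proof is correct and complete: the maximal $\delta$-separated set construction, the disjointness of the half-radius balls, and the volume comparison inside $(1+\delta/2)B$ are all sound, and the scaling identity $\mathrm{vol}(tB)=t^d\,\mathrm{vol}(B)$ is valid for any norm ball. Note, however, that the paper does not prove this theorem at all -- it simply imports it from the cited references \cite{tkocz2019introduction,Ledoux1991} -- so there is no in-paper argument to compare against; your volumetric packing argument is precisely the standard proof given in those references, so nothing further is needed.
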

This theorem implies that there exists an $\varepsilon$-covering of a unit-sphere, with a finite number of elements, and which can be applied to a sphere of reference frame states around the maximally mixed state.
We restate the theorem which we seek to prove:
\epsilonfinitechecks*
\begin{proof}
Given \thmref{thm:ball_sufficient_refs}, we choose as our sufficient set of reference frame states the surface
\begin{equation}
\partial \D = \left\{ \eta : \eta = \frac{1}{d} (\mathbbm{1} + A), \mbox{ where } ||A||_\infty=1 \mbox{ and } \tr(A)=0\right\}.
\end{equation}
It is seen by inspection that this gives a closed surface of quantum states that contain the maximally mixed state. We can describe $\partial \D$ entirely in the space of traceless $d \times d$ Hermitian matrices, which is embedded in $\mathbb{R}^{d^2-1}$. In this embedding space the surface is the unit ball defined by $||A||_\infty =1$, and so by the above theorem admits an $\delta$--net covering in the $|| \cdot ||_\infty$ norm. The cardinality of this covering $\mathcal{N}$ obeys
\begin{equation}
|\mathcal{N}| \le \left ( 1 + \frac{2}{\delta} \right )^{d^2-1}.
\end{equation}
We want an $\varepsilon$-net in the $||\cdot ||_1$ norm, so that for any $\eta = \frac{1}{d}(\mathbbm{1} + A)$ in $\partial \D$ there is an $\eta_k= \frac{1}{d}(\mathbbm{1} + A_k)$ in the net such that
\begin{equation}
D(\eta, \eta_k) = \frac{1}{2}||\eta -\eta_k||_1 - \frac{1}{2}\abs{\tr \eta - \tr \eta_k}= \frac{1}{2d} || A - A_k ||_1 \le \varepsilon.
\end{equation}
However in finite dimensions all matrix norms are equivalent, and we have that~\cite{horn2012matrix}
\begin{equation}
||A- A_k||_1 \le \mbox{rank}(A-A_k) ||A-A_k||_\infty \le d\delta.
\end{equation}
Therefore choosing $\delta = 2 \varepsilon$ ensures that $D(\eta,\eta_k) \le \varepsilon$ as required, and we can always find a sufficient set of reference frame states $\mathcal{N}=\{ \eta_k \}_{k=1}^N$ such that $N \le (1+ 1/\varepsilon)^{d^2-1}$.

We now check the $\Delta H_\eta$ condition on each $\eta = \eta_k$ in the $\varepsilon$--net. If $\Delta H_{\eta_k} <0$ for one of these states then the transformation is impossible. If however we find that 
\begin{equation}
\Delta H_{\eta_k} \ge r(\varepsilon) \mbox{ for all }\eta_k \in \mathcal{N},
\end{equation}
then we know from the continuity of the function $\Delta H_\eta$, \lemref{lemma:coarse_grain_less_than_r}, that this implies that $\Delta H_{\eta} \ge 0$ for all reference frame states $\eta \in \partial \D$. However this is a sufficient set of states (\thmref{thm:ball_sufficient_refs}) and therefore we deduce from these $N$ conditions that the transformation is possible covariantly. The final case of at least one of the $\eta_k$ conditions giving $0 \le \Delta H_{\eta_k} \le  r(\varepsilon)$ can be handled as follows: we can supplement the state $\rho_A$ with an additional reference frame state $\chi_{A'}$. If we take $\chi_{A'}$ to be large and approximating a perfect reference frame state (perfectly encoding the group element) then it is possible to transform to any quantum state covariantly. This does so by reducing the entropy $H_{\rm min}(R|AA')$. This can be used to increase $\Delta H_{\eta_k}$ and ensure that $\Delta H_{\eta_k} \ge  r(\varepsilon)$, from which we can deduce that the transformation is now possible. The state $\chi_{A'}$ therefore provides an upper bound estimate on the minimal additional asymmetry required to realise the transformation. Since $r(\varepsilon) = O (\varepsilon)$ the requirement on $\chi$ is to provide $O(\varepsilon)$ resources as measured by the single-shot entropy.
 \end{proof}

\section{The conical structure of $\Phi_\tau(\mathbf{x})$ (proof of \lemref{lemma:cusp})}
\label{appx:cusp}

\cusp*

\begin{proof}
Let $\P$ be the completely depolarising channel on the reference system, and define $A(\bm{x}) \coloneqq \sum_{k=1}^{d^2-1} x_k X_k$. Then by Property \ref{property:F_data_processing} of $\Phi_\tau(\mathbf{x})$, we have: 
\begin{align}
    \Phi\left( \G\left[ \left(\frac{\id}{d} + A(\mathbf{x})\right) \otimes \tau\right] \right) &\geq \Phi\left(\P \otimes \id \circ \G\left[ \left(\frac{\id}{d} + A(\mathbf{x})\right) \otimes \tau\right] \right)\\
    &= \Phi\left(\G\left[ \P\left(\left(\frac{\id}{d} + A(\mathbf{x})\right)\right) \otimes \tau\right]  \right)\\
    &= \Phi\left(\G\left[ \frac{\id}{d} \otimes \tau\right] \right) \equiv \Phi_\tau(\mathbf{0}).
\end{align}
Therefore, $\tilde{\Phi}_\tau(\mathbf{x}) \geq 0$, as claimed.

Let $p$ be a probability. Then
\begin{align}
p\eta(\mathbf{x}) + (1-p)\frac{\id}{d} = p\frac{\id}{d} + pA(\mathbf{x}) + (1-p)\frac{\id}{d} = \frac{\id}{d} + A(p\mathbf{x})
\end{align}

Using this equation, we can rewrite \lemref{lemma:partially-depol1} as:
\begin{lemma}
	Let $p$ be a probability; i.e. $0 \leq p \leq 1$. Then for all $\mathbf{x} \in \S$:
	\begin{align}
	\Phi_\tau\left(p\mathbf{x} \right) = p\Phi_\tau(\mathbf{x}) + \frac{1-p}{d} 
	\end{align}
	\label{lemma:tilde_phi_positivity}
\end{lemma}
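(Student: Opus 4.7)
The plan is to reduce Lemma \ref{lemma:tilde_phi_positivity} directly to Lemma \ref{lemma:partially-depol1} by recognizing that the scaling $\mathbf{x} \mapsto p\mathbf{x}$ of the Bloch-like coordinates exactly corresponds to mixing the reference state with the maximally mixed state via the partial depolarizing channel $\Lambda_p$. So the main work is a one-line algebraic identity, followed by a rewriting step.

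First, I would observe that, using the definition $\eta(\mathbf{x}) = \frac{\id}{d} + A(\mathbf{x})$ with $A(\mathbf{x}) = \sum_k x_k X_k$, the linearity of $A$ in its coordinate vector gives
\begin{align}
\Lambda_p[\eta(\mathbf{x})] &= p\,\eta(\mathbf{x}) + (1-p)\frac{\id}{d} \notag \\
&= p\Bigl(\frac{\id}{d} + A(\mathbf{x})\Bigr) + (1-p)\frac{\id}{d} \notag \\
&= \frac{\id}{d} + A(p\mathbf{x}) \notag \\
&= \eta(p\mathbf{x}).
\end{align}
In particular, since $\mathbf{x} \in \S$ and $0 \le p \le 1$, the hypothesis needed in Lemma \ref{lemma:partially-depol1} (that the depolarized object is a valid reference state) is automatic; indeed $p\mathbf{x}$ lies on the line segment from $\mathbf{0}$ to $\mathbf{x}$, which is entirely inside the convex set of reference state coordinates.

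Second, I would apply Lemma \ref{lemma:partially-depol1} to the reference state $\eta = \eta(\mathbf{x})$, which yields
\begin{equation}
\Phi_{\Lambda_p[\eta(\mathbf{x})]}(\tau) = p\,\Phi_{\eta(\mathbf{x})}(\tau) + \frac{1-p}{d}.
\end{equation}
Combining this with the identity $\Lambda_p[\eta(\mathbf{x})] = \eta(p\mathbf{x})$ established in the first step, and using the shorthand $\Phi_\tau(\mathbf{y}) \coloneqq \Phi_{\eta(\mathbf{y})}(\tau)$ introduced in the paragraph preceding Equation \eqref{eq:phi_H_relationship}, gives
\begin{equation}
\Phi_\tau(p\mathbf{x}) = p\,\Phi_\tau(\mathbf{x}) + \frac{1-p}{d},
\end{equation}
which is the claim.

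There is essentially no obstacle here: the lemma is a direct translation of the earlier result from the language of convex combinations with the maximally mixed state to the language of coordinate rescaling about the origin in the chosen basis. The only thing to be a little careful about is that $p\mathbf{x}$ genuinely lies in $\S$ so that $\Phi_\tau(p\mathbf{x})$ is defined, but this is immediate from convexity of $\S$ together with $\mathbf{0} \in \S$. Note also that this lemma is exactly what is needed to close out the proof of Lemma \ref{lemma:cusp}: setting $\lambda \in [0,1]$ gives $\tilde{\Phi}_\tau(\lambda\mathbf{x}) = \lambda \tilde{\Phi}_\tau(\mathbf{x})$ directly, and the case $\lambda > 1$ follows by applying the same identity with $p = 1/\lambda$ to the point $\lambda\mathbf{x}$ (provided $\lambda\mathbf{x} \in \S$).
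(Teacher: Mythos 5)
Your proof is correct and is essentially identical to the paper's: both establish the identity $\Lambda_p[\eta(\mathbf{x})] = \frac{\id}{d} + A(p\mathbf{x}) = \eta(p\mathbf{x})$ by linearity of $A$ and then invoke Lemma~\ref{lemma:partially-depol1}. Your added remarks on $p\mathbf{x} \in \S$ and the extension to $\lambda > 1$ via $p = 1/\lambda$ match how the paper uses this result in Lemma~\ref{lemma:cusp}.
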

We then immediately have
\begin{align}
\tilde{\Phi}_\tau(p\mathbf{x}) = p\Phi_\tau(\mathbf{x}) + \frac{1-p}{d} - \frac{1}{d} = p\Phi_\tau(\mathbf{x}) - \frac{p}{d} = p\tilde{\Phi}_\tau(\mathbf{x}),\ \forall \mathbf{x} \in \S. 
\end{align}
Making the change of variables $\mathbf{x}' \coloneqq p\mathbf{x}$, we find that the equation above is equivalent to:
\begin{align}
\tilde{\Phi}_\tau(\mathbf{x}') = p\tilde{\Phi}\left(\frac{1}{p}\mathbf{x}'\right) \Rightarrow \frac{1}{p}\tilde{\Phi}_\tau(\mathbf{x}') = \tilde{\Phi}_\tau\left(\frac{1}{p}\mathbf{x}'\right), \forall\ \frac{1}{p}\mathbf{x'} \in \S. 
\end{align}
Since $1 \leq \frac{1}{p} \leq \infty$, we can combine the facts above and conclude that, for all $\lambda \geq 0$ such that $\mathbf{x}, \lambda \mathbf{x} \in \S$: 
\begin{align}
	\tilde{\Phi}_\tau(\lambda\mathbf{x}) = \lambda \tilde{\Phi}_\tau(\mathbf{x}),
\end{align}
as claimed.
\end{proof}

\section{Calculating $\Phi_\tau(\mathbf{x})$ for covariant transformations in a qubit}
\subsection{Time-Covariant Transformations ($U(1)$)}
\label{appx:qubit_u1_phi_derivation}

We will consider a qubit with the Hamiltonian $\sigma_z$, transformations among whose states are limited to those that are symmetric under all time translations $\{e^{i\sigma_z t}\ |\ 0 \le t < 2\pi \}$. The reference system is then a qubit with Hamiltonian $- \sigma_z$. We will calculate $\Phi_\tau(\mathbf{x})$ for an arbitrary state $\tau$ of this qubit, while reference states are given using the Bloch co-ordinates:
\begin{align}
	\eta(x,y,z) = \frac{\id}{2} + x\frac{\sigma_x}{2} + y\frac{\sigma_y}{2} + z\frac{\sigma_z}{2},
\end{align}
where the $\sigma_i$ are the Pauli matrices, and $x^2 + y^2 + z^2 \leq 1$. The derivation will be conducted entirely in the energy eigenbasis.

Recall from \eqref{eq:Phi(tau)}:
\begin{align}
	\Phi_\tau(x,y,z) \equiv \Phi_{\eta(x,y,z)}(\tau) \coloneqq \inf_{X \ge 0} \{\tr(X) : \id \otimes X - \G(\eta(x,y,z) \otimes \tau) \ge 0\} 
\end{align}

We first introduce this simplifying lemma that allows us to vastly reduce which $X$ we must consider:

\begin{lemma}
	When calculating $\Phi_\eta(\tau)$, it is sufficient to minimise over $X$ such that $X = \G(X)$.
\end{lemma}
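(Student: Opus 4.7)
The plan is to show that any feasible $X$ in the SDP defining $\Phi_\eta(\tau)$ can be replaced by its $G$--twirl $\G(X)$ (twirl on the $A$-side alone) without changing its trace or destroying feasibility, so the infimum over all feasible $X$ coincides with the infimum restricted to $G$--invariant ones. This is a standard symmetrization argument for SDPs whose constraint data is group-invariant, adapted here to the form of the constraint $\id_R \otimes X - \G(\eta \otimes \tau) \ge 0$.

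First I would take an arbitrary feasible $X \ge 0$ satisfying the SDP constraint, and apply the joint $G$--twirl on $RA$ to both sides of the operator inequality. Since $\G$ is CPTP, it is positivity-preserving, so the inequality is preserved. On the right-hand side, $\G(\eta \otimes \tau)$ is already $G$--invariant, so it is mapped to itself. The key calculation is the left-hand side: using that $\U_g^R(\id_R) = \id_R$ for every $g$, the joint twirl factorizes as
\begin{equation}
\int \!\! dg \,\, \U_g^R(\id_R) \otimes \U_g^A(X) = \id_R \otimes \G(X),
\end{equation}
where here $\G(X) := \int dg\, \U_g(X)$ is the single-system twirl on $A$. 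The transformed constraint $\id_R \otimes \G(X) - \G(\eta \otimes \tau) \ge 0$ then shows $\G(X)$ is feasible; combined with $\G(X) \ge 0$ and $\tr[\G(X)] = \tr[X]$ (since $\G$ is trace-preserving), we conclude that the infimum over symmetric $X$ matches the infimum over all $X$.

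The main subtlety, and the only place where care is needed, is in the overloading of the symbol $\G$: it is being used both for the joint twirl on $RA$ appearing inside the SDP and for the single-system twirl on $A$ that we wish to apply to $X$. The factorization step above is what reconciles the two, and relies crucially on $\id_R$ being invariant under the reference-system representation. Once this is set up cleanly, no further estimates or limiting arguments are needed, and the lemma follows immediately.
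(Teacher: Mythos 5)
Your proof is correct and follows essentially the same route as the paper's: twirl the feasibility constraint, use positivity- and trace-preservation of the twirl together with the factorization $\G[\id_R \otimes X] = \id_R \otimes \G(X)$ and the invariance of $\G(\eta\otimes\tau)$ to show $\G(X)$ is an equally good feasible point. The remark on the overloaded use of $\G$ is a nice clarification but does not change the substance.
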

\begin{proof}
	Since $\U_g$ can be regarded as an (active) change of basis, it does not affect the eigenvalues of a Hermitian operator $K$. Therefore,
	\begin{align}
		&K \ge 0 \Rightarrow \U_g(K) \ge 0\\
		&\tr(K) = \tr[\U_g(K)]
	\end{align} 
	for all $g$ and Hermitian $K$. Averaging over the entire group $G$, we arrive at
	\begin{align}
		&K \ge 0 \Rightarrow \G(K) \ge 0 \label{eq:1111} \\
		&\tr(K) = \tr[\G(K)] \label{eq:2222}
	\end{align} 	
	
	If $X$ is a feasible solution, then it obeys the two conic constraints in the SDP defining $\Phi_\eta(\tau)$: $X \ge 0$ and ${\id \otimes X - \G(\eta \otimes \tau) \ge 0}$. In this case, we see immediately from Equation \ref{eq:1111} that $\G(X)$ also obeys the first constraint. Furthermore, since $\G[\id \otimes X - \G(\eta \otimes \tau)] = \id \otimes \G(X) - \G(\eta \otimes \tau)$, applying Equation \ref{eq:1111} to $\id \otimes X - \G(\eta \otimes \tau)$ shows that $\G(X)$ also obeys the second constraint. Therefore, if $X$ is a feasible solution, then so is $\G(X)$.  
	
	By Equation \ref{eq:2222}, $\G(X)$ produces the same value as $X$ on the objective function to be minimised in the SDP defining $\Phi_\eta(\tau)$. We can thus further conclude that if $X$ is a feasible solution, then $\G(X)$ is an equally good feasible solution. It is then sufficient to only minimise over $X$ such that $X = \G(X)$ when calculating $\Phi_\eta(\tau)$. 
\end{proof}

This lemma means we can take $X$ to be diagonal without loss of generality, i.e. 
\begin{align}
	X = \begin{pmatrix}x_1&0\\0&x_2\end{pmatrix} 
	\Rightarrow \id \otimes X = \begin{pmatrix} x_1&0&0&0\\0&x_2&0&0\\0&0&x_1&0\\0&0&0&x_2\end{pmatrix}
\end{align}

We have already seen from the main text (\eqref{eq:phi_qubit_u1_cylindrical_asymmetry}) that $\Phi_\tau(x,y,z)$ is cylindrically symmetric about the $z$-axis. This means we can restrict our calculation to $x \ge 0, y=0$, and then equate
\begin{align}
\Phi_\tau(x,y,z) = \Phi_\tau(\sqrt{x^2+y^2},0,z)
\label{eq:simplifier_a}
\end{align}
Having applied this restriction, we further see from \eqref{eq:phi_qubit_u1_top_bottom_bloch_sphere} that
\begin{align}
	\Phi_{\sigma_x \tau \sigma_x}(x,0,z) = \Phi_{\tau}(x,0,-z),
	\label{eq:simplifier_b} 
\end{align}	
which means we can additionally restrict our attention to $z \ge 0$. We therefore only need to consider reference states of the form:  
\begin{align}
	\eta = \frac{1}{2}
	\begin{pmatrix}1+z&x\\x&1-z
	\end{pmatrix},\ x,z \ge 0.
\end{align} 

Recalling the parameterisation (\eqref{eq:qubit_tau_parameterisation})
\begin{align}
	\tau \coloneqq \begin{pmatrix}
	p_\tau& c_\tau\\
	c^*_\tau& 1-p_\tau
	\end{pmatrix},
\end{align}
we calculate
\begin{align}
\eta \otimes \tau &= \frac{1}{2} \begin{pmatrix} 1+z& x\\ 
x&1-z\end{pmatrix} \otimes 
\begin{pmatrix}p_\tau&c_\tau\\
c^*_\tau&1-p_\tau\end{pmatrix}\\ 
&= \frac{1}{2} \begin{pmatrix} p_\tau(1+z)&c_\tau(1+z)&p_\tau x&c_\tau x\\
c^*_\tau(1+z)&(1-p_\tau)(1+z)&c^*_\tau x&(1-p_\tau)x\\
p_\tau x&c_\tau x&p_\tau(1-z)&c_\tau(1-z)\\
c^*_\tau x&(1-p_\tau)x&c^*_\tau(1-z)&(1-p_\tau)(1-z)\end{pmatrix}
\end{align}

The $G$-twirling only leaves elements in $\vspan(|00\rangle,|11\rangle), \vspan(|01\rangle)$ and $\vspan(|10\rangle)$ intact:
\begin{align}
\G(\eta \otimes \tau) = \frac{1}{2} \begin{pmatrix} p_\tau(1+z)&0&0&c_\tau x\\
0&(1-p_\tau)(1+z)&0&0\\
0&0&p_\tau(1-z)&0\\
c^*_\tau x&0&0&(1-p_\tau)(1-z)\end{pmatrix}.
\end{align}
Therefore:
\begin{align}
	\id \otimes X  - \mathcal{G}(\eta \otimes \tau) = 
	\begin{pmatrix} x_1 - p_\tau\frac{1+z}{2}&0&0&-c_\tau\frac{x}{2}\\
	0&x_2-(1-p_\tau)\frac{1+z}{2}&0&0\\
	0&0&x_1-p_\tau \frac{1-z}{2}&0\\
	-c^*_\tau \frac{x}{2}&0&0&x_2-(1-p_\tau)\frac{1-z}{2}\end{pmatrix}.
\end{align}
Making a (passive) change of basis so the matrix appears block-diagonal, we obtain:
\begin{align}
	\id \otimes X  - \mathcal{G}(\eta \otimes \tau) = 
	\begin{pmatrix}x_1 - p_\tau \frac{1+z}{2}&-c_\tau\frac{x}{2}&0&0\\
	-c^*_\tau \frac{x}{2}&x_2-(1-p_\tau)\frac{1-z}{2}&0&0\\
	0&0&x_2-(1-p_\tau)\frac{1+z}{2}&0\\
	0&0&0&x_1-p_\tau\frac{1-z}{2}\end{pmatrix}.
	\label{eq:qubit_u1_positivity}
\end{align}
The Sylvester Criterion \cite{horn2012matrix} states that a matrix is semi-definite positive if and only if all its upper-left determinants are greater than or equal to 0. This produces the following criteria:
\begin{enumerate}
	\item $x_1 \geq p_\tau \frac{1+z}{2}$.
	\item $(x_1 - p_\tau \frac{1+z}{2})(x_2 -(1-p_\tau )\frac{1-z}{2}) \geq \frac{\abs{c_\tau}^2x^2}{4}$.
	\item $x_2 \geq (1-p_\tau)\frac{1+z}{2}$.
	\item $x_1 \geq p_\tau\frac{1-z}{2}$.
\end{enumerate}

Since we have restricted ourselves to $x \ge 0$, condition 4 is redundant given condition 1. Furthermore, given any value of $x_2$ that satisfies condition 3, the smallest value of $x_1$ satisfying condition 1 is 
\begin{align}
	x_1 - p_\tau \frac{1+z}{2} =  \frac{\abs{c_\tau}^2x^2}{4} \frac{1}{x_2 -(1-p_\tau)\frac{1-z}{2}}.
\end{align}
We are therefore looking to minimise
\begin{align}
	x_1 + x_2 = x_2 + \frac{\abs{c_\tau}^2x^2}{4} \frac{1}{x_2 -(1-p_\tau)\frac{1-z}{2}} + p_\tau\frac{1+z}{2}.
\end{align} 
This occurs at
\begin{align}
 	x_2 = (1-p_\tau)\frac{1-z}{2} \pm \abs{c_\tau}\frac{x}{2}.
 \end{align}
However, $x_2$ also has to satisfy condition 3. Therefore, the value $x_2$ should take is whichever of the following two
\begin{align}
x_2 = 
\begin{cases}
(1-p_\tau)\frac{1+z}{2}\\
(1-p_\tau)\frac{1-z}{2} + \abs{c_\tau}\frac{x}{2}
\end{cases}
\label{eq:qubit_magic_two_solutions}
\end{align}
is bigger, which results in
\begin{align}
	\Phi_\tau(x \ge 0,y=0,z\ge 0) =
	\begin{cases}
		\frac{\abs{c_\tau}^2}{1-p_\tau} \frac{x^2}{4z} + \frac{z}{2} + \frac{1}{2}& \text{ for } \frac{x}{2z} \le \frac{1-p_\tau}{\abs{c_\tau}}\\
		\left(p_\tau - \frac{1}{2}\right)z + \abs{c_\tau}x + \frac{1}{2}& \text{ for } \frac{x}{2z} \ge \frac{1-p_\tau}{\abs{c_\tau}}
	\end{cases} 
\end{align}
Finally, applying Equations \ref{eq:simplifier_a} and \ref{eq:simplifier_b}, and noting that $\sigma_x \tau \sigma_x$ translates to $p_\tau \rightarrow 1-p_\tau$ and $c_\tau \rightarrow c^*_\tau$ in our parameterisation of $\tau$, we can calculate $\Phi_\tau$ for all reference states from the above result as seen in \eqref{eq:qubit_u1_phi}:
\begin{align}
\Phi_\tau(x,y,z) =
\begin{cases}
\frac{\abs{c_\tau}^2}{1-p_\tau} \frac{x^2+y^2}{4z} + \frac{z}{2} + \frac{1}{2}& \text{ for } 0 \le \frac{\sqrt{x^2+y^2}}{2z} \le \frac{1-p_\tau}{\abs{c_\tau}}\\
\left(p_\tau - \frac{1}{2}\right)z + \abs{c_\tau}\sqrt{x^2+y^2} + \frac{1}{2}& \text{ for } \frac{\sqrt{x^2+y^2}}{2z} \ge \frac{1-p_\tau}{\abs{c_\tau}} \text{ and } \frac{\sqrt{x^2+y^2}}{2z} \le -\frac{p_\tau}{\abs{c_\tau}}\\
-\frac{\abs{c_\tau}^2}{p_\tau} \frac{x^2+y^2}{4z} - \frac{z}{2} + \frac{1}{2}& \text{ for } 0 \ge \frac{\sqrt{x^2+y^2}}{2z} \ge -\frac{p_\tau}{\abs{c_\tau}}.
\end{cases} 
\label{eq:qubit_u1_phi_xyz}
\end{align}

\subsubsection{Two entropic conditions suffice to characterise time-covariant transformations in a non-degenerate qubit}
\label{appx:qubit_u1_phi_pure}
We can alternatively characterise each qubit reference frame state as $\eta(r,\theta,\phi)$, where $(r,\theta,\phi)^T$ is state's Bloch vector in spherical polar co-ordinates (radial, polar and azimuthal respectively). Using the standard conversion between spherical and Cartesian co-ordinates $x=r\sin(\theta)\cos(\phi)$, $y=r\sin(\theta)\sin(\phi)$ and $z=r\cos(\theta)$, we can rewrite Equation \ref{eq:qubit_u1_phi_xyz} as
\begin{align}
\Phi_\tau(x,y,z) &\equiv \Phi_\tau(r\sin(\theta)\cos(\phi),r\sin(\theta)\sin(\phi),r\cos(\theta))\\
&=
\begin{cases}
r\frac{\abs{c_\tau}^2}{1-p_\tau} \frac{\tan(\theta)\sin(\theta)}{4} + r \frac{\cos(\theta)}{2} + \frac{1}{2}& \text{ for } 0 \le \frac{\tan(\theta)}{2} \le \frac{1-p_\tau}{\abs{c_\tau}}\\
r\left(p_\tau - \frac{1}{2}\right)\cos(\theta) + r\abs{c_\tau}\sin(\theta) + \frac{1}{2}& \text{ for } \frac{\tan(\theta)}{2} \ge \frac{1-p_\tau}{\abs{c_\tau}} \text{ and } \frac{\tan(\theta)}{2} \le -\frac{p_\tau}{\abs{c_\tau}}\\
-r\frac{\abs{c_\tau}^2}{p_\tau} \frac{\tan(\theta)\sin(\theta)}{4} - r\frac{\cos(\theta)}{2} + \frac{1}{2}& \text{ for } 0 \ge \frac{\tan(\theta)}{2} \ge -\frac{p_\tau}{\abs{c_\tau}}.
\end{cases} 
\label{eq:qubit_u1_phi_pure}
\end{align}

\begin{lemma}
	There exists a time-covariant transformation $\rho$ to $\sigma$ in a qubit with Hamiltonian $\sigma_z$ if and only if
	\begin{align}
		\partial^2_\theta (\Delta \Phi_\eta)|_{\theta = 0} \ge 0 \text{ and } \partial^2_\theta (\Delta \Phi_\eta)|_{\theta = \pi} \ge 0
	\end{align}
	for all $0 < r \leq 1$ and $0 \leq \phi < 2\pi$, where we recall $\Delta \Phi_\eta \coloneqq \Phi_{\eta(r,\theta,\phi)}(\rho) - \Phi_{\eta(r,\theta,\phi)}(\sigma)$ and $\partial_\theta \coloneqq \frac{\partial}{\partial \theta}$. Since $\Phi_\eta(\tau)$ is monotonically decreasing in $H_\eta(\tau)$, this is equivalent to
	\begin{align}
		\partial^2_\theta (\Delta H_\eta)|_{\theta = 0} \ge 0 \text{ and } \partial^2_\theta (\Delta H_\eta)|_{\theta = \pi} \ge 0,
	\end{align}
	where we recall $\Delta H_\eta \coloneqq H_{\eta(r,\theta,\phi)}(\sigma) - H_{\eta(r,\theta,\phi)}(\rho)$. 
\end{lemma}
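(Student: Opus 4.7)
My plan is to specialize Theorem~\ref{thm:ball_sufficient_refs} to the Bloch sphere about $\id/2$ and then reduce the global surface condition down to the two pole-curvature conditions, using the symmetry and scaling structure already developed in Section~\ref{sec:qubit_u1}.

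First, by Theorem~\ref{thm:ball_sufficient_refs} the interconversion $\rho \to \sigma$ holds if and only if $\Delta \Phi_\eta \ge 0$ for all reference states on the Bloch sphere of any fixed radius $r \in (0,1]$. The cylindrical symmetry \eqref{eq:phi_qubit_u1_cylindrical_asymmetry} removes the $\phi$-dependence, and the conical scaling $\tilde{\Phi}_\tau(\lambda\bm{x}) = \lambda\tilde{\Phi}_\tau(\bm{x})$ from Lemma~\ref{lemma:cusp} makes the $r$-dependence purely multiplicative. Hence the question reduces to the sign of a single function of $\theta\in[0,\pi]$, and in particular the quantifier ``for all $0<r\le 1$ and $0\le\phi<2\pi$'' in the statement is automatically satisfied once the conditions are checked at any one value of $(r,\phi)$. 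At the poles $\theta=0,\pi$ the reference state lies on the $z$-axis and therefore commutes with $H=\sigma_z$; it is $U(1)$-symmetric, so Lemma~\ref{lemma:phi_dephased_states} gives $\Delta\Phi_\eta=0$ at both poles.

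Necessity is then immediate: if $\rho \xrightarrow{U(1)} \sigma$, then $\Delta\Phi_\eta\ge 0$ globally on the sphere, and since it vanishes at the two poles these are global (hence local) minima with respect to $\theta$, forcing $\partial_\theta^2\Delta\Phi_\eta\ge 0$ there. For sufficiency, I would evaluate the two second derivatives directly using the case~1 branch of \eqref{eq:qubit_u1_phi} (which applies near $\theta=0$ since $\tan\theta/2\to 0$) and the case~3 branch (near $\theta=\pi$). Substituting the Taylor expansions $\tan\theta\sin\theta = \theta^2 + O(\theta^4)$ and $\cos\theta = 1 - \theta^2/2 + O(\theta^4)$ yields
\begin{align*}
\partial_\theta^2 \Delta \Phi_\eta\big|_{\theta=0} &= \frac{r}{2}\left[\frac{|c_\rho|^2}{1-p_\rho} - \frac{|c_\sigma|^2}{1-p_\sigma}\right], \\
\partial_\theta^2 \Delta \Phi_\eta\big|_{\theta=\pi} &= \frac{r}{2}\left[\frac{|c_\rho|^2}{p_\rho} - \frac{|c_\sigma|^2}{p_\sigma}\right],
\end{align*}
so the two pole-curvature conditions are precisely the algebraic inequalities identified in Section~\ref{sec:qubit_u1} as the Matteo-Kamil necessary and sufficient conditions for $U(1)$-covariant qubit interconversion. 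Hence sufficiency is inherited from that known characterization.

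The principal obstacle is closing sufficiency \emph{intrinsically} within this framework, i.e.\ verifying directly that the two pole-curvature conditions imply $\Delta\Phi_\eta\ge 0$ across the whole Bloch sphere rather than just in neighborhoods of the poles. In the case~1 and case~3 regions the difference factorizes as $\Delta\Phi_\eta \propto r[\text{algebraic diff}]\cdot\tan\theta\sin\theta$ with the second factor of a definite sign, so the corresponding pole condition settles that region. The complication is the equatorial case~2 region, together with the mixed cases where $\rho$ and $\sigma$ fall in different branches of \eqref{eq:qubit_u1_phi}; in those regions $\Delta\Phi_\eta$ is a piecewise combination of linear and rational pieces whose non-negativity must be checked by an envelope/convex-combination argument on the boundaries between branches. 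Appealing to the external Matteo-Kamil bound avoids this piecewise bookkeeping and is the cleanest path to the stated equivalence.
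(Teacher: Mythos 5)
Your proposal is correct and follows essentially the same route as the paper: both compute $\partial_\theta^2\Delta\Phi_\eta$ at the poles from the explicit piecewise formula for $\Phi_\tau$, arrive at the same expressions $\tfrac{r}{2}\bigl[\tfrac{|c_\rho|^2}{1-p_\rho}-\tfrac{|c_\sigma|^2}{1-p_\sigma}\bigr]$ and $\tfrac{r}{2}\bigl[\tfrac{|c_\rho|^2}{p_\rho}-\tfrac{|c_\sigma|^2}{p_\sigma}\bigr]$, and close the equivalence by identifying these with the known necessary and sufficient conditions of \cite{matteo_kamil_bound}. Your additional direct necessity argument (poles as local minima of $\Delta\Phi_\eta$) and your remarks on an intrinsic sufficiency proof are sound but not needed beyond what the paper does.
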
  
\begin{proof}
	From \eqref{eq:qubit_u1_phi_pure}, we can straightforwardly evaluate:
	\begin{align}
		\partial^2_\theta \Phi_\eta(\tau)|_{\theta = 0}  = \frac{r}{2}\left(\frac{\abs{c_\tau}^2}{1-p_\tau} - 1\right) \text{ and } \partial^2_\theta \Phi_\eta(\tau)|_{\theta = \pi} = \frac{r}{2}\left(\frac{\abs{c_\tau}^2}{p_\tau} - 1\right),
	\end{align}
	from which it immediately follows that
	\begin{align}
		\partial^2_\theta (\Delta \Phi_\eta)|_{\theta = 0} \ge 0  \Longleftrightarrow c_\sigma \le c_\rho \sqrt{\frac{1-p_\sigma}{1-p_\rho}} \text{ and }
		\partial^2_\theta (\Delta \Phi_\eta)|_{\theta = \pi} \ge 0  \Longleftrightarrow c_\sigma \le c_\rho \sqrt{\frac{p_\sigma}{p_\rho}}, 
	\end{align}
	which are known necessary and sufficient conditions for a time-covariant transfromation from $\rho$ to $\sigma$ in a qubit with the Hamiltonian $\sigma_z$ \cite{matteo_kamil_bound}. 
\end{proof}
We note that $\partial_\theta (\Delta \Phi_\eta)|_{\theta = 0} = 0$ and $\partial_\theta(\Delta \Phi_\eta)|_{\theta = \pi} = 0$. At $r=1$, this Lemma can be loosely interpreted as asserting that two pure reference states, infinitesimally close to the energy eigenstates $\ketbra{0}$ and $\ketbra{1}$ respectively, are sufficient for determining whether time-covariant interconversion is possible between \emph{any} input and output states in a non-degenerate qubit. 

\subsection{Unitarily covariant transformations ($SU(2)$)}
\label{appx:qubit_su2_phi_derivation}

The only channels that are covariant with all unitary transformations on the qubit are~\cite{Cirstoiu2020_Noether}:
\begin{align}
\E_\lambda(\rho) = \frac{1}{2}(\id + \lambda \bm{r} \cdot \bm{\sigma}),
\end{align}
where $\lambda \in [-1/3, 1]$, and $\bm{r}$ is the Bloch vector of $\rho$. Therefore, writing $\bar{\bm{x}} = (x,-y,z)$ given the Bloch vector $\bm{x} = (x,y,z)$ of $\eta$, we have that
\begin{align}
\tr[\eta^T  \E_\lambda(\rho)] = \frac{1}{2}(1  + \lambda \bar{\bm{x}} \cdot \bm{r}) .
\end{align}
We now want to consider the quantity 
\begin{align}
\Phi_\rho(\bm{x}) = \max_{\E \in \mathcal{O}_{\rm cov}} \tr[\eta^T  \E(\rho)] = \max_{ \lambda \in \left[-\frac{1}{3} , 1\right]}\tr[\eta^T  \E_\lambda(\rho)] =\max_{ \lambda \in \left[-\frac{1}{3} , 1\right]}\frac{1}{2}(1  + \lambda \bar{\bm{x}} \cdot \bm{r}) .
\end{align}
By convexity, we can restrict the set of $\lambda$ that we must optimize over to the extremal values, which gives:
\begin{align}
\Phi_\rho(\bm{x}) &= \begin{cases}
\frac{1}{2}(1  +  \bar{\bm{x}} \cdot \bm{r})  & \text{if } \bar{\bm{x}} \cdot \bm{r} \ge 0,\\
\frac{1}{2}(1 - \frac{1}{3} \bar{\bm{x}} \cdot \bm{r}) & \mbox{ otherwise}.
\end{cases}
\end{align}
In words, if the Bloch vectors of $\eta^T$ and $\rho$ are both located in the hemisphere agove the plane perpendicular to the Bloch vector of $\eta^T$, then the identity channel $\E_1=\mathcal{I}$ optimizes the objective function. Otherwise, the channel $\E_{-\frac{1}{3}}$ which achieves the best inversion of $\rho$ that can be done covariantly, is optimal.

\section{Depolarization conditions}
\label{appx:sufficient_conditions}

\subsection{Depolarization conditions via the modes of asymmetry}

\begin{lemma}
\label{lemma:general_SC_hilbert_schmidt}
 Let $\rho_A$, $\sigma_B$, and $\eta_R$ be quantum states on systems $A$, $B$, and $R$, respectively, where $d_R=d_B$. We have $\rho \xrightarrow{G} \sigma$ if for all quantum states $\eta$ on $\H_R$
\begin{align}
\langle \eta , \E^\eta(\rho) \rangle  \ge \langle \eta , \sigma \rangle ,
\end{align}
for any family of covariant channels $\E^\eta :  \S(\H_A) \rightarrow \S(\H_B)$ parameterised by $\eta$, where $\langle A, B \rangle \coloneqq \tr[A^\dagger B]$.
\end{lemma}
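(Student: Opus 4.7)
The plan is to prove this by a convex separation argument, turning the question into the geometric statement that $\sigma$ lies inside the convex set of all covariantly reachable output states on $B$. I would first define
\begin{align}
 S := \{ \E(\rho) \, : \, \E \in \mathcal{O}_{\rm cov}(A, B) \},
\end{align}
and observe that $S$ is a compact convex subset of the real Hilbert space of Hermitian operators on $\H_B$ under the Hilbert--Schmidt inner product: the set of CPTP channels is compact in finite dimensions, the covariance condition $[\U_g, \E] = 0$ is a closed linear constraint, and $\E \mapsto \E(\rho)$ is continuous and linear. Using the identification $\H_R \cong \H_B$ afforded by $d_R = d_B$, the hypothesis lets us test elements of $S$ against any reference state $\eta$, and the lemma is equivalent to the geometric claim $\sigma \in S$.

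I would then argue by contradiction. Suppose $\sigma \notin S$. Since $\{\sigma\}$ is compact and $S$ is closed and convex, the strict separating hyperplane theorem supplies a Hermitian operator $H$ on $\H_B$ and a scalar $c$ with $\tr(H \tau) \le c < \tr(H \sigma)$ for all $\tau \in S$. The main technical step, and really the only delicate point, is to upgrade $H$ from an arbitrary Hermitian functional into a genuine state compatible with the hypothesis. For this I would pick $\lambda > -\min \spec(H)$ and define $H' := H + \lambda \id_B$, which is strictly positive. Because every $\tau \in S$ and $\sigma$ share unit trace, the shift preserves the strict inequality, $\tr(H' \tau) \le c + \lambda < \tr(H' \sigma)$, and normalising yields a valid quantum state $\eta := H'/\tr H'$ satisfying
\begin{align}
 \tr(\eta \tau) < \tr(\eta \sigma) \quad \mbox{for every } \tau \in S.
\end{align}

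But this directly contradicts the hypothesis: applied at this particular $\eta$, it yields a $G$--covariant channel $\E^\eta : \B(\H_A) \to \B(\H_B)$ obeying $\tr(\eta\, \E^\eta(\rho)) \ge \tr(\eta \sigma)$, whereas $\E^\eta(\rho)$ is by construction an element of $S$. Hence $\sigma \in S$, meaning there is a $G$--covariant channel transforming $\rho$ into $\sigma$, i.e., $\rho \xrightarrow{G} \sigma$. The crux of the argument is the shift-and-rescale step: separating hyperplanes alone only deliver an abstract Hermitian functional, and without exploiting unit normalisation of both $\sigma$ and of the elements of $S$ one could not reduce it to an admissible reference state in the hypothesis.
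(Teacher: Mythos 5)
Your proof is correct, but it takes a genuinely different route from the paper's. The paper derives the lemma by combining two pre-existing ingredients: the characterization from Lemma 3 of the supplemental material of Ref.~[gour2018quantum], namely that $\rho \xrightarrow{G}\sigma$ holds if and only if $\Phi_\eta(\rho)\ge\tr[\eta^T\sigma]$ for all $\eta$, together with the dual (SDP) formulation $\Phi_\eta(\rho)=\max_{\E\in\mathcal{O}_{\rm cov}}\tr[\eta^T\E(\rho)]$ proved earlier in the appendix; the hypothesis then lower-bounds $\Phi_\eta(\rho)$ by $\tr[\eta^T\E^\eta(\rho)]$ and the conclusion is immediate. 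You instead bypass the min-entropy machinery entirely and argue from first principles: the reachable set $S=\{\E(\rho):\E\in\mathcal{O}_{\rm cov}\}$ is compact and convex, and if $\sigma\notin S$ a strictly separating Hermitian functional exists, which your shift-and-rescale step (valid precisely because all elements of $S$ and $\sigma$ have unit trace) converts into a legitimate density operator $\eta$ violating the hypothesis. Both arguments are sound. The paper's version buys integration with the entropic/SDP framework used throughout (the same dual formula is reused for the PGM constructions that follow), and it makes explicit the harmless transpose reindexing $\eta\mapsto\eta^T$ between the reference and output systems. Your version is more elementary and self-contained, makes transparent that the lemma is really a convex-separation statement about the reachable set, and as a bonus shows the stated sufficient condition is also necessary (take $\E^\eta$ to be a channel achieving $\sigma$, for every $\eta$). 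One small point worth stating explicitly if you wrote this up: the constructed separating state lives on $\H_B$ and is reinterpreted as a state on $\H_R$ via the identification $d_R=d_B$; since the set of density operators does not depend on the group representation carried by the space, this is unproblematic, but it is the only place the hypothesis "for all quantum states $\eta$ on $\H_R$" is actually consumed.
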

\begin{proof}
Lemma 3 of the supplemental material of \cite{gour2018quantum} tells us that $\rho \xrightarrow{G} \sigma$ if and only if
\begin{align}
    \Phi_\eta(\rho) \ge \tr [\eta^T \sigma], \forall \eta.
\label{blahh1}
\end{align}

From the dual expression for $\Phi_\eta (\rho)$ in \lemref{lemma:Phi_and_dual}, we have the following lower bound on $ \Phi_\eta(\rho)$ for any choice of covariant channel $\E^\eta$:
\begin{equation}
     \Phi_\eta(\rho) \ge \tr [\eta^T \E^\eta(\rho) ].
\label{blahh2}
\end{equation}
Eqs. (\ref{blahh1}) and (\ref{blahh2}) allow us to generate the following \textit{sufficient} condition, such that if we can find any faimily of covariant maps $\E^\eta$ such that
\begin{align}
    \tr[ \eta^T \E^\eta (\rho) ] \ge \tr[ \eta^T  \sigma ] , \forall \eta,
    \label{eq:ur_sufficient_condition}
\end{align}
or, equivalently 
\begin{align}
     \tr[ \eta \E^\eta(\rho) ] \ge \tr[\eta \sigma] , \forall \eta,
\end{align}
then we are guaranteed $\rho \xrightarrow{G} \sigma$, which completes the proof.
\end{proof}

Lemmata \ref{lemma:general_SC_hilbert_schmidt} and \ref{lemma:mode_decomp_trace_product} together produce the following corollary: 
\begin{corollary}
\label{corollary:mode_decomp_sc}
     We have that $\rho \xrightarrow{G} \sigma$ if for all $\eta \in \S(\H_R)$ ,
     \begin{align}
 \sum_{\lambda,j}  \langle \eta^{\lambda}_j, \E^\eta (\rho^\lambda_j) - \sigma_j^\lambda \rangle \ge  0.
\label{eq:sum_l_j_SC}
\end{align}
    for some family of covariant channels $\E^\eta$.
\end{corollary}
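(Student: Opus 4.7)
The plan is to combine the two immediately preceding lemmas in a direct chain, so that the argument reduces to applying the mode decomposition to the Hilbert--Schmidt inner product appearing in \lemref{lemma:general_SC_hilbert_schmidt}. The starting point is that lemma's sufficient condition: to establish $\rho \xrightarrow{G} \sigma$ it suffices to exhibit a family $\{\E^\eta\}$ of $G$--covariant channels such that $\langle \eta, \E^\eta(\rho) - \sigma\rangle \ge 0$ for every reference state $\eta \in \S(\H_R)$.

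Next I would apply the second identity of \lemref{lemma:mode_decomp_trace_product} to expand this inner product along the ITO modes, obtaining
\begin{equation}
\langle \eta, \E^\eta(\rho) - \sigma\rangle \;=\; \sum_{\lambda,j}\, \langle \eta^\lambda_j,\, [\E^\eta(\rho) - \sigma]^\lambda_j\rangle.
\end{equation}
Since the mode projection $X \mapsto X^\lambda_j$ is a linear map (it is defined by inner products against an orthonormal ITO basis), we have $[\E^\eta(\rho) - \sigma]^\lambda_j = \E^\eta(\rho)^\lambda_j - \sigma^\lambda_j$. The crucial ingredient is then the mode-preservation property of $G$--covariant channels recalled in \eqref{eq:modes_cov_map}, which tells us that $\E^\eta(\rho)^\lambda_j = \E^\eta(\rho^\lambda_j)$ for every irrep label $(\lambda,j)$. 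Substituting this identity rewrites the sum as $\sum_{\lambda,j} \langle \eta^\lambda_j,\, \E^\eta(\rho^\lambda_j) - \sigma^\lambda_j \rangle$, and asserting non-negativity of the resulting quantity for all $\eta$ is precisely the hypothesis of the corollary.

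There is essentially no technical obstacle: the corollary is a direct bookkeeping consequence of the two cited lemmas together with the mode-preservation property of covariant maps. The only point that merits a remark is that one must really use covariance of the family $\{\E^\eta\}$ (which is already built into the hypothesis inherited from \lemref{lemma:general_SC_hilbert_schmidt}) in order to commute the mode projection with $\E^\eta$; without this, one could not collapse the double sum produced by \lemref{lemma:mode_decomp_trace_product} into the single-mode form stated in the corollary.
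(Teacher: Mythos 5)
Your proposal is correct and follows exactly the route the paper intends: the paper simply asserts that Lemmata~\ref{lemma:general_SC_hilbert_schmidt} and~\ref{lemma:mode_decomp_trace_product} "together produce" the corollary, and your chain (expand $\langle \eta, \E^\eta(\rho)-\sigma\rangle$ over modes, use linearity of the mode projection, then invoke the mode-preservation property $\E^\eta(\rho)^\lambda_j = \E^\eta(\rho^\lambda_j)$ of covariant channels) is precisely the bookkeeping being left implicit. No gaps.
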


\subsection{PGM measure-and-prepare channel}

For a general group $G$ we have the Pretty Good Measurement~\cite{hausladen1994pretty} POVM:
\begin{align}
M_{\mathrm{pgm}}(g) \coloneqq \G(\rho)^{-\frac{1}{2}} \rho(g)  \G(\rho)^{-\frac{1}{2}},
\end{align}
where $\rho(g) \coloneqq \U_g(\rho)$.

Recall $\sigma^\lambda_j = \E(\rho^\lambda_j)$ \cite{marvian2014modes} and $\tr [AB] = \sum_{\lambda,j} \tr[A^{\lambda}_j B^{\lambda^*}_j]$, we can evaluate the trace product for the PGM measure-and-prepare channel, which prepares $\U_g(\tau)$ for an \emph{arbitrary} state $\tau$ when the $g$th outcome is obtained:
\begin{equation}
\E_{\rm pgm} (\rho) := \int dg\ \tau(g) \tr[M_{\rm pgm}(g) \rho],
\label{eq:pgm_channel}
\end{equation}
which, from its form, is manifestly covariant.

We then have that
\begin{align}
\tr[\eta\E_{\mathrm{pgm}}(\rho)] = \sum_{\lambda,j} \tr[\eta_j^{\lambda^*} \E_{\mathrm{pgm}}(\rho_{j}^\lambda)] = \sum_{\lambda,j} \int dg \tr[M_{\mathrm{pgm}}(g) \rho_{j}^\lambda ] \tr[\eta^{\lambda^*}_j \tau(g) ] .
\end{align}
We can write each term in the above equation more compactly in terms of the Hilbert-Schmidt inner product: 
\begin{align}
\langle \eta^\lambda_j, \E_{\mathrm{pgm}}(\rho_{j}^\lambda) \rangle = \int dg \, \langle \eta_j^\lambda, \tau(g) \rangle \langle \rho^{\lambda^*}_j, \overline{\rho}(g) \rangle,
\end{align}
where we have defined the scaled variant of $\rho$, $\overline{\rho}$ as
\begin{align}
\overline{\rho} \coloneqq \G(\rho)^{-\frac{1}{2}} \rho \G(\rho)^{-\frac{1}{2}} .
\end{align}

We now have the following lemma:
	\begin{lemma} For any irrep $\lambda$ and irrep component $j$ we have
		\begin{equation}
		\langle \eta^\lambda_j, \E_{\mathrm{pgm}}(\rho_{j}^\lambda) \rangle = f^\lambda_j(\rho)\langle \eta_j^\lambda, \tau_j^\lambda \rangle,
		\end{equation}
	where we have introduced the functions $f^\lambda_j (\rho)\coloneqq \langle \rho^{\lambda^*}_j,  \overline{\rho}_j^{\lambda^*} \rangle$.
		\label{lemma:eta_pgm_prod}
	\end{lemma}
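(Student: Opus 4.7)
The plan is to take as a starting point the expression
\[
\langle \eta_j^\lambda, \E_{\mathrm{pgm}}(\rho_j^\lambda)\rangle = \int dg\, \langle \eta_j^\lambda, \tau(g)\rangle\,\langle \rho_j^{\lambda^*}, \overline{\rho}(g)\rangle
\]
established just before the lemma statement, and then collapse the integral to the desired product using the $G$-twirl identity (Lemma \ref{lemma:G_twirl_modal_match}) together with the mode orthogonality (Lemma \ref{lemma:mode_decomp_trace_product}). The key conceptual move is to recognise that the integrand is the trace of a tensor product, so that the $dg$ integral becomes a diagonal $G$-twirl on two factors to which we can apply the mode-matching formula already proven in the appendix.

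First I would use $\tr[A]\tr[B] = \tr[A\otimes B]$ to rewrite the integrand as a single trace on the tensor-product system,
\[
\int dg\, \tr\!\Big[\big((\eta_j^\lambda)^\dagger \otimes (\rho_j^{\lambda^*})^\dagger\big)\big(\tau(g)\otimes \overline{\rho}(g)\big)\Big].
\]
Since $\G(\rho)$ is $G$-invariant, $\G(\rho)^{-1/2}$ commutes with every $\U_g$, so $\overline{\rho}(g) = \U_g(\overline{\rho})$, and the remaining $dg$ integral is precisely the diagonal $G$-twirl $\G(\tau \otimes \overline{\rho})$ on the bipartite system. By Lemma \ref{lemma:G_twirl_modal_match},
\[
\G(\tau\otimes \overline{\rho}) = \sum_{\mu, i}\tau_i^{\mu^*}\otimes \overline{\rho}_i^\mu,
\]
and substituting this back and distributing the trace gives
\[
\langle \eta_j^\lambda, \E_{\mathrm{pgm}}(\rho_j^\lambda)\rangle = \sum_{\mu, i}\langle \eta_j^\lambda, \tau_i^{\mu^*}\rangle\,\langle \rho_j^{\lambda^*}, \overline{\rho}_i^\mu\rangle.
\]

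The final step is to invoke Lemma \ref{lemma:mode_decomp_trace_product}: the inner product of distinct modes vanishes. The first factor is nonzero only when $\mu = \lambda^*$ and $i = j$, and the same constraint is enforced by the second factor. Hence only the $(\mu,i)=(\lambda^*,j)$ term survives, yielding
\[
\langle \eta_j^\lambda, \E_{\mathrm{pgm}}(\rho_j^\lambda)\rangle = \langle \eta_j^\lambda, \tau_j^\lambda\rangle\,\langle \rho_j^{\lambda^*}, \overline{\rho}_j^{\lambda^*}\rangle = f_j^\lambda(\rho)\,\langle \eta_j^\lambda, \tau_j^\lambda\rangle,
\]
using $\tau_j^{(\lambda^*)^*} = \tau_j^\lambda$ and the definition $f_j^\lambda(\rho) = \langle \rho_j^{\lambda^*}, \overline{\rho}_j^{\lambda^*}\rangle$, which is exactly the claim.

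There is no real obstacle here beyond careful bookkeeping of irrep labels: both non-trivial ingredients — the explicit form of the joint $G$-twirl in terms of paired $(\mu,\mu^*)$ modes, and the orthogonality of distinct modes under the Hilbert--Schmidt inner product — are already proven in the appendix. The only subtlety is spotting that the product of the two independent traces is literally the joint $G$-twirl of $\tau\otimes \overline{\rho}$, after which the two lemmas immediately collapse the sum to a single surviving term and produce the factorised form $f_j^\lambda(\rho)\langle \eta_j^\lambda, \tau_j^\lambda\rangle$.
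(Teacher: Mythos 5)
Your proof is correct and follows essentially the same route as the paper's: rewrite the integrand as a Hilbert--Schmidt inner product on the tensor-product system, identify the $dg$ integral as the joint $G$-twirl $\G(\tau\otimes\overline{\rho})$, apply the modal-matching lemma, and let mode orthogonality collapse the sum to the single $(\lambda,j)$ term. Your explicit remark that $\overline{\rho}(g)=\U_g(\overline{\rho})$ because $\G(\rho)^{-1/2}$ commutes with the group action is a small point the paper leaves implicit, but it is the same argument.
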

\begin{proof}
We have:
	\begin{align}
	\langle \eta^\lambda_j, \E_{\mathrm{pgm}}(\rho_{j}^\lambda) \rangle &= \int dg \, \langle \eta_j^\lambda, \tau(g) \rangle \langle \rho^{\lambda^*}_j, \overline{\rho}(g) \rangle \\
	&= \int dg \, \langle \eta^\lambda_j \otimes \rho^{\lambda^*}_j, \tau(g) \otimes \overline{\rho}(g) \rangle \\
	&=  \langle \eta^\lambda_j \otimes \rho^{\lambda^*}_j, \int dg \, \U_g(\tau \otimes \overline{\rho}) \rangle \\
	&=  \langle \eta^\lambda_j \otimes \rho^{\lambda^*}_j, \G(\tau \otimes \overline{\rho}) \rangle \label{eq:ggg} \\
	&= \sum_{i,\mu} \langle \eta^\lambda_j \otimes \rho^{\lambda^*}_j , \tau_i^\mu \otimes \overline{\rho}_i^{\mu^*} \rangle \label{eq:hhh} \\
	&=\langle \eta^\lambda_j \otimes \rho^{\lambda^*}_j, \tau_j^\lambda \otimes \overline{\rho}_j^{\lambda^*} \rangle \\
	&=\langle \eta^\lambda_j , {\tau}_j^\lambda  \rangle \langle \rho^{\lambda^*}_j,  \overline{\rho}_j^{\lambda^*} \rangle\\
	&= \langle \eta^\lambda_j , f^\lambda_j(\rho) {\tau}_j^\lambda  \rangle,
	\end{align}
	where in going from line \ref{eq:ggg} to line \ref{eq:hhh} we have made use of lemma \ref{lemma:G_twirl_modal_match}. This completes the proof.
\end{proof}
We now note the following properties of the functions $f^\lambda_j(\rho)$, which hold for any input state $\rho$ and for all modes labelled by $(\lambda,j)$:

\begin{enumerate}[label=\normalfont \textbf{(F\arabic*)}]
\item \label{propertyf:real_valued} \textit{(Real-valued).} $ f^\lambda_j (\rho) = f^{\lambda^*}_j (\rho) = [f^\lambda_j (\rho)]^*$.
\item \label{propertyf:non-negative} \textit{(Non-negative).} $f^\lambda_j (\rho) \ge 0$.
\item \label{propertyf:zero_mode} \textit{(Trivial irrep).} $f^0_0 (\rho) = 1$.
\end{enumerate}

\begin{proof}[Proof of \ref{propertyf:real_valued}]
We note that 
\begin{align}
f_j^\lambda (\rho) = \tr[\rho^\lambda_j \G(\rho)^{-\frac{1}{2}}\rho^{\lambda^*}_j \G(\rho)^{-\frac{1}{2}}] &=\tr[\rho^{\lambda^*}_j \G(\rho)^{-\frac{1}{2}}\rho^\lambda_j \G(\rho)^{-\frac{1}{2}}] = f_j^{\lambda^*} (\rho)\\
&= \tr[(\rho^\lambda_j \G(\rho)^{-\frac{1}{2}}\rho^{\lambda^*}_j \G(\rho)^{-\frac{1}{2}})^\dagger] = \tr[\rho^\lambda_j \G(\rho)^{-\frac{1}{2}}\rho^{\lambda^*}_j \G(\rho)^{-\frac{1}{2}}]^* = [f^\lambda_j(\rho)]^*
\end{align}
\end{proof}

\begin{proof}[Proof of \ref{propertyf:non-negative}]
   We begin by noting that 
\begin{align} 
\rho^\lambda_j\mathcal{G}(\rho)^{-\frac{1}{2}}\rho^{\lambda^*}_j = \left( \rho^\lambda_j \G(\rho)^{-\frac{1}{4}}\right) \left( \rho^\lambda_j \G(\rho)^{-\frac{1}{4}}\right)^\dagger \eqqcolon AA^\dagger,
\end{align}  
where we have defined $A \coloneqq\rho^\lambda_j \G(\rho)^{-\frac{1}{4}} $. Regardless of what $A$ is, any operator of the form $AA^\dagger$ is positive. Since $\mathcal{G}(\rho)^{-\frac{1}{2}}$ is a positive operator as well, this means $\rho^\lambda_j\mathcal{G}(\rho)^{-\frac{1}{2}}\rho^{\lambda^*}_j\mathcal{G}(\rho)^{-\frac{1}{2}}$ must be positive, so we are guaranteed $f^\lambda_j(\rho) \geq 0, \forall \lambda,j,\rho$.

\end{proof}  

\begin{proof}[Proof of \ref{propertyf:zero_mode}]
 We first note  
\begin{align}
\mathcal{G}(\rho) &= \sum_{\lambda,j} \int dg\ \mathcal{U}_g\left(\sum_\alpha \tr({X_j^{(\lambda,\alpha)}}^\dagger\rho)X_j^{(\lambda,\alpha)}\right) \\
& = \sum_{\lambda,j,j'} \int dg v^{(\lambda)}_{j'j}(g) \sum_\alpha \tr({X_j^{(\lambda,\alpha)}}^\dagger\rho) X_{j'}^{(\lambda,\alpha)}\\
&= \sum_{\lambda,j,j'} \left[\int dg v^{(\lambda)}_{j'j}(g)v^0_{0,0}(g)\right] \sum_\alpha \tr({X_j^{(\lambda,\alpha)}}^\dagger\rho) X_{j'}^{(\lambda,\alpha)} \label{eq:iii}\\
&= \sum_{\lambda,j,j'} \delta_{\lambda,0}\delta_{j',0,}\delta_{j,0} \sum_\alpha \tr({X_j^{(\lambda,\alpha)}}^\dagger\rho) X_{j'}^{(\lambda,\alpha)}\label{eq:jjj}\\
&= \sum_\alpha \tr({X_0^{(0,\alpha)}}^\dagger\rho) X_{0}^{(0,\alpha)} = \rho^0_0.
\end{align}
where in going from line \ref{eq:iii} to line \ref{eq:jjj} we have made use of Equation \ref{eq:marvian_mode_ortho}. It then follows that:
\begin{align}
f_0^0(\rho) = \tr[\mathcal{G}(\rho)\mathcal{G}(\rho)^{-\frac{1}{2}}\mathcal{G}(\rho)\mathcal{G}(\rho)^{-\frac{1}{2}}] = \tr[\mathcal{G}(\rho)] = 1,
\end{align}
which completes the proof.

\end{proof}

\subsection{Derivation of general conditions}

\begin{lemma}
    
Let us define components $t^{(\lambda,\alpha)}_j \coloneqq \langle X^{(\lambda,\alpha)}_j, \tau\rangle$ and $s^{(\lambda,\alpha)}_j \coloneqq \langle X^{(\lambda,\alpha)}_j, \sigma\rangle$.    The state $\rho$ can be transformed into $\sigma$ covariantly with respect to a symmetry group $G$ (i.e. $\rho \xrightarrow{G} \sigma$) if there exists a family of valid quantum states 
\begin{align}
\left\{\forall \theta^{(\lambda,\alpha)}_j \in [0,2\pi) :  \tau_\Theta = \sum_{\lambda,\alpha, j} t^{(\lambda,\alpha)}_j X^{(\lambda,\alpha)}_j  \right\},    
\end{align}
where each $(\lambda,\alpha, j)$-component of $\tau_\Theta$ satisfies either:
   \begin{align}
       f^\lambda_j(\rho)t^{(\lambda,\alpha)}_j &= s^{(\lambda,\alpha)}_j , \text{ or,} \\
   f^\lambda_j(\rho)\abs{t^{(\lambda,\alpha)}_j}  - \abs{s^{(\lambda,\alpha)}_j} &\ge 0, \quad  t^{(\lambda,\alpha)}_j = e^{i\theta^{(\lambda,\alpha)}_j} \abs{t^{(\lambda,\alpha)}_j}.
\end{align}
\label{lemma:sufficient_conditions_tau_undecided}
\end{lemma}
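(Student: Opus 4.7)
The plan is to build on Corollary~\ref{corollary:mode_decomp_sc} by instantiating the family of covariant channels $\E^\eta$ as PGM measure-and-prepare channels $\E_{\rm pgm}$ from Eq.~\eqref{eq:pgm_channel} whose prepared state is chosen from the given family, $\tau_{\Theta(\eta)}$, with the phases $\Theta(\eta) = \{\theta^{(\lambda,\alpha)}_j(\eta)\}$ to be fixed below in an $\eta$--dependent manner. By Lemma~\ref{lemma:eta_pgm_prod}, the sufficient condition in Corollary~\ref{corollary:mode_decomp_sc} then reduces to establishing
\begin{align}
S(\eta) \coloneqq \sum_{\lambda,j} \langle \eta^\lambda_j,\, f^\lambda_j(\rho)\,(\tau_{\Theta(\eta)})^\lambda_j - \sigma^\lambda_j \rangle \ge 0
\end{align}
for every reference state $\eta$.

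Writing $e^{(\lambda,\alpha)}_j \coloneqq \langle X^{(\lambda,\alpha)}_j, \eta\rangle$, orthonormality of the ITO basis together with Lemma~\ref{lemma:mode_decomp_trace_product} gives
\begin{align}
S(\eta) = \sum_{\lambda,\alpha,j} [e^{(\lambda,\alpha)}_j]^{*}\bigl(f^\lambda_j(\rho)\, t^{(\lambda,\alpha)}_j - s^{(\lambda,\alpha)}_j\bigr).
\end{align}
Since $\eta$, $\tau_\Theta$ and $\sigma$ are Hermitian and $f^\lambda_j(\rho)$ is real by property~\ref{propertyf:real_valued}, this expression is real, and it suffices to show that the real part of each summand is non-negative for a suitable choice of $\Theta(\eta)$.

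Terms for which the case-1 equality $f^\lambda_j(\rho)\,t^{(\lambda,\alpha)}_j = s^{(\lambda,\alpha)}_j$ holds contribute exactly zero. For a case-2 component, I would write $e^{(\lambda,\alpha)}_j = |e^{(\lambda,\alpha)}_j|\,e^{i\phi^{(\lambda,\alpha)}_j}$ and $s^{(\lambda,\alpha)}_j = |s^{(\lambda,\alpha)}_j|\,e^{i\psi^{(\lambda,\alpha)}_j}$, and set $\theta^{(\lambda,\alpha)}_j(\eta) = \phi^{(\lambda,\alpha)}_j$. A direct expansion then yields
\begin{align}
\mathrm{Re}\bigl\{[e^{(\lambda,\alpha)}_j]^{*}\bigl(f^\lambda_j(\rho)\, t^{(\lambda,\alpha)}_j - s^{(\lambda,\alpha)}_j\bigr)\bigr\}
= |e^{(\lambda,\alpha)}_j|\bigl(f^\lambda_j(\rho)\,|t^{(\lambda,\alpha)}_j| - |s^{(\lambda,\alpha)}_j|\cos(\psi^{(\lambda,\alpha)}_j - \phi^{(\lambda,\alpha)}_j)\bigr) \ge 0,
\end{align}
where the final inequality uses the case-2 hypothesis $f^\lambda_j(\rho)\,|t^{(\lambda,\alpha)}_j| \ge |s^{(\lambda,\alpha)}_j|$ together with $\cos \le 1$. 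Summing over all components then gives $S(\eta) \ge 0$.

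The main obstacle is that $\Theta(\eta)$ is chosen pointwise in $\eta$, so the resulting $\tau_{\Theta(\eta)}$ must be a bona-fide quantum state for every reference $\eta$. This is precisely what the lemma's hypothesis delivers: it supplies a family $\{\tau_\Theta\}_\Theta$ of valid states indexed by all admissible phase choices, with Hermiticity of $\tau_\Theta$ automatically coupling the phases of conjugate modes in the only way consistent with the free choice of $\theta^{(\lambda,\alpha)}_j$. Given such a family, the phase assignment above is always available and $S(\eta)\ge 0$ holds uniformly, which via Corollary~\ref{corollary:mode_decomp_sc} yields $\rho \xrightarrow{G} \sigma$.
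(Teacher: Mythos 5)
Your proposal is correct and follows essentially the same route as the paper: instantiate the channels in Corollary~\ref{corollary:mode_decomp_sc} as PGM measure-and-prepare channels, reduce via Lemma~\ref{lemma:eta_pgm_prod} to a component-wise inequality in the ITO basis, handle case-1 components as exact cancellations and case-2 components by matching the phase of the prepared state to that of the reference state (so that $\cos\le 1$ closes the bound), with the hypothesis guaranteeing that every such phase assignment yields a valid state $\tau_\Theta$. The only cosmetic difference is that the paper groups conjugate modes $\lambda,\lambda^*$ explicitly and remarks on the $\bm{0}$-mode separately, whereas you treat all components uniformly; the content is identical.
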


\begin{proof}
We begin with Corollary~\ref{corollary:mode_decomp_sc} and make the explicit choice for each member in our family of covariant channels $\{\E^\eta\}$ to be a PGM-and-prepare channel $\E^\eta \coloneqq \E^\eta_{\mathrm{pgm}}$, as defined in \eqref{eq:pgm_channel}. Combined with Lemma~\ref{lemma:eta_pgm_prod} this gives rise to the following sufficient condition on the transition $\rho \xrightarrow{G} \sigma$: 
\begin{align}
\sum_{\lambda,j} \left\{ \langle \eta^\lambda_j , f_j^\lambda (\rho) \tau_j^\lambda - \sigma_j^\lambda \rangle + \langle \eta^{\lambda^*}_j , f_j^{\lambda} (\rho) \tau_j^{\lambda^*} - \sigma_j^{\lambda^*} \rangle \right\} \ge 0, \quad \forall  \eta.
\label{eq:bldsfhsdl}
\end{align}
We emphasise that the choice of preparation state $\tau$ \emph{can} vary with $\eta$, though this dependence has been suppressed in our notation for clarity. We now introduce the simplifying notation $\bm{\mu} \coloneqq \{\lambda, \alpha, j \}$ and define the following coefficients of the states $\eta, \tau,$ and $\sigma$ in the ITO basis $\{ X^{\bm{\mu}} \} $:
\begin{align}
n^{\bm{\mu}} &\coloneqq \langle X^{\bm{\mu}}, \eta \rangle = \abs{n^{\bm{\mu}}} e^{i \varphi_\eta^{\bm{\mu}}}, \\
t^{\bm{\mu}} &\coloneqq \langle X^{\bm{\mu}}, \tau \rangle  = \abs{t^{\bm{\mu}}} e^{i \varphi_\tau^{\bm{\mu}}}, \\
s^{\bm{\mu}} &\coloneqq \langle X^{\bm{\mu}}, \sigma \rangle  = \abs{s^{\bm{\mu}}} e^{i \varphi_\sigma^{\bm{\mu}}}, 
\end{align}
such that $\eta_j^\lambda = \sum_\alpha n^{\bm{\mu}}X^{\bm{\mu}}$ etc. 
By the hermiticity of $\eta$, we have $n^{\bm{\mu}^*} \coloneqq n^{(\lambda^*,\alpha)}_j = \left(n^{(\lambda,\alpha)}_j\right)^* = (n^{\bm{\mu}})^*$ etc. We further introduce the notation
\begin{align}
    f^{\bm{\mu}} (\rho) \coloneqq f^\lambda_j (\rho), \forall \alpha.  
\end{align}
Substituting these definitions into Eq.~(\ref{eq:bldsfhsdl}) and using the orthonormality of the ITO basis, $\langle X^{\bm{\mu}}, X^{\bm{\nu}} \rangle = \delta_{\bm{\mu},\bm{\nu}}$, gives the following sufficient condition:
\begin{align}
\sum_{\bm{\mu}} \left[n^{\bm{\mu}^*} \left(f^{\bm{\mu}}(\rho) t^{\bm{\mu}} - s^{\bm{\mu}}\right) + n^{\bm{\mu}}\left(f^{\bm{\mu}}(\rho) t^{\bm{\mu}^*} - s^{\bm{\mu}^*}\right)\right] \ge 0  , \quad \forall  \eta.
\end{align}
This can instead be written as
\begin{align}
\sum_{\bm{\mu}}  \Re \{ n^{\bm{\mu^*}} \left(f^{\bm{\mu}}(\rho) t^{\bm{\mu}} - s^{\bm{\mu}}\right)  \} = \sum_{\bm{\mu}} \abs{n^{\bm{\mu^*}}} \Re \left\{  e^{-i \varphi_\eta^{\bm{\mu}}}\left[f^{\bm{\mu}}(\rho) t^{\bm{\mu}} - s^{\bm{\mu}}\right]  \right\}  \ge 0  , \quad \forall  \eta.
\label{eq:blahblahblah}
\end{align}

We can always choose $X^{\bm{0}} \coloneqq \frac{\mathbbm{1}}{\sqrt{d}}$, which means all other ITO basis elements must be traceless. Since we must choose $\tau$ to be a valid quantum state, this means we must assign
\begin{align}
	\tr(\tau) = \sum_{\bm{\mu}} t^{\bm{\mu}} \tr(X^{\bm{\mu}}) = \sqrt{d} t^{\bm{0}} = 1.
\end{align}
As $\sigma$ is a valid quantum state, by similar logic we conclude that $\sqrt{d} s^{\bm{0}} = 1$. Due to \ref{propertyf:zero_mode}, this means \eqref{eq:blahblahblah} reduces to
\begin{align}
\sum_{\bm{\mu} \neq \bm{0}} \abs{n^{\bm{\mu^*}}} \Re \left\{  e^{-i \varphi_\eta^{\bm{\mu}}}\left[f^{\bm{\mu}}(\rho) t^{\bm{\mu}} - s^{\bm{\mu}}\right]  \right\}  \ge 0  , \quad \forall  \eta.
\label{eq:blahblahblah2}
\end{align} 

Given any $\bm{\nu} \neq \bm{0}$, there exists a valid reference state $\eta = \sum_{\bm{\mu}} n^{\bm{\mu}} X^{\bm{\mu}}$ where
\begin{align}
n^{\bm{\mu}}=
	\begin{cases}
		\frac{1}{\sqrt{d}}& \text{ for } \bm{\mu} = \bm{0}\\
		n \text{ such that } \frac{1}{2d} \ge n > 0& \text{ for } \bm{\mu} = \bm{\nu}, \bm{\nu^*}\\
		0 & \text{ otherwise,} 
	\end{cases}
	\label{eq:ref_state_single_mode_pair}
\end{align}
since these assignments evidently lead to $\eta$ being Hermitian and trace 1, and we can further verify, for any pure state $\ket{\psi}$ of the reference system, that
\begin{align}
	\bra{\psi}\eta\ket{\psi} = \sum_{\bm{\mu}}n^{\bm{\mu}} \bra{\psi} X^{\bm{\mu}} \ket{\psi} = \frac{1}{d} - n(\bra{\psi} X^{\bm{\nu}} \ket{\psi} + \bra{\psi} X^{\bm{\nu^*}} \ket{\psi}) \ge \frac{1}{d} - 2n \ge \frac{1}{d} - \frac{2}{2d} = 0 \Rightarrow \eta \ge 0.
\end{align}

The only way that \eqref{eq:blahblahblah2} can be satisfied for reference states with components assigned according to \eqref{eq:ref_state_single_mode_pair} is if for each $\bm{\mu}$-component, we have
\begin{align}
   \Re \left\{  e^{-i \varphi_\eta^{\bm{\mu}}}\left[f^{\bm{\mu}}(\rho) t^{\bm{\mu}} - s^{\bm{\mu}}\right]  \right\}  \ge 0  , \quad \forall \eta.
    \label{eq:ur_conditions_2}
\end{align}

Recall that we are free to choose $\tau$ as we like for \emph{every} $\eta$, and our aim here is to derive a set of conditions that are independent of $\eta$. For a given component $\bm{\mu}$, one possible way of satisfying \eqref{eq:ur_conditions_2} independently of $\eta$ is if we can choose a valid quantum state $\tau= \sum_{\bm{\mu}} t^{\bm{\mu}} X^{\bm{\mu}}$ such that
\begin{align}
    f^{\bm{\mu}}(\rho) t^{\bm{\mu}} = s^{\bm{\mu}} .
\end{align}

Alternatively, note that we can rewrite \eqref{eq:ur_conditions_2} as follows:
\begin{align}
    f^{\bm{\mu}}(\rho) \abs{t^{\bm{\mu}} } \cos{( \varphi_\tau^{\bm{\mu}}-\varphi_\eta^{\bm{\mu}})} - \abs{s^{\bm{\mu}}} \cos{( \varphi_\sigma^{\bm{\mu}}-\varphi_\eta^{\bm{\mu}})}    \ge 0  , \quad \forall \eta.
    \label{eq:phases_suff_cond}
\end{align}

Therefore, an alternative way of satisfying the $\bm{\mu}$-component of the full set of sufficient conditions, is to set the phases of $\tau$ such that they cancel those of $\eta$
\begin{align}
  \varphi_\tau^{\bm{\mu}}=\varphi_\eta^{\bm{\mu}},
  \label{appeq:phases}
\end{align}
for all $\eta$.
Substituting this phase choice on $\tau$ into the sufficient condition in Eq.~(\ref{eq:phases_suff_cond}) gives 
\begin{align}
f^{\bm{\mu}}(\rho) \abs{t^{\bm{\mu}} }  - \abs{s^{\bm{\mu}}} \cos{( \varphi_\sigma^{\bm{\mu}}-\varphi_\eta^{\bm{\mu}})}    \ge 0  , \quad \forall \eta.
\end{align}
Noting that $\max_x \cos{x} = 1$, we see that the above equation holds for all $\eta$ if
\begin{align}
   f^{\bm{\mu}}(\rho) \abs{t^{\bm{\mu}}} \ge \abs{s^{\bm{\mu}} } ,
\end{align}
which together with \eqref{appeq:phases} gives us a second $\eta$-independent way of satisfying the $\bm{\mu}$-component of \eqref{eq:ur_conditions_2}.
Putting back in our explicit labels $\bm{\mu} = \{ \lambda, \alpha,j\}$, we have that, if the $\tau$ we chose at every $\eta$ was a valid quantum state such that
\begin{align}
    f^\lambda_j(\rho)t^{(\lambda,\alpha)}_j  = s^{(\lambda,\alpha)}_j  \, \text{  or  } \,
    f^\lambda_j(\rho)\abs{t^{(\lambda,\alpha)}_j}  \ge \abs{s^{(\lambda,\alpha)}_j} ,\ \quad \forall  \lambda, j, \alpha,
    \label{eq:sufficient_conditions}
\end{align}
then $\rho \xrightarrow{G} \sigma$, and thus Eq.~(\ref{eq:sufficient_conditions}) constitutes a sufficient condition on the transition $\rho \xrightarrow{G} \sigma$, as claimed. \end{proof}

We now present the following theorem from which \thmref{thrm:suff_cond_depolarized} in the main text follows as a corollary.

\begin{theorem} \label{thm:suff_cond_general}
Assume without loss of generality (\lemref{lemma:H_S_construction}) that $\G(\sigma)$ is full-rank. Then there exists a $G$--covariant operation transforming $\rho$ into $\sigma$ if
\begin{align}
 \lambda_{\mathrm{min}} n^{-1}f^\lambda_j(\rho)     \ge g^\lambda_j(\sigma) , \quad \forall \lambda \neq 0, j,
\end{align}
where we have defined
$f^\lambda_j(\rho) \coloneqq \tr[\rho^\lambda_j \G(\rho)^{-\frac{1}{2}}\rho^{\lambda^*}_j \G(\rho)^{-\frac{1}{2}}]$, $g^{\lambda}_j(\sigma) \coloneqq \sum_\alpha \abs{\tr[X^{(\lambda,\alpha)\dagger}_j \sigma] }$, $\lambda_{\mathrm{ min}}$ as the smallest eigenvalue of $\G(\sigma)$, and $n$ as the sum of the dimensions of all distinct non-trivial irreps appearing in the representation of $G$ on $\B(\H_B)$, with $\H_B$ being the Hilbert space of the output system.
\end{theorem}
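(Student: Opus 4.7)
The plan is to apply Lemma~\ref{lemma:sufficient_conditions_tau_undecided} by exhibiting an explicit family of valid quantum states $\tau_\Theta$ satisfying its componentwise sufficient conditions, and then to verify that our phase-independent bound on the perturbation $\tau_\Theta - \G(\sigma)$ is small enough to preserve positivity. Since $\G(\sigma)$ is full-rank, it satisfies $\G(\sigma) \geq \lambda_{\mathrm{min}} \id$, which gives us a natural ``positivity budget'' of $\lambda_{\mathrm{min}}$ against which to bound any Hermitian perturbation.

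First I would construct the candidate as $\tau_\Theta \coloneqq \G(\sigma) + \Delta_\Theta$, where $\Delta_\Theta \coloneqq \sum_{\lambda \neq 0, j, \alpha} t^{(\lambda,\alpha)}_j X^{(\lambda,\alpha)}_j$ with magnitudes chosen as $|t^{(\lambda,\alpha)}_j| \coloneqq |s^{(\lambda,\alpha)}_j|/f^\lambda_j(\rho)$ (and $t^{(\lambda,\alpha)}_j \coloneqq 0$ whenever $f^\lambda_j(\rho) = 0$, in which case the hypothesis of the theorem forces $s^{(\lambda,\alpha)}_j = 0$ so that the first alternative of Lemma~\ref{lemma:sufficient_conditions_tau_undecided} holds trivially). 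The free phases $\theta^{(\lambda,\alpha)}_j$ are left arbitrary, as the lemma requires the construction to work for all $\Theta$. For the trivial mode $\lambda = 0$ I would take $t^{(0,\alpha)}_0 \coloneqq s^{(0,\alpha)}_0$ so that the trivial-irrep component of $\tau_\Theta$ coincides with that of $\G(\sigma)$, and the first alternative of the lemma is satisfied there (using $f^0_0(\rho) = 1$). Trace normalization is then immediate: non-trivial ITO operators are traceless, so $\tr[\Delta_\Theta] = 0$ and $\tr[\tau_\Theta] = \tr[\G(\sigma)] = 1$.

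The core estimate is the phase-independent operator-norm bound on $\Delta_\Theta$. Using the triangle inequality together with $\|X^{(\lambda,\alpha)}_j\|_\infty \leq \|X^{(\lambda,\alpha)}_j\|_2 = 1$ (from orthonormality of the ITO basis) and our choice of magnitudes, I would derive
\begin{align}
\|\Delta_\Theta\|_\infty \;\leq\; \sum_{\lambda \neq 0, j, \alpha} |t^{(\lambda,\alpha)}_j| \;=\; \sum_{\lambda \neq 0, j} \frac{g^\lambda_j(\sigma)}{f^\lambda_j(\rho)} \;\leq\; \sum_{\lambda \neq 0, j} \frac{\lambda_{\mathrm{min}}}{n} \;=\; \lambda_{\mathrm{min}},
\end{align}
where the second inequality invokes the theorem's hypothesis and the last equality uses the definition of $n$ as the total dimension of the non-trivial irreps in $\B(\H_B)$, which is exactly the cardinality of the index set $\{(\lambda,j) : \lambda \neq 0\}$. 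Combined with $\G(\sigma) \geq \lambda_{\mathrm{min}} \id$ this yields $\tau_\Theta \geq 0$ uniformly in $\Theta$, so each $\tau_\Theta$ is a valid state and Lemma~\ref{lemma:sufficient_conditions_tau_undecided} applies to give $\rho \xrightarrow{G} \sigma$.

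The main subtlety is the appearance of the factor $n^{-1}$, which arises precisely because the triangle inequality spreads the positivity budget $\lambda_{\mathrm{min}}$ uniformly across all non-trivial mode indices. I expect any strengthening of the theorem would have to come from a sharper, mode-aware bound on $\|\Delta_\Theta\|_\infty$ (for instance by exploiting orthogonality of distinct irrep sectors or additional structure of the representation), but for the statement as given, the straightforward triangle-inequality argument suffices.
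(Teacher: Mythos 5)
Your proposal is correct and follows essentially the same route as the paper's proof: instantiate Lemma~\ref{lemma:sufficient_conditions_tau_undecided} with $\G(\tau_\Theta)=\G(\sigma)$ on the trivial mode and non-trivial-mode coefficients of magnitude proportional to $\abs{s^{(\lambda,\alpha)}_j}$, then establish positivity by bounding the total perturbation by $\lambda_{\mathrm{min}}$ using $\norm{X^{(\lambda,\alpha)}_j}_\infty\le 1$ and the count $n$ of non-trivial mode indices. The only differences are cosmetic: you take $\abs{t^{(\lambda,\alpha)}_j}=\abs{s^{(\lambda,\alpha)}_j}/f^\lambda_j(\rho)$ and invoke the hypothesis in the positivity step (handling $f^\lambda_j(\rho)=0$ separately), whereas the paper takes $\abs{t^{(\lambda,\alpha)}_j}=\lambda_{\mathrm{min}}\abs{s^{(\lambda,\alpha)}_j}/(n\,g^\lambda_j(\sigma))$ and invokes the hypothesis to verify the lemma's inequality — two parametrizations of the same estimate (do remember, as the paper does, to constrain $\theta^{(\lambda^*,\alpha)}_j=-\theta^{(\lambda,\alpha)}_j$ so that $\Delta_\Theta$ is genuinely Hermitian).
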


\begin{proof} 
We begin by making an explicit choice for $\{ \tau_\Theta \}$ in Lemma~\ref{lemma:sufficient_conditions_tau_undecided} and proceed by showing that each $\tau_\Theta$ corresponds to a valid quantum state.

For the $\lambda=0$ irrep we always choose 
\begin{align}
t^{(0,\alpha)}_0 \coloneqq s^{(0,\alpha)}_0,\ \forall \alpha,
\label{eq:tau_zero_mode_sym_access}
\end{align}
which corresponds to setting $\G(\tau_\Theta) = \G(\sigma)$. Due to \ref{propertyf:zero_mode}, this choice guarantees that all zero mode conditions from Lemma~\ref{lemma:sufficient_conditions_tau_undecided} are satisfied for any given $\sigma$. 

For $\lambda \neq 0$, let us choose 
\begin{align}
   t^{(\lambda,\alpha)}_j \coloneqq e^{i \theta^{(\lambda,\alpha)}_j}  c^\lambda_j \abs{s^{(\lambda,\alpha)}_j}, \quad c^\lambda_j \coloneqq \frac{\lambda_{\min}}{n\left(\sum_{\alpha} \abs{s^{(\lambda,\alpha)}_j}\right)}, \quad \theta^{(\lambda^*,\alpha)}_j = -\theta^{(\lambda,\alpha)}_j
\end{align} where $n$ is the sum of the dimensions of all distinct non-trivial irreps appearing in the representation of $G$ on $\B(\H_B)$.

If $\tau_\Theta$ corresponds to a valid quantum state for all values of $\theta^{(\lambda,\alpha)}_j$, from Lemma~\ref{lemma:sufficient_conditions_tau_undecided} we immediately have the sufficient condition
\begin{align}
    f^\lambda_j(\rho) c^\lambda_j \ge 1, \forall \lambda \neq 0, j, 
\end{align}
which, with the identification $g^\lambda_j(\sigma) \coloneqq \sum_{\alpha} \abs{s^{(\lambda,\alpha)}_j} =\sum_{\alpha}  \abs{\tr[X^{(\lambda,\alpha)\dagger}_j \sigma] }$, gives the statement of the lemma.

All that is left to do then is demonstrate that these choices of coefficients $t^{(\lambda,\alpha)}_j$ indeed correspond to a valid quantum state $\tau_\Theta$. We first note that the choice $t^{(0,0)}_0 = s^{(0,0)}_0= \frac{1}{\sqrt{d}}$ ensures that $\tau_\Theta$ has trace 1. Furthermore, since $\tau_\Theta = \G(\sigma) + \sum_{\substack{\lambda,\alpha, j: \\ \lambda \neq 0}} t^{(\lambda,\alpha)}_j X^{(\lambda,\alpha)}_j$, where the $t^{(\lambda,\alpha)}_j$ in the sum have been so chosen that $t^{(\lambda^*,\alpha)}_j = \left(t^{(\lambda,\alpha)}_j\right)^*$, $\tau_\Theta$ is the sum of two Hermitian operators and so is also Hermitian. All that remains is to verify that $\tau_\theta$ is positive semidefinite, i.e.,
\begin{align}
\bra{\psi}\tau_\Theta \ket{\psi} \geq 0,
\end{align}
for any pure state $\ket{\psi}$ of the output system.
The left hand side of this expression can be lower bounded in the following way for any $\theta^{(\lambda,\alpha)}_j \in[0,2\pi)$:
\begin{align}
\bra{\psi}\tau_\Theta \ket{\psi} &= \sum_{\lambda,\alpha,j} t^{(\lambda,\alpha)}_j \bra{\psi} X_j^{(\lambda,\alpha)} \ket{\psi} \\
&= \sum_{\alpha: \lambda=0} s^{0,\alpha}_0 \bra{\psi} X_0^{(0,\alpha)} \ket{\psi} +  \sum_{\substack{\lambda,\alpha, j: \\ \lambda \neq 0} } e^{i \theta^{(\lambda,\alpha)}_j}c^\lambda_j \abs{s^{(\lambda,\alpha)}_j} \bra{\psi} X_j^{(\lambda,\alpha)} \ket{\psi} \\
&\ge \bra{\psi}\G(\sigma)\ket{\psi}-  \sum_{\substack{\lambda,\alpha, j: \\ \lambda \neq 0} } c^\lambda_j \abs{s^{(\lambda,\alpha)}_j} \abs{\bra{\psi} X_j^{(\lambda,\alpha)} \ket{\psi} } \\
&\ge \lambda_{\min}- \sum_{\substack{\lambda,\alpha, j: \\ \lambda \neq 0} } c^\lambda_j \abs{s^{(\lambda,\alpha)}_j}\\
&= \lambda_{\min} - \sum_{\substack{\lambda,\alpha, j: \\ \lambda \neq 0} } \frac{\lambda_{\min}}{n\left(\sum_{\alpha'} \abs{s^{(\lambda,\alpha')}_j}\right)} \abs{s^{(\lambda,{\alpha})}_j}\\
&= \lambda_{\min} - \sum_{\lambda \neq 0,j} \frac{\lambda_{\min}}{n}\\
&= \lambda_{\min} - \lambda_{\min} = 0,
\end{align}
where in the second inequality we have used the fact that the operators $\left\{X^{(\lambda,\alpha)}_j\right\}$ are normalized. This confirms that $\tau_\Theta \ge 0$ for all $\theta^{(\lambda,\alpha)}_j \in [0,2\pi)$, which completes the proof. \end{proof}

The Sandwiched $\alpha$--R\'{e}nyi divergence $D_\alpha(\rho ||\sigma)$ for two states $\rho, \sigma$ of a quantum system is defined as \cite{muller2013quantum,wilde2014strong}
\begin{equation}
    D_\alpha(\rho||\sigma):= \frac{1}{\alpha-1} \log \tr \left [\sigma^{\frac{1-\alpha}{2\alpha}}\rho \sigma^{\frac{1-\alpha}{2\alpha}} \right ]^\alpha,
\end{equation}
whenever the support of $\rho$ lies in the support of $\sigma$, and is infinite otherwise.

In the case of $\alpha = 2$, we extend the domain of the first argument to the set of all linear operators in the support of $\sigma$ in the following manner:
\begin{equation}
\label{eq:alpha_renyi_div}
D_2(X||\sigma):= \log \tr \left( \left [\sigma^{-\frac{1}{4}} X \sigma^{-\frac{1}{4}} \right ]^\dagger \left [\sigma^{-\frac{1}{4}} X \sigma^{-\frac{1}{4}} \right ]^\dagger \right).
\end{equation}
When $X$ is Hermitian, which is the case whenever $X$ is a valid quantum state, this extension reduces to the standard definition of the sandwiched $\alpha$--R\'{e}yni divergence for $\alpha = 2$. We can then summarize the theorem above as the following statement, reproduced from the main text:

\SuffCond*
\begin{proof}
We consider the transition $\rho \xrightarrow{G} \sigma_p$, where  $\sigma_p \coloneqq (1-p) \sigma +p \frac{\id}{d}$. 
We begin by noting that $\lambda_{\mathrm{min}}[\G(\sigma_p)] = (1-p)\lambda_{\mathrm{min}} + \frac{p}{d}$ and $g^\lambda_j(\sigma_p)=(1-p)g^\lambda_j(\sigma)$ when $(\lambda,j) \neq (0,0)$. 
	
Therefore, substituting $\sigma_p \coloneqq (1-p) \sigma +p \frac{\id}{d}$ into \thmref{thm:suff_cond_general} gives the sufficient condition on the transition $\rho \xrightarrow{G} \sigma_p$: 
\begin{align}
 n^{-1}\left (\lambda_{\rm min} + \frac{p}{d(1-p)} \right) f^\lambda_j (\rho) \ge g^\lambda_j (\sigma), \forall (\lambda,j)\neq 0.
\label{apeq:iiii}
\end{align}
We now express this in terms of the extended sandwiched $\alpha$--R\'{e}nyi divergence $D_\alpha(\rho ||\sigma)$ for $\alpha = 2$ defined in \eqref{eq:alpha_renyi_div}, and it is readily checked that $D_2(\rho^\lambda_j || \G(\rho)) = \log f^\lambda_j (\rho)$ for all $\lambda,j$. Since $\log(x)$ is monotonic in $x$, we can re-express \eqref{apeq:iiii} as in the statement of the theorem, completing the proof. \end{proof}

\subsection{Conditions for identical input and output systems} 
When the input system $A$ and output system $B$ are identical, one can, without loss of generality, replace $\E^\eta$ in Lemma~\ref{lemma:general_SC_hilbert_schmidt} with $(1-q) \I + q \E^\eta$, where $q$ is a probability and $\I$ is the identity channel. The conditions of Lemma~\ref{lemma:general_SC_hilbert_schmidt} are then rewritten as:
\begin{align}
(1-q) \langle \eta,\rho\rangle +  q \langle \eta, \E^\eta(\rho) \rangle \ge \langle \eta, \sigma\rangle , \mbox{ for all }\eta \mbox{ and any }q\in[0,1],
\end{align}
for an arbitrary family of covariant channels $\{\E^\eta\}$ parameterised by $\eta$, which can be rearranged as
\begin{align}
q \langle \eta,\E^\eta(\rho) \rangle \ge \langle \eta, (\sigma - (1-q)\rho)\rangle , \mbox{ for all }\eta \mbox{ and any }q \in (0,1],
\end{align}
and so we define 
\begin{equation}
\sigma(q) := \sigma - (1-q)\rho, 
\end{equation}
for any $q \in [0,1]$ for which we require
\begin{align}
q \langle \eta, \E^\eta(\rho) \rangle \ge \langle \eta, \sigma(q) \rangle, \mbox{ for all }\eta \mbox{ and any }q \in [0,1].
\label{eq:dkdkdkdkd}
\end{align}

Therefore, when the input and output systems are the same, Lemma~\ref{lemma:sufficient_conditions_tau_undecided} can be upgraded to
\begin{lemma}	
	Let us define components $t^{(\lambda,\alpha)}_j \coloneqq \langle X^{(\lambda,\alpha)}_j, \tau\rangle$ and $s(q)^{(\lambda,\alpha)}_j \coloneqq \langle X^{(\lambda,\alpha)}_j, \sigma(q) \rangle$.  When the input system $A$ and the output system $B$ are identical, the state $\rho$ can be transformed into $\sigma$ covariantly with respect to a symmetry group $G$ (i.e. $\rho \xrightarrow{G} \sigma$) if there exists a family of valid quantum states 
	\begin{align}
	\left\{\forall \theta^{(\lambda,\alpha)}_j \in [0,2\pi) :  \tau_\Theta = \sum_{\lambda,\alpha, j} t^{(\lambda,\alpha)}_j X^{(\lambda,\alpha)}_j  \right\},    
	\end{align}
	where each $(\lambda,\alpha, j)$-component of $\tau_\Theta$ satisfies either:
	\begin{align}
	qf^\lambda_j(\rho)t^{(\lambda,\alpha)}_j &= s(q)^{(\lambda,\alpha)}_j , \text{ or,} \\
	qf^\lambda_j(\rho)\abs{t^{(\lambda,\alpha)}_j}  - \abs{s(q)^{(\lambda,\alpha)}_j} &\ge 0, \quad  t^{(\lambda,\alpha)}_j = e^{i\theta^{(\lambda,\alpha)}_j} \abs{t^{(\lambda,\alpha)}_j}.
	\end{align}
	for some probability $q \in [0,1]$.
	\label{lemma:sufficient_conditions_tau_undecided_q}
\end{lemma}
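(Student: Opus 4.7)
The plan is to carry out a $q$--parametrised version of the proof of Lemma \ref{lemma:sufficient_conditions_tau_undecided} using the family of covariant channels $\E^\eta_{\rm pgm}$ defined in \eqref{eq:pgm_channel}. Starting from the condition \eqref{eq:dkdkdkdkd} already derived in the text (which itself follows from Lemma \ref{lemma:general_SC_hilbert_schmidt} after mixing with the identity channel, which is $G$--covariant precisely because $A=B$), I would choose $\E^\eta \coloneqq \E^\eta_{\rm pgm}$ with an $\eta$--dependent preparation state $\tau$ and invoke Lemma \ref{lemma:eta_pgm_prod} to obtain $\langle \eta^\lambda_j, \E^\eta_{\rm pgm}(\rho^\lambda_j) \rangle = f^\lambda_j(\rho) \langle \eta^\lambda_j, \tau^\lambda_j \rangle$ mode by mode.

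Next I would expand $\eta$, $\tau$, and $\sigma(q)$ in the orthonormal ITO basis with coefficients $n^{\bm{\mu}} \coloneqq \langle X^{\bm{\mu}}, \eta \rangle$, $t^{\bm{\mu}} \coloneqq \langle X^{\bm{\mu}}, \tau \rangle$, and $s(q)^{\bm{\mu}} \coloneqq \langle X^{\bm{\mu}}, \sigma(q) \rangle$, where $\bm{\mu} \coloneqq (\lambda,\alpha,j)$. Applying Lemma \ref{lemma:mode_decomp_trace_product} and the Hermiticity relation $n^{\bm{\mu^*}} = (n^{\bm{\mu}})^*$ rewrites \eqref{eq:dkdkdkdkd} as
\begin{equation*}
\sum_{\bm{\mu}} |n^{\bm{\mu^*}}| \, \Re \bigl\{ e^{-i\varphi_\eta^{\bm{\mu}}} \bigl[q f^{\bm{\mu}}(\rho) t^{\bm{\mu}} - s(q)^{\bm{\mu}} \bigr] \bigr\} \ge 0, \quad \forall\, \eta,
\end{equation*}
where $f^{\bm{\mu}}(\rho) \coloneqq f^\lambda_j(\rho)$. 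The $\bm{\mu} = \bm{0}$ term drops out by property \ref{propertyf:zero_mode} together with the trace normalisation of $\tau$, so only the $\bm{\mu} \neq \bm{0}$ modes contribute.

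The next step I would reuse verbatim is the single--mode test--reference construction of \eqref{eq:ref_state_single_mode_pair}: for any fixed $\bm{\nu} \neq \bm{0}$, there exists a valid $\eta$ supported only on the identity and the $\bm{\nu}, \bm{\nu^*}$ components, which decouples the global inequality into the per--mode requirement $\Re \bigl\{ e^{-i \varphi_\eta^{\bm{\mu}}} [q f^{\bm{\mu}}(\rho) t^{\bm{\mu}} - s(q)^{\bm{\mu}}] \bigr\} \ge 0$, to hold for all $\eta$. As in the earlier proof, two $\eta$--independent ways exist to satisfy each such condition: either choose the component of $\tau$ so that $q f^{\bm{\mu}}(\rho) t^{\bm{\mu}} = s(q)^{\bm{\mu}}$ exactly, or align the phase of $t^{\bm{\mu}}$ to $\varphi_\eta^{\bm{\mu}}$ and use $\max_x \cos x = 1$ to reduce the bound to the magnitude inequality $q f^{\bm{\mu}}(\rho) |t^{\bm{\mu}}| \ge |s(q)^{\bm{\mu}}|$. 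Restoring the multi--index notation $(\lambda,\alpha,j)$ then gives the two alternatives in the statement.

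The main subtlety, rather than an obstacle, is that $\sigma(q)$ is in general neither positive nor of unit trace; this is harmless because only its ITO coefficients $s(q)^{\bm{\mu}}$ enter the analysis, and the positivity and normalisation requirements remain placed on the auxiliary state $\tau_\Theta$ alone (hence the family of valid $\tau_\Theta$ in the hypothesis). The only structural place where the assumption $A=B$ is used is in the mixing with $\I$ that replaces $\E^\eta$ by $(1-q)\I + q\E^\eta$; every subsequent step is a faithful transcription of the proof of Lemma \ref{lemma:sufficient_conditions_tau_undecided} under the substitutions $f^\lambda_j(\rho) \mapsto q f^\lambda_j(\rho)$ and $s^{(\lambda,\alpha)}_j \mapsto s(q)^{(\lambda,\alpha)}_j$.
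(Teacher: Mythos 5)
Your proposal is correct and is essentially the paper's own proof: the paper simply states that the argument follows that of Lemma~\ref{lemma:sufficient_conditions_tau_undecided} under the substitutions $\E^\eta \rightarrow q\E^\eta$ and $\sigma \rightarrow \sigma(q)$, which is exactly the transcription you carry out (including correctly locating the only use of $A=B$ in the covariance of the identity channel, and correctly observing that the non-positivity of $\sigma(q)$ is harmless here since only its ITO coefficients enter).
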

\begin{proof}
	The proof follows that of \lemref{lemma:sufficient_conditions_tau_undecided} with $\E^\eta \rightarrow q \E^\eta$ and $\sigma \rightarrow \sigma(q)$.
\end{proof}

These conditions are identical to those in \lemref{lemma:sufficient_conditions_tau_undecided} under $f^\lambda_j(\rho) \rightarrow q f^\lambda_j(\rho)$ and $\sigma \rightarrow \sigma(q)$. Making use of this lemma, we find that, when the input and output systems are the same, Theorem \ref{thm:suff_cond_general} can be similarly upgraded to
\begin{theorem} \label{thm:suff_cond_general_q}
	Assume that $\G(\sigma)$ is full-rank. When the input and output systems are identical, there exists a $G$--covariant operation transforming $\rho$ into $\sigma$ if $\rho = \sigma$ or if there exists a probability $q \in (0,1]$ such that
	\begin{align}
	\lambda_{\mathrm{min}}(\G[\sigma(q)]) n^{-1} f^\lambda_j(\rho)     \ge g^\lambda_j(\sigma(q)) , \quad \forall \lambda \neq 0, j,
	\end{align}
	and $\G(\sigma(q)) \ge 0$, where we have defined
	$f^\lambda_j(\rho) \coloneqq \tr[\rho^\lambda_j \G(\rho)^{-\frac{1}{2}}\rho^{\lambda^*}_j \G(\rho)^{-\frac{1}{2}}]$, $g^{\lambda}_j(\sigma(q)) \coloneqq \sum_\alpha \abs{\tr[X^{(\lambda,\alpha)\dagger}_j \sigma(q)] }$, $n$ as the sum of the dimensions of all distinct non-trivial irreps appearing in the representation of $G$ on $\B(\H_B)$, with $\H_B$ being the Hilbert space of the output system, and $\lambda_{\mathrm{min}}(\G[\sigma(q)])$ as the smallest eigenvalue of $\G(\sigma(q))$. 
\end{theorem}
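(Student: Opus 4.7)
\textbf{Proof proposal for Theorem \ref{thm:suff_cond_general_q}.} The plan is to run the same construction as in the proof of Theorem \ref{thm:suff_cond_general} but starting from the upgraded mode-wise sufficient condition provided by Lemma \ref{lemma:sufficient_conditions_tau_undecided_q}, where the target $\sigma$ has been replaced by $\sigma(q) := \sigma - (1-q)\rho$ and the PGM gain is rescaled by $q$. First I dispense with the trivial case $\rho = \sigma$, which is handled by the identity channel (manifestly $G$--covariant since $A=B$). For the non-trivial case, I will build, for each fixed $q\in(0,1]$ satisfying the hypotheses of the theorem, an explicit family of valid states $\{\tau_\Theta\}$ parameterised by phases $\theta^{(\lambda,\alpha)}_j$ such that each mode of $\tau_\Theta$ meets one of the two alternatives in Lemma \ref{lemma:sufficient_conditions_tau_undecided_q}.

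The explicit assignments will be as follows. For the trivial irrep, set $t^{(0,\alpha)}_0 := s(q)^{(0,\alpha)}_0/q$. Using $s(q)^{(0,\alpha)}_0 = \langle X^{(0,\alpha)}_0,\sigma\rangle - (1-q)\langle X^{(0,\alpha)}_0,\rho\rangle$ and $X^{(0,0)}_0 = \id/\sqrt{d}$, this choice gives $\tr[\tau_\Theta]=\sqrt{d}\,t^{(0,0)}_0 = 1$ and makes the projection onto the trivial-irrep sector equal to $\G(\sigma(q))/q$. Because $f^0_0(\rho)=1$ by property \ref{propertyf:zero_mode}, the equality alternative $q f^0_0(\rho)t^{(0,\alpha)}_0 = s(q)^{(0,\alpha)}_0$ is automatically satisfied for every $\alpha$. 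For $\lambda\ne 0$, set
\begin{align}
t^{(\lambda,\alpha)}_j := e^{i\theta^{(\lambda,\alpha)}_j}\, c^\lambda_j\, |s(q)^{(\lambda,\alpha)}_j|, \qquad c^\lambda_j := \frac{\lambda_{\min}(\G[\sigma(q)])}{q\, n\, g^\lambda_j(\sigma(q))},
\end{align}
with the conjugation constraint $\theta^{(\lambda^*,\alpha)}_j = -\theta^{(\lambda,\alpha)}_j$ enforced so that $\tau_\Theta$ is Hermitian. The inequality alternative in Lemma \ref{lemma:sufficient_conditions_tau_undecided_q} then reduces to $q f^\lambda_j(\rho) c^\lambda_j \ge 1$, which on substituting the definition of $c^\lambda_j$ is exactly the theorem's condition $\lambda_{\min}(\G[\sigma(q)])\, n^{-1} f^\lambda_j(\rho) \ge g^\lambda_j(\sigma(q))$. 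The $q$ in the lemma cancels the $1/q$ I have deliberately inserted into $c^\lambda_j$, and this is what allows the final statement to be $q$--free.

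What remains is to certify that $\tau_\Theta \ge 0$ for every choice of phases $\theta^{(\lambda,\alpha)}_j$. The operator decomposes as $\tau_\Theta = \G(\sigma(q))/q + \sum_{\lambda\ne 0,j,\alpha} t^{(\lambda,\alpha)}_j X^{(\lambda,\alpha)}_j$. Evaluating on any unit vector $|\psi\rangle$ and using $\|X^{(\lambda,\alpha)}_j\|_\infty\le 1$, I lower bound
\begin{align}
\langle\psi|\tau_\Theta|\psi\rangle \;\ge\; \frac{\lambda_{\min}(\G[\sigma(q)])}{q} \;-\; \sum_{\lambda\ne 0,j,\alpha} c^\lambda_j |s(q)^{(\lambda,\alpha)}_j| \;=\; \frac{\lambda_{\min}(\G[\sigma(q)])}{q}\left(1 - \frac{1}{n}\sum_{\lambda\ne 0, j} 1\right) = 0,
\end{align}
where I used $\sum_{\alpha}|s(q)^{(\lambda,\alpha)}_j| = g^\lambda_j(\sigma(q))$ and $\sum_{\lambda\ne 0}\dim(\lambda) = n$. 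This step is the one that requires $\G(\sigma(q)) \ge 0$ (to make $\lambda_{\min}(\G[\sigma(q)])$ non-negative and give a meaningful lower bound), which is precisely the extra hypothesis stated in the theorem.

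The main obstacle I anticipate is the positivity bookkeeping in this final step: the rescaling by $1/q$ of the trivial-irrep sector and the compensating $1/q$ in $c^\lambda_j$ have to match exactly so that the negative contributions from the non-trivial modes saturate — but do not exceed — the positive $\lambda_{\min}(\G[\sigma(q)])/q$ floor. The dimension-counting identity $\sum_{\lambda\ne 0}\dim(\lambda) = n$ is what makes this cancellation exact and so closes the argument. Once positivity is established, trace-one and hermiticity are immediate from the construction, $\tau_\Theta$ is therefore a legitimate quantum state for every $\Theta$, and the hypotheses of Lemma \ref{lemma:sufficient_conditions_tau_undecided_q} are met, yielding $\rho\xrightarrow{G}\sigma$ as claimed.
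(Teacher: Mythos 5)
Your proposal is correct and follows essentially the same route as the paper: the paper's own proof simply says to re-run the argument of Theorem \ref{thm:suff_cond_general} under the substitutions $f^\lambda_j(\rho)\rightarrow q f^\lambda_j(\rho)$ and $\sigma\rightarrow\sigma(q)$ (via Lemma \ref{lemma:sufficient_conditions_tau_undecided_q}), with the single caveat that the positivity step now rests on $\G(\tau_\Theta)=q^{-1}\G(\sigma(q))\ge 0$, forcing the extra hypothesis $\G(\sigma(q))\ge 0$ — which is exactly the construction and bookkeeping you carry out explicitly. Your choices $t^{(0,\alpha)}_0=s(q)^{(0,\alpha)}_0/q$ and $c^\lambda_j=\lambda_{\min}(\G[\sigma(q)])/(q\,n\,g^\lambda_j(\sigma(q)))$, the exact saturation of the positivity floor, and the separate treatment of $\rho=\sigma$ all match the paper's intended argument.
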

\begin{proof}
	The only covariantly accessible state at $q=0$ from $\rho$ is $\rho$ itself. At all other values of $q$, the proof follows that of Theorem \ref{thm:suff_cond_general} with the substitutions $f^\lambda_j(\rho) \rightarrow q f^\lambda_j(\rho)$ and $\sigma \rightarrow \sigma(q)$, except for the following caveat. The part of that proof demonstrating why $\tau_\Theta$ is positive relies on $\G(\tau_\Theta)$ being positive, which may not be true for all $q \neq 0$, since we now assign $\G(\tau_\Theta) \coloneqq q^{-1} \G(\sigma(q))$, but because $q^{-1} \sigma(q)$ or equivalently $\G(\sigma_q)$ may not be positive for all $q \neq 0$, $q^{-1} \G(\sigma(q))$ may not be either. We therefore additionally impose $\G(\sigma(q)) \ge 0$.
\end{proof}
 
By applying Theorem \ref{thm:suff_cond_general_q} to a partially depolarised version of $\sigma$, $\sigma_p \coloneqq (1-p)\sigma + p\frac{\id}{d}$, and noting once again that $ D_2 ( \rho^\lambda_j || \G(\rho)) = \log f^\lambda_j(\rho)$, we obtain \thmref{thrm:suff_cond_q} of the main text as an immediate corollary. We note that the assumption of $\G(\sigma)$ being full-rank can be dropped by using Lemma \ref{lemma:H_S_construction} to justify truncating the output Hilbert space to $\supp(\G(\sigma))$, and then redefining $\sigma_p \rightarrow (1-p)\sigma + p\frac{\id}{d_S}$ where $d_S$ is the dimension of $\supp[\G(\sigma)]$, $\lambda_{\mathrm{ \min}}[\cdot]$ becomes $\min_{\ket{\psi} \in \H_S} \bra{\psi}(\cdot)\ket{\psi}$, and $n$ becomes the sum of the dimensions of all distinct non-trivial irreps appearing in the representation of $\G$ on $\B(\supp[\G(\sigma)])$.
\end{appendices}

\end{document}